\newtheorem{method}{Method} 
\newcommand{\indep}{{\perp\!\!\!\perp}}
\newcommand{\notindep}{{\not\!\perp\!\!\!\perp}}
\newcommand{\Ex}{\mathbb{E}}
\newcommand{\Prob}{\mathbb{P}}
\newcommand{\upind}[1]{^{(#1)}}
\newcommand{\todistunifPW}[0]{\stackrel{\mathcal D; \mathcal P, \mathcal W}{\to}}
\DeclareMathOperator{\var}{var}
\DeclareMathOperator{\sign}{sign}
\DeclareMathOperator{\Cov}{Cov}
\begin{document}

\title{The Weighted Generalised Covariance Measure}

\author{\name Cyrill Scheidegger \email cyrill.scheidegger@stat.math.ethz.ch \\
       \addr Seminar for Statistics\\
       ETH Z\"urich\\
       8092 Z\"urich, Switzerland
       \AND
       \name Julia H\"orrmann \email jhoerrmann@ethz.ch \\
       \addr Department of Computer Science\\
       Seminar for Statistics\\
       ETH Z\"urich\\
       8092 Z\"urich, Switzerland
	\AND
	\name Peter B\"uhlmann \email peter.buehlmann@stat.math.ethz.ch \\
       \addr Seminar for Statistics\\
       ETH Z\"urich\\
       8092 Z\"urich, Switzerland}

\editor{Mladen Kolar}

\maketitle

\begin{abstract}
We introduce a new test for conditional independence which is based on what we call the \textit{weighted generalised covariance measure} (WGCM). It is an extension of the recently introduced \textit{generalised covariance measure} (GCM). To test the null hypothesis of $X$ and $Y$ being conditionally independent given $Z$, our test statistic is a weighted form of the sample covariance between the residuals of nonlinearly regressing  $X$ and $Y$ on $Z$. We propose different variants of the test for both univariate and multivariate $X$ and $Y$. We give conditions under which the tests yield the correct type I error rate. Finally, we compare our novel tests to the original GCM using simulation and on real data sets. Typically, our tests have power against a wider class of alternatives compared to the GCM. This comes at the cost of having less power against alternatives for which the GCM already works well. In the special case of binary or categorical $X$ and $Y$, one of our tests has power against all alternatives. 
\end{abstract}

\begin{keywords}
	conditional independence tests, weighted covariance, nonparametric regression, boosting, nonparametric variable selection
\end{keywords}

\section{Introduction}
Conditional independence is a key concept for statistical inference. Where already \cite{DawidCondInd} argued that different important statistical concepts can be unified using conditional independence, it has received more attention during the last years. This is mainly because conditional independence plays a prominent role in the context of graphical models and causal inference. As a consequence, conditional independence tests form the basis of many algorithms for causal structure learning, see for example  \cite{PearlCausality} or \cite{ElementsCausInf}.

In contrast to unconditional independence (see for example \citealp{JosseHolmesMMA} for an overview), testing conditional independence is a hard statistical problem. In fact, it was recently proven by \citet{ShahPetersCondInd} that conditional independence is not a testable hypothesis. If the joint distribution of $(X,Y,Z)$ is absolutely continuous with respect to Lebesgue measure, then there is no test for the null hypothesis of $X$ and $Y$ being conditionally independent given $Z$ that has power against any alternative and at the same time controls the level for all distributions in the null hypothesis. A test for conditional independence therefore needs to make some assumptions to restrict the space of possible null distributions. In the following, we write $X\indep Y|Z$ for conditional independence of $X$ and $Y$ given $Z$. 

The review over nonparametric conditional independence tests for continuous variables by \citet{LiFanNonParCI} groups the tests into the following categories. 
\begin{description}
	\item[Discretization-based tests:] For discrete $Z$, testing $X\indep Y|Z$ reduces to unconditional independence testing. By discretising $Z$, the case of continuous $Z$ can also be treated in this way. Such approaches are being followed for example in \citet{HuangCIMaxNCC} and \cite{MargaritisDistributionFreeLearning}.
	\item[Metric-based tests:]
	\citet{SuWhiteTestingCIEmpLik} construct tests using the smoothed empirical likelihood ratio. \citet{SuWhiteConsCharFunTestCI} and \citet{WangCondDistCor} propose tests based on conditional characteristic functions. \citet{SuWhiteNonParHellingerMetricCI} introduce a test based on the weighted Hellinger distance between two conditional densities. A test proposed by \citet{RungeCITestNNMI} is based on conditional mutual information.
	\item[Permutation-based two-sample tests:] This category of tests reduces the problem of conditional independence testing to a two-sample test by permuting the sample in a way that leaves the joint distribution unchanged under the null hypothesis, see for example \citet{DoranPermutationBasedKCIT} and \citet{SenModelPoweredCITest}.
	\item[Kernel-based tests:] Tests in this category extend the Hilbert-Schmidt independence criterion to the conditional setting, see  \citet{KernelMeasCondInd}, \citet{KernelBasedCondIndCaus} and \citet{StroblApproxKernelBasedCI}.
	\item[Regression-based tests:] Many tests related to causal inference assume an additive noise model of the form
$$X=f(Z)+\eta_X, \quad Y=g(Z)+\eta_Y,$$
where $\eta_X$ and $\eta_Y$ are independent of $Z$ with mean zero. In this case, testing $X\indep Y|Z$ is equivalent to testing $\eta_X\indep \eta_Y$. Hence, a reasonable approach is to regress $X$ on $Z$ and $Y$ on $Z$ and then test (unconditional) independence of the residuals, see for example \citet{HoyerNonlinearCausalDiscoveryANM}, \citet{PetersCausalDiscoveryContinuousANM}, \citet{ZhangMeasuringCIByIndependentResiduals}, \citet{ZhangCausalDiscoveryUsingRegressionBasedCITests} and  \citet{RamseyScalCondInd}.  Instead of testing independence of the residuals, \citet{ShahPetersCondInd} introduce a test based on the sample covariance of the residuals. In view of the hardness of conditional independence testing, an advantage of regression based tests is that they convert the problem of restricting the null hypothesis to choosing appropriate regression or machine learning methods, which may be more accessible in practice.
  \item[Other tests:] Under the assumption that the conditional distribution of $X|Z$ is known at least approximately, it is possible to restore type I error control, see \citet{CondPermTest} and \citet{CondRandTest}. \cite{ICPNonlinear} propose some additional tests in the setting of nonlinear invariant causal prediction. \citet{AzadkiaSimple} introduce a new non-parametric coefficient of conditional dependence, based on which they construct a new variable selection algorithm.
\end{description}

\subsection{Our Contribution}\label{Sec_Contribution}
We introduce a new regression-based conditional independence test, which is a non-trivial and often more powerful extension of the \textit{generalised covariance measure} (GCM) introduced by \citet{ShahPetersCondInd}. We call it the \textit{weighted generalised covariance measure} (WGCM). For simplicity, assume that the random variables $X$ and $Y$ take values in $\mathbb R$. The test statistic of the GCM is a normalised sum of the product of the residuals of (nonlinearly) regressing $X$ on $Z$ and $Y$ on $Z$. Hence, the GCM essentially tests if $\Ex[\epsilon\xi]\neq 0$, where
$$\epsilon=X-\Ex[X|Z],\quad \xi =Y-\Ex[Y|Z].$$
For a more thorough treatment, see Section \ref{SubSec_Prerequisites}. Note that $\Ex[\epsilon\xi]=0$ under the null hypothesis of $X\indep Y|Z$, but we can also have $\Ex[\epsilon\xi]=0$ under an alternative.
Our WGCM however introduces an additional weight function. Instead of using the sum of the products of the residuals from the regression of $X$ on $Z$ and $Y$ on $Z$ as the basis of the test statistic, we weight this sum with an additional weight function depending on $Z$. Thus, the idea of the WGCM is to test $\Ex[\epsilon \xi w(Z)]\neq 0$ for some suitable weight function $w$ from the domain of $Z$ to $\mathbb R$. We will propose two different methods of the WGCM. WGCM.fix tests $\Ex[\epsilon \xi w(Z)]\neq 0$ for several fixed weight functions and aggregates the results. WGCM.est performs sample splitting and estimates a promising weight function on one part of the data and calculates the test statistic using this weight function on the other part of the data. To give conditions for the correct type I error rate of our tests, we can rely on the work of \citet{ShahPetersCondInd} and largely follow the proofs given there.

A bounded weight function $w$ satisfying $\Ex[\epsilon\xi w(Z)]\neq 0$ exists if and only if $\Ex[\epsilon\xi|Z]$ is not almost surely equal to $0$: If $\Ex[\epsilon\xi|Z]=0$ a.s., then for every bounded weight function $w: \mathbb R^{d_Z}\to\mathbb R$,
$$\Ex[\epsilon\xi w(Z)]=\Ex[\Ex[\epsilon\xi|Z]w(Z)]=0.$$
Conversely, if $\Ex[\epsilon\xi|Z]$ is not almost surely equal to $0$, then for example 
\begin{equation}\label{eq_WeightFunctionSign}
w(Z)=\sign(\Ex[\epsilon\xi|Z])
\end{equation}
 satisfies $\Ex[\epsilon\xi w(Z)] > 0$.

The two methods WGCM.fix and WGCM.est have power against a wider class of alternatives than the GCM, at the cost of typically being less powerful in situations where the GCM already works well. In view of the above considerations, we expect WGCM.fix and WGCM.est to be superior to GCM, when $\Ex[\epsilon\xi]$ is equal to $0$ or close to $0$, but $\Ex[\epsilon\xi |Z]$ is not. This can be supported by simulations.

Our main contributions are:
\begin{itemize}
	\item We introduce the two tests WGCM.fix and WGCM.est for conditional independence of two univariate random variables $X$ and $Y$ given a random vector $Z$.
	\item For both tests, we prove asymptotic guarantees for the level of the tests under appropriate conditions.
	\item For both tests, we prove theorems justifying that WGCM.fix and WGCM.est have full asymptotic power against a wider class of alternatives compared to the GCM. In particular, for categorical $X$ and $Y$ but continuous $Z$, WGCM.est leads to full asymptotic power against all alternatives.
	\item We introduce extensions mWGCM.fix and mWGCM.est for the case of multivariate $X$ and $Y$ and derive the corresponding results.
\end{itemize}
As in \citet{ShahPetersCondInd}, our theoretical results allow for high-dimensional conditioning variables Z.

In the following, we give a more precise overview about which methods have power in which situations. Let $\mathcal E_0$ be the collection of all distributions for $(X,Y,Z)$ that are absolutely continuous with respect to Lebesgue measure. For a distribution $P\in \mathcal E_0$, let $\Ex_P[\cdot]$ denote the expectation with respect to $P$ and consider the following subsets of distributions:
\begin{align}
\mathcal P_{GCM}&=\{P\in \mathcal E_0 | \Ex_P[\epsilon\xi]\neq 0\}\label{eq_Pgcm}\\
\mathcal P_{est}&=\{P\in \mathcal E_0 | \neg (\Ex_P[\epsilon\xi|Z]=0\, a.s.)\}\label{eq_Pest}\\
\mathcal P_{alt}&=\{P\in \mathcal E_0 | X \notindep Y | Z\}. \label{eq_Palt}
\end{align}
Moreover, for a fixed collection of bounded weight functions $\mathbf W = \{w_1,\ldots, w_K\}$ from $\mathbb R^{d_Z}$ to $\mathbb R$, let
\begin{equation}
\mathcal P_{\mathbf W}=\{P\in \mathcal E_0 | \exists k\in \{1,\ldots,k\} : \Ex_P[\epsilon\xi w_k(Z)]\neq 0\}. \label{eq_PW}
\end{equation}
By the argument prior to Equation (\ref{eq_WeightFunctionSign}), it follows that for a fixed $\mathbf W = \{w_1,\ldots, w_K\}$ with $w_1 = 1$, we have
$$\mathcal P_{GCM}\subset \mathcal P_{\mathbf W} \subset \mathcal P_{est} \subset \mathcal P_{alt}.$$

We will give conditions under which
\begin{itemize}
	\item GCM has asymptotic power against alternatives in $\mathcal P_{GCM}$ (see Theorem 8 in \citealp{ShahPetersCondInd}),
	\item WGCM.fix with weight functions $\mathbf W = \{w_1,\ldots, w_K\}$ has asymptotic power against alternatives in $\mathcal P_{\mathbf W}$ (see Corollary \ref{cor_PowerWGCMfix}),
	\item WGCM.est has asymptotic power against alternatives in $\mathcal P_{est}$ (see Corollary \ref{cor_PowerWGCMest}).
\end{itemize}

\subsubsection{Categorical Variables}
Our tests are also applicable in the setting of categorical $X$ and $Y$ and continuous $Z$. For simplicity, assume that $X$ and $Y$ take values in $\{0,1\}$ and $Z$ takes values in $\mathbb R^{d_Z}$. In this case,
\begin{align}
\Ex_P[\epsilon\xi|Z]&=\Ex_P[(X-E_P[X|Z])(Y-E_P[Y|Z])] \nonumber\\
&=\Ex_P[XY]-\Ex_P[X|Z]\Ex_P[Y|Z] \nonumber\\
&=\Prob_P(XY=1|Z)-\Prob_P(X=1|Z)\Prob_P(Y=1|Z).\label{eq_ExBinary}
\end{align}
It follows that a distribution $P$ satisfies $X\indep Y|Z$ if and only if $\Ex_P[\epsilon\xi|Z]=0, \, a.s.$ or using the notation (\ref{eq_Pest}) and (\ref{eq_Palt}), $\mathcal P_{est}=\mathcal P_{alt}$. Hence, for binary variables, we will give conditions under which the test WGCM.est has asymptotic power against all alternatives (see Section \ref{Sec_WGCMDiscrete}). Using dummy coding, this result can also be extended to arbitrary categorical $X$ and $Y$ (see Appendix \ref{Sec_MultDiscrete}).

\subsubsection{A Concrete Example} \label{Sec_ConcreteExample}
For $\lambda \in [0,1]$, let $h_\lambda(x)=\lambda x+0.5(1-\lambda)x^2$. Consider the setting
\begin{align}
Z&\sim \mathcal N(0,1), \quad \eta_X\sim 0.3\mathcal N(0,1), \quad \eta_Y\sim 0.3 \mathcal N(0,1),\nonumber\\
X&= Z+\eta_X, \label{eq_MotEx}\\
Y&=Z+\eta_Y+0.3 h_\lambda(X),\nonumber
\end{align}
where $Z$, $\eta_X$ and $\eta_Y$ are jointly independent.
Clearly, $X$ and $Y$ are not conditionally independent given $Z$. In Figure \ref{fig_MotEx}, we plot the rejection rates at level $\alpha=0.05$ for the three methods GCM, WGCM.est and WGCM.fix for different values of $\lambda \in [0,1]$. The rejection rates are calculated from $1000$ simulation runs with $n=200$ samples. We see that GCM outperforms the other methods for high values of $\lambda$, which corresponds to settings where the function $h_\lambda(x)$ introducing the dependence is nearly linear. However, for small and moderate values of $\lambda$, the methods WGCM.est and WGCM.fix still have considerable power, whereas the power of the GCM rapidly decreases.

\begin{figure}
	\begin{center}
	\includegraphics[width=0.5\textwidth]{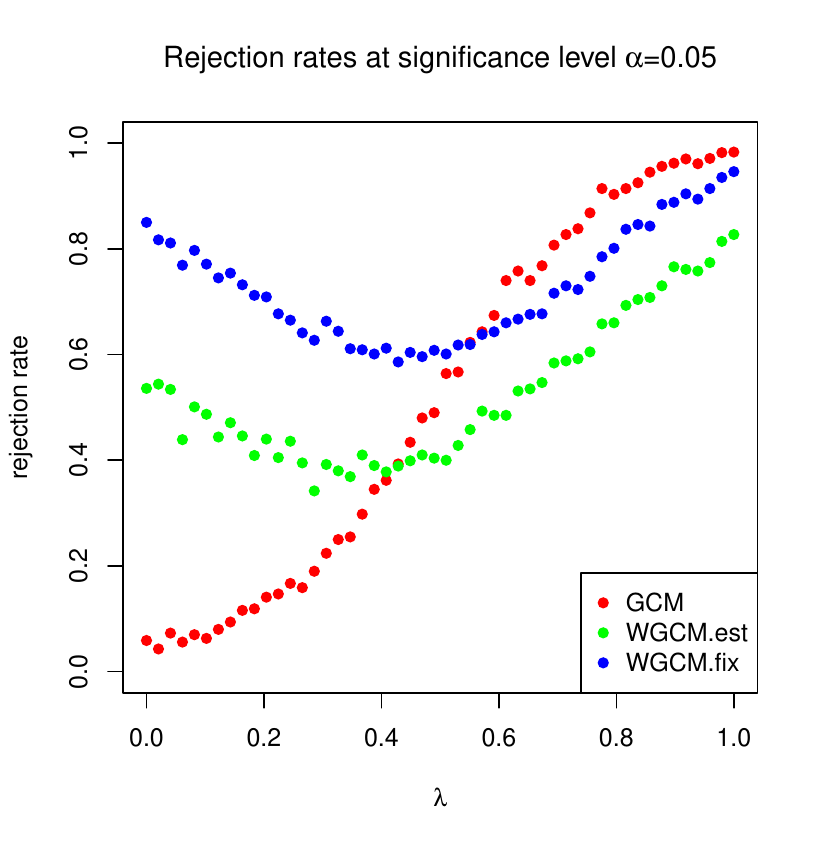}
	\end{center}
	\caption{The plot shows the rejection rates at level $\alpha=0.05$ from $1000$ simulation runs of (\ref{eq_MotEx}) with sample size $n=200$ for a range of $\lambda \in [0,1]$. The three methods GCM, WGCM.est and WGCM.fix are based on regressions using splines. For WGCM.est, we use $30\%$ of the samples to estimate the weight function $\sign(\Ex[\epsilon\xi|Z])$ and for WGCM.fix, we use $8$ fixed weight functions using our default choice described in Section \ref{sec_ChoiceWeightFun}.}
	\label{fig_MotEx}
\end{figure}

To understand the phenomenon, let us consider the cases of $\lambda=1$ and $\lambda =0$ more closely. Using the notation from before, we have
$$\epsilon =X-\Ex[X|Z]=\eta_X$$
and

$$\xi=Y-\Ex[Y|Z]=\begin{cases}
\eta_Y+0.3\eta_X,  & \lambda = 1, \\
\eta_Y+0.15(2\eta_X Z+\eta_X^2-1), & \lambda =0.
\end{cases}$$

If $\lambda=1$, we have $\Ex[\epsilon \xi] = 0.3$, so we expect the GCM to have power in this case, which is also supported by our simulation. If $\lambda=0$, however, $\Ex[\epsilon \xi]=0$, since $\Ex[\eta_X \eta_Y]$, $\Ex[\eta_X^2 Z]$ and $\Ex[\eta_X^3]$ are all equal to zero. Therefore, we do not expect the GCM to have power larger than the significance level $\alpha=0.05$ in this case, which is also visible in the plot. On the other hand, in the case of $\lambda = 0$, we have that 
$$\Ex[\epsilon\xi|Z]=0.3\Ex[\eta_X^2 Z|Z]=0.3 Z,$$
which is not almost surely equal to zero. Thus, we expect the two variants of the WGCM to have power. This is supported by the simulation.

\subsection{Conventions and Notation}
As most of the arguments are along the same lines as in \cite{ShahPetersCondInd}, we also largely use the same notation.

We will use the following setting:
Let $X$, $Y$ and $Z$ be random vectors taking values in $\mathbb R^{d_X}$, $\mathbb R^{d_Y}$ and $\mathbb R^{d_Z}$. For $i=1,\ldots, n$, let $(x_i,y_i,z_i)$ be i.i.d. copies of $(X,Y,Z)$. Let $\mathbf X\upind n\in \mathbb R^{n\times d_X}$, $\mathbf Y\upind n\in \mathbb R^{n\times d_Y}$ and $\mathbf Z\upind n\in \mathbb R^{n\times d_Z}$ be the data matrices with rows $x_i$, $y_i$ and $z_i$, respectively.

Whereas at some places, we will also consider categorical $X$ and $Y$, for most of the time, we assume that the joint distribution of $(X,Y,Z)$ is absolutely continuous with respect to Lebesgue measure. Let $p(x,y,z)$ be the joint Lebesgue density of $(X,Y,Z)$. 

We say that $X$ and $Y$ are conditionally independent given $Z$, written as 
$$X\indep Y|Z,$$
if one of the following two equivalent characterizations holds for Lebesgue almost all $(x,y,z)\in\mathbb R^{d_X}\times \mathbb R^{d_Y}\times \mathbb R^{d_Z}$ with $p(z)>0$:
\begin{enumerate}
	\item $p(x,y|z)=p(x|z)p(y|z)$;
	\item $p(x|y,z)=p(x|z)$,
\end{enumerate}
see for example Section 3.1 in \cite{DawidCondInd}.

Let $\mathcal E_0$ be the collection of all distributions for $(X,Y,Z)$ that are absolutely continuous with respect to Lebesgue measure. Let $\mathcal P_0\subseteq \mathcal E_0$ be the set of distributions in $\mathcal E_0$ such that $X$ and $Y$ are conditionally independent given $Z$.

We use $\Ex_P[\cdot]$ to denote the expectation of random variables under the probability distribution $P$ and we write $\Prob_P(\cdot)$ for the corresponding probability measure $\Ex_P\left[\mathbbm 1{\{\cdot\}}\right]$. We denote indicator functions with $\mathbbm 1\{\cdot\}$ or $\mathbbm 1_A$.

We write $\Phi$ for the cumulative distribution function of a standard normal random variable, that is, for all $x\in\mathbb R$, we have $\Phi(x)=\Prob(X\leq x)$, where $X\sim \mathcal N(0,1)$.

Let $\mathcal P$ be a collection of probability distributions. For a sequence of random variables $\left(V_{P,n}\right)_{n\in\mathbb N, P\in\mathcal P}$ with laws determined by $P\in\mathcal P$, we write $V_{P,n}=o_{\mathcal P}(1)$ if for all $\epsilon>0$
$$\lim_{n\to\infty}\sup_{P\in\mathcal P}\Prob_P\left(|V_n|>\epsilon\right)=0.$$
We write $V_{P,n}=O_{\mathcal P}(1)$ if
$$\limsup_{M\to\infty}\sup_{P\in\mathcal P}\sup_{n\in\mathbb N}\Prob_P\left(|V_n|>M\right)=0.$$
For another sequence $\left(W_{P,n}\right)_{n\in\mathbb N, P\in\mathcal P}$ of strictly positive random variables, we write $V_{P,n}=o_{\mathcal P}(W_{P,n})$ and $V_{P,n}=O_{\mathcal P}(W_{P,n})$ if $V_{P,n}/W_{P,n}=o_{\mathcal P}(1)$ and $V_{P,n}/W_{P,n}=O_{\mathcal P}(1)$, respectively.

For a random variable $Z$ taking values in (a subset of) $\mathbb R^{d_Z}$, we typically say \textit{for all $z\in\mathbb R^{d_Z}$} instead of \textit{for all $z$ in the support of $Z$}.

\subsection{Outline}
In Section \ref{Sec_UnivariateWGCM}, we describe the WGCM for univariate $X$ and $Y$. After a motivation as an extension of the GCM, we present two variants of the WGCM. The first variant WGCM.est is based on sample splitting to estimate a weight function, whereas the second variant WGCM.fix uses multiple fixed weight functions and aggregates the results. We provide Theorems justifying level and power of the methods. In Section \ref{Sec_Experiments}, we compare our methods to the original GCM using simulation and also apply the methods to some real data sets in the context of variable selection. In the appendix, we show, how the WGCM can be extended to the case of multivariate $X$ and $Y$ and present the proofs of our various results.

\section{The Univariate Weighted Generalised Covariance Measure}\label{Sec_UnivariateWGCM}
In this section, we introduce the \textit{weighted generalised covariance measure} (WGCM), which is an extension of the \textit{generalised covariance measure} (GCM) recently introduced by \citet{ShahPetersCondInd}. We first treat the case of $d_X=d_Y=1$ and present different variants of the WGCM in this case. In Appendix \ref{sec_MultWGCM}, we give extensions for multivariate $X$ and $Y$.

\subsection{Prerequisites} \label{SubSec_Prerequisites}
We consider the same setup as in Section 3.1 of \citet{ShahPetersCondInd}.

Let $d_X=d_Y=1$ and $d_Z\geq 1$. For any distribution $P\in\mathcal E_0$, let $\epsilon_P=X-\Ex_P[X|Z]$ and $\xi_P=Y-\Ex_P[Y|Z]$ and for $z\in \mathbb R^{d_Z}$, define the functions $f_P(z)=\Ex_P[X|Z=z]$ and $g_P(z)=\Ex_P[Y|Z=z]$. Then, we can write
$$X=f_P(Z)+\epsilon_P,\quad Y=g_P(Z)+\xi_P.$$
For $i=1,\ldots,n$, let $\epsilon_{P,i}=x_i-f_P(z_i)$ and $\xi_{P,i}=y_i-g_P(z_i)$. Moreover, let $u_P(z)=\Ex_P[\epsilon_P^2|Z=z]$ and $v_P(z)=\Ex_P[\xi_P^2|Z=z]$.

Let $\hat f^{(n)}$ and $\hat g^{(n)}$ be estimates of $f_P$ and $g_P$, obtained by regression of $\mathbf{X}^{(n)}$, respectively $\mathbf{Y}^{(n)}$ on $\mathbf{Z}^{(n)}$.

In the following, the dependence on $n$ and $P$ is sometimes omitted. 
The GCM by \citet{ShahPetersCondInd} uses the products of the residuals
$$R_i=\left(x_i-\hat f(z_i)\right)\left(y_i-\hat g(z_i)\vphantom{x_i-\hat f(z_i)}\right),\quad i=1,\ldots,n$$
as the basis of the test statistic
\begin{equation}\label{eq_DefStatGCM}
T^{(n)}=\frac{\frac{1}{\sqrt n}\sum_{i=1}^n R_i}{\left(\frac{1}{n}\sum_{i=1}^n R_i^2-\left(\frac{1}{n}\sum_{r=1}^n R_r\right)^2\right)^{1/2}}\eqqcolon \frac{\tau_N^{(n)}}{\tau_D^{(n)}},
\end{equation}
which is a normalised sum of the $R_i$. Under the null hypothesis $X\indep Y| Z$ and suitable conditions on the convergence of $\hat f$ and $\hat g$ to $f$ and $g$, the test statistic $T\upind n$ converges in distribution to a standard normal random variable, see Theorem 6 in \citet{ShahPetersCondInd}.

The test statistic of the GCM is a normalised estimate of $\Ex_P[\epsilon_P\xi_P]$. Under the null hypothesis of $X\indep Y|Z$, we have
$$\Ex_P[\epsilon_P\xi_P|Z]=\Ex_P[\Ex_P[\xi_P|Z,X]\epsilon_P]=0,$$
using $\Ex_P[\xi|X,Z]=\Ex_P[\xi|X]$ a.s. However, it is also possible to have $\Ex_P[\epsilon_P\xi_P]=0$ and $X\notindep Y|Z$. Therefore, the GCM only has power against alternatives with  $\Ex_P[\epsilon_P\xi_P]\neq 0$, see Theorem 8 in \citet{ShahPetersCondInd}.

By introducing an additional weight function in the test statistic, one can modify the set of alternatives against which the test has power. Let $w:\mathbb R^{d_Z}\to \mathbb R$ be a bounded measurable function. If we use the weighted product of the residuals
$$ R_i=\left(x_i-\hat f(z_i)\right)\left(y_i-\hat g(z_i)\vphantom{x_i-\hat f(z_i)}\right) w(z_i),\quad i=1,\ldots,n $$
as the basis of the test statistic defined in (\ref{eq_DefStatGCM}), $T\upind n$ is now a normalised estimate of $\Ex_P[\epsilon_P\xi_Pw(Z)]$. Under the null hypothesis of $X\indep Y|Z$, we still have
$$\Ex_P[\epsilon_P\xi_P w(Z)]= \Ex_P[\Ex_P[\xi_P|Z,X]\epsilon_P w(Z)]=0,$$
but now, the test has power against alternatives with $\Ex_P[\epsilon_P \xi_P w(Z)]\neq 0$. A weight function $w$ with $\Ex_P[\epsilon_P\xi_Pw(Z)]\neq 0$ exists if and only if $\Ex_P[\epsilon_P\xi_P|Z]$ is not almost surely equal to $0$, see Equation (\ref{eq_WeightFunctionSign}). Typically, $\Ex_P[\epsilon_P\xi_P|Z]$ is unknown, so we do not know a suitable weight function. A possible strategy is to perform sample splitting and estimate a weight function on the first part of the sample (WGCM.est). This approach is treated next. Another approach is to calculate the test statistic for multiple weight functions and then to aggregate the results (WGCM.fix). This will be treated in Section \ref{SubSec_WGCM1dMultFix}.

\subsection{WGCM with Estimated Weight Function (WGCM.est)}\label{SubSec_WGCMEst}
We have seen in Equation (\ref{eq_WeightFunctionSign}) that a desirable weight function for the WGCM is for example $w(z)=\sign\left(\Ex[\epsilon\xi|Z=z]\right)$. We do not know $\Ex[\epsilon\xi|Z]$ in practice, but we can estimate it. We propose the following procedure:
\begin{method}[WGCM.est]\label{meth_WGCMEst}
Using one random sample split, create two independent data sets $(\mathbf X\upind n,\mathbf Y\upind n,\mathbf Z\upind n)$ and $\mathbf A=(\mathbf X_A,\mathbf Y_A,\mathbf Z_A)$. We use the data set $\mathbf A$ to estimate a weight function and calculate the test statistic on the data set $(\mathbf X\upind n,\mathbf Y\upind n,\mathbf Z\upind n)$ as in Section \ref{SubSec_Prerequisites}. For ease of notation, we still assume that $(\mathbf X\upind n,\mathbf Y\upind n,\mathbf Z\upind n)$ consists of $n$ samples, whereas the size $a_n$ of the data set $\mathbf A=(\mathbf X_A,\mathbf Y_A,\mathbf Z_A)$ is arbitrary (but depends on $n$). The ratio between the sizes of the data sets $(\mathbf X\upind n,\mathbf Y\upind n,\mathbf Z\upind n)$ and $\mathbf A$ is difficult to choose in practice. We propose to estimate a weight function as follows:
	\begin{enumerate}
		\item (Nonlinearly) regress $\mathbf X_A$ on $\mathbf Z_A$ to get $\hat f_A$ and $\mathbf Y_A$ on $\mathbf Z_A$ to get $\hat g_A$. Let $\hat\epsilon_{A,i}=x_{A,i}-\hat f_{A}(z_{A,i})$ and $\hat\xi_{A,i}=y_{A,i}-\hat g_A(z_{A,i})$.
	\item (Nonlinearly) regress $\left(\hat\epsilon_{A,i}\hat\xi_{A,i}\right)_{i=1}^{a_n}$ on $\mathbf Z_A$ to get $\hat h$ which is an estimate of $h(\cdot)=\Ex_P[\epsilon\xi|Z=\cdot]$.
	\item Set $\hat w\upind n(\cdot)=\sign(\hat h(\cdot))$.
	\end{enumerate}
\end{method}

The following two theorems do not assume this particular choice of $\hat w\upind n$ based on the $\sign$, but only require it to be estimated on a data set $\mathbf A$ independent of $(\mathbf X\upind n,\mathbf Y\upind n,\mathbf Z\upind n)$.
Note that the estimated weight function $\hat w\upind n$ is not required to converge for $n\to\infty$. This is important, since under the null hypothesis, we have $h(Z)=\Ex_P[\epsilon\xi|Z]=0$ a.s. Thus, an estimate of $w$ of the form $\hat w\upind n (\cdot)=\sign(\hat h (\cdot))$ will typically not converge under the null hypothesis.

We consider the setting of Section \ref{SubSec_Prerequisites} with the difference of replacing $R_i$  by
\begin{equation}\label{eq_DefRWGCM1DEst}
R_i\upind n =\left(x_i-\hat f\upind n(z_i)\right)\left(y_i-\hat g\upind n(z_i)\right) \hat w\upind n(z_i), \quad i=1,\ldots, n,
\end{equation}
and redefine the test statistic (\ref{eq_DefStatGCM}) by
\begin{equation}\label{eq_DefStatWGCMEst}
T^{(n)}=\frac{\frac{1}{\sqrt n}\sum_{i=1}^n R_i\upind n}{\left(\frac{1}{n}\sum_{i=1}^n {R_i\upind n}^2-\left(\frac{1}{n}\sum_{r=1}^n R_r\upind n\right)^2\right)^{1/2}}\eqqcolon \frac{\tau_N^{(n)}}{\tau_D^{(n)}}.
\end{equation}

\subsubsection{Distribution Under the Null Hypothesis}\label{Sec_WGCMEstH0}
We will repeatedly use the quantities
\begin{align}
A_f&=\frac{1}{n}\sum_{i=1}^n(f_P(z_i)-\hat f(z_i))^2, &A_g&=\frac{1}{n}\sum_{i=1}^n(g_P(z_i)-\hat g(z_i))^2, \label{eq_DefAfAg}\\
B_f&=\frac{1}{n}\sum_{i=1}^n(f_P(z_i)-\hat f(z_i))^2v_P(z_i), &B_g&=\frac{1}{n}\sum_{i=1}^n(g_P(z_i)-\hat g(z_i))^2u_P(z_i). \label{eq_DefBfBg}
\end{align}
The following theorem gives conditions under which the test statistic $T\upind n$ is asymptotically standard normal. In the case of constant weight function $w(z)=1$ (and without sample splitting), it reduces to Theorem 6 in \citet{ShahPetersCondInd}.
\begin{theorem}[WGCM.est]\label{thm_WGCM1DEst}
Let $A_f$, $A_g$, $B_f$ and $B_g$ be defined as in (\ref{eq_DefAfAg}) and (\ref{eq_DefBfBg}). Assume that the weight function $\hat w\upind n$ is estimated on a data set $\mathbf A$ independent of $(\mathbf X\upind n,\mathbf Y\upind n,\mathbf Z\upind n)$ and let $T\upind n$ be defined as in (\ref{eq_DefStatWGCMEst}). Assume that there exists $C>0$ such that for all $n\in\mathbb N$ we have $|\hat w\upind n(z)|\leq C$ for all $z\in\mathbb R^{d_Z}$. Define 
$$\bar\sigma_n^2=\bar\sigma_{n,P}^2=\Ex_P\left[\hat w\upind n (Z)^2\epsilon_P^2\xi_P^2|\mathbf A\right].$$
\begin{enumerate}
	\item Let $P\in\mathcal P_0$. Assume that $A_fA_g=o_P(n^{-1})$, $B_f=o_P(1)$ and $B_g= o_P(1)$. If there exists $\eta > 0$ such that $\Ex_P\left[|\epsilon_P \xi_P |^{2+\eta}\right]<\infty$ and if $P$-almost surely there exists $c>0$ such that $\inf_{n\in\mathbb N} \bar\sigma_n^2 \geq c$, then
	$$\sup_{t\in\mathbb R} |\Prob_P(T^{(n)}\leq t)-\Phi(t)|\to 0.$$
	\item Let $\mathcal P\subset \mathcal P_0$. Assume that $A_f A_g=o_\mathcal P(n^{-1})$, $B_f=o_\mathcal P(1)$ and $B_g= o_\mathcal P(1)$. If there exists $\eta>0$ such that $\sup_{P\in\mathcal P}\Ex_P\left[|\epsilon_P\xi_P|^{2+\eta}\right]<\infty$ and if there exists $c>0$ such that for all $P\in\mathcal P$, we have $P$-almost surely $\inf_{n\in\mathbb N}\bar\sigma_n^2 \geq c$, then
	$$\sup_{P\in\mathcal P}\sup_{t\in\mathbb R} |\Prob_P(T^{(n)}\leq t)-\Phi(t)|\to 0.$$
\end{enumerate}
\end{theorem}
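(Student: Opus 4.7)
The approach is to reduce to the fixed-weight setting of Theorem~\ref{thm_WGCM1DFix} by conditioning on $\mathbf A$. Since $\hat w\upind n$ depends only on $\mathbf A$, the sample $(x_i,y_i,z_i)_{i=1}^n$ is, conditionally on $\mathbf A$, i.i.d.\ from $P$, while $\hat w\upind n$ becomes a deterministic function uniformly bounded by $C$. The twist is that $\hat w\upind n$ still varies with $n$, so we cannot directly quote Theorem~\ref{thm_WGCM1DFix} and must instead re-run its proof as a triangular-array argument. Fortunately, every step of that proof used the weight only through the bound $|w|\le C$, never through continuity or $n$-independence, so most estimates carry over verbatim.

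For the numerator, write $\Delta_{f,i}=\hat f\upind n(z_i)-f_P(z_i)$ and $\Delta_{g,i}=\hat g\upind n(z_i)-g_P(z_i)$, and expand
\[
\tau_N\upind n=\frac{1}{\sqrt n}\sum_{i=1}^n \epsilon_{P,i}\xi_{P,i}\hat w\upind n(z_i) - \frac{1}{\sqrt n}\sum_i \epsilon_{P,i}\Delta_{g,i}\hat w\upind n(z_i) - \frac{1}{\sqrt n}\sum_i \xi_{P,i}\Delta_{f,i}\hat w\upind n(z_i) + \frac{1}{\sqrt n}\sum_i \Delta_{f,i}\Delta_{g,i}\hat w\upind n(z_i).
\]
By Cauchy--Schwarz and $|\hat w\upind n|\le C$, the last term is bounded by $C\sqrt{nA_fA_g}=o_P(1)$; taking conditional expectations w.r.t.\ $\mathbf Z\upind n$ and using $\Ex_P[\xi_P^2\mid Z]=v_P(Z)$ shows the cross term involving $\Delta_{f,i}$ is $O_P(C\sqrt{B_f})=o_P(1)$, and analogously for the $\Delta_{g,i}$ term. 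This is exactly the bookkeeping of Theorem~\ref{thm_WGCM1DFix}, with the data-dependent weight entering only through its uniform bound.

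The main term $M_n:=\frac{1}{\sqrt n}\sum_i \epsilon_{P,i}\xi_{P,i}\hat w\upind n(z_i)$ is, conditionally on $\mathbf A$, a normalised sum of i.i.d.\ mean-zero random variables (because $\Ex_P[\epsilon_P\xi_P\mid Z]=0$ a.s.\ under $P\in\mathcal P_0$) with variance $\bar\sigma_n^2$. The Lyapunov ratio at order $2+\eta$ is bounded above by $n^{-\eta/2} c^{-(1+\eta/2)} C^{2+\eta} \Ex_P\!\left[|\epsilon_P\xi_P|^{2+\eta}\right]\to 0$ on the full-probability event $\{\inf_n\bar\sigma_n^2\ge c\}$, so $M_n/\bar\sigma_n\todist\mathcal N(0,1)$ conditional on $\mathbf A$. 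A parallel expansion of $\frac{1}{n}\sum_i (R_i\upind n)^2$, combined with the conditional weak law of large numbers (with uniform integrability provided by the $(2+\eta)$-moment), yields $\tau_D\upind n/\bar\sigma_n\toprob 1$ and $\frac{1}{n}\sum_i R_i\upind n=o_P(1)$. Slutsky then gives $T\upind n\todist\mathcal N(0,1)$ conditionally on $\mathbf A$; dominated convergence lifts this to $\Prob_P(T\upind n\le t)=\Ex_P[\Prob_P(T\upind n\le t\mid \mathbf A)]\to\Phi(t)$ for each $t$, and Polya's theorem upgrades the pointwise limit to uniformity in $t$. The uniform-over-$\mathcal P$ version (part~2) is obtained by replacing every $o_P$ with $o_{\mathcal P}$ and using the uniform $(2+\eta)$-moment bound and uniform lower bound on $\bar\sigma_n^2$, which keep every estimate uniform in $P$.

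The main obstacle is the triangular-array nature of $M_n$ caused by the $n$-dependence of $\hat w\upind n$: this is precisely why the theorem assumes $\inf_n\bar\sigma_n^2\ge c$ rather than the existence of a limit for $\bar\sigma_n^2$, and why the CLT has to be verified via Lyapunov rather than by citing a standard i.i.d.\ result. With that lower bound and $|\hat w\upind n|\le C$ in hand, the Lyapunov condition reduces to the $(2+\eta)$-moment hypothesis and the rest of the argument parallels Theorem~\ref{thm_WGCM1DFix} step by step.
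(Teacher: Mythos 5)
Your proposal is correct and follows essentially the same route as the paper: condition on $\mathbf A$ so that $\hat w\upind n$ becomes a fixed bounded weight, reuse the numerator/denominator decomposition and moment bounds of Theorem~\ref{thm_WGCM1DFix}, control the $n$-dependence of the weight through a CLT that is quantitative in the weight (the paper packages this as Theorem~\ref{thm_WGCM1DGenRes}, a statement uniform over the class $\mathcal W_{P,C,c}$, while you phrase it as a conditional triangular-array Lyapunov bound --- the same estimate), and lift back via iterated expectations and dominated convergence. The uniform-in-$P$ version in your last paragraph likewise matches the paper's argument, since all bounds depend on $\hat w\upind n$ only through the constants $C$ and $c$.
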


A proof can be found in Appendix \ref{App_ProofUnivariateWGCM}.

\begin{remark}
\begin{enumerate}
\item An estimate of the form $\hat w\upind n (z)=\sign(\hat h\upind n (z))$ satisfies
$$\Ex_P\left[\hat w\upind n(Z)^2\epsilon_P^2\xi_P^2|\mathbf A\right]=\Ex_P\left[\epsilon_P^2\xi_P^2\right] \text{ a.s.},$$
so $\bar \sigma_n^2\geq c$ is satisfied.

\item It would be desirable to do both the estimation of the weight function and calculating the test statistic on the full sample. However, the condition that the weight functions are estimated on an independent data set cannot be removed in general. If the weight function was allowed to depend on $(\mathbf X\upind n,\mathbf Y\upind n,\mathbf Z\upind n)$, one could take functions $\hat w\upind n$ with $\hat w\upind n(z_i)=\sign\left((x_i-\hat f(z_i))(y_i-\hat g(z_i))\right)$. This would always lead to a positive test statistic, which would contradict the asymptotic normality under the null hypothesis.

\item The conditions on the quantities $A_f$, $A_g$, $B_f$ and $B_g$ are reasonably weak. It is for example enough if $\hat f$ and $\hat g$ have mean squared prediction error (MSPE) of order $o(n^{-1/2})$, see Remark 7 in \citet{ShahPetersCondInd}. An MSPE of order $o(n^{-1/2})$ can for example be obtained for real-valued $Z$ with bounded support if $f$ and $g$ are Lipschitz continuous, see for example \cite{DistFreeTheoNonparReg}, or for high-dimensional $Z$ with sparse and linear $f$ and $g$, see for example \cite{BuehlHDStat}. Moreover, the rates can also be satisfied using kernel ridge regression, see Section 4 of \cite{ShahPetersCondInd}.
\end{enumerate}
\end{remark}

\subsubsection{Power}
In order to state a general power result, we need to make additional assumptions. We follow the path of Theorem 8 in \citet{ShahPetersCondInd} and assume that $\hat f$ and $\hat g$ have been estimated on another additional data set independent of $(\mathbf X^{(n)}, \mathbf Y^{( n)},\mathbf Z^{( n)})$ and $\mathbf A$. This means that we have three independent data sets involved.
\begin{enumerate}
	\item An auxiliary data set to estimate $\hat f$ and $\hat g$.
	\item The data set $(\mathbf X\upind n,\mathbf Y\upind n,\mathbf Z\upind n)$ to calculate the test statistic.
	\item The data set $\mathbf A$ to estimate the weight function $\hat w \upind n$.
\end{enumerate}
Whereas the sample splitting, that is, the independence of the $\mathbf A$ and $(\mathbf X\upind n,\mathbf Y\upind n,\mathbf Z\upind n)$ is important in practice, the auxiliary data set to estimate $\hat f$ and $\hat g$ is mainly needed for technical reasons. In practice, it is usually recommended to use the original version of WGCM.est used in Section \ref{Sec_WGCMEstH0}, see also Section 3.1.1. in \citet{ShahPetersCondInd} for a more detailed discussion.

\begin{theorem}[WGCM.est]\label{thm_PowerWGCM1DEst}
Consider the setup of Theorem \ref{thm_WGCM1DEst}.
Let $A_f$, $A_g$, $B_f$ and $B_g$ be defined as in (\ref{eq_DefAfAg}) and (\ref{eq_DefBfBg}) with the difference that $\hat f$ and $\hat g$ have been estimated on an auxiliary data set independent of $(\mathbf X\upind n, \mathbf Y\upind n,\mathbf Z\upind n)$ and $\mathbf A$. Assume that there exists $C>0$ such that for all $n\in\mathbb N$ we have $|\hat w\upind n(z)|\leq C$ for all $z\in\mathbb R^{d_Z}$. For $P\in\mathcal E_0$, let 
$$\bar\rho_{P,n}=\Ex_P\left[\epsilon_P\xi_P \hat w\upind n(Z)|\mathbf A\right]\quad \text{and} \quad \bar \sigma_n^2=\bar\sigma_{P,n}^2=\var_P\left(\epsilon_P\xi_P \hat w\upind n(Z)|\mathbf A\right).$$
Then, the following holds:
\begin{enumerate}
	\item Let $P\in \mathcal E_0$. Assume that $A_f A_g=o_P(n^{-1})$, $B_f=o_P(1)$ and $B_g=o_P(1)$. Assume that there exists $\eta >0$ such that $\Ex_P\left[|\epsilon_P\xi_P|^{2+\eta}\right]<\infty$ and that $P$-almost surely there exists $c>0$ such that $\inf_{n\in\mathbb N}\bar\sigma_n^2 \geq c$.
	Then, we have
	$$\sup_{t\in \mathbb R} \left|\Prob_P\left(T\upind n - \sqrt n\frac{\bar \rho_{P,n}}{\tau_D\upind n}\leq t\right)-\Phi(t)\right| \to 0,$$
	where $\tau_D\upind n$ is defined in (\ref{eq_DefStatGCM}).
	\item Let $\mathcal P\subset \mathcal E_0$. Assume that $A_f A_g=o_\mathcal P(n^{-1})$, $B_f=o_\mathcal P(1)$ and $B_g= o_\mathcal P(1)$. If there exists $\eta>0$ such that $\sup_{P\in\mathcal P}\Ex_P\left[|\epsilon_P\xi_P|^{2+\eta}\right]<\infty$ and if there exists $c>0$ such that for all $P\in\mathcal P$, we have $P$-almost surely $\inf_{n\in \mathbb N}\bar\sigma_n^2 \geq c$, then
	$$\sup_{P\in\mathcal P}\sup_{t\in \mathbb R} \left|\Prob_P\left(T\upind n - \sqrt n\frac{\bar \rho_{P,n}}{\tau_D\upind n}\leq t\right)-\Phi(t)\right| \to 0.$$
\end{enumerate}

\end{theorem}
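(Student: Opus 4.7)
The plan is to adapt the proof of Theorem \ref{thm_PowerWGCM1DFix} by conditioning throughout on the $\sigma$-algebra $\mathcal G$ generated by the auxiliary data set used to compute $\hat f, \hat g$ together with the independent data set $\mathbf A$ used to estimate $\hat w\upind n$. Since both are independent of $(\mathbf X\upind n, \mathbf Y\upind n, \mathbf Z\upind n)$, the samples $(x_i, y_i, z_i)$ remain i.i.d.\ from the original distribution conditionally on $\mathcal G$, while the functions $\hat f, \hat g, \hat w\upind n$ become deterministic. This reduces the problem, holding $P(\cdot|\mathcal G)$-almost surely, to a fixed-weight analysis for a sum whose summands have mean $\bar\rho_{P,n}$ and variance $\bar\sigma_n^2$, paralleling the proof of Theorem \ref{thm_PowerWGCM1DFix}.

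Concretely, I would expand $R_i\upind n = \epsilon_{P,i}\xi_{P,i}\hat w\upind n(z_i)$ plus cross-terms involving $(f_P - \hat f)(z_i)$ and $(g_P - \hat g)(z_i)$. The uniform bound $|\hat w\upind n|\leq C$ together with Cauchy--Schwarz bounds the cross-term contribution to $\tau_N\upind n/\sqrt n$ by $O_P(\sqrt{B_f}+\sqrt{B_g}+\sqrt{nA_fA_g}) = o_P(1)$, mirroring the calculations in the proofs of Theorems 6 and 8 of \citet{ShahPetersCondInd}; the only new ingredient is that $\hat w\upind n$ introduces a bounded multiplicative constant. Conditional on $\mathcal G$, the variables $\epsilon_{P,i}\xi_{P,i}\hat w\upind n(z_i)$ are i.i.d.\ with mean $\bar\rho_{P,n}$, variance $\bar\sigma_n^2\geq c$ a.s., and a uniformly bounded $(2+\eta)$-moment. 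A conditional Berry--Esseen-type bound therefore gives
$$\sup_{t\in\mathbb R}\left|\Prob_P\left(\frac{1}{\sqrt n\,\bar\sigma_n}\sum_{i=1}^n\left(\epsilon_{P,i}\xi_{P,i}\hat w\upind n(z_i) - \bar\rho_{P,n}\right)\leq t\,\bigg|\,\mathcal G\right) - \Phi(t)\right|\to 0\quad P\text{-a.s.,}$$
and a parallel conditional law-of-large-numbers argument yields $\tau_D\upind n - \bar\sigma_n = o_P(1)$. Combining these via Slutsky conditionally on $\mathcal G$, then unconditioning through dominated convergence, gives the claimed statement in part 1.

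The main obstacle is part 2: lifting the $P(\cdot|\mathcal G)$-a.s.\ conditional bounds to genuinely uniform-in-$P$ statements on marginal probabilities. The uniform moment assumption $\sup_{P\in\mathcal P}\Ex_P[|\epsilon_P\xi_P|^{2+\eta}]<\infty$ together with the uniform lower bound $\inf_{n\in\mathbb N}\bar\sigma_n^2 \geq c$ gives uniform control of the conditional Berry--Esseen constants, but translating an a.s.\ conditional bound into a uniform control on marginal tail probabilities requires careful use of the $o_\mathcal P$ notation combined with conditional dominated convergence. Since $|\hat w\upind n|\leq C$ with $C$ independent of $P$ and $n$, all moment estimates along the way only pick up a bounded multiplicative factor, and the uniform bookkeeping essentially mirrors that of the null-hypothesis version Theorem \ref{thm_WGCM1DEst}. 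Modulo these uniformity arguments, the proof parallels that of Theorem \ref{thm_PowerWGCM1DFix} with the appropriate conditioning on $\mathcal G$.
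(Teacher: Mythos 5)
Your proposal is correct and follows essentially the same route as the paper: the paper first proves a power result (Theorem \ref{thm_PowerWGCM1DMG}) that is uniform over $P$ and over all weight functions in the class $\mathcal W_{P,C,c}$, and then reduces the estimated-weight case to it by conditioning on the auxiliary data and $\mathbf A$ and taking iterated expectations --- exactly your conditioning-on-$\mathcal G$ strategy, with the same decomposition of $\tau_N\upind n$ into the main term and the $b$, $\nu_f$, $\nu_g$ remainders. The uniformity issue you flag in part 2 is handled in the paper simply by observing that $\sup_t|\Prob_P(T\upind n\leq t\,|\,\mathbf A)-\Phi(t)|$ is almost surely dominated by the deterministic quantity $\sup_{Q}\sup_{w\in\mathcal W_{Q,C,c}}\sup_t|\Gamma_n(Q,w,t)-\Phi(t)|\to 0$, so no dominated convergence is needed there (it is only needed for the pointwise part 1).
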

A proof can be found in Appendix \ref{App_ProofUnivariateWGCM}.
\begin{remark}\label{Rmk_PowerWGCMFixed1}
If instead of estimating the weight function, one uses a fixed weight function $w$, Theorem \ref{thm_PowerWGCM1DEst} implies that if $\rho_P=\Ex_P[\epsilon_P\xi_Pw(Z)]\neq 0$, the test statistic $T \upind n$ is of order $\sqrt n\frac{\rho_P}{\sigma_P}$. That is, if $\rho_P \neq 0$, the WGCM with fixed weight function $w$ has asymptotic power $1$ against alternative $P$.
\end{remark}

We recommend to obtain $\hat w \upind n$ by estimating $w(z)=\sign(\Ex_P[\epsilon\xi|Z=z])$, for example using Method \ref{meth_WGCMEst}. Recall the notation $\mathcal P_{est}$ from (\ref{eq_Pest}). Fix $P\in\mathcal P_{est}$, that is, $\Ex_P[\epsilon\xi|Z]$ is not almost surely equal to $0$ and assume that we can consistently estimate $w(z)=\sign(\Ex_P[\epsilon\xi|Z=z])$, wherever $\Ex_P[\epsilon\xi|Z=z]\neq 0$, that is
$$\Ex_P\left[(\hat w\upind n(Z)-w(Z))^2\mathbbm 1\{\Ex_P[\epsilon\xi|\mathbf Z]\neq 0\}|\mathbf A\right]\to 0 \text{ in probability}.$$
Defining
$$\rho_P=\Ex_P[\epsilon\xi w(Z)]=\Ex_P[|\Ex_P[\epsilon\xi|Z]|]>0,$$
it follows that $\bar \rho_{P,n}-\rho_P=o_P(1)$, because by the Cauchy-Schwarz inequality
\begin{align*}
|\bar \rho_{P,n}- \rho_P|&=\left|\Ex_P\left[\epsilon\xi(\hat w\upind n(Z)- w(Z))|\mathbf A\right]\right|\\
&\leq\Ex_P\left[\epsilon^2\xi^2\right]^{1/2}\Ex_P\left[(\hat w\upind n(Z)-w(Z))^2\mathbbm 1\{\Ex_P[\epsilon\xi|\mathbf Z]\neq 0\}|\mathbf A\right]^{1/2}.
\end{align*}
Hence, with high probability, $\bar \rho_{P,n}$ is bounded away from $0$ and we arrive at the following corollary.
\begin{corollary}[WGCM.est]\label{cor_PowerWGCMest}
Let $P\in\mathcal P_{est}$, that is $\Ex_P[\epsilon_P\xi_P|Z]$ is not almost surely equal to $0$. In the setting of Theorem \ref{thm_PowerWGCM1DEst}, assertion 1., assume that 
$$\Ex_P\left[(\hat w\upind n(Z)-w(Z))^2\mathbbm 1\{\Ex_P[\epsilon\xi|\mathbf Z]\neq 0\}|\mathbf A\right]\to 0\text{ in probability.}$$
Then, for all $M>0$,
$$\Prob_P(T\upind n \geq M)\to 1,$$
that is, WGCM.est has asymptotic power $1$ against alternative $P$ for any significance level $\alpha\in (0,1)$.
\end{corollary}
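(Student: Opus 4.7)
The strategy is to invoke Theorem~\ref{thm_PowerWGCM1DEst}, part~1, which (under the assumed hypotheses) yields
$$T\upind n - c_n \todist \mathcal N(0,1),\qquad c_n := \sqrt n\,\bar\rho_{P,n}/\tau_D\upind n,$$
and to reduce the claim to showing $c_n\toprob +\infty$. Indeed, for any $M,L>0$, $\Prob_P(T\upind n \geq M)\geq \Prob_P(T\upind n-c_n\geq -L)-\Prob_P(c_n < M+L)$; the first term tends to $\Phi(L)$ and the second to $0$, so $\liminf_n\Prob_P(T\upind n\geq M)\geq\Phi(L)$, and letting $L\to\infty$ completes the reduction.

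\textbf{Limit of $\bar\rho_{P,n}$.} Since $\hat w\upind n$ is a function of $\mathbf A$ and $Z$ only, and $(X,Y,Z)$ is independent of $\mathbf A$, the tower property gives
$$\bar\rho_{P,n}=\Ex_P\!\left[\hat w\upind n(Z)\,\Ex_P[\epsilon_P\xi_P\mid Z]\,\Big|\,\mathbf A\right].$$
By hypothesis, with probability one over $\mathbf A$, $\hat w\upind n(z)\to w(z)=\sign(\Ex_P[\epsilon_P\xi_P\mid Z=z])$ for (Lebesgue-)almost every $z$ with $\Ex_P[\epsilon_P\xi_P\mid Z=z]\neq 0$; on the complementary set the integrand vanishes for every $n$. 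The integrand is dominated by $C\cdot|\Ex_P[\epsilon_P\xi_P\mid Z]|$, which has finite expectation by Jensen and the $(2+\eta)$-moment hypothesis. Conditional dominated convergence therefore yields, $P$-almost surely,
$$\bar\rho_{P,n}\longrightarrow \Ex_P\!\left[\,|\Ex_P[\epsilon_P\xi_P\mid Z]|\,\right]=:\rho_P,$$
and $\rho_P>0$ because $P\in\mathcal P_{est}$. Hence $\Prob_P(\bar\rho_{P,n}\geq \rho_P/2)\to 1$.

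\textbf{Boundedness of $\tau_D\upind n$ and conclusion.} The proof of Theorem~\ref{thm_PowerWGCM1DEst} establishes $\tau_D\upind n-\bar\sigma_n=o_P(1)$, while the uniform bound $|\hat w\upind n|\leq C$ gives $\bar\sigma_n^2\leq C^2\Ex_P[\epsilon_P^2\xi_P^2]<\infty$ almost surely (the second moment is finite by Lyapunov applied to the $(2+\eta)$-moment hypothesis). Thus for some deterministic $K$, $\Prob_P(\tau_D\upind n\leq K)\to 1$. On the intersection of this event with $\{\bar\rho_{P,n}\geq \rho_P/2\}$,
$$c_n=\sqrt n\,\frac{\bar\rho_{P,n}}{\tau_D\upind n}\;\geq\; \sqrt n\cdot \frac{\rho_P/2}{K}\;\longrightarrow\; +\infty,$$
so $c_n\toprob +\infty$ and the reduction of the first paragraph closes the argument.

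\textbf{Main obstacle.} The only delicate point is the dominated-convergence step for $\bar\rho_{P,n}$: because $\hat w\upind n$ is data-dependent, its pointwise convergence to $w$ holds only off a $z$-null set that itself depends on the realization of $\mathbf A$, and the expectation defining $\bar\rho_{P,n}$ is itself random through its conditioning on $\mathbf A$. The tower-property reformulation above rewrites the integrand in terms of $\Ex_P[\epsilon_P\xi_P\mid Z]$, which is $\mathbf A$-free, and exposes an explicit dominating function; this reduces the matter to the standard (deterministic) Lebesgue dominated convergence theorem applied for almost every realization of $\mathbf A$.
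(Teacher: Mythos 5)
Your proposal is correct and follows essentially the same route as the paper: the paper derives the corollary from Theorem \ref{thm_PowerWGCM1DEst}, part 1, by noting that $\bar\rho_{P,n}-\rho_P=o_P(1)$ via dominated convergence with $\rho_P=\Ex_P[|\Ex_P[\epsilon_P\xi_P|Z]|]>0$, so that the centering term $\sqrt n\,\bar\rho_{P,n}/\tau_D^{(n)}$ diverges. You simply spell out the details the paper leaves implicit (the tower-property rewriting, the dominating function, and the stochastic boundedness of $\tau_D^{(n)}$), all of which check out.
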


\begin{remark}\label{rmk_PowerWGCM1DEst}
Under the conditions of Corollary \ref{cor_PowerWGCMest}, WGCM.est has asymptotic power $1$ for alternatives $P \in \mathcal P_{est}$. This is a larger class compared to the GCM, which has power against alternatives in the class $\mathcal P_{GCM}$, that is $\Ex_P[\epsilon\xi]\neq 0$.
\begin{enumerate}
	\item However, with Method \ref{meth_WGCMEst} our additional requirement that $w(z)=\sign(\Ex_P[\epsilon\xi|Z])$ can be consistently estimated is not straightforward to verify, since there are two regressions involved. Intuitively, we will have that for the regression in step 1 of Method \ref{meth_WGCMEst}, $\hat\epsilon_A\hat\xi_A$ is close to $\epsilon\xi$. Even if the regression method in 2 is consistent, we still need that it is also not too much affected by the difference between $\hat\epsilon_A\hat\xi_A$ and $\epsilon\xi$. This will depend on the regression method applied in step 2.
	\item One could use the following observation to obtain an alternative method to estimate $\hat w\upind n$. By definition,
	$$\Ex_P[\epsilon \xi|Z]=\Ex_P[(X-f(Z))(Y-g(Z)|Z]=\Ex_P[XY|Z]-f(Z)g(Z).$$
	Hence, additionally to the functions $\hat f$ and $\hat g$, one could try to also estimate the function $\Ex_P[XY|Z=z]$ using a regression of $XY$ on $Z$. The consistency condition in Corollary \ref{cor_PowerWGCMest} could then be replaced by consistency conditions for estimating $\hat f$, $\hat g$ and $\Ex[XY|Z=z]$ which might be easier to justify than for the two-step approach of Method \ref{meth_WGCMEst}. However, the estimation of $\sign(\Ex_P[\epsilon\xi|Z])$ seems to be less reliable for finite samples using this method compared to using Method \ref{meth_WGCMEst}, which also leads to reduced power.
\end{enumerate}
\end{remark}

\subsubsection{Binary Variables}\label{Sec_WGCMDiscrete}
If $X$ and $Y$ are binary, we can use the same methodology to obtain a test that has asymptotic power 1 against any alternative, provided that the conditions of Theorem 5 and Corollary \ref{cor_PowerWGCMest} hold.

Assume that $X$ and $Y$ take values in $\{0,1\}$. By (\ref{eq_ExBinary}), we have that $\Ex_P[\epsilon\xi|Z]=0$ a.s. if and only if $X\indep Y|Z$. We thus obtain the following result for binary $X$ and $Y$.

\begin{corollary}[WGCM.est, binary case]
Let $X$ and $Y$ be binary and assume that the distribution $P$ of $(X,Y,Z)$ satisfies $X\notindep Y| Z$. In the setting of Theorem \ref{thm_PowerWGCM1DEst}, assertion 1., assume that $\Ex_P\left[(\hat w\upind n(Z)-w(Z))^2\mathbbm 1\{\Ex_P[\epsilon\xi|\mathbf Z]\neq 0\}|\mathbf A\right]\to 0$ in probability. Then, for all $M>0$,
$$\Prob_P(T\upind n \geq M)\to 1,$$
that is, WGCM.est has asymptotic power $1$ against alternative $P$ for any significance level $\alpha\in (0,1)$.
\end{corollary}

Using dummy coding, this methodology can be extended to arbitrary categorical $X$ and $Y$ variables, see Appendix \ref{Sec_MultDiscrete}.

\subsection{WGCM With Several Fixed Weight Functions (WGCM.fix)}\label{SubSec_WGCM1dMultFix}
We consider an alternative to the sample splitting approach. We calculate the test statistic for several fixed weight functions and aggregate the results. Consider the same setting as in Section \ref{SubSec_Prerequisites}. Let $\{w_k\}_{k=1}^K$ be bounded functions from $\mathbb R^{d_Z}\to \mathbb R$. For $k=1,\ldots, K$, let $\mathbf R_k\in \mathbb R^n$ be the vector of products of the residuals weighted by $w_k$, that is,
$$\mathbf R_k\upind n=\left(\begin{array}{c}(x_1-\hat f(z_1)) (y_1-\hat g(z_1))w_k(z_1)\\ \vdots\\ (x_n-\hat f(z_n)) (y_n-\hat g(z_n))w_k(z_n)\end{array}\right).$$
Let $T_k\upind n$ be the test statistic of the WGCM based on the vector $\mathbf R_k\upind n$, that is,
\begin{equation}\label{eq_DefStatMultFixWOneDim}
T_k\upind n=\frac{\sqrt n \bar{\mathbf R}_k}{\left(\frac{1}{n}\|\mathbf R_k\|_2^2-\bar{\mathbf R}_k^2\right)^{1/2}}\eqqcolon \frac{\tau_{N,k}\upind n}{\tau_{D,k}\upind n},
\end{equation}
where $\bar{\mathbf R}_k$ is the sample average of the coordinates of ${\mathbf R_k}$. Finally, let
$$\mathbf T\upind n=\left(T_1\upind n,\ldots, T_K\upind n\right)^T.$$
For a fixed number $K$ of weight functions, the simplest approach would be to perform an individual test for each weight function $w_k$ and use Bonferroni correction to aggregate the $K$ $p$-values. With the aggregated test statistic
$$S_n=\max_{k=1,\ldots K} |T_k\upind n|,$$
the Bonferroni corrected $p$-value is therefore
$$p_\textup{Bon}\upind n=K\cdot 2\left(1-\Phi(S_n)\right).$$
In this case, it is straightforward to obtain a variant of Theorem \ref{thm_WGCM1DEst} (stating that $p_{Bon}\upind n$ is a conservative $p$-value) and a variant of Theorem \ref{thm_PowerWGCM1DEst} (stating that the method has asymptotic power 1 if there exists $k\in\{1,\ldots, K\}$ with $\Ex_P[\epsilon\xi w_k(Z)]\neq 0$).

However, we can also use more sophisticated methods to calculate a $p$-value for $S_n=\max_{k=1,\ldots K} |T_k\upind n|$. With $K$ fixed, it is possible to show that under the null hypothesis of $P\in \mathcal P_0$ (i.e. $X\indep Y|Y$), the vector $\mathbf T\upind n$ converges to a multivariate Gaussian distribution. In fact, we go one step further and propose the same procedure as in Section 3.2 of \citet{ShahPetersCondInd}, where the multivariate case of the (unweighted) GCM is treated. For technical reason, it is assumed that $K\geq 3$. We do not assume that $K$ is fixed, but it is allowed to grow with $n$. Define 
$$\hat\Sigma_{kl}=\frac{\frac{1}{n}\mathbf R_k^T \mathbf R_l-\bar{\mathbf R}_k\bar {\mathbf R}_l}{\left(\frac{1}{n}\|\mathbf R_k\|_2^2-\bar{\mathbf R}_k^2\right)^{1/2}\left(\frac{1}{n}\|\mathbf R_l\|_2^2-\bar{\mathbf R}_l^2\right)^{1/2}}.$$

Let $\hat {\mathbf T}\upind n \in \mathbb R^{K}$ have a multivariate normal distribution with covariance matrix $\hat \Sigma$ and mean $0$. Let
$$\hat S_n=\max_{k=1,\ldots, K} |\hat T\upind n_k|$$
and let $\hat G_n$ be the quantile function of $\hat S_n$ given $\hat \Sigma$. Note that $\hat G_n$ is random and depends on the data. $\hat G_n$ can be approximated by simulation. Recent results by \cite{DistMaxMultNorm2019} also allow to calculate $\hat G_n$ analytically. For a significance level $\alpha\in (0,1)$ we propose the following test.
\begin{method}[WGCM.fix]\label{meth_WGCMFix}
Reject the null hypothesis $X\indep Y|Z$ if $S_n > \hat G_n (1-\alpha)$. The corresponding $p$-value is given by 
$$p\upind n_\textup{WGCM.fix}=1-\hat G_n^{-1}(S_n)=\Prob_P(\hat S_n > s|\hat \Sigma)\big|_{s=S_n}.$$
\end{method}

We need the following conditions on the errors $\epsilon_P$ and $\xi_P$. Let $B\geq 1$.
\begin{itemize}[leftmargin=15mm]
	\item[(A1a)] $\max_{r=1,2}\Ex_P\left[|\epsilon_P\xi_P|^{2+r}/B^r\right]+\Ex_P\left[\exp(|\epsilon_P\xi_P|/B)\right]\leq 4$;
	\item[(A1b)] $\max_{r=1,2}\Ex_P\left[|\epsilon_P\xi_P|^{2+r}/B^{r/2}\right]+\Ex_P\left[|\epsilon_P\xi_P|^4/B^2\right]\leq 4$;
	\item[(A2)] $B^2\left(\log(Kn)\right)^7/n\leq C n^{-c}$ for some constants $C,c>0$ that do not depend on $P\in\mathcal P$. 
\end{itemize}

We obtain the following theorem. It is the adaptation of Theorem 9 in \cite{ShahPetersCondInd} to our setting. 
\begin{theorem}[WGCM.fix]\label{thm_MultFixedWOneDim}
Let $\mathcal P\subset \mathcal P_0$ and let $A_f$, $A_g$, $B_f$ and $B_g$ be defined as in (\ref{eq_DefAfAg}) and (\ref{eq_DefBfBg}). Assume that there exist $C, c\geq 0$ such that for all $n\in\mathbb N$ and $P\in\mathcal P$ there exists $B\geq 1$ such that either (A1a) and (A2) or (A1b) and (A2) hold. Furthermore, assume that there exist $C_1, c_1>0$ (independent of $n$) such that for all $k=1,\ldots, K$ and $P\in\mathcal P$ we have $|w_k|\leq C_1$ and $\Ex_P[\epsilon_P^2\xi_P^2 w_k(Z)^2]\geq c_1$. Assume that
\begin{align}
A_f A_g=o_\mathcal P\left(n^{-1}\log(K)^{-2}\right), \label{eq_MultFixedWOneDimCondA0}\\
B_f=o_\mathcal P\left(\log(K)^{-4}\right), \quad B_g=o_\mathcal P\left(\log(K)^{-4}\right).\label{eq_MultFixedWOneDimCondB0}
\end{align}
Assume that there exist sequences $\left(\tau_{f,n}\right)_{n\in\mathbb N}$ and $\left(\tau_{g,n}\right)_{n\in\mathbb N}$ of real numbers such that
\begin{align}
\max_{i=1,\ldots, n}|\epsilon_{P,i}|=O_\mathcal P(\tau_{g,n}),\quad A_{g}=o_\mathcal P\left(\tau_{g,n}^{-2}\log(K)^{-2}\right), \label{eq_MultFixedWOneDimCondC0}\\
\max_{i=1,\ldots, n}|\xi_{P,i}|=O_\mathcal P(\tau_{f,n}),\quad A_{f}=o_\mathcal P\left(\tau_{f,n}^{-2}\log(K)^{-2}\right). \label{eq_MultFixedWOneDimCondD0}
\end{align}
Then,
$$\sup_{P\in\mathcal P}\sup_{\alpha\in (0,1)}|\Prob_P(S_n\leq \hat G_n(\alpha))-\alpha|\to 0.$$
\end{theorem}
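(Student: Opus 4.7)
The plan is to adapt the argument of Theorem 9 in \citet{ShahPetersCondInd} to the weighted setting by comparing $S_n$ to an oracle version built from the true residuals, applying a high-dimensional Gaussian approximation to that oracle, and then showing the estimated covariance $\hat\Sigma$ is close enough to the true one that $\hat G_n$ is a valid quantile function asymptotically. Concretely, define the oracle quantities $R_{k,i}^{\ast}=\epsilon_{P,i}\xi_{P,i}w_k(z_i)$ and $\tilde T_k^{(n)} = \sqrt{n}\,\bar{\mathbf{R}}_k^{\ast}/\sigma_{k,P}$ with $\sigma_{k,P}^2=\Ex_P[\epsilon_P^2\xi_P^2w_k(Z)^2]$, and set $\tilde S_n = \max_k|\tilde T_k^{(n)}|$. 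Under $P\in\mathcal P_0$ we have $\Ex_P[R_{k,i}^\ast]=0$ by the tower property, and the summands are i.i.d.\ with variance $\sigma_{k,P}^2 \geq c_1$ by assumption.

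First I would apply the Chernozhukov--Chetverikov--Kato high-dimensional CLT for maxima (as invoked in \citet{ShahPetersCondInd}, extended to the vector $(\pm R_{k,i}^\ast/\sigma_{k,P})_{k=1}^K$ to handle absolute values). Conditions (A1a) or (A1b) supply the required moment/sub-exponential control of $R_{k,i}^\ast$ uniformly in $k$ (using $|w_k|\le C_1$), and (A2) provides the $B^2(\log(Kn))^7/n=o(1)$ rate that drives the approximation error to zero. This yields, uniformly over $P\in\mathcal P$,
$$\sup_{t\in\mathbb R}\bigl|\Prob_P(\tilde S_n\le t)-\Prob(\|\mathbf Z_P\|_\infty\le t)\bigr|\to 0,$$
where $\mathbf Z_P\sim\mathcal N(0,\Sigma_P)$ with $\Sigma_P$ the correlation matrix of $(\epsilon_P\xi_P w_k(Z))_{k=1}^K$.

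Next, I would show two approximation claims. (a) $\max_k|T_k^{(n)}-\tilde T_k^{(n)}|=o_{\mathcal P}(\log(K)^{-1/2})$: expanding $R_{k,i}^{(n)}-R_{k,i}^\ast$ into cross terms involving $f_P-\hat f$, $g_P-\hat g$, $\epsilon_{P,i}$ and $\xi_{P,i}$, the Cauchy--Schwarz bounds on the sample averages of these cross terms are controlled by $A_fA_g$, $B_f$, $B_g$ (for the products of residual errors weighted by $w_k$) and by the maxima $\max_i|\epsilon_{P,i}|$, $\max_i|\xi_{P,i}|$ (for the cross terms where only one residual is regression error). Conditions \eqref{eq_MultFixedWOneDimCondA0}--\eqref{eq_MultFixedWOneDimCondD0} are tailored exactly to make all these contributions $o_{\mathcal P}(\log(K)^{-1/2})$ uniformly in $k$; the boundedness of $w_k$ by $C_1$ is what lets a single bound serve all $k$ simultaneously. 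A similar argument yields $\max_k|\tau_{D,k}^{(n)}/\sigma_{k,P}-1|=o_{\mathcal P}(\log(K)^{-1/2})$ by a law-of-large-numbers type concentration bound for $\frac{1}{n}\|\mathbf R_k^\ast\|_2^2$ combined with the residual-approximation arguments, using $\sigma_{k,P}^2\ge c_1$. (b) $\|\hat\Sigma-\Sigma_P\|_\infty=o_{\mathcal P}(\log(K)^{-2})$: the entries of $\hat\Sigma$ are ratios of empirical moments of the estimated residual products, and the same decomposition that controls (a) plus the uniform concentration of $\frac{1}{n}\sum_i R_{k,i}^\ast R_{l,i}^\ast$ delivers the required rate.

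Finally I would combine: by the Gaussian comparison lemma for maxima (Lemma 3.1 of CCK 2015 or Lemma 15 in \citet{ShahPetersCondInd}), $\|\hat\Sigma-\Sigma_P\|_\infty=o_{\mathcal P}(\log(K)^{-2})$ implies $\sup_t|\Prob(\|\mathbf Z_P\|_\infty\le t)-\Prob(\hat S_n\le t\mid\hat\Sigma)|\toprob 0$, so $\hat G_n(\alpha)$ is asymptotically the $(1-\alpha)$-quantile of $\|\mathbf Z_P\|_\infty$. Using Nazarov's anti-concentration inequality for maxima of Gaussian vectors (which allows converting a $o_{\mathcal P}(\log(K)^{-1/2})$ perturbation on the statistic into a vanishing probability error), step (a) upgrades to $\sup_t|\Prob_P(S_n\le t)-\Prob_P(\tilde S_n\le t)|\to 0$, and chaining these approximations yields the claim uniformly in $P\in\mathcal P$ and $\alpha\in(0,1)$. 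The main obstacle I expect is step (a): handling all cross-term expansions uniformly over $k\in\{1,\dots,K\}$ with $K$ growing, in particular showing that the denominator ratios $\tau_{D,k}^{(n)}/\sigma_{k,P}$ concentrate at the $\log(K)^{-1/2}$ rate required by the anti-concentration step, which is what forces the slightly strengthened rate assumptions \eqref{eq_MultFixedWOneDimCondA0}--\eqref{eq_MultFixedWOneDimCondD0} relative to Theorem~\ref{thm_WGCM1DFix}.
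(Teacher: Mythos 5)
Your architecture is the same as the paper's: compare $S_n$ to the oracle statistic $\tilde S_n$ built from the true residuals, apply the Chernozhukov--Chetverikov--Kato Gaussian approximation (with the $(\pm)$-doubling trick for absolute values) under (A1a)/(A1b) and (A2), control the numerator perturbations $\delta_k$, the denominator ratios $1+\Delta_k=\tau_{D,k}^{(n)}/\sigma_{k,P}$ and $\|\hat\Sigma-\Sigma\|_\infty$, and finish with the Gaussian comparison lemma and Nazarov-type anti-concentration. This is exactly the route of the paper (Theorem \ref{thm_MultFixedW1DMG} and Lemmas \ref{lem_Lem26SP}, \ref{lem_Lem27SP}, following Theorem 9 of \citet{ShahPetersCondInd}).

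There is, however, a genuine quantitative gap in your step (a): the rate $o_{\mathcal P}(\log(K)^{-1/2})$ that you assign to $\max_k|\tau_{D,k}^{(n)}/\sigma_{k,P}-1|$ is not sufficient, and the anti-concentration step cannot absorb a denominator perturbation of that size. The point is that $\Delta_k$ enters \emph{multiplicatively}: on the event $\max_k|\Delta_k|\leq u_\Delta$, the inclusion $\{S_n\leq t\}\subseteq\{\tilde S_n\leq t(1+u_\Delta)+u_\delta\}$ shifts the threshold by $t\,u_\Delta$, and the relevant thresholds are of order $\sqrt{\log K}$ (where the law of the Gaussian maximum concentrates). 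Converting this via Lemma \ref{lem_Lem21Chern} costs a factor $\sqrt{\log K}$ again, giving an error of order $u_\Delta\log K$; with $u_\Delta=o(\log(K)^{-1/2})$ this is $o(\sqrt{\log K})$, which does not vanish. The paper instead compares $V_n$ with $(1+u_\Delta)V_n$ through the Gaussian comparison lemma (Lemma \ref{lem_Lem31Chern}), incurring an error $q(u_\Delta^2+2u_\Delta)\lesssim u_\Delta^{1/3}\log(K/u_\Delta)^{2/3}$, which forces $u_\Delta=o(\log(K)^{-2})$; the same rate is needed for $\|\hat\Sigma-\Sigma\|_\infty$. This is precisely why Lemma \ref{lem_Lem26SP} establishes $\max_k|\Delta_k|=o_{\mathcal P,\mathcal W}(a_n)$ and $\|\Sigma-\hat\Sigma\|_\infty=o_{\mathcal P,\mathcal W}(a_n)$ with $a_n=\log(K)^{-2}$, and why conditions (\ref{eq_MultFixedWOneDimCondB0}) carry the power $\log(K)^{-4}$ (so that, e.g., $n^{-1}\sum_i\Delta f_i^2\xi_i^2=o_{\mathcal P}(a_n^2)$ in Lemma \ref{lem_Lem27SP}). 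Only the additive numerator term gets away with the weaker rate $\max_k|\delta_k|=o_{\mathcal P,\mathcal W}(a_n^{1/4})=o_{\mathcal P,\mathcal W}(\log(K)^{-1/2})$, since there the anti-concentration cost is just $u_\delta\sqrt{\log K}$. So you should separate the two rates: $\log(K)^{-1/2}$ for $\delta_k$, but $\log(K)^{-2}$ for $\Delta_k$ and for the covariance estimate.
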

The proof can be found in Appendix \ref{App_ProofsWGCMMultFix}.

\begin{remark}\phantomsection\label{rmk_MultFixedWOneDim}
\begin{enumerate}
	\item If the errors $\epsilon_P$ and $\xi_P$ are sub-Gaussian with parameters bounded by some $M>0$ independent of $P\in\mathcal P$, then by Lemma \ref{lem_ProdSubGauss} in Appendix \ref{App_SubGaussExp}, their product $\epsilon_P\xi_P$ has a sub-exponential distribution, with parameters bounded independent of $P\in\mathcal P$, see also Remark 10 in \cite{ShahPetersCondInd}. A summary of results on sub-Gaussian and sub-exponential distributions can be found in Appendix \ref{App_SubGaussExp}. If $\epsilon_P\xi_P$ is sub-exponential with bounded parameters, then condition (A1a) is satisfied: By Definition \ref{def_SubExp}, 2., there exists $K_2>0$ such that for all $P\in\mathcal P$, 
	$$\Ex_P\left[\exp(|\epsilon\xi|/K_2)\right]\leq 2.$$
	For $B_0\geq 1$, we get that
	$$\Ex_P[|\epsilon\xi|^{2+r}/B_0^r]\leq 2 (2+r)! K_2^{2+r}/B_0^r.$$ 
Thus, we can choose $B_0\geq 1$ such that for $r=1,2$ we have $\Ex_P[|\epsilon\xi|^{2+r}/B_0^r]\leq 2$ and set $B=\max(K_2, B_0)$.
	
If $\epsilon_P$ and $\xi_P$ are sub-Gaussian with bounded parameters, by Corollary \ref{cor_MaxSubGauss}, we also have
$$\max_{i=1,\ldots, n} |\epsilon_i|=O_P\left(\sqrt{\log(n)}\right)\text{ and } \max_{i=1,\ldots, n}|\xi_i|=O_P\left(\sqrt{\log(n)}\right).$$
If we for example have that both $A_f, A_g=o_\mathcal P(n^{-1/2}\log(K)^{-2})$, then (\ref{eq_MultFixedWOneDimCondA0}), (\ref{eq_MultFixedWOneDimCondC0}) and (\ref{eq_MultFixedWOneDimCondD0}) are satisfied. Alternatively, we can also work under the conditions of Theorem \ref{thm_MultFixedWMultDim}. Then, the corresponding Remark \ref{rmk_ThmMultFixedWMultDim} holds.
	\item Compared to Theorem 9 in \cite{ShahPetersCondInd}, we are still in a simplified setting, because due to the boundedness of the $w_k$, we only need bounds on $\epsilon_P\xi_P$ instead of $\epsilon_{P,j}\xi_{P,l}$ for all combinations of $j=1,\ldots, d_X$ and $l=1,\ldots, d_Y$. This will change when
we consider the multivariate case, see Theorem \ref{thm_MultFixedWMultDim}.
\end{enumerate}
\end{remark}
The approach WGCM.fix always yields a lower $p$-value than using Bonferroni correction. To see this, let $X=(X_1, \ldots, X_k)^T\sim N(0, \Sigma)$, for a covariance matrix $\Sigma$ with $\Sigma_{kk}=1$ for all $k=1,\ldots, K$. Then, we have for $s\geq 0$
$$\Prob_P\left(\max_{k=1,\ldots, K} |X_k|> s\right)= \Prob_P\left(\bigcup_{k=1}^K \{|X_k|> s\}\right)\leq \sum_{k=1}^K\Prob_P(|X_k|> s)=K\cdot 2\left(1-\Phi(s)\right).$$
By replacing $\Sigma$ with $\hat \Sigma$, and $s$ by $S_n$, it follows that $p\upind n_\textup{WGCM.fix}\leq p\upind n_\textup{Bon}$. As an immediate consequence of Theorem \ref{thm_PowerWGCM1DEst} and Remark \ref{Rmk_PowerWGCMFixed1}, we get the following result on the power of WGCM.fix for a fixed number $K$ of weight functions. Recall the set $\mathcal P_{\mathbf W}$ from (\ref{eq_PW}).
\begin{corollary}[WGCM.fix]\label{cor_PowerWGCMfix}
Let $P\in\mathcal E_0$. Let $A_f$, $A_g$, $B_f$ and $B_g$ be defined as in (\ref{eq_DefAfAg}) and (\ref{eq_DefBfBg}) with the difference that $\hat f$ and $\hat g$ have been estimated on an auxiliary data set independent of $(\mathbf X^{(n)}, \mathbf Y^{( n)},\mathbf Z^{( n)})$. Let $K\geq 1$ be fixed. Assume that there exists $C>0$ such that for all $z\in \mathbb R^{d_Z}$ and all $k=1,\ldots, K$, we have $|w_k(z)|\leq C$. Assume that $A_f A_g=o_P(n^{-1})$, $B_f=o_P(1)$ and $B_g=o_P(1)$ as well as $\Ex_P[\epsilon_P^2\xi_P^2w_k(Z)^2]>0$ for all $k=1,\ldots, K$ and $\Ex_P\left[\epsilon_P^2\xi_P^2\right]<\infty$. If $P\in \mathcal P_{\mathbf W}$ for $\mathbf W = \{w_1,\ldots, w_K\}$, that is if there exists $k\in\{1,\ldots, K\}$ such that $\Ex_P[\epsilon_P\xi_Pw_k(Z)]\neq 0$, then for all $M>0$,
$$\Prob_P(S_n \geq M)\to 1,$$
that is, WGCM.fix with fixed number $K$ of weight functions has asymptotic power $1$ against alternative $P$ for any significance level $\alpha\in (0,1)$.
\end{corollary}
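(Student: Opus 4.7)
The plan is to reduce the claim to a single-coordinate application of Theorem \ref{thm_PowerWGCM1DFix}. Since $P \in \mathcal P_{\mathbf W}$, there exists $k^* \in \{1,\ldots,K\}$ with $\rho_{P,k^*} := \Ex_P[\epsilon_P\xi_P w_{k^*}(Z)] \neq 0$. The assumptions of the corollary (boundedness of $w_{k^*}$, $\Ex_P[\epsilon_P^2\xi_P^2 w_{k^*}(Z)^2] > 0$, $\Ex_P[\epsilon_P^2\xi_P^2] < \infty$, the rates on $A_fA_g$, $B_f$, $B_g$, and the auxiliary-sample estimation of $\hat f,\hat g$) are exactly what Theorem \ref{thm_PowerWGCM1DFix}, part~1, requires for the weight function $w_{k^*}$. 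Applying that theorem yields
$$T_{k^*}^{(n)} - \sqrt{n}\,\frac{\rho_{P,k^*}}{\tau_{D,k^*}^{(n)}} \todist \mathcal N(0,1), \qquad \tau_{D,k^*}^{(n)} - \sigma_{P,k^*} = o_P(1),$$
with $\sigma_{P,k^*}^2 = \var_P(\epsilon_P\xi_P w_{k^*}(Z)) \in (0,\infty)$ by the moment hypotheses.

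Because $\rho_{P,k^*} \neq 0$ and $\tau_{D,k^*}^{(n)}$ converges in probability to the finite positive constant $\sigma_{P,k^*}$, the deterministic drift $\sqrt{n}\,|\rho_{P,k^*}|/\tau_{D,k^*}^{(n)}$ tends to $+\infty$ in probability, so $|T_{k^*}^{(n)}| \toprob \infty$. Using the trivial domination $S_n = \max_{1\leq k\leq K}|T_k^{(n)}| \geq |T_{k^*}^{(n)}|$, it follows that $S_n \toprob \infty$; equivalently, $\Prob_P(S_n \geq M) \to 1$ for every $M > 0$, which is the conclusion displayed in the corollary.

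To translate this into asymptotic power $1$ of Method~\ref{meth_WGCMFix} at level $\alpha$, I would argue that for fixed $K$ the critical value $\hat G_n(1-\alpha)$ is $O_P(1)$: since each diagonal entry of $\hat\Sigma$ equals $1$, a union bound yields $\hat G_n(1-\alpha) \leq \Phi^{-1}(1-\alpha/(2K))$, a constant independent of the data. Combined with $S_n \toprob \infty$, this gives $\Prob_P(S_n > \hat G_n(1-\alpha)) \to 1$.

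I expect no serious obstacle: the corollary is essentially a monotonicity observation ($\max \geq$ any coordinate) on top of Theorem \ref{thm_PowerWGCM1DFix}. The only mildly delicate point is verifying $\sigma_{P,k^*}^2 > 0$ (needed by Theorem \ref{thm_PowerWGCM1DFix}) from the stated second-moment positivity $\Ex_P[\epsilon_P^2\xi_P^2 w_{k^*}(Z)^2] > 0$; this holds except in the degenerate case where $\epsilon_P\xi_P w_{k^*}(Z)$ is an a.s.\ nonzero constant, in which case $\rho_{P,k^*}\neq 0$ is trivially witnessed and the drift argument goes through directly without invoking the CLT.
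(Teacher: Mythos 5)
Your proposal is correct and matches the paper's intended argument: the paper states the corollary as an immediate consequence of Theorem \ref{thm_PowerWGCM1DFix} applied to the coordinate $k^*$ with $\Ex_P[\epsilon_P\xi_P w_{k^*}(Z)]\neq 0$, combined with $S_n\geq |T_{k^*}^{(n)}|$, and the boundedness of the critical value you derive via the union bound is exactly the inequality $p^{(n)}_{\textup{WGCM.fix}}\leq p^{(n)}_{\textup{Bon}}$ established in the text just before the corollary. Your remark on the gap between $\Ex_P[\epsilon_P^2\xi_P^2 w_{k^*}(Z)^2]>0$ and $\var_P(\epsilon_P\xi_P w_{k^*}(Z))>0$ is a fair observation about a slight imprecision already present in the paper's own hypotheses, and your handling of the degenerate case is adequate.
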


\subsubsection{Choice of Weight Functions}\label{sec_ChoiceWeightFun}
We give a few heuristics on how to choose the weight functions in practice.

For a fixed alternative $P\in \mathcal E_0$, a promising weight function $w:\mathbb R^{d_Z}\to \mathbb R$ satisfies $\Ex_P[\epsilon \xi w(Z)]\neq 0$. We know, that such a $w$ exists if and only if $\Ex_P[\epsilon\xi|Z]$ is not almost surely equal to $0$.

Let us first assume that $d_Z=1$. For $a\in \mathbb R$, define the functions
$$w_a(z)= \sign(z-a)=\begin{cases}
-1, & z < a \\
1, & z\geq a.
\end{cases}$$
Then, we have that $\Ex_P[\epsilon\xi |Z]=0$ a.s. if and only if $\Ex_P[\epsilon \xi w_a(Z)]=0$ for all $a$ in the support of $Z$. To see this, define $h(z)=\Ex_P[\epsilon \xi|Z=z]$. Then,
$$\Ex_P[\epsilon \xi w_a(Z)]=\Ex_P[h(Z) w_a(Z)]\\
=\int_a^\infty h(z) p_Z(z)\text{d}z-\int_{-\infty}^a h(z) p_Z(z)\text{d}z.$$
If $\Ex_P[\epsilon \xi w_a(Z)]=0$ for all $a$ in the support of $Z$, then taking the derivative with respect to $a$ yields $h(z)=0$ for all $z$ in the support of $Z$.

Therefore, it intuitively seems a good idea (in addition to the constant weight function $w(z)=1$) to use the functions $w_{a_1}, \ldots, w_{a_{K-1}}$ for some $a_1,\ldots, a_{K-1}\in \mathbb{R}$. In practice, one can for example take $a_1, \ldots, a_{K-1}$ at the empirical $\frac{1}{K}, \ldots, \frac{K-1}{K}$-quantiles of $Z$. However, the choice of $K$ is a difficult problem. Theorem \ref{thm_MultFixedWOneDim} allows for $K$ to be large compared to $n$. However, if $K$ is too large, one is in danger of performing too many tests and loosing power again. This tradeoff makes the choice of $K$ difficult. In practice, we have experienced that a small number of weight functions is usually sufficient. For the experiments in Section \ref{Sec_Experiments}, we will use $K=8$ weight functions.

For $d_Z>1$, one can for example take functions
$$w_{d,k}(\mathbf z)= \sign(z_d -a_{d,k}),\, d=1,\ldots, d_Z, \, k=1,\ldots, k_0,$$
where $a_{d,k}$ is the empirical $\frac{k}{k_0+1}$-quantile of $Z_j$. This means that including the constant weight function $w(z)=1$, we have $K=k_0\cdot d_Z+1$ weight functions in total. For the experiments in Section \ref{Sec_Experiments}, we will use $k_0=7$ weight functions per dimension of $Z$.

\section{Experiments}\label{Sec_Experiments}
We implement all our methods in R. Our implementations are based on the functions from the package \texttt{GeneralisedCovarianceMeasure}, see \cite{GCMPackage}. Our code is available as the R-package \texttt{weightedGCM} on CRAN.

\subsection{Detailed Comparison}
We have seen that for the introductory example in Section \ref{Sec_ConcreteExample}, the power of the GCM heavily depends on the function introducing the dependence, whereas WGCM.est and WGCM.fix are more stable in this respect. To investigate the observed effect more systematically, we consider a family of functions $h_{b_1,b_2}(t)$ indexed by two parameters $b_1,b_2\in [0,1]$, where $b_1$ is a parameter for symmetry and $b_2$ is a parameter for wiggliness. Define
\begin{align*}
h_1(t,b_1,b_2)&= (b_1\cos(3 b_2 t)+(1-b_1)\sin(3 b_2 t))\exp(-t^2/2),\\
h_2(t,b_1)&=0.3 (b_1 |t|+(1-b_1) t)
\end{align*}
and let
$$h_{b_1,b_2}(t)=(1-b_2) h_2(t,b_1)-b_2 h_1(t,b_1,b_2).$$
Plots of the functions $h_{b_1,b_2}$ for various values of $b_1$ and $b_2$ can be found in Figure \ref{fig_FunSymmWigg}. 

\begin{figure}
	\begin{center}
	\includegraphics[width=0.95\textwidth]{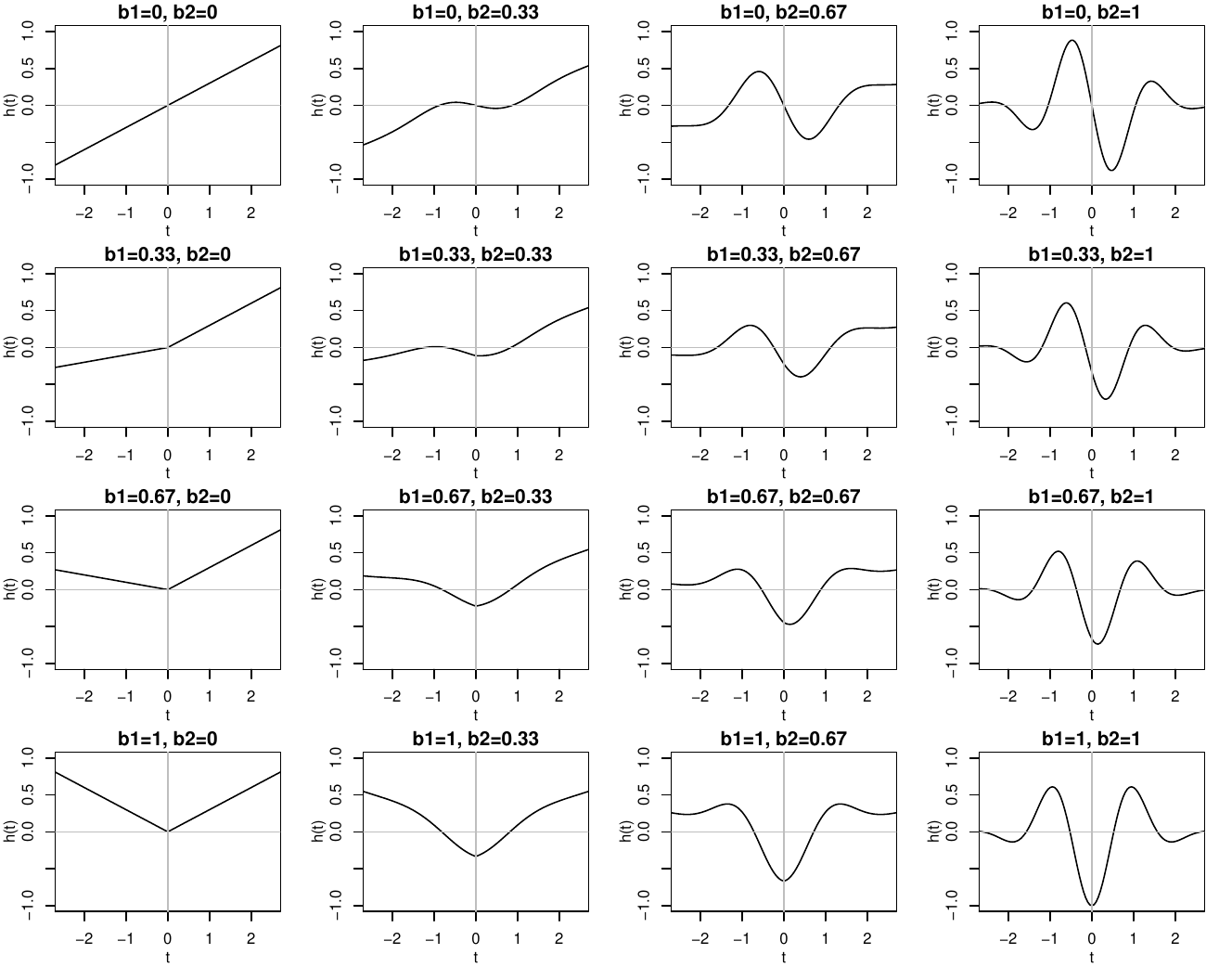}
	\end{center}
	\caption{Plots of $h_{b_1,b_2}$ for $b_1,b_2\in \left\{0,\frac{1}{3},\frac{2}{3},1\right\}$.}
	\label{fig_FunSymmWigg}
\end{figure}

\subsubsection{Null Hypothesis}\label{SubSec_ExpH0}
We first look at the level of the tests under the null hypothesis in the following three settings, which are similar to Section 5.2 in \cite{ShahPetersCondInd}:
\begin{itemize}[leftmargin=30mm]
	\item[(1D)] $Z\sim \mathcal N(0,1),\, X= h_{b_1, b_2}(Z)+0.3\mathcal N(0,1),\, Y=h_{b_1, b_2}(Z)+0.3 \mathcal N(0,1)$;
	\item[(10D.Add)] $Z_1,\ldots, Z_{10} \sim \mathcal N(0,1), \, X=h_{b_1, b_2}(Z_1)- h_{b_1, b_2}(Z_2)+0.3\mathcal N(0,1), \, Y=h_{b_1, b_2}(Z_1)+h_{b_1, b_2}(Z_2)+0.3\mathcal N(0,1)$;
	\item[(10D.NonAdd)]  $Z_1,\ldots, Z_{10} \sim \mathcal N(0,1), \, X=\sign\left(h_{b_1, b_2}(Z_1)+ h_{b_1, b_2}(Z_2)\right)+0.3\mathcal N(0,1), \, Y=\sign\left(h_{b_1, b_2}(Z_1)- h_{b_1, b_2}(Z_2)\right)+0.3\mathcal N(0,1)$; 
\end{itemize}

For every combination of $b_1, b_2\in \left\{0,\frac{1}{3},\frac{2}{3}, 1\right\}$, we simulate $500$ data sets with $n$ samples for each setting and perform the following tests:

\begin{itemize}[leftmargin=30mm]
	\item[(GCM)] The (unweighted) GCM by \citet{ShahPetersCondInd}.
	\item[(WGCM.est)]The WGCM with one single estimated weight function, where $30 \%$ of the samples are used to estimate the weight function.
	\item[(WGCM.fix)]  The WGCM with fixed weight functions
	$$w_{j,l}(z)= \sign(z_j-a_{j,l}), \, j=1,\ldots, d_Z, \, l=1,\ldots, k_0,$$
	where $a_{j,l}$ is the empirical $\frac{l}{k_0+1}$-quantile of $Z_j$. Additionally, we take the constant weight function $w_0(z)=1$. We will use $k_0=7$. This means we have a total of $8$ weight functions in the setting (1D) and $71$ weight functions in the settings (10D.Add) and (10D.NonAdd).
\end{itemize}

We perform all three tests both with regression splines using \texttt{gam} from the R package \texttt{mgcv} (see \citealp{MgcvPackage}) and with boosted regression trees using the package \texttt{xgboost} (see \citealp{ChenXgboostPaper} and \citealp{ChenXgboostPackage}). For WGCM.est, we always use the same regression method both for step 1 and step 2 of Method \ref{meth_WGCMEst} and for the calculation of the test statistic. Plots of the rejection rates at level $\alpha=0.05$ can be found in Figure \ref{fig_RejRatesH0}. For each combination of setting and method, we plot the rejection rate for all 16 combinations of $b_1$ and $b_2$. The lower horizontal gray line denotes the individual one-sided test region at level $0.05$ for each dot based on a $Bin(500,0.95)$ distribution. The upper horizontal gray line denotes the joint one-sided test region at level $0.05$ for each group of $16$ dots based on $16$ i.i.d. $Bin(500, 0.95)$ variables.

We see that with a sample size of $n=400$ and in the settings (1d) and (10dAdd), all the methods seem to perform reasonably well in the sense that the rejection rates are not significantly larger than $0.05$. In the setting (10dNonAdd), the methods based on \texttt{gam} seem to reject too often. This was to be expected since $\texttt{gam}$ assumes an additive structure. For lower sample size, some of the methods seem to reject too often also in the settings (1d) and (10dAdd). The guarantees for the level of the tests heavily rely on the quality of the approximation of $\Ex[X|Z]$ and $\Ex[Y|Z]$. Therefore, we suspect that the large rejection rates in some settings with $n=100$ and $n=200$ are due to the fact that some of the functions used in the simulation are too complex to be well approximated with the smaller sample sizes.

\begin{figure}
	\begin{center}
	\includegraphics[width=0.99\textwidth]{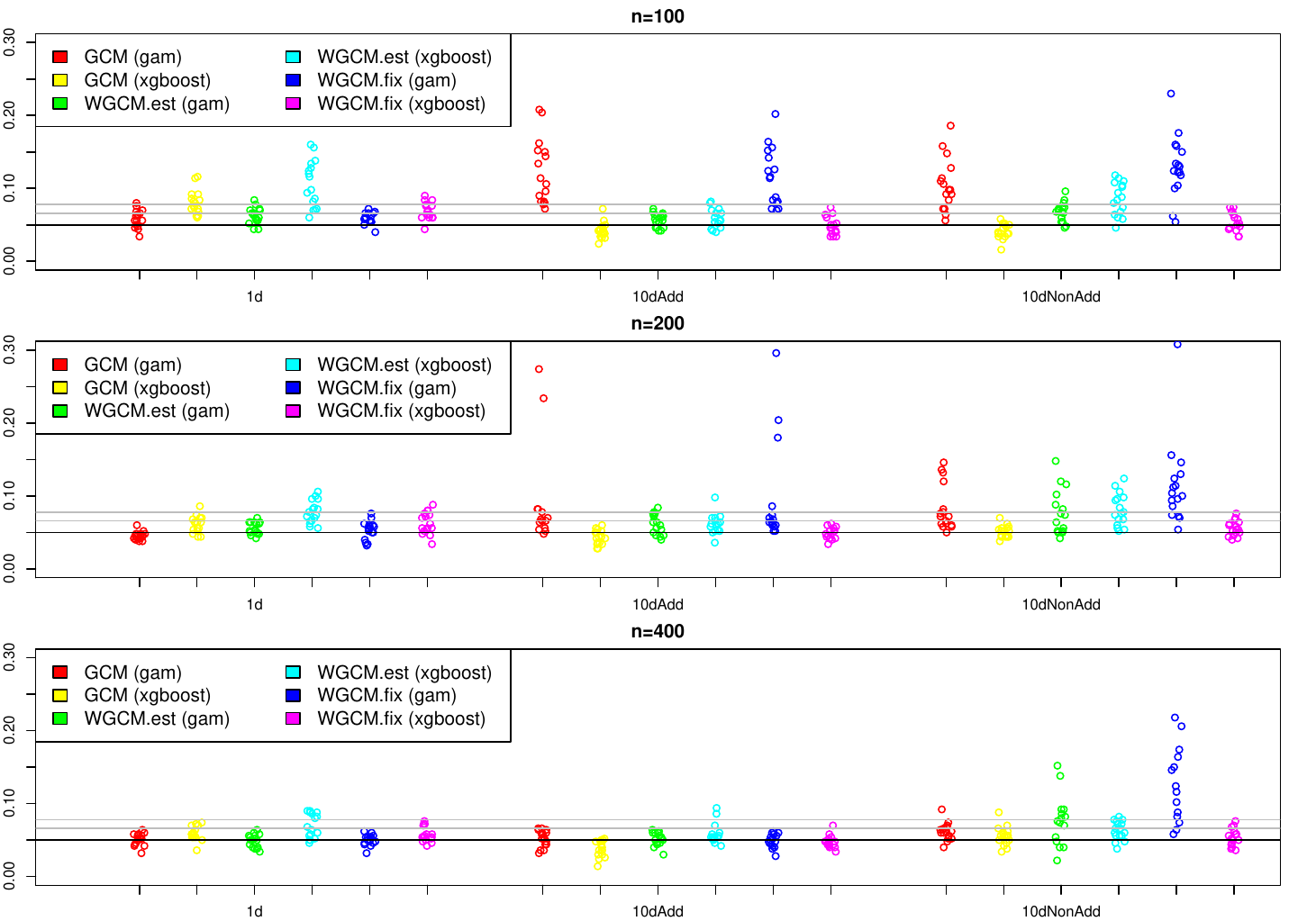}
	\end{center}
	\caption{Rejection rates under the null hypothesis. For each combination of setting and method, every point corresponds to one of $16$ combinations of $b_1, b_2\in \{0, \frac{1}{3}, \frac{2}{3}, 1\}$. The rejection rates are calculated using  500 independent simulations. The two horizontal gray lines are at $0.066$ and $0.078$ and give individual and joint one-sided test regions at level $0.05$ for the null hypothesis "the rejection rate of the test is less than or equal to $0.05$": The value $0.066$ is the $0.95$-quantile of $B/500$, where $B\sim {Bin}(500, 0.05)$ distribution, so the line denotes the test region of an individual dot. The value $0.078$ is the $0.95$-quantile of $\max(B_1,\ldots, B_{16})/500$, where $B_1,\ldots, B_{16}$ are i.i.d. ${Bin}(500, 0.05)$, so the line denotes the joint test region for each group of $16$ dots.}
	\label{fig_RejRatesH0}
\end{figure}

\subsubsection{Alternative Hypothesis}
For the alternative, we consider the same settings (1D), (10D.Add), (10D.NonAdd), but modify them by adding $h_{c_1, c_2}(X)$ to $Y$ for some $c_1, c_2 \in [0,1]$. We simulate 100 data sets for every combination of $b_1, b_2, c_1, c_2\in \{0, \frac{1}{2}, 1\}$ and $n \in \{100, 200, 400\}$ and calculate the rejection rates of the methods GCM, WGCM.est and WGCM.fix both with \texttt{gam} and \texttt{xgboost}.

A significant difference to the simulations in Section 5.2 of \citet{ShahPetersCondInd} is that the function $h_{c_1, c_2}(X)$ introducing the dependence is not just a linear function, but is also varied.

For each setting (1D), (10D.Add) and (10D.NonAdd), both regression methods \texttt{gam} and \texttt{xgboost} and sample sizes $n\in\{100,200, 400\}$, we plot the rejection rates of the three methods GCM, WGCM.est and WGCM.fix against each other. Thus, each subplot consists of $9\cdot 9 =81$ points, each corresponding to one combination of $b_1$, $b_2$, $c_1$ and $c_2$, see Figures \ref{fig_RRHA_1D}, \ref{fig_RRHA_10DAdd} and \ref{fig_RRHA_10DNonAdd}.

Let us first look at the first, second, fourth and fifth columns of the plots. These plot the rejection rates of GCM against the rejection rates of WGCM.est and WGCM.fix, respectively. We see that the behavior of the GCM and the two variants of the WGCM is asymmetric. The part on the bottom right of the corresponding plots is free, indicating that there are no situations where the GCM has a very high and the WGCM a very low rejection rate. In contrast, we see situations where the WGCM has a high, but the GCM a low rejection rate (points in the top left). The effect gets more pronounced for larger sample size. Nevertheless, there are also many points below the diagonal connecting $(0,0)$ and $(1,1)$. These indicate situations, where the GCM works better than the WGCM. To summarise, we see that indeed, the WGCM enlarges the space of alternatives against which the test has power. This comes at the cost of having less power in situations where the GCM already works well. In the setting (10D.NonAdd), the majority of the points lies below the diagonal in the plots comparing GCM to one of WGCM.fix and WGCM.est. As mentioned and observed in the simulations under the null hypothesis, we should not put too much trust in the results of (10D.NonAdd) with regression method \texttt{gam}. For the results using \texttt{xgboost}, it may also be possible that the picture would look more similar to the situations (1D) and (10D.Add) for larger sample size.

The third and sixth column of the plots compare the rejection rates of WGCM.est and WGCM.fix. We observe that WGCM.fix seems to perform better than WGCM.est, where the effect is more pronounced for (10D.Add) and (10D.NonAdd) than for (1D). However, by changing some parameters of the methods, for example the fraction of the samples used to estimate the weight function in WGCM.est and the number and type of weight functions for WGCM.fix, the picture could possibly look different.

\begin{figure}
	\begin{center}
	\includegraphics[width=0.99\textwidth]{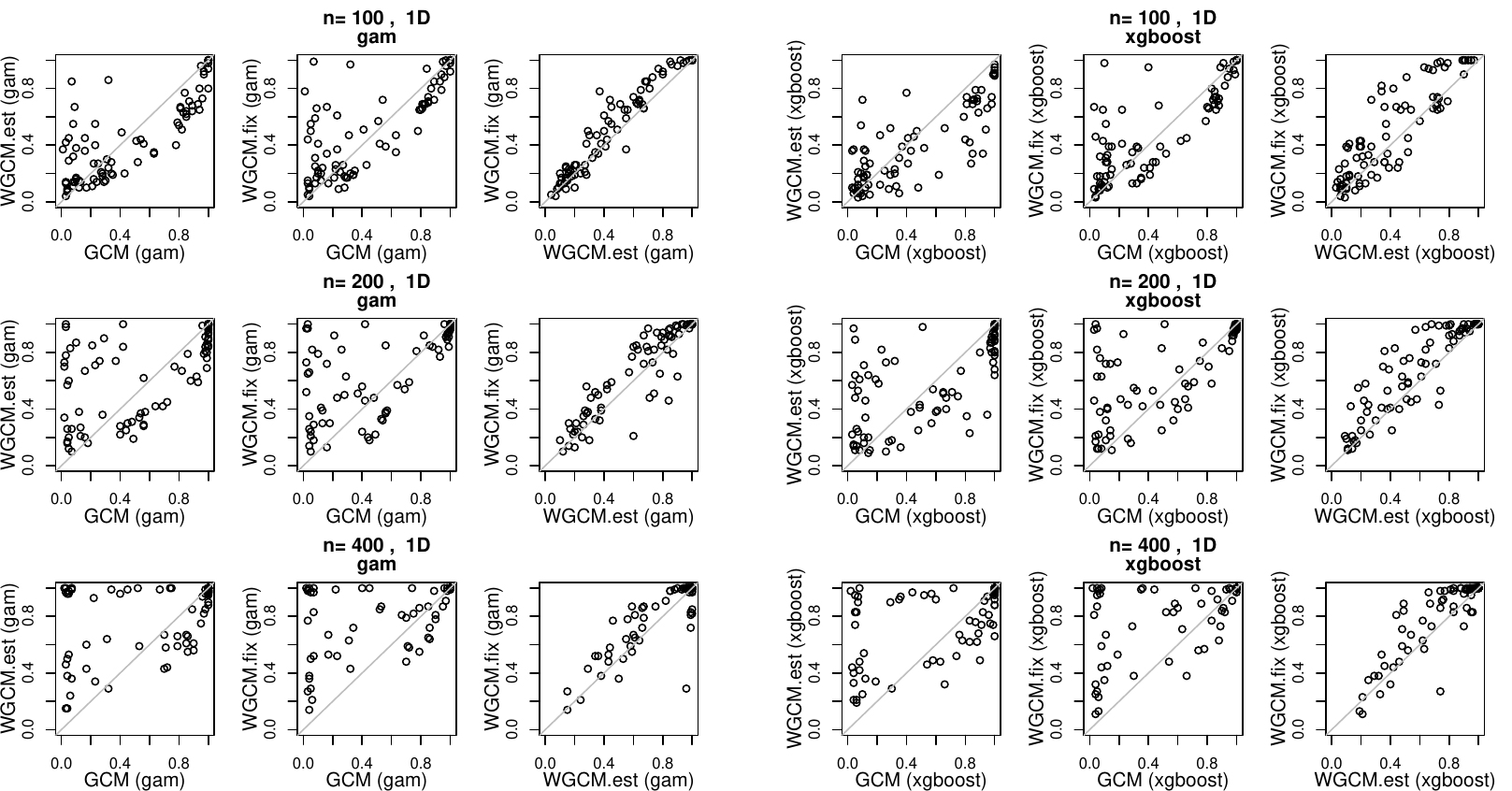}
	\end{center}
	\caption{Rejection rates under the alternative hypothesis in situation (1D). The rejection rates of GCM, WGCM.est and WGCM.fix are plotted against each other for regression methods \texttt{gam} (left) and \texttt{xgboost} (right) and $n\in\{100,200,400\}$. Each subplot consists of $9\cdot 9=81$ points, each corresponding to one combination of $b_1, b_2, c_1, c_1\in \{0,0.5, 1\}$.}
	\label{fig_RRHA_1D}
\end{figure}

\begin{figure}
	\begin{center}
	\includegraphics[width=0.99\textwidth]{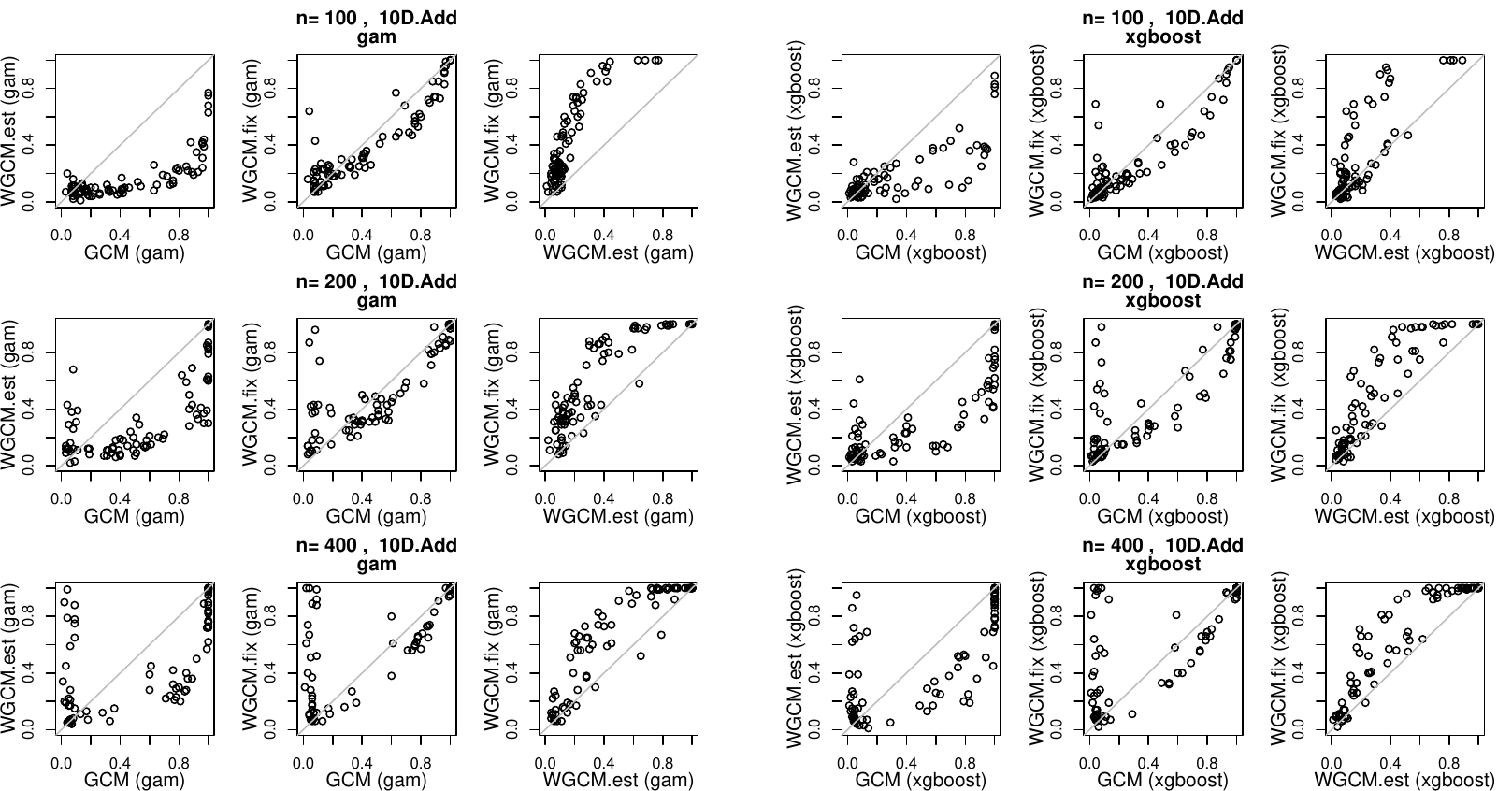}
	\end{center}
	\caption{Rejection rates under the alternative hypothesis in situation (10D.Add). The rejection rates of GCM, WGCM.est and WGCM.fix are plotted against each other for regression methods \texttt{gam} (left) and \texttt{xgboost} (right) and $n\in\{100,200,400\}$. Each subplot consists of $9\cdot 9=81$ points, each corresponding to one combination of $b_1, b_2, c_1, c_1\in \{0,0.5, 1\}$.}
	\label{fig_RRHA_10DAdd}
\end{figure}

\begin{figure}
	\begin{center}
	\includegraphics[width=0.99\textwidth]{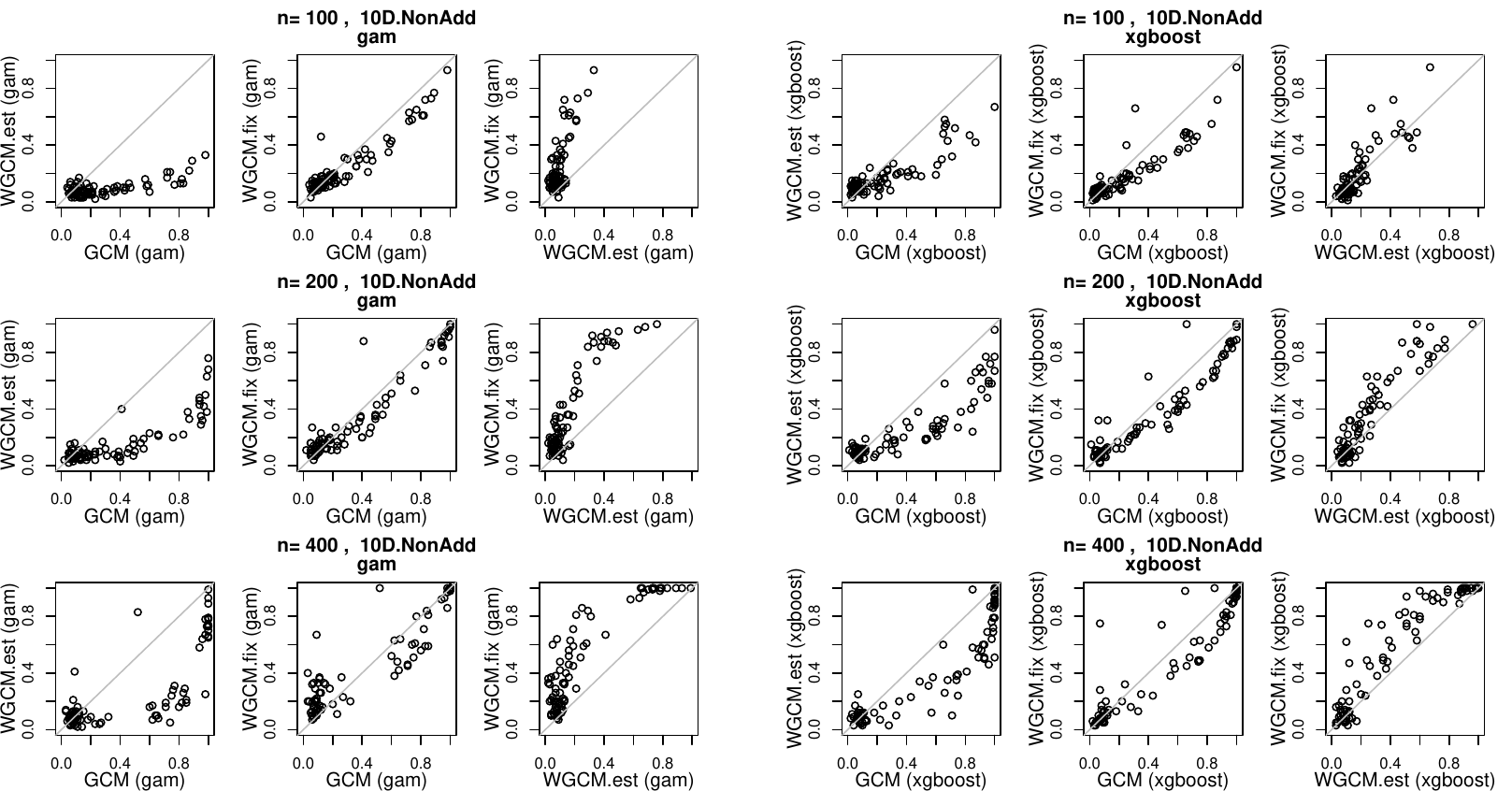}
	\end{center}
	\caption{Rejection rates under the alternative hypothesis in situation (10D.NonAdd). The rejection rates of GCM, WGCM.est and WGCM.fix are plotted against each other for regression methods \texttt{gam} (left) and \texttt{xgboost} (right) and $n\in\{100,200,400\}$. Each subplot consists of $9\cdot 9=81$ points, each corresponding to one combination of $b_1, b_2, c_1, c_1\in \{0,0.5, 1\}$.}
	\label{fig_RRHA_10DNonAdd}
\end{figure}

\subsection{Variable Selection and Importance}
In this section, we briefly sketch how conditional independence tests allow to perform variable selection tasks.

Suppose we have a response variable $Y$ and predictors $X_j, j=1,\ldots, d$, where all random variables take values in $\mathbb R$. For all $j=1, \ldots, d$, we can test
$$ H_0: X_j \indep Y|\mathbf X_{-j}.$$
We expect the corresponding $p$-value to be small if $X_j$ yields additional information for predicting $Y$ that is not contained in $\mathbf X_{-j}$. This can be seen as a generalization of the individual $t$-tests in linear regression. We can then look, for which $j$ the variable $X_j$ is significant after a multiple testing correction. This approach is for example described in \citet{WatsonWrightTesting}, where they compare it to a new method.

The following example is taken from \cite{AzadkiaSimple}, where it appears as Example 7.4, with the difference that we only use a 50-dimensional $X$ and $n=500$ samples instead of a 1000-dimensional $X$ with $n=2000$ samples. We use GCM, WGCM.est and WGCM.fix with \texttt{xgboost} for the regressions.

\begin{example}
Let $X_1,\ldots, X_{50}$ be i.i.d. $\mathcal N(0,1)$ and let $Y=X_1X_2+X_1-X_3+\epsilon$, where $\epsilon \sim \mathcal N(0,1)$ independent of $X_1, \ldots, X_{50}$. We simulate $100$ data sets with a sample size of $n=500$. For each data set and all $j=1,\ldots, 50$, we calculate a $p$-value for $ H_0: X_j \indep Y|\mathbf X_{-j}$ using the three tests. After a multiple testing correction using Holm's procedure \citep{HolmSeqRej}, we observe that GCM never finds the correct set of predictors at significance level $\alpha=0.05$, but most often, it just finds $X_1$ and $X_3$ as significant variables. WGCM.est finds the correct set of predictors in 86 out of 100 cases and WGCM.fix even finds the correct predictors in 99 of the 100 cases. We use the same type of weight functions with $k_0=7$ as in Section \ref{SubSec_ExpH0} for WGCM.fix and for WGCM.est, we use $30\%$ of the samples to estimate the weight functions.
\end{example}

Hence, this example illustrates that the two variants of the WGCM can find more dependencies than the GCM. In the following, we also look at some real data sets.

\subsubsection{Boston Housing Data}\label{Sec_Boston}
As a first example, we analyse the Boston housing data, see \citet{HarrisonRubinfeldBoston}. Among the set of 13 predictors, we want to find the most relevant ones to predict the target variable \texttt{medv}, which is the median value of owner-occupied homes. Plots of the $p$-values after multiple testing correction using Holm's method can be found in Figure \ref{fig_PValBoston}.
We performed all regressions using $\texttt{xgboost}$ for GCM, WGCM.est and WGCM.fix

\begin{figure}
	\begin{center}
	\includegraphics[width=0.8\textwidth]{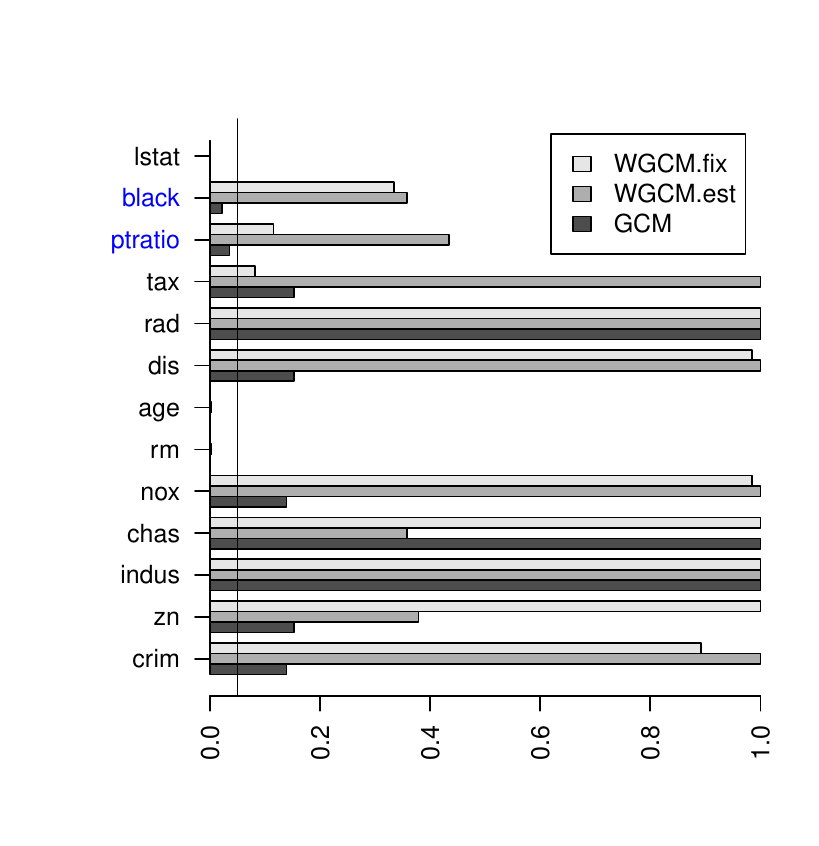}
	\end{center}
	\caption{The $p$-values for the Boston housing data set for the prediction of the variable \texttt{medv} adjusted for multiple testing using Holm. For the variables denoted in blue, GCM shows a significant effect at level $\alpha=0.05$, but WGCM.fix and WGCM.est do not.}
	\label{fig_PValBoston}
\end{figure}

We see that in this case, the GCM finds $5$ significant variables at significance level $\alpha=0.05$, whereas WGCM.fix and WGCM.est only find $3$. Hence, this is an example where the original GCM performs moderately better (in terms of power) than the new variants.

\subsubsection{Online News Popularity Data}\label{sec_ONP}
We analyse the online news popularity data set, see \cite{PaperONP}. The data can be obtained from the UCI Machine Learning Repository, see \cite{UCIRepository}. Removing missing values, we have $39644$ observations of $58$ predictors and one target variable. Each observations corresponds to one article published by \url{www.mashable.com}. The target variable is the number of shares of the article, whereas the predictors are various features of the article, ranging from the number of words to sentiment scores. For the analysis, we take the log-transform of the target variable. Plots of the $p$-values for the three methods can be found in Figure \ref{fig_PValONP}.
We performed the regressions using $\texttt{xgboost}$ for GCM, WGCM.est and WGCM.est.

\begin{figure}
	\begin{center}
	\includegraphics[height=0.89\textheight]{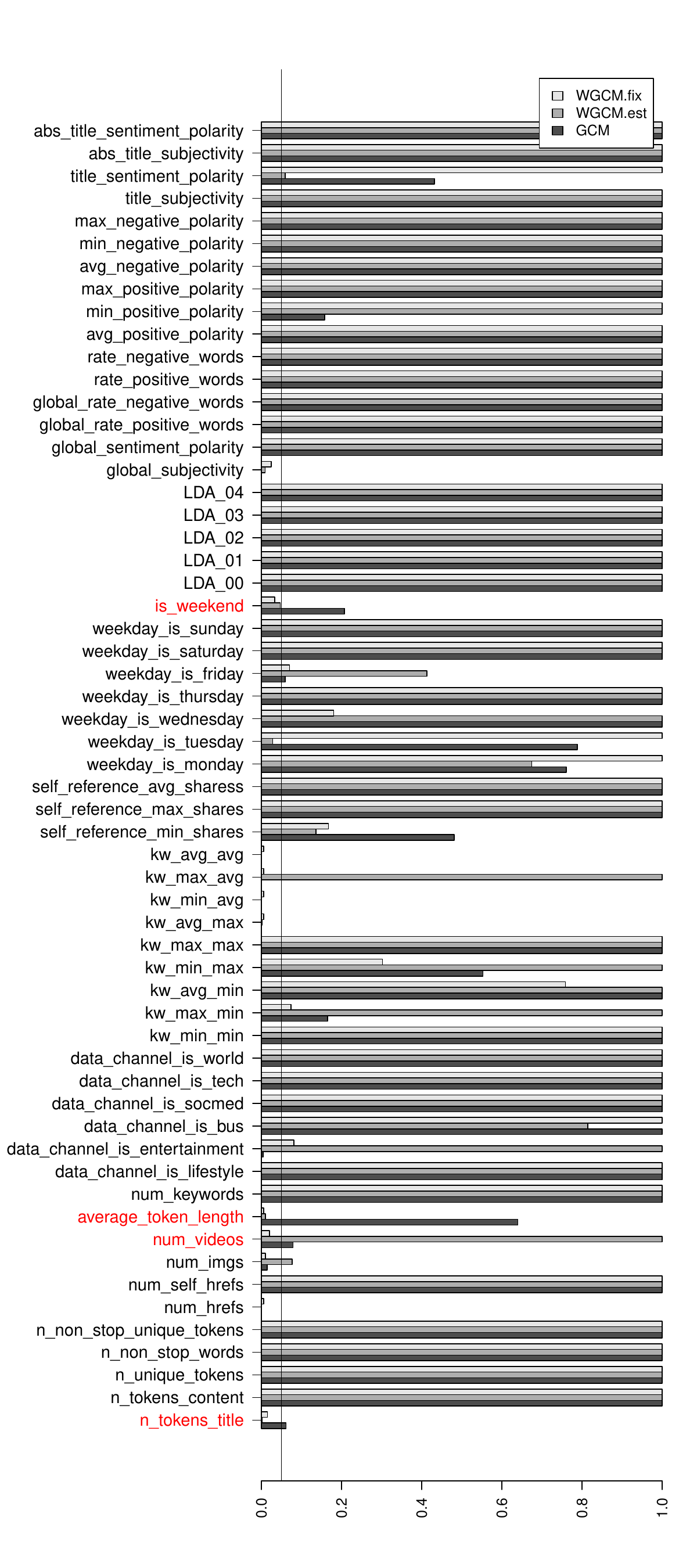}
	\end{center}
	\caption{$p$-values for the online news popularity data set for the prediction of the variable \texttt{shares} adjusted for multiple testing using Holm. For the variables denoted in red, WGCM.fix shows a significant effect at level $\alpha=0.05$, but GCM does not.}
	\label{fig_PValONP}
\end{figure}

We see that in this case, the GCM finds $8$ significant variables at significance level $\alpha=0.05$, whereas WGCM.est finds $9$ significant variables and WGCM.fix finds $11$ significant variables. Hence both versions of the WGCM perform slightly better (in terms of power) than the original GCM.

\subsubsection{Wave Energy Converters Data}
We look at the wave energy converters data set, available at the UCI Machine Learning Repository \citep{UCIRepository}. The data set displays the (simulated) power output for different configurations of wave energy converters in different real wave scenarios, see \citet{PaperWEC}. We restrict ourselves to the data of Tasmania. We have $32$ predictor variables, which consist of the $x$ and $y$-coordinates of $16$ wave energy converters forming a wave farm. The target is the total power output of the farm. We randomly sample $3000$ of the $72000$ configurations in the data set. By symmetry, it seems reasonable that either no or all predictor variables are significant. In fact, with Holm's method to adjust for multiple testing, GCM does not find any significant predictor at significance level $\alpha=0.05$, whereas WGCM.fix considers all $32$ predictors significant and WGCM.est finds $11$ significant predictors. Plots of the $p$-values for the three methods can be found in Figure \ref{fig_PValTWEC}. In this case, the two WGCM methods are clearly superior to GCM (in terms of power).

\begin{figure}
	\begin{center}
	\includegraphics[width=0.7\textwidth]{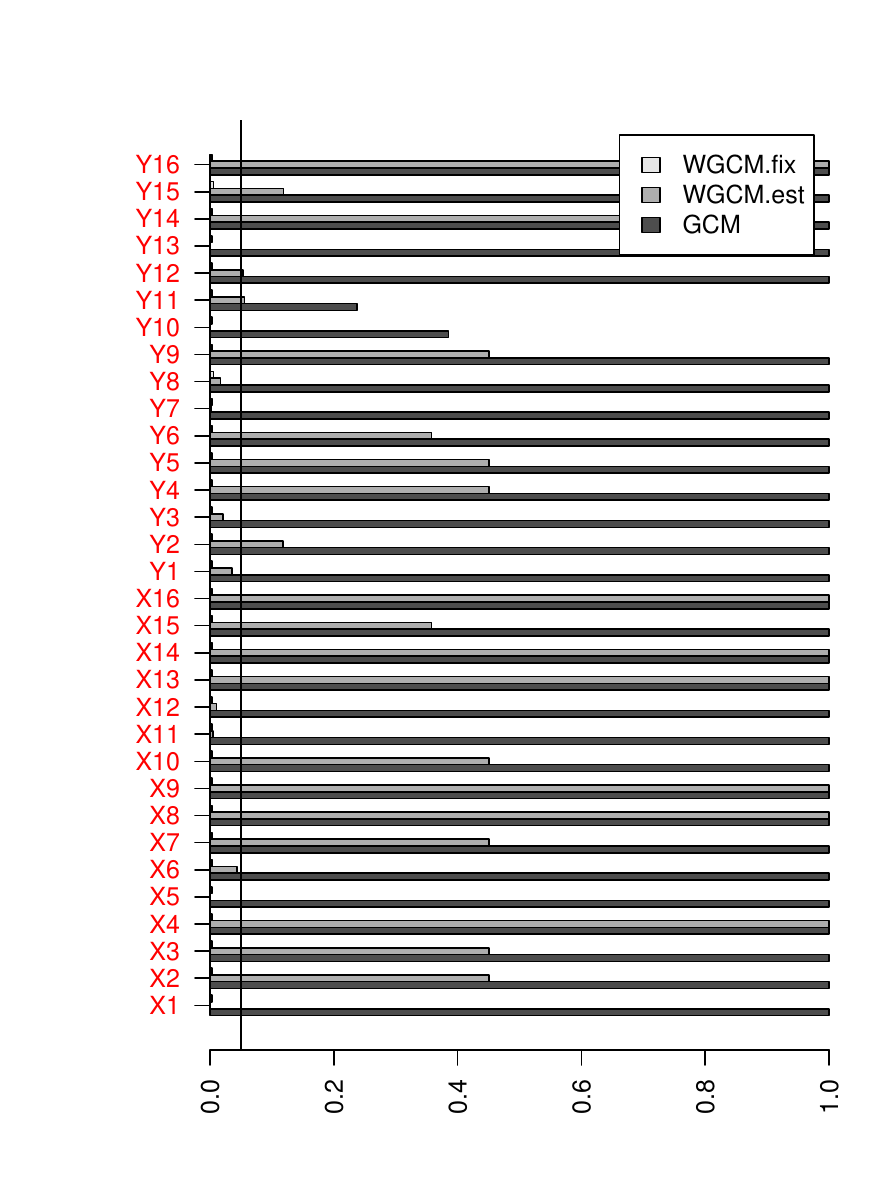}
	\end{center}
	\caption{$p$-values of the predictor variables for the Tasmanian wave energy converters data set for the prediction of the total power output adjusted for multiple testing using Holm. At level $\alpha=0.05$, WGCM.fix finds a significant effect for all $32$ variables and WGCM.est for $11$ variables. GCM does not find any significant effect.}
	\label{fig_PValTWEC}
\end{figure}

\section{Conclusion}\label{Sec_Conclusion}
We have introduced the \textit{weighted generalised covariance measure} (WGCM) as a new test for conditional independence and we provide an implementation in the R-package \texttt{weightedGCM}, which is available on CRAN. The WGCM is based on the \textit{generalised covariance measure} (GCM) by \cite{ShahPetersCondInd}. We gave two versions of the WGCM in the setting of univariate $X$ and $Y$. Their generalisations to the setting of multivariate $X$ and $Y$ can be found in Appendix \ref{sec_MultWGCM}. To give guarantees for the correctness of the tests under appropriate conditions, we could benefit from the work by \cite{ShahPetersCondInd}. We prove that WGCM.est and WGCM.fix have full asymptotic power against a broader class of alternatives than GCM. Finally, we compared our methods to the original GCM using simulation and on real data sets. We have seen for finite samples that our approach allows to enlarge the set of alternatives against which the test has power. This comes at the cost of having reduced power in settings where the GCM already performs well. An application in a variable selection task on real data sets confirmed that it depends on the data at hand which method is to be preferred.
If the sample size is small and the form of the dependence is simple, the GCM will typically be the better choice. However, if the sample size is moderately large, choosing WGCM.fix or WGCM.est is typically beneficial.

\subsection{Practical Issues}
Choosing the optimal test among the GCM and the two versions of the WGCM remains a challenging task in practice, though it is always possible to perform several tests and use a multiple testing correction. It is worth mentioning that in principle, one could combine WGCM.est and WGCM.fix by including an estimated weight function together with the fixed weight functions of WGCM.fix. Such a case is implicitly covered by Theorem \ref{thm_MultEstWMultDim} in Appendix \ref{Sec_mWGCMEst}. However, including an estimated weight function has the disadvantage that also the analysis of the fixed weight functions can only be done on half of the sample. For this reason, we think that performing the two tests WGCM.est and WGCM.fix individually and using Bonferroni correction is the better choice for combining the two tests.

Moreover, the randomness of the sample splitting for WGCM.est leads to the question how stable the test is with respect to this random split. We investigate this in Appendix \ref{App_Stability}. Methods to aggregate $p$-values obtained from multiple sample splits are for example treated in \cite{MeiMeiBuPValHDReg} and in \cite{DiCiccioMultDatSplit}. The $p$-values obtained using such methods are more stable and provably controlling type I error. However, it is not clear how they affect the power of the test, even though the methods were found to improve power as well in other problems, see \cite{MeiMeiBuPValHDReg}.

\subsection{Outlook}
There are many other open questions remaining. On the theoretical side, it would be desirable to give more concise results about the power properties of WGCM.est and WGCM.fix. For WGCM.est, Corollary \ref{cor_PowerWGCMest} relies on the consistency of Method \ref{meth_WGCMEst} to estimate $w(z)=\sign(\Ex_P[\epsilon\xi|Z=z])$, which is not straightforward to verify and depends on the regression method used. For WGCM.fix, it would be interesting to see, if there is a set of fixed weight functions with better properties than the ones described in Section \ref{sec_ChoiceWeightFun}.

On the practical side, there are many parameters that can be varied and whose effects could be inspected more closely. For WGCM.est, it would be desirable to have guidelines for the fraction of the data to be used to estimate the weight function. For WGCM.fix, the choice of the number of fixed weight functions is unknown for optimal power. However, our current default choice used in the empirical analysis seems to work reasonably well.

Also Method \ref{meth_WGCMEst} to estimate the weight function could be investigated further. We do not claim our procedure to estimate $w(z)=\sign(\Ex_P[\epsilon\xi|Z=z])$ to be optimal. It is simply a straightforward way how the conditions of Theorem \ref{thm_WGCM1DEst} can be satisfied. However, there might be more powerful procedures. It could be worth investigating if a smoothed version of the sign is beneficial. This would have the advantage of giving less weight to values of $z$ for which the estimate of $\Ex_P[\epsilon\xi|Z=z]$ is close to $0$ and which are thus more likely to obtain the wrong sign. However, the smoothed version of the sign still has to be scaled in such a way that the conditions for the correct null distribution are satisfied.


\acks{We are grateful to Rajen D. Shah for pointing out that WGCM.est has power against all alternatives for binary $X$ and $Y$. We thank Rajen D. Shah and Jonas Peters for answering our questions about their paper \cite{ShahPetersCondInd}. We also thank three referees and an action editor for helpful comments. The research of Julia H\"orrmann is supported by ETH Foundations of Data Science. Peter B\"uhlmann received funding from the European Research Council (ERC) under the European Union's Horizon 2020 research and innovation programme (grant agreement No. 786461).}


\newpage

\appendix
\section{The Multivariate WGCM}\label{sec_MultWGCM}
In this section, we show how the methods of Section \ref{SubSec_WGCM1dMultFix} can also be applied in the case of multivariate $X$ and $Y$. This can be done both in the context of fixed and of estimated weight functions.

\subsection{Multivariate WGCM With Fixed Weight Functions (mWGCM.fix)}\label{sec_MultFixMultDim}
The procedure is again very similar to Section 3.2 of \citet{ShahPetersCondInd}.
The idea in the case of multivariate $X$ and $Y$ stays the same, but we calculate the test statistic for every pair of $X_j$ and $Y_l$.

For any distribution $P\in\mathcal E_0$ and all $j=1,\ldots, d_X$ and $l=1,\ldots, d_Y$, define
$$f_{P,j}(z)=\Ex_P\left[X_j|Z=z\right]\quad \text{and}\quad g_{P,l}(z)=\Ex_P\left[Y_l|Z=z\right].$$
Then, we can write
$$X_j=f_{P,j}(Z)+\epsilon_{P,j}\quad \text{and} \quad Y_l=g_{P,l}(Z)+\xi_{P,l}.$$
Let $\epsilon_{P,ij}=x_{ij}-f_{P,j}(z_i)$ and $\xi_{P,il}=y_{il}-g_{P,l}(z_i)$.

Let $\hat f_{j}^{(n)}$ and $\hat g_{l}^{(n)}$ be estimates of $f_{P,j}$ and $g_{P,l}$, obtained by regression of $\mathbf{X}_j^{(n)}$  and $\mathbf{Y}_l^{(n)}$, on $\mathbf{Z}^{(n)}$. Note that $\mathbf{X}_j^{(n)}=(x_{1j},\ldots, x_{nj})^T$ is the $j$th column of the data matrix $\mathbf X\upind n$, or equivalently the column of samples from the random variable $X_j$.

For each $j,l$ let $K(j,l)\in\mathbb N$ and let $\left\{w_{jlk}:j=1,\ldots, d_X, l=1,\ldots, d_Y, k=1,\ldots, K(j,l)\right\}$ be functions from $\mathbb R^{d_Z}\to \mathbb R$. We allow $K(j,l)$ to grow with $n$. We will work under the assumption that the functions $w_{jlk}$ are uniformly bounded for all $j$, $l$ and $k$.
Let
$$\mathbf K=\mathbf K(n)=\sum_{j=1}^{d_X}\sum_{l=1}^{d_Y} K(j,l).$$
Let $\mathbf R_{jlk} \in \mathbb R^n$ be the vector of products of the residuals corresponding to $X_j$ and $Y_l$ weighted by $w_{jlk}$, that is,
$$\mathbf R_{jlk}=\left(\begin{array}{c}(x_{1j}-\hat f_j(z_1)) (y_{1l}-\hat g_l(z_1))w_{jlk}(z_1)\\ \vdots\\ (x_{nj}-\hat f_j(z_n)) (y_{nl}-\hat g_l(z_n))w_{jlk}(z_n)\end{array}\right).$$
Let $T_{jlk}\upind n$ be the test statistic of the WGCM based on the vector $\mathbf R_{jlk}$, that is,
\begin{equation}\label{eq_DefStatMultFixWMultDim}
T_{jlk}\upind n=\frac{\sqrt n \bar{\mathbf R}_{jlk}}{\left(\frac{1}{n}\|\mathbf R_{jlk}\|_2^2-\bar{\mathbf R}_{jlk}^2\right)^{1/2}}\eqqcolon \frac{\tau_{N,{jlk}}\upind n}{\tau_{D,{jlk}}\upind n},
\end{equation}
with $\bar{\mathbf R}_{jlk}$ being the sample average of the coordinates of ${\mathbf R_{jlk}}$. Finally, let $\mathbf T\upind n=\left(T_{jlk}\upind n\right)_{jlk}\in\mathbb R^{\mathbf K}$ be the vector of all test statistics. We consider the maximum absolute value of the vector $\mathbf T\upind n$ as a test statistic,
$$S_n=\max_{j=1,\ldots, d_X,\, l=1,\ldots, d_Y,\, k=1,\ldots, K(j,l)}|T_{jlk}\upind n|.$$
In summary, everything works similarly to the univariate case with the added complication of having expressions with three subscripts instead of one. Define $\hat \Sigma\in \mathbb R^{\mathbf K\times \mathbf K}$ by
$$\hat \Sigma_{jlk,j'l'k'}=\frac{\frac{1}{n}\mathbf R_{jlk}^T \mathbf R_{j'l'k'}-\bar{\mathbf R}_{jlk}\bar {\mathbf R}_{j'l'k'}}{\left(\frac{1}{n}\|\mathbf R_{jlk}\|_2^2-\bar{\mathbf R}_{jlk}^2\right)^{1/2}\left(\frac{1}{n}\|\mathbf R_{j'l'k'}\|_2^2-\bar{\mathbf R}_{j'l'k'}^2\right)^{1/2}}.$$
Let $\hat{\mathbf T}\upind n= \left(\hat T_{jlk}\upind n\right)_{jlk}\in\mathbb R^{\mathbf K}$ have multivariate normal distribution with covariance $\hat \Sigma$ and mean $0$ and let 
$$\hat S_n=\max_{j,l,k}|\hat T_{j,l,k}\upind n|.$$
Let $\hat G_n$ be the quantile function of $\hat S_n$ given $\hat \Sigma$. $\hat G_n$ is random, depends on the data and can be approximated by simulation. We need similar conditions to (A1a), (A1b) and (A2) for Theorem \ref{thm_MultFixedWOneDim}. Let
$$\sigma_{jl}^2=\sigma_{P,jl}^2=\Ex_P\left[\epsilon_j^2\xi_l^2\right].$$
Consider a sequence $(D_n)_{n\in\mathbb N}$ with $D_n\geq 1$.
\begin{itemize}[leftmargin=15mm]
	\item[(C1a)] $\max_{r=1,2}\Ex_P\left[\left|\frac{\epsilon_j\xi_l}{\sigma_{jl}}\right|^{2+r}/D_n^r\right]+\Ex_P\left[\exp\left(\left|\frac{\epsilon_j\xi_l}{\sigma_{jl}}\right|/D_n\right)\right]\leq 4$ for all $j=1,\ldots, d_X$, and $l=1,\ldots, d_Y$;
	\item[(C1b)] $\max_{r=1,2}\Ex_P\left[\left|\frac{\epsilon_j\xi_l}{\sigma_{jl}}\right|^{2+r}/D_n^{r/2}\right]+\Ex_P\left[\max_{j,l}\left|\frac{\epsilon_j\xi_l}{\sigma_{jl}}\right|^4/D_n^2\right]\leq 4$ for all $j=1,\ldots, d_X$ and $l=1,\ldots, d_Y$;
	\item[(C2)] $D_n^2\left(\log(\mathbf Kn)\right)^7/n\leq C n^{-c}$ for some constants $C,c>0$ that do not depend on $P\in\mathcal P$.
\end{itemize}

We obtain the analogue of Theorem \ref{thm_MultFixedWOneDim} (and also of Theorem 9 in \citealp{ShahPetersCondInd}). For $j=1,\ldots, d_X$ and $l=1,\ldots, d_Y$, let
\begin{align}
A_{f,j}=\frac{1}{n}\sum_{i=1}^n (f_{P,j}(z_i)-\hat f_j(z_i))^2, \label{eq_DefAfj}\\
A_{g,l}=\frac{1}{n}\sum_{i=1}^n (g_{P,l}(z_i)-\hat g_l(z_i))^2. \label{eq_DefAgl}
\end{align}

\begin{theorem}[mWGCM.fix]\label{thm_MultFixedWMultDim}
Let $\mathcal P\subset \mathcal P_0$. Assume that there exist $C,c> 0$ such that for all $n\in\mathbb N$ and $P\in\mathcal P$ there exists $D_n\geq 1$ such that either (C1a) and (C2) or (C1b) and (C2) hold. Furthermore, assume that there exist $C_1,c_1>0$ (independent of $n$) such that for all $j=1,\ldots, d_X$, $l=1,\ldots, d_Y$ and $k=1,\ldots, K(j,l)$ and for all $P\in\mathcal P$, we have $|w_{jlk}|\leq C_1$ and $\Ex_P\left[\epsilon_j^2\xi_l^2 w_{jlk}(Z)^2\right]\geq c_1 \sigma_{jl}^2$. Assume that 
\begin{align}
\max_{j,l}\frac{1}{\sigma_{jl}^2}A_{f, j} A_{g,l}=o_\mathcal P\left(n^{-1}\log(\mathbf K)^{-4}\right). \label{eq_MultFixedWMultDimCondA0}
\end{align}
Assume that there exist sequences $\left(\tau_{f,n}\right)_{n\in\mathbb N}$ and $\left(\tau_{g,n}\right)_{n\in\mathbb N}$ as well as positive real numbers $s_{g,jl}$, $t_{g,jl}$, $s_{f,jl}$ and  $t_{f,jl}$ possibly depending on $P\in\mathcal P$ such that for all $j=1,\ldots, d_X$, $l=1,\ldots, d_Y$
$$s_{f,jl}t_{f,jl}=\sigma_{jl},\quad s_{g,jl}t_{g,jl}=\sigma_{jl},$$
and such that
\begin{align}
\max_{i,j,l}|\epsilon_{P,ij}|/t_{g,jl}=O_\mathcal P(\tau_{g,n}),\quad \max_{j,l}A_{g, l}/s_{g,jl}^2=o_\mathcal P\left(\tau_{g,n}^{-2}\log(\mathbf K)^{-4}\right) \label{eq_MultFixedWMultDimCondC0}\\
\max_{i,j,l}|\xi_{P,il}|/t_{f,jl}=O_\mathcal P(\tau_{f,n}),\quad \max_{j,l}A_{f,j}/s_{f,jl}^2=o_\mathcal P\left(\tau_{f,n}^{-2}\log(\mathbf K)^{-4}\right). \label{eq_MultFixedWMultDimCondD0}
\end{align}
Then,
$$\sup_{P\in\mathcal P}\sup_{\alpha\in (0,1)}|\Prob_P(S_n\leq \hat G_n(\alpha))-\alpha|\to 0.$$
\end{theorem}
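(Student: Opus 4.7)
The plan is to adapt the proof of Theorem 9 in \citet{ShahPetersCondInd} and of our Theorem \ref{thm_MultFixedWOneDim} to the multivariate, $\sigma_{jl}$-normalised setting. Since each $T_{jlk}^{(n)}$ is scale-invariant, I first replace the summands in the numerator and denominator of \eqref{eq_DefStatMultFixWMultDim} by the population-normalised ones
\[
\tilde R_{jlk,i}=\frac{\epsilon_{P,ij}\,\xi_{P,il}\,w_{jlk}(z_i)}{\sigma_{jl}},\qquad i=1,\ldots,n,
\]
obtaining an idealised $\tilde{\mathbf T}^{(n)}$ built from the true residuals and hence from i.i.d.\ centred $\mathbf K$-vectors. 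The proof then proceeds in three steps: (i) show that $\max_{j,l,k}|T_{jlk}^{(n)}-\tilde T_{jlk}^{(n)}|=o_{\mathcal P}((\log\mathbf K)^{-1/2})$; (ii) apply a high-dimensional Gaussian approximation to $\max_{j,l,k}|\tilde T_{jlk}^{(n)}|$ with the population correlation matrix $\Sigma$; and (iii) use a Gaussian comparison inequality to pass from $\Sigma$ to $\hat\Sigma$.

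For step (i), expanding $(x_{ij}-\hat f_j(z_i))(y_{il}-\hat g_l(z_i))$ yields the true cross term $\epsilon_{P,ij}\xi_{P,il}$ plus three estimation errors of the form $(\hat f_j-f_{P,j})\xi_{P,il}$, $\epsilon_{P,ij}(\hat g_l-g_{P,l})$ and $(\hat f_j-f_{P,j})(\hat g_l-g_{P,l})$. After dividing by $\sigma_{jl}$, multiplying by $w_{jlk}(z_i)$ (bounded by $C_1$) and summing, Cauchy--Schwarz bounds their contribution to $\tau_{N,jlk}^{(n)}/\sigma_{jl}$ in terms of $\sqrt{n A_{f,j}A_{g,l}}/\sigma_{jl}$ and of products such as $(\max_i|\xi_{P,il}|/t_{f,jl})\sqrt{A_{f,j}/s_{f,jl}^{2}}$, together with their symmetric $g$-counterparts. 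The factorisation $s_{f,jl}t_{f,jl}=s_{g,jl}t_{g,jl}=\sigma_{jl}$ in the statement is designed precisely to split each $\sigma_{jl}$ into a max-norm piece $t$ and an $L^{2}(P)$-type piece $s$; the conditions \eqref{eq_MultFixedWMultDimCondA0}--\eqref{eq_MultFixedWMultDimCondD0}, together with a union bound over $\mathbf K$ triples that costs a factor $(\log\mathbf K)^{2}$, then make each error $o_{\mathcal P}((\log\mathbf K)^{-1/2})$. An analogous treatment of $\tau_{D,jlk}^{(n)}$, combined with the lower bound $\Ex_P[\epsilon_j^{2}\xi_l^{2}w_{jlk}(Z)^{2}]\geq c_1\sigma_{jl}^{2}$, shows that $\min_{j,l,k}\tau_{D,jlk}^{(n)}/\sigma_{jl}$ is bounded away from $0$ with high probability, uniformly in $P\in\mathcal P$.

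Step (ii) is the application of the high-dimensional Gaussian approximation of \citet{ChernozhukovChetverikovKato} to the i.i.d.\ $\mathbf K$-vectors $\tilde{\mathbf R}_{\cdot,i}$: the sub-exponential assumption (C1a) or the polynomial assumption (C1b), together with the rate (C2), are precisely the moment/growth hypotheses of that theorem, and they directly transfer to the $\sigma_{jl}$-normalised coordinates thanks to the scaling in (C1a)/(C1b). For step (iii) I would verify $\|\hat\Sigma-\Sigma\|_{\infty}=o_{\mathcal P}((\log\mathbf K)^{-2})$ entry-wise by Bernstein's inequality applied to the products $\tilde R_{jlk,i}\tilde R_{j'l'k',i}$, followed by a union bound over $\mathbf K^{2}$ pairs; the residual-estimation contributions to $\hat\Sigma$ are absorbed by the same estimates as in step (i). A Gaussian comparison lemma for maxima of centred normals, combined with the anti-concentration bound used in the proof of Theorem 9 of \citet{ShahPetersCondInd}, then transfers the Gaussian approximation from $\Sigma$ to $\hat\Sigma$ and yields the claimed uniform level. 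The main obstacle I expect is bookkeeping: keeping the three indices $(j,l,k)$ and the $\sigma_{jl}$-normalisations consistent so that the $(\log\mathbf K)^{-4}$ factors in \eqref{eq_MultFixedWMultDimCondA0}--\eqref{eq_MultFixedWMultDimCondD0} provide exactly the slack required simultaneously in all three steps, uniformly in $P\in\mathcal P$.
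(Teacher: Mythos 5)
Your three-step architecture --- compare $T_{jlk}^{(n)}$ to an idealised $\tilde T_{jlk}^{(n)}$ built from the true $\sigma_{jl}$-normalised residual products, apply the high-dimensional Gaussian approximation of \citet{ChernozhukovGaussianApprox} to the i.i.d.\ $\mathbf K$-vectors, and then pass from $\Sigma$ to $\hat\Sigma$ via a Gaussian comparison inequality plus anti-concentration --- is exactly the route the paper takes (through the slightly more general Theorem \ref{thm_MultFixedWMultDMG} and Lemmas \ref{lem_Lem26SPMultW} and \ref{lem_Lem27SPMult}), and your reading of the role of the factorisation $s_{f,jl}t_{f,jl}=s_{g,jl}t_{g,jl}=\sigma_{jl}$ is correct. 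However, the concentration devices you propose for the maxima over $(j,l,k)$ do not go through under the stated hypotheses. In step (iii) you want to control $\|\hat\Sigma-\Sigma\|_\infty$ by Bernstein's inequality applied to the products $\tilde R_{jlk,i}\tilde R_{j'l'k',i}$ followed by a union bound over $\mathbf K^2$ pairs. Under (C1b) only fourth moments of $\epsilon_j\xi_l/\sigma_{jl}$ are assumed, so no exponential concentration is available and a union bound over $\mathbf K^2$ terms cannot be paid for; even under (C1a) the summands are products of two sub-exponential variables and hence not sub-exponential, so the classical Bernstein inequality does not apply. The paper instead invokes Lemma \ref{lem_LemC1Chern} (Lemma C.1 of \citealp{ChernozhukovGaussianApprox}), a maximal inequality that yields $\log(\mathbf K)^2\,\Ex\bigl[\|\tilde\Sigma-\Sigma\|_\infty\bigr]\leq Cn^{-c}$ directly from (C1a)/(C1b) and (C2); some such moment-based maximal inequality is indispensable here.

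The same issue affects step (i): a ``union bound over $\mathbf K$ triples that costs a factor $(\log\mathbf K)^2$'' is not a mechanism that exists without exponential tail bounds on the individual cross terms, and the conditions (\ref{eq_MultFixedWMultDimCondC0})--(\ref{eq_MultFixedWMultDimCondD0}) supply no such bounds. The paper controls $\max_{j,l,k}|\nu_{g,jlk}|$ (the $\epsilon_{ij}\Delta g_{il}$ cross terms) with the multiplier maximal inequality of Lemma \ref{lem_Lem29SP}, which conditions on the design, truncates $\max_{i,j,l}|\epsilon_{P,ij}|/t_{g,jl}$ at the level $\tau_{g,n}$, and costs only a factor $\sqrt{\log \mathbf K}$ in expectation; the exponents in (\ref{eq_MultFixedWMultDimCondC0})--(\ref{eq_MultFixedWMultDimCondD0}) are calibrated to exactly this $\sqrt{\log\mathbf K}$ loss (plus the $a_n^{1/2}$ and $a_n^2$ normalisations needed in Lemma \ref{lem_Lem27SPMult}). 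One smaller point: for the denominators it is not enough that $\min_{j,l,k}\tau_{D,jlk}^{(n)}/\sigma_{jl}$ be bounded away from zero; you need $\tau_{D,jlk}^{(n)}/\bar\sigma_{jlk}=1+o_{\mathcal P}\bigl(\log(\mathbf K)^{-2}\bigr)$ uniformly in $(j,l,k)$, where $\bar\sigma_{jlk}^2=\Ex_P[\epsilon_j^2\xi_l^2w_{jlk}(Z)^2]$, since otherwise the multiplicative error in $T_{jlk}^{(n)}$ is not absorbed by the Gaussian comparison bound $q(\cdot)$ used to transfer between covariance matrices.
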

\begin{remark}\phantomsection\label{rmk_ThmMultFixedWMultDim}
\begin{enumerate}
\item The idea behind the sequences  $s_{g,jl}$, $t_{g,jl}$, $s_{f,jl}$ and  $t_{f,jl}$ is that they allow for a more general scaling than the simplified setting in the next item. Note that by $X\indep Y|Z$, we have that $\sigma_{jl}^2=\Ex_P\left[u_{P,j}(Z) v_{P,l}(Z)\right]$, where $u_{P,j}(Z)=\Ex_P[\epsilon_{P,j}^2|Z]$ and $v_{P,l}(Z)=\Ex_P[\xi_{P,l}^2|Z]$. If for example $u_{P,j}(Z)$ and $v_{P,l}(Z)$ are a.s. constant equal to some $u_j$ and $v_l\in \mathbb R$, we can take $t_{g,jl}^2=s_{f,jl}^2=u_j$ and $t_{f,jl}^2=s_{g,jl}^2=v_l$. In this case, conditions (\ref{eq_MultFixedWMultDimCondC0}) and (\ref{eq_MultFixedWMultDimCondD0}) are just conditions on the errors $\epsilon_j$, $\xi_l$ scaled by their standard deviation and on $A_{f,j}$, $A_{g,l}$ scaled by the error variance.

\item Assume that there exists $c_2>0$ independent of $n$ such that for all $P\in\mathcal P$, all $j=1,\ldots, d_X$ and  $l=1,\ldots, d_Y$ we have 
\begin{equation}\label{eq_VarBoundBelow}
\sigma_{P,jl}=\Ex_P\left[\epsilon_j^2\xi_l^2 \right]\geq c_2.
\end{equation}
Then, we can replace $\epsilon_j\xi_l/\sigma_{jl}$ by $\epsilon_j\xi_l$ in conditions (C1a), (C1b) and (C2). Furthermore, we can replace the sequences $s_{g,jl}$, $t_{g,jl}$, $s_{f,jl}$ and $t_{f,jl}$ by $1$ in conditions (\ref{eq_MultFixedWMultDimCondC0}) and (\ref{eq_MultFixedWMultDimCondD0}) and we can replace condition (\ref{eq_MultFixedWMultDimCondA0}) by
$$\max_{j,l} A_{f,j} A_{g,l}= o_\mathcal P\left(n^{-1} \log(\mathbf K)^{-2}\right),$$
see also the next item.

If in addition to (\ref{eq_VarBoundBelow}), we also have that the errors $\epsilon_{P,j}$ and $\xi_{P,l}$ have sub-Gaussian distributions with parameters uniformly bounded for all $j,l$ by some constant independent of $P\in\mathcal P$, then using the same arguments as in Remark \ref{rmk_MultFixedWOneDim}, condition (C1a) can be satisfied with $D_n$ constant independent of $n$. Moreover, we have that $\max_{i,j}|\epsilon_{P,ij}|=O_\mathcal P\left(\sqrt{\log(n d_X)}\right)$. If for example both
$$\max_j A_{f,j},\max_l A_{g,l}=o_\mathcal P\left(\log(\mathbf K)^{-4}\min(n^{-1/2},\log(\mathbf K)^{-1})\right),$$
then the modified versions of conditions (\ref{eq_MultFixedWMultDimCondA0}), (\ref{eq_MultFixedWMultDimCondC0}) and  (\ref{eq_MultFixedWMultDimCondD0}) are all satisfied.

\item If there exists $C_3 >0$ such that for all $P\in\mathcal P$ and all $j,j'=1,\ldots, d_X$ and $l,l'=1,\ldots, d_Y$ we have
$$\sigma_{jl}\sigma_{j'l'}\geq C_3 \sigma_{jl'}\sigma_{j'l},$$
then inspection of the proof shows that condition (\ref{eq_MultFixedWMultDimCondA0}) can be replaced by
$$\max_{j,l}\frac{1}{\sigma_{jl}^2} A_{f,j} A_{g,l}= o_\mathcal P\left(n^{-1} \log(\mathbf K)^{-2}\right).$$
\end{enumerate}
\end{remark}

In analogy to Corollary \ref{cor_PowerWGCMfix}, we obtain the following result about the power of mWGCM.fix if $d_X$, $d_Y$ and all $K(j,l)$ for $j=1,\ldots, d_X$ and $l=1,\ldots, d_Y$ are fixed. For this, let $u_{P,j}(Z)=\Ex_P[\epsilon_{P,j}^2|Z]$ and $v_{P,l}(Z)=\Ex_P[\epsilon_{P,l}^2|Z]$ and define
\begin{align}
B_{f,jl}=\frac{1}{n}\sum_{i=1}^n (f_{P,j}(z_i)-\hat f_j(z_i))^2v_{P,l}(z_i), \label{eq_DefBfjl}\\
B_{g,jl}=\frac{1}{n}\sum_{i=1}^n (g_{P,l}(z_i)-\hat g_l(z_i))^2u_{P,j}(z_i). \label{eq_DefBgjl}
\end{align}

\begin{corollary}[mWGCM.fix]\label{cor_Power_mWGCMfix}
Let $P\in\mathcal E_0$. Let $A_{f,j}$, $A_{g,l}$, $B_{f,jl}$ and $B_{g,jl}$ be defined as in (\ref{eq_DefAfj}), (\ref{eq_DefAgl}), (\ref{eq_DefBfjl}) and (\ref{eq_DefBgjl}) with the difference that all $\hat f_j$ and $\hat g_l$ have been estimated on an auxiliary data set independent of $(\mathbf X^{(n)}, \mathbf Y^{( n)},\mathbf Z^{( n)})$. Let $d_X$, $d_Y$ and all $K(j,l)$ for $j=1,\ldots, d_X, \, l=1,\ldots, d_Y$ be fixed. Assume that there exists $C>0$ such that for all $z\in \mathbb R^{d_Z}$ and all $j,l,k$, we have $|w_{jlk}(z)|\leq C$. Assume that for all $j,l$ we have $A_{f,j} A_{g,l}=o_P(n^{-1})$, $B_{f,jl}=o_P(1)$ and $B_{g,jl}=o_P(1)$ as well as $\Ex_P[\epsilon_{P,j}^2\xi_{P,l}^2w_{jlk}(Z)^2]>0$ for all $k=1,\ldots, K(j,l)$ and $\Ex_P\left[\epsilon_{P,j}^2\xi_{P,l}^2\right]<\infty$. If there exists $j\in\{1,\ldots, d_X\},\, l\in\{1,\ldots, d_Y\}$ and $k\in\{1,\ldots, K(j,l)\}$ such that $\Ex_P[\epsilon_{P,j}\xi_{P,l}w_k(Z)]\neq 0$, then for all $M>0$,
$$\Prob_P(S_n \geq M)\to 1,$$
that is, mWGCM.fix with fixed $d_X$, $d_Y$ and fixed number of weight functions $K(j,l)$ for all $j$ and $l$ has asymptotic power $1$ against alternative $P$ for any significance level $\alpha\in (0,1)$.
\end{corollary}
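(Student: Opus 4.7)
The plan is to reduce the multivariate claim to the univariate power result, Theorem \ref{thm_PowerWGCM1DFix}, exactly as in the proof of Corollary \ref{cor_PowerWGCMfix}. Because $d_X$, $d_Y$ and the numbers $K(j,l)$ are all fixed (independent of $n$), the statistic $S_n$ is a maximum over a finite deterministic index set, so for any triple $(j^*,l^*,k^*)$ one has
\begin{equation*}
S_n \;\geq\; \bigl|T_{j^*l^*k^*}\upind n\bigr|.
\end{equation*}
By hypothesis there exists a triple with $\rho_P := \Ex_P[\epsilon_{P,j^*}\xi_{P,l^*} w_{j^*l^*k^*}(Z)] \neq 0$; fix one such triple. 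It therefore suffices to show that $|T_{j^*l^*k^*}\upind n| \to \infty$ in $P$-probability.

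Next I would invoke Theorem \ref{thm_PowerWGCM1DFix}, part 1, applied to the univariate WGCM statistic built from $X_{j^*}$, $Y_{l^*}$, the regression estimates $\hat f_{j^*}$, $\hat g_{l^*}$, and the weight function $w_{j^*l^*k^*}$. The required hypotheses are inherited one-for-one from the corollary: the estimators $\hat f_{j^*}$ and $\hat g_{l^*}$ are built on an auxiliary sample independent of $(\mathbf X\upind n,\mathbf Y\upind n,\mathbf Z\upind n)$; the weight satisfies $|w_{j^*l^*k^*}|\leq C$; the rate conditions $A_{f,j^*} A_{g,l^*} = o_P(n^{-1})$ and $B_{f,j^*l^*}, B_{g,j^*l^*} = o_P(1)$ coincide with the univariate $A_fA_g$, $B_f$, $B_g$ of Theorem \ref{thm_PowerWGCM1DFix} after the obvious relabelling between (\ref{eq_DefAfAg})--(\ref{eq_DefBfBg}) and (\ref{eq_DefAfj})--(\ref{eq_DefBgjl}); and the moment conditions $\Ex_P[\epsilon_{P,j^*}^2\xi_{P,l^*}^2 w_{j^*l^*k^*}(Z)^2] > 0$ and $\Ex_P[\epsilon_{P,j^*}^2\xi_{P,l^*}^2] < \infty$ are assumed outright.

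Theorem \ref{thm_PowerWGCM1DFix} then yields $\tau_{D,j^*l^*k^*}\upind n = \sigma_P + o_P(1)$ for a strictly positive $\sigma_P = \sqrt{\var_P(\epsilon_{P,j^*}\xi_{P,l^*} w_{j^*l^*k^*}(Z))}$, and that $T_{j^*l^*k^*}\upind n - \sqrt n\,\rho_P/\tau_{D,j^*l^*k^*}\upind n$ converges in distribution to $\mathcal N(0,1)$. Since $\rho_P \neq 0$ and $\sigma_P > 0$, the deterministic-magnitude shift $\sqrt n \,\rho_P/\tau_{D,j^*l^*k^*}\upind n$ diverges in absolute value, so $|T_{j^*l^*k^*}\upind n| \to \infty$ in $P$-probability, and hence $\Prob_P(S_n \geq M) \geq \Prob_P(|T_{j^*l^*k^*}\upind n| \geq M) \to 1$ for every $M > 0$.

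There is no genuine obstacle: the corollary is essentially a restatement of Corollary \ref{cor_PowerWGCMfix} in multivariate language, exploiting that fixedness of $d_X$, $d_Y$, and the $K(j,l)$ turns $S_n$ into a maximum over a bounded number of univariate WGCM statistics, so that no asymptotic multiple-testing correction is needed. The only bit of care required is verifying that the univariate conditions on $A_f, A_g, B_f, B_g$ in Theorem \ref{thm_PowerWGCM1DFix} correspond term-by-term to the multivariate $A_{f,j^*}, A_{g,l^*}, B_{f,j^*l^*}, B_{g,j^*l^*}$, which is immediate from the definitions.
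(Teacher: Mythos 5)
Your proof is correct and follows exactly the route the paper intends: the paper gives no separate proof of Corollary \ref{cor_Power_mWGCMfix}, presenting it as the direct analogue of Corollary \ref{cor_PowerWGCMfix}, i.e.\ bounding $S_n$ below by the single statistic $|T_{j^*l^*k^*}\upind n|$ for a triple with $\Ex_P[\epsilon_{P,j^*}\xi_{P,l^*}w_{j^*l^*k^*}(Z)]\neq 0$ and invoking Theorem \ref{thm_PowerWGCM1DFix} for the corresponding univariate WGCM. The reduction and the term-by-term matching of the conditions are exactly as the paper's argument requires.
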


\subsubsection{Choice of weight functions}
As the simplest extension of the considerations in Section \ref{sec_ChoiceWeightFun}, we propose the following. For a fixed $k_0\geq 1$ and every combination of $j=1,\ldots, d_X$ and $l=1,\ldots, d_Y$, use the same $K(j,l)=k_0\cdot d_Z+1$ weight functions
$$w(\mathbf z)=1, \text{ and } w_{d,k}(\mathbf z)= \sign(z_d -a_{d,k}),\, d=1,\ldots, d_Z, \, k=1,\ldots, k_0,$$
where $a_{d,k}$ is the empirical $\frac{k}{k_0+1}$-quantile of $Z_d$. This yields a total of $\mathbf K=d_X d_Y\cdot (k_0\cdot d_Z+1)$ weight functions, but the weight functions do not depend on $j$ and $l$.

\subsection{Multivariate WGCM With Estimated Weight Functions}\label{Sec_mWGCMEst}
The same procedure can be applied with estimated weight functions. We consider the same setting as in Section \ref{sec_MultFixMultDim}.

The difference is that the weight functions have been estimated on an 
auxiliary data set $\mathbf A$ independent of $(\mathbf X\upind n,\mathbf Y\upind n,\mathbf Z\upind n)$, obtained for example by sample splitting as in Section \ref{SubSec_WGCMEst}.
For each $j,l$, let $K(j,l)\in\mathbb N$, and for each $n\in\mathbb N$, let
$$\left\{\hat w_{jlk}\upind n :j=1,\ldots, d_X, l=1,\ldots, d_Y, k=1,\ldots, K(j,l)\right\}$$
be functions from $\mathbb R^{d_Z}\to \mathbb R$ that have been estimated on $\mathbf A$. In general, we would recommend to set $K(j,l)=1$ and use Method \ref{meth_WGCMEst} to estimate $w_{j,l}(z)=\sign(\Ex_P[\epsilon_j\xi_l|Z=z]$. However, Theorem \ref{thm_MultEstWMultDim} even allows for $K(j,l)$ to grow with $n$.
Let
$$\mathbf K=\mathbf K(n)=\sum_{j=1}^{d_X}\sum_{l=1}^{d_Y} K(j,l).$$
As in Section \ref{sec_MultFixMultDim}, define based on the weight functions $\hat w_{jlk}\upind n$
\begin{itemize}
	\item the vectors of weighted products of residuals $\mathbf R_{jlk}$;
	\item the test statistics $T_{jlk}\upind n$ of the individual WGCM;
	\item the aggregated test statistic $S_n$;
	\item the estimated covariance matrix $\hat \Sigma$;
	\item the multivariate normal vector $\hat {\mathbf T} \upind n$ and $\hat S_n$, the maximum absolute value of the components of $\hat {\mathbf T}\upind n$;
	\item the quantile function  $\hat G_n$ of $\hat S_n$ given $\hat \Sigma$.
\end{itemize}
We have the following variant of Theorem \ref{thm_MultFixedWMultDim}. Remark \ref{rmk_ThmMultFixedWMultDim}  also applies to this theorem.

\begin{theorem}[mWGCM.est]\label{thm_MultEstWMultDim}
Let $\mathcal P\subset \mathcal P_0$ and assume there exist $C,c> 0$ such that for all $n\in\mathbb N$ and $P\in\mathcal P$ there exists $D_n\geq 1$ such that either (C1a) and (C2) or (C1b) and (C2) hold. Assume that the conditions (\ref{eq_MultFixedWMultDimCondA0}), (\ref{eq_MultFixedWMultDimCondC0}) and (\ref{eq_MultFixedWMultDimCondD0}) are satisfied. Assume that there exist $C_1, c_1 >0$ (independent of $n$) such that for all $P\in\mathcal P$ we have $P$-almost surely for all $j,l,k$ and $n$
$$|\hat w_{jlk}\upind n(z)|\leq C_1 \text{ for all } z\in\mathbb R^{d_z}$$
and
$$\Ex_P\left[\epsilon_j^2\xi_l^2 \hat w_{jlk}\upind n(Z)^2|\mathbf A\right]\geq c_1\Ex_P\left[\epsilon_j^2\xi_l^2\right].$$
Then,
$$\sup_{P\in\mathcal P}\sup_{\alpha\in (0,1)}|\Prob_P(S_n\leq \hat G_n(\alpha))-\alpha|\to 0.$$
\end{theorem}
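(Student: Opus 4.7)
The plan is to reduce the estimated-weights theorem to the fixed-weights Theorem \ref{thm_MultFixedWMultDim} by conditioning on the auxiliary data $\mathbf A$. Since the main sample $(\mathbf X\upind n, \mathbf Y\upind n, \mathbf Z\upind n)$ is independent of $\mathbf A$ and has law $P^n$, conditionally on $\mathbf A = \mathbf a$ each function $\hat w_{jlk}\upind n(\,\cdot\,;\mathbf a)$ is a deterministic bounded function. By the standing assumptions of the theorem, on a $P$-probability-one event these functions satisfy $|\hat w_{jlk}\upind n(z)| \leq C_1$ and $\Ex_P[\epsilon_j^2 \xi_l^2 \hat w_{jlk}\upind n(Z)^2 \mid \mathbf A] \geq c_1 \Ex_P[\epsilon_j^2\xi_l^2] = c_1 \sigma_{jl}^2$, i.e.\ exactly the two hypotheses that Theorem \ref{thm_MultFixedWMultDim} places on the fixed weights.

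First I would enlarge the parameter space to the class of pairs $\tilde{\mathcal P} = \{(P,\mathbf w) : P \in \mathcal P,\; \mathbf w = \{w_{jlk}\}\text{ deterministic with }|w_{jlk}|\leq C_1\text{ and }\Ex_P[\epsilon_j^2\xi_l^2 w_{jlk}(Z)^2]\geq c_1\sigma_{jl}^2\}$. The remaining hypotheses (C1)/(C2) and conditions (\ref{eq_MultFixedWMultDimCondA0})--(\ref{eq_MultFixedWMultDimCondD0}) involve only the distribution $P$ and the regression estimates, and hold uniformly over $\mathcal P$ by assumption. The key observation is that the constants and rates produced in the proof of Theorem \ref{thm_MultFixedWMultDim} enter only through $C_1$, $c_1$, the quantities in (C1)/(C2), and the convergence rates of $A_{f,j}$, $A_{g,l}$; they do not see any finer feature of the weights. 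Upon inspecting that proof one obtains the strengthened conclusion
$$\sup_{(P,\mathbf w)\in\tilde{\mathcal P}}\sup_{\alpha\in(0,1)}\bigl|\Prob_{P,\mathbf w}(S_n \leq \hat G_n(\alpha))-\alpha\bigr|\to 0.$$

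Next, for each $P \in \mathcal P$ and $P$-almost every realization $\mathbf a$ of $\mathbf A$, the pair $(P, \{\hat w_{jlk}\upind n(\,\cdot\,;\mathbf a)\}_{j,l,k})$ lies in $\tilde{\mathcal P}$. Applying the strengthened theorem conditionally on $\mathbf A$ then yields
$$\sup_{P\in\mathcal P}\sup_{\alpha}\bigl|\Prob_P(S_n \leq \hat G_n(\alpha)\mid\mathbf A) - \alpha\bigr|\to 0 \quad P\text{-a.s.\ and in }L^1,$$
the $L^1$ statement being free since the integrands are bounded by $1$. Taking expectations with respect to $\mathbf A$ and pulling the absolute value outside via Jensen's inequality gives $\sup_{P\in\mathcal P}\sup_\alpha|\Prob_P(S_n\leq\hat G_n(\alpha))-\alpha|\to 0$, which is the claim.

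The main obstacle is the proof-inspection step: one must verify that the quantitative bounds in the proof of Theorem \ref{thm_MultFixedWMultDim} depend on the weights only through $C_1$ and $c_1$. Concretely, this amounts to tracking that when the high-dimensional CLT is applied to the normalised coordinates $R_{jlk,i}/\tau_{D,jlk}\upind n$, the sub-exponential and fourth-moment envelopes factor as $C_1\cdot|\epsilon_{P,ij}\xi_{P,il}|$ in the numerator while the denominator is uniformly lower bounded by $\sqrt{c_1}\,\sigma_{jl}$. Once this is established, the conditioning-and-integration step is routine, and Remark \ref{rmk_ThmMultFixedWMultDim} transfers verbatim because only conditions on $P$ and the regression estimates are involved.
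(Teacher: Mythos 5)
Your proposal is correct and follows essentially the same route as the paper: the paper proves the fixed-weight result directly in the uniform-over-weight-class form (Theorem \ref{thm_MultFixedWMultDMG}, over the class $\mathcal W_{P,C_1,c_1}$ of collections with $|w_{jlk}|\leq C_1$ and $\Ex_P[\epsilon_j^2\xi_l^2 w_{jlk}(Z)^2]\geq c_1\Ex_P[\epsilon_j^2\xi_l^2]$), then conditions on $\mathbf A$, bounds the conditional discrepancy by the deterministic supremum over that class, and integrates out via iterated expectations. Your "enlarged parameter space" $\tilde{\mathcal P}$ and the conditioning-plus-integration step are exactly the paper's argument, so the only work left — verifying that the fixed-weight proof depends on the weights only through $C_1$ and $c_1$ — is precisely what the paper packages as Theorem \ref{thm_MultFixedWMultDMG}.
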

A proof can be found in Appendix \ref{App_ProofMultWGCM}.

In the case of fixed $d_X$, $d_Y$ and if $K(j,l)=1$ for all $j=1,\ldots, d_X$ and $l=1,\ldots, d_Y$, we have a combination of Corollary \ref{cor_PowerWGCMest} and Corollary \ref{cor_Power_mWGCMfix}. We assume, that we use Method \ref{meth_WGCMEst} to estimate $w_{jl}(\cdot)=\sign(\Ex_P[\epsilon_j\xi_l|Z=\cdot])$. We denote the estimate of $w_{jl}$ by $\hat w\upind n_{jl}$. Note that we write $w_{jl}$ and not $w_{jlk}$, since $K(j,l)=1$ for all $j,l$.

\begin{corollary}[mWGCM.est]\label{Cor_mWGCMEst}
Let $P\in\mathcal E_0$. Let $A_{f,j}$, $A_{g,l}$, $B_{f,jl}$ and $B_{g,jl}$ be defined as in (\ref{eq_DefAfj}), (\ref{eq_DefAgl}), (\ref{eq_DefBfjl}) and (\ref{eq_DefBgjl}) with the difference that all $\hat f_j$ and $\hat g_l$ have been estimated on an auxiliary data set independent of $(\mathbf X^{(n)}, \mathbf Y^{( n)},\mathbf Z^{( n)})$ and $\mathbf A$. Let $d_X$, $d_Y$ be fixed and $K(j,l)=1$ for all $j=1, \ldots, d_X$ and $l=1, \ldots, d_Y$. Assume that there exists $C>0$ such that for all $z\in \mathbb R^{d_Z}$, all $n\in \mathbb N$ and all $j,l$, we have $|\hat w\upind n_{jl}(z)|\leq C$. Assume that for all $j,l$ we have $A_{f,j} A_{g,l}=o_P(n^{-1})$, $B_{f,jl}=o_P(1)$ and $B_{g,jl}=o_P(1)$ as well as $\Ex_P[\epsilon_{P,j}^2\xi_{P,l}^2\hat w\upind n_{jl}(Z)^2|\mathbf A]>0$. Assume that there exists $\eta>0$ such that $\Ex_P\left[|\epsilon_{P,j}\xi_{P,l}|^{2+\eta}\right]<\infty$ for all $j,l$. If there exists $j\in\{1,\ldots, d_X\}$ and $l\in\{1,\ldots, d_Y\}$ such that $\Ex_P[\epsilon_{P,j}\xi_{P,l}|Z]$ is not almost surely equal to $0$ and if $ w_{jl}(z)=\sign(\Ex_P[\epsilon_{P,j}\xi_{P,l}|Z=z]$ can be consistently estimated in the sense that
$$\Ex_P\left[(\hat w_{jl}\upind n (Z)-w_{jl}(Z))^2 \mathbbm 1\{\Ex_P[\epsilon_{P,j}\xi_{P,l}|Z]\neq 0\}|\mathbf A\right] \to 0 \text{ in probability}, $$
then
$$\Prob_P(S_n \geq M)\to 1,$$
that is, mWGCM.est with fixed $d_X$, $d_Y$ and fixed number of weight functions $K(j,l)=1$ for all $j$ and $l$ has asymptotic power $1$ against alternative $P$ for any significance level $\alpha\in (0,1)$.
\end{corollary}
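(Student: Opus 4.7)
The plan is to reduce the multivariate statement to the univariate Corollary \ref{cor_PowerWGCMest}. Since $K(j,l)=1$ for all $j,l$, the aggregated test statistic satisfies
$$S_n \;=\; \max_{j',l'}\bigl|T_{j'l'}^{(n)}\bigr| \;\geq\; \bigl|T_{jl}^{(n)}\bigr|$$
for the specific pair $(j,l)$ singled out by the hypothesis. Hence it suffices to show $|T_{jl}^{(n)}|\to\infty$ in probability.

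First I would observe that, with $(j,l)$ fixed, the statistic $T_{jl}^{(n)}$ is \emph{exactly} the univariate WGCM.est statistic of Section \ref{SubSec_WGCMEst} built from the scalar variables $X_j$ and $Y_l$ with conditioning vector $Z$ and estimated weight $\hat w_{jl}^{(n)}$, where $\hat w_{jl}^{(n)}$ was estimated on the independent auxiliary sample $\mathbf A$. All assumptions of Corollary \ref{cor_PowerWGCMest} are then immediate translations of the present hypotheses: boundedness of $\hat w_{jl}^{(n)}$ is given; $A_{f,j}A_{g,l}=o_P(n^{-1})$ and $B_{f,jl},B_{g,jl}=o_P(1)$ provide the regression-quality conditions; the $(2+\eta)$-moment bound on $\epsilon_{P,j}\xi_{P,l}$ provides the tail condition; and the pointwise convergence $\hat w_{jl}^{(n)}(z)\to \sign(\Ex_P[\epsilon_{P,j}\xi_{P,l}|Z=z])$ on the set where this conditional expectation is nonzero is precisely the consistency assumption driving the corollary.

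Applying Corollary \ref{cor_PowerWGCMest} to the $(j,l)$-coordinate then yields $\Prob_P(T_{jl}^{(n)}\geq M)\to 1$ for every $M>0$, hence $\Prob_P(|T_{jl}^{(n)}|\geq M)\to 1$, and combining with $S_n\geq |T_{jl}^{(n)}|$ gives the claim.

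The one subtle point is that Corollary \ref{cor_PowerWGCMest} rests on Theorem \ref{thm_PowerWGCM1DEst}, which requires the conditional variance $\bar\sigma_n^2=\var_P(\epsilon_{P,j}\xi_{P,l}\hat w_{jl}^{(n)}(Z)\mid \mathbf A)$ to be $P$-a.s.\ bounded away from zero uniformly in $n$, whereas the present corollary only postulates the strict positivity $\Ex_P[\epsilon_{P,j}^2\xi_{P,l}^2\hat w_{jl}^{(n)}(Z)^2\mid\mathbf A]>0$. This is the step that needs most care. Because the estimator has the $\sign$ form, $\hat w_{jl}^{(n)}(Z)^2\in\{0,1\}$ and $\hat w_{jl}^{(n)}(Z)^2\to 1$ pointwise on the event $\{\Ex_P[\epsilon_{P,j}\xi_{P,l}|Z]\neq 0\}$, so dominated convergence gives
$$\Ex_P\bigl[\epsilon_{P,j}^2\xi_{P,l}^2 \hat w_{jl}^{(n)}(Z)^2\big|\mathbf A\bigr] \;\longrightarrow\; \Ex_P\bigl[\epsilon_{P,j}^2\xi_{P,l}^2\,\mathbbm{1}\{\Ex_P[\epsilon_{P,j}\xi_{P,l}|Z]\neq 0\}\bigr] \;>\; 0,$$
while $\bar\rho_{P,n}=\Ex_P[\epsilon_{P,j}\xi_{P,l}\hat w_{jl}^{(n)}(Z)\mid\mathbf A]\to \Ex_P[|\Ex_P[\epsilon_{P,j}\xi_{P,l}|Z]|]$ by the same argument. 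Thus $\bar\sigma_n^2$ converges $P$-a.s.\ to a positive limit and is eventually bounded below, validating the hypothesis of Theorem \ref{thm_PowerWGCM1DEst} and completing the reduction.
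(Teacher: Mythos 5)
Your proposal is correct and follows essentially the route the paper intends: the paper gives no separate proof but presents this corollary as a combination of Corollary \ref{cor_PowerWGCMest} and Corollary \ref{cor_Power_mWGCMfix}, i.e.\ bound $S_n\geq |T_{jl}^{(n)}|$ for the distinguished pair and apply the univariate estimated-weight power result to that coordinate. Your handling of the variance condition via the $\sign$ form of the estimator and dominated convergence mirrors the argument the paper gives just before Corollary \ref{cor_PowerWGCMest}, so there is nothing substantive to add.
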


\subsection{Categorical Variables}\label{Sec_MultDiscrete}
The methodology for multivariate $X$ and $Y$ can also be used to treat arbitrary categorical variables. This leads to a test that has asymptotic power 1 against any alternative, provided that the conditions of Corollary \ref{Cor_mWGCMEst} are satisfied. Assume that $X$ takes values in $\{x_1,\ldots, x_J\}$ and $Y$ takes values in $\{y_1,\ldots, y_L\}$. 

We can apply the methodology from Section \ref{sec_MultFixMultDim} and Section \ref{Sec_mWGCMEst} to the variables
\begin{align*}
X^\ast &= \left(\mathbbm 1\{X=x_1\},\ldots, \mathbbm 1\{X=x_J\}\right)\\
Y^\ast &= \left(\mathbbm 1\{Y=y_1\},\ldots, \mathbbm 1\{Y=y_L\}\right).
\end{align*}
Note that $X\indep Y |Z$ if and only if for all $j=1,\ldots, J$ and for all $l=1,\ldots, L$,
$$\Prob_P(X=x_j, Y=y_l|Z)=\Prob_P(X=x_j|Z)\Prob_P(Y=y_L|Z)\text{ a.s.},$$
or equivalently if for all $j=1,\ldots, J$ and for all $l=1,\ldots, L$, we have $X_j^\ast\indep Y_l^\ast|Z$.
In the same way as in the case of binary $X$ and $Y$ in Section \ref{Sec_WGCMDiscrete}, we obtain that $X_j^\ast\indep Y_l^\ast|Z$ if and only if $\Ex_P[\epsilon_j\xi_l|Z]= 0$ a.s., where $\epsilon_j=X_j^\ast-\Ex_P[X_j^\ast|Z]$ and $\xi_l=Y_l^\ast-\Ex_P[Y_l^\ast|Z]$. In particular, we obtain the following version of Corollary \ref{Cor_mWGCMEst}.

\begin{corollary}[mWGCM.est, categorical case]\label{Cor_mWGCMEstDiscrete}
Let $X$ and $Y$ be categorical and assume that the distribution $P$ of $(X,Y,Z)$ satisfies $X\notindep Y|Z$. Let $A_{f,j}$, $A_{g,l}$, $B_{f,jl}$ and $B_{g,jl}$ be defined as in (\ref{eq_DefAfj}), (\ref{eq_DefAgl}), (\ref{eq_DefBfjl}) and (\ref{eq_DefBgjl}) with the difference that all $\hat f_j$ and $\hat g_l$ have been estimated on an auxiliary data set independent of $(\mathbf X^{(n)}, \mathbf Y^{( n)},\mathbf Z^{( n)})$ and $\mathbf A$. Let $J$, $L$ be fixed and $K(j,l)=1$ for all $j=1, \ldots, J$ and $l=1, \ldots, L$. Assume that there exists $C>0$ such that for all $z\in \mathbb R^{d_Z}$, all $n\in \mathbb N$ and all $j,l$, we have $|\hat w\upind n_{jl}(z)|\leq C$. Assume that for all $j,l$ we have $A_{f,j} A_{g,l}=o_P(n^{-1})$, $B_{f,jl}=o_P(1)$ and $B_{g,jl}=o_P(1)$ as well as $\Ex_P[\epsilon_{P,j}^2\xi_{P,l}^2\hat w\upind n_{jl}(Z)^2|\mathbf A]>0$. Assume that there exists $\eta>0$ such that $\Ex_P\left[|\epsilon_{P,j}\xi_{P,l}|^{2+\eta}\right]<\infty$ for all $j,l$. Since $X\notindep Y|Z$, there exists $j\in\{1,\ldots, d_X\}$ and $l\in\{1,\ldots, d_Y\}$ such that $\Ex_P[\epsilon_{P,j}\xi_{P,l}|Z]$ is not almost surely equal to $0$. If $ w_{jl}(z)=\sign(\Ex_P[\epsilon_{P,j}\xi_{P,l}|Z=z]$ can be consistently estimated in the sense that
$$\Ex_P\left[(\hat w_{jl}\upind n (Z)-w_{jl}(Z))^2 \mathbbm 1\{\Ex_P[\epsilon_{P,j}\xi_{P,l}|Z]\neq 0\}|\mathbf A\right] \to 0 \text{ in probability}, $$
then
$$\Prob_P(S_n \geq M)\to 1,$$
that is, mWGCM.est with fixed $J$, $L$ and fixed number of weight functions $K(j,l)=1$ for all $j$ and $l$ has asymptotic power $1$ against alternative $P$ for any significance level $\alpha\in (0,1)$.
\end{corollary}

\section{Stability of WGCM.est with Respect to Sample Splitting}\label{App_Stability}
The procedure WGCM.est described in Section \ref{SubSec_WGCMEst}, depends on a random split of the sample. If one repeats the procedure several times, the $p$-values will typically differ. In this section, we revisit the Boston housing data set, see Section \ref{Sec_Boston}, to investigate the effect of multiple sample splits for WGCM.est.

We repeat the analysis from Section \ref{Sec_Boston} for $100$ independent splits of the sample and count for each of the $13$ predictors, how often the (Holm-corrected) $p$-value is significant at level $\alpha = 0.05$. A plot of the frequencies for each variable can be found in Figure \ref{fig_BostonMultSig}.

\begin{figure}
	\begin{center}
	\includegraphics[width=0.8\textwidth]{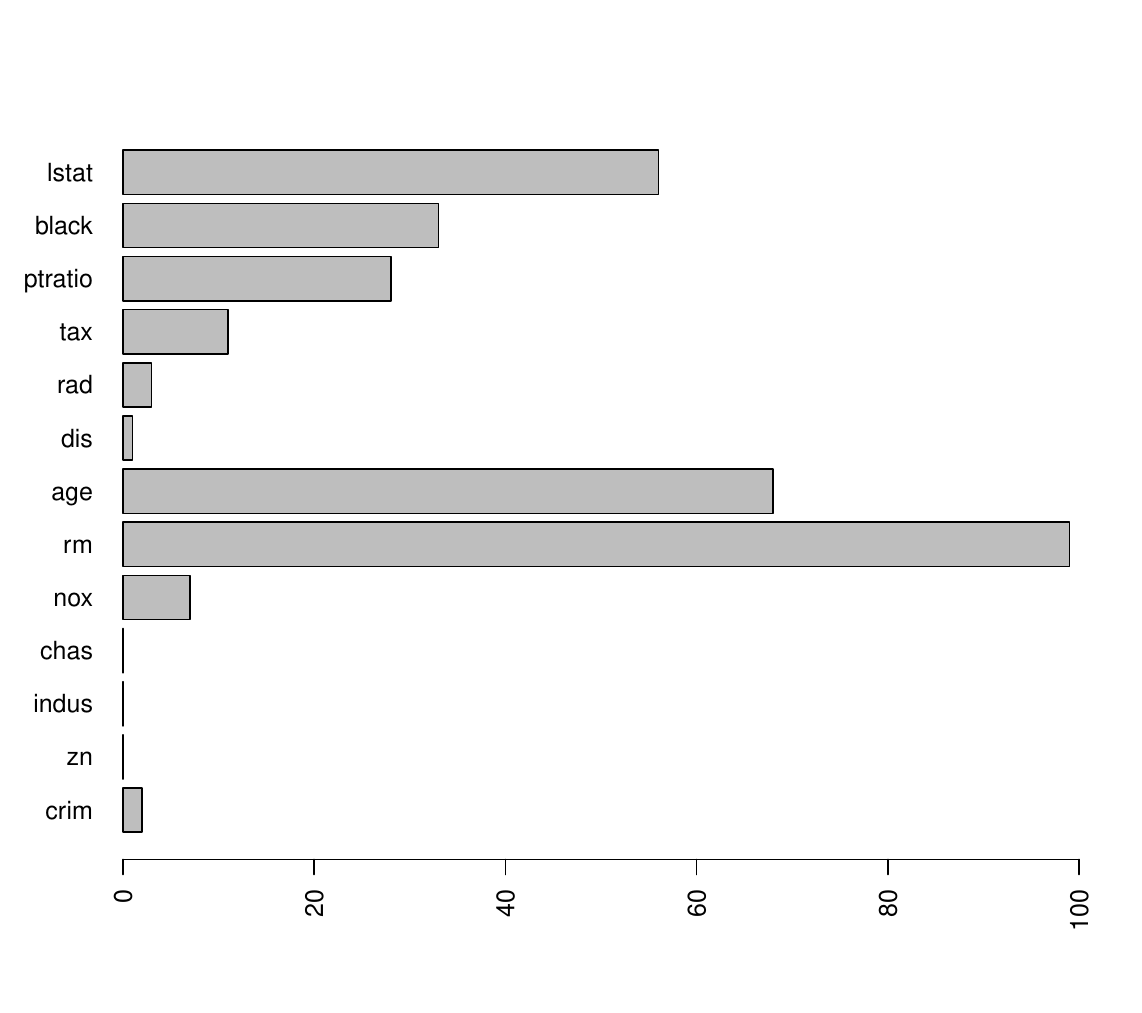}
	\end{center}
	\caption{Number of times, each predictor in the Boston housing data set is significant at level $\alpha = 0.05$ out of $100$ independent application of WGCM.est.}
	\label{fig_BostonMultSig}
\end{figure}

We see that the variable \texttt{rm} is almost always significant, and the variables \texttt{age} and \texttt{lstat} are significant in more than $50\%$ of the cases. These were also the significant variables for WGCM.est in the original analysis in Section \ref{Sec_Boston}. However, the analysis indicates that the $p$-values based on WGCM.est and hence also the number of significant variables is not very stable with respect to the randomness of the sample splitting. Depending on the split, it could well happen that only one or up to five variables are considered significant.

This leads to the question, how the $p$-values based on WGCM.est can be made more stable. Methods to aggregate $p$-values based on multiple sample splits have been developed in \cite{MeiMeiBuPValHDReg} and in \cite{DiCiccioMultDatSplit}. In the following, we apply the two approaches from  \cite{MeiMeiBuPValHDReg} to the Boston housing data.

In the context of WGCM.est, let $B\in \mathbb N$.
\begin{enumerate}
 \item Perform the test WGCM.est $B$ times with $B$ independent sample splits, obtaining $p$-values $P_1, \ldots, P_B$.
 \item For $\gamma \in (0,1)$, define the aggregated $p$-value $Q(\gamma)$ as
 $$Q(\gamma)=\min(1, q_\gamma(\{P_b/\gamma|b=1,\ldots, B\})),$$
 where $q_\gamma(\cdot)$ is the empirical $\gamma$-quantile.
\end{enumerate}
If the $p$-values $P_1, \ldots, P_B$ are asymptotically correct, then $Q(\gamma)$ is also an asymptotically correct $p$-value, see Theorem 3.1 \cite{MeiMeiBuPValHDReg}.

However, one cannot simply search for $\gamma$ yielding the lowest value of $Q(\gamma)$. Instead, for a fixed lower bound $\gamma_\text{min}\in(0,1)$ define
$$P=\min\left(1, (1-\log\gamma_\text{min})\inf_{\gamma\in (\gamma_\text{min}, 1)} Q(\gamma)\right).$$
This is also an asymptotically correct $p$-value, see Theorem 3.2 in \cite{MeiMeiBuPValHDReg}.

For the Boston housing data set, we again apply the procedure from Section \ref{Sec_Boston} for $100$ independent splits of the sample, but without applying Holm's method. For each of the $13$ predictors, we calculate the corresponding aggregated $p$-values $Q(\gamma)$ with $\gamma=\frac{1}{2}$ (Method 1) and $P$ with $\gamma_\text{min}=0.05$ (Method 2). At the end, we apply Holm's correction once to the $13$ $p$-values from Method 1 and once to the $13$ $p$-values from Method 2. Plots of the $p$-values based on this procedure can be found in Figure \ref{fig_BostonPAggr}. We see that for both aggregation methods, still the variables \texttt{rm}, \texttt{age} and \texttt{lstat} are significant.

\begin{figure}
	\begin{center}
	\includegraphics[width=0.8\textwidth]{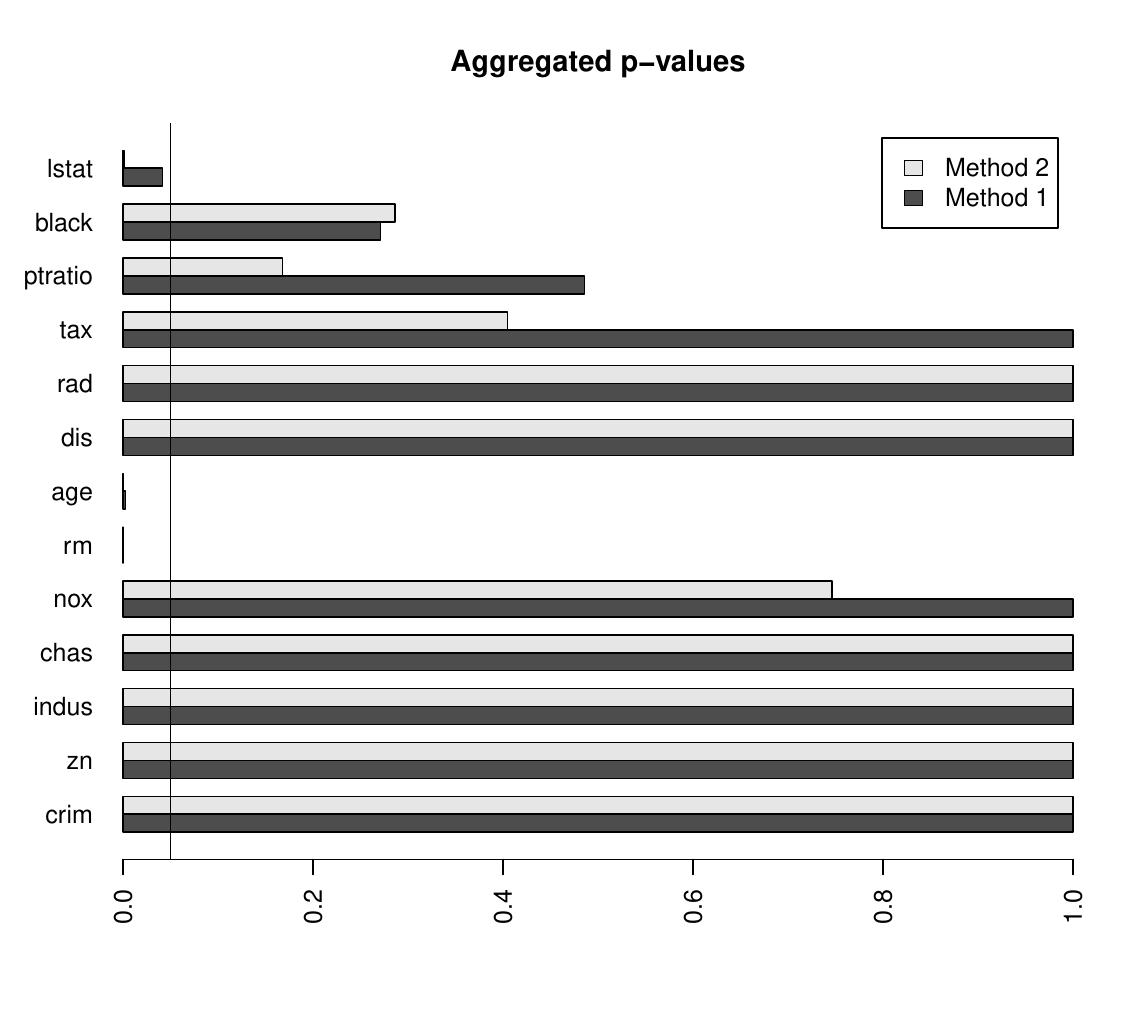}
	\end{center}
	\caption{Aggregated $p$-values for the Boston housing data set using Method 1 and Method 2 based on $100$ independent applications of WGCM.est. Note that the results are comparable to Figure \ref{fig_PValBoston}.}
	\label{fig_BostonPAggr}
\end{figure}

In principle, applying WGCM.est with multiple sample splits is a good idea, as it makes the $p$-values more stable. A caveat is however that especially in these variable importance examples, the runtime gets quite big, as there are already many tests involved in the procedure, even if only one sample split is used.

\section{Proofs of Section \ref{SubSec_WGCMEst}}\label{App_ProofUnivariateWGCM}
In this section, we give the proofs of the Theorems on the univariate WGCM with single fixed and single estimated weight function.
\subsection{A More General Result} \label{sec_WGCM1DMG}
To prove Theorem \ref{thm_WGCM1DEst}, we use the following more general result. Let $\mathcal P\subset \mathcal P_0$. For $P\in\mathcal P$ and $C,c>0$, define 
$$\mathcal W_{P,C,c}=\left\{w:\mathbb R^{d_Z}\to \mathbb R \,\Bigr\rvert\,|w|\leq C \land \Ex_P\left[\epsilon_P^2\xi_P^2w(Z)^2\right]\geq c\right\}.$$
To be more precise, $\mathcal W_{P,C,c}$ is the set of \textit{measurable} functions with those properties.
For each $w\in\mathcal W_{P,C,c}$, let
\begin{equation*}
R_i\upind n = R_{w,i}\upind n=\left(x_i-\hat f\upind n(z_i)\right)\left(y_i-\hat g\upind n(z_i)\right) w(z_i), \quad i=1,\ldots, n,
\end{equation*}
and let
\begin{equation}\label{eq_DefT1DMG}
T\upind n=T_w\upind n=\frac{\frac{1}{\sqrt n}\sum_{i=1}^n R_i\upind n}{\left(\frac{1}{n}\sum_{i=1}^n {\left(R_i\upind n\right)}^2-\left(\frac{1}{n}\sum_{r=1}^n R_r\upind n\right)^2\right)^{1/2}}=\frac{\tau_{N}\upind n}{\tau_{D}\upind n}.
\end{equation}
Then, we have the following result under the null hypothesis:
\begin{theorem}\label{thm_WGCM1DGenRes}
Let $A_f$, $A_g$, $B_f$ and $B_g$ be defined as in (\ref{eq_DefAfAg}) and (\ref{eq_DefBfBg}). Let $\mathcal P\subset \mathcal P_0$ and $C,c>0$. Assume that $A_f A_g=o_\mathcal P(n^{-1})$, $B_f=o_\mathcal P(1)$ and $B_g= o_\mathcal P(1)$. If there exists $\eta > 0$ such that $\sup_{P\in\mathcal P}\Ex_P\left[|\epsilon_P\xi_P|^{2+\eta}\right]<\infty$, and if for all $P\in\mathcal P$, the set $\mathcal W_{P,C,c}$ is nonempty, then
$$\sup_{P\in\mathcal P}\sup_{w\in\mathcal W_{P, C,c}}\sup_{t\in\mathbb R} |\Prob_P(T_w^{(n)}\leq t)-\Phi(t)|\to 0.$$
\end{theorem}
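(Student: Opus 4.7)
The plan is to mirror the proof of Theorem 6 in Shah and Peters (2020) and verify that the bounds can be made uniform over both $P\in\mathcal P$ and $w\in \mathcal W_{P,C,c}$. The key observation is that every quantity appearing in the argument depends on $w$ only through the sup-norm bound $C$ and the lower variance bound $c$, both of which are common to the whole class.

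First I would decompose each residual product as
\[
R_i\upind n = \epsilon_{P,i}\xi_{P,i}w(z_i) + \epsilon_{P,i}(g_P(z_i)-\hat g(z_i))w(z_i) + \xi_{P,i}(f_P(z_i)-\hat f(z_i))w(z_i) + (f_P(z_i)-\hat f(z_i))(g_P(z_i)-\hat g(z_i))w(z_i),
\]
so that $\tau_N\upind n/\sqrt{n}$ splits into a "main" term plus three remainders. For the last remainder I apply Cauchy--Schwarz and $|w|\leq C$ to get the deterministic bound $C\sqrt{n}\sqrt{A_f A_g}$, which is $o_{\mathcal P}(1)$ by hypothesis, and importantly uniform in $w$. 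For the first cross term I compute its conditional variance given $(\mathbf Y\upind n,\mathbf Z\upind n)$: under $P\in\mathcal P_0$ the errors $\epsilon_{P,i}$ are mutually independent with mean $0$ and conditional variance $u_P(z_i)$ given $(\mathbf Y\upind n,\mathbf Z\upind n)$ (since $X\indep Y\mid Z$ makes $\hat g$ a legitimate conditioning function), and the conditional mean is zero. Therefore the conditional second moment is at most $C^2 B_g = o_{\mathcal P}(1)$, and Chebyshev's inequality gives $o_{\mathcal P}(1)$ control uniformly over $w\in\mathcal W_{P,C,c}$. The symmetric argument yields the analogous bound $C^2 B_f$ for the other cross term.

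Next I would handle the main term $\frac{1}{\sqrt n}\sum_{i=1}^n \epsilon_{P,i}\xi_{P,i}w(z_i)$. Under the null, $\Ex_P[\epsilon_P\xi_P w(Z)]=0$ (using $\Ex_P[\xi_P\mid Z,X]=0$) and $\Ex_P[(\epsilon_P\xi_P w(Z))^2]=\sigma_P^2(w)\geq c>0$, while $\Ex_P[|\epsilon_P\xi_P w(Z)|^{2+\eta}]\leq C^{2+\eta}\sup_{P\in\mathcal P}\Ex_P[|\epsilon_P\xi_P|^{2+\eta}]$, which is bounded uniformly. The Berry--Esseen bound for i.i.d.\ sums with uniformly bounded standardised $(2+\eta)$-moment and uniformly lower-bounded variance then yields
\[
\sup_{P\in\mathcal P}\sup_{w\in\mathcal W_{P,C,c}}\sup_{t\in\mathbb R}\left|\Prob_P\!\left(\sigma_P(w)^{-1}n^{-1/2}\textstyle\sum_i \epsilon_{P,i}\xi_{P,i}w(z_i)\leq t\right)-\Phi(t)\right|\to 0.
\]

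Finally I would show $\tau_D\upind n-\sigma_P(w)=o_{\mathcal P}(1)$ uniformly in $w$, by expanding $\frac1n\sum R_i^2$ using the same four-term decomposition and bounding the difference from $\frac1n\sum (\epsilon_{P,i}\xi_{P,i}w(z_i))^2$ via Cauchy--Schwarz together with $A_f,A_g,B_f,B_g=o_{\mathcal P}(1)$, while the law of large numbers (made uniform via the $(2+\eta)$-moment) handles $\frac1n\sum(\epsilon_{P,i}\xi_{P,i}w(z_i))^2\to \sigma_P(w)^2$. Combining with the lower bound $\sigma_P(w)\geq\sqrt c$ via a standard Slutsky argument and keeping track of uniformities completes the proof. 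The main technical obstacle is ensuring that the Berry--Esseen remainder and the law-of-large-numbers approximation for $\tau_D$ are simultaneously uniform in $P$ and $w$; this is why the moment condition must be stated as a supremum over $\mathcal P$ and the boundedness and non-degeneracy assumptions on $w$ are enforced through the fixed constants $C$ and $c$ rather than $P$-dependent quantities.
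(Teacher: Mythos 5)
Your proposal follows essentially the same route as the paper's proof: the same four-term decomposition of $\tau_N\upind n$, the Cauchy--Schwarz bound $C\sqrt{n}\sqrt{A_fA_g}$ for the bias term, conditioning on $(\mathbf Y\upind n,\mathbf Z\upind n)$ to reduce the cross terms to $C^2B_f$ and $C^2B_g$, a uniform CLT for the main term driven by the uniformly bounded $(2+\eta)$-moment and the bounds $|w|\leq C$, $\sigma_{P,w}^2\geq c$, and the analogous expansion for $\tau_D\upind n$ followed by a uniform Slutsky step. The only point to tighten is the passage from $\Ex_P[\nu_g^2\mid \mathbf Y\upind n,\mathbf Z\upind n]\leq C^2B_g=o_{\mathcal P}(1)$ to $\nu_g=o_{\mathcal P}(1)$: since $B_g$ converges only in probability, plain Chebyshev does not apply directly, and one needs the truncation device $\Prob_P(\nu_g^2\geq \epsilon)\leq \epsilon^{-1}\Ex_P\left[\Ex_P[\nu_g^2\mid \mathbf Y\upind n,\mathbf Z\upind n]\wedge\epsilon\right]$ combined with bounded convergence, exactly as in the paper.
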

We also have a more general power result. Let $\mathcal P\subset \mathcal E_0$. For $P\in\mathcal P$ and $C,c>0$, let
$$\mathcal W_{P,C,c}=\left\{w:\mathbb R^{d_Z}\to\mathbb R \, \Bigr\rvert\, |w|\leq C \land \var_P\left(\epsilon_P\xi_Pw(Z)\right)\geq c\right\}.$$
\begin{theorem}\label{thm_PowerWGCM1DMG}
Let $A_f$, $A_g$, $B_f$ and $B_g$ be defined as in (\ref{eq_DefAfAg}) and (\ref{eq_DefBfBg}), but with the difference that $\hat f$ and $\hat g$ have been estimated on an auxiliary data set which is independent of $(\mathbf X\upind n, \mathbf Y\upind n,\mathbf Z\upind n)$. Let $\mathcal P\subset \mathcal E_0$ and $C,c>0$. Assume that $A_f A_g=o_\mathcal P(n^{-1})$, $B_f=o_\mathcal P(1)$ and $B_g= o_\mathcal P(1)$. If there exists $\eta > 0$ such that $\sup_{P\in\mathcal P}\Ex_P\left[|\epsilon_P\xi_P|^{2+\eta}\right]<\infty$, and if for all $P\in\mathcal P$, the set $\mathcal W_{P,C,c}$ is nonempty, then
$$\sup_{P\in\mathcal P}\sup_{w\in\mathcal W_{P, C,c}}\sup_{t\in\mathbb R} \left|\Prob_P\left(\frac{\tau_N\upind n-\sqrt n\rho_{P,w}}{\tau_D\upind n}\leq t\right)-\Phi(t)\right|\to 0,$$
where
$$\rho_{P,w}=\Ex_P\left[\epsilon_P\xi_P w(Z)\right].$$
\end{theorem}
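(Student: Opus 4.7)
The plan is to follow the proof of Theorem \ref{thm_WGCM1DGenRes} (the null-hypothesis analogue) very closely, the only substantive difference being that the sum $\tfrac{1}{\sqrt{n}}\sum_i \epsilon_{P,i}\xi_{P,i} w(z_i)$ no longer has mean zero, so it must be recentered by $\sqrt{n}\rho_{P,w}$ before invoking a central limit theorem. The key observation that keeps every cross term manageable is that, even under the alternative, we still have $\Ex_P[\epsilon_P\mid Z]=0$ and $\Ex_P[\xi_P\mid Z]=0$ by construction of $\epsilon_P,\xi_P$. Writing $x_i-\hat f(z_i)=\epsilon_{P,i}-(\hat f(z_i)-f_P(z_i))$ and similarly for $y_i-\hat g(z_i)$, expanding the product gives the decomposition
\begin{equation*}
\tau_N^{(n)} - \sqrt{n}\,\rho_{P,w} = \tfrac{1}{\sqrt{n}}\sum_{i=1}^n \bigl(\epsilon_{P,i}\xi_{P,i}w(z_i) - \rho_{P,w}\bigr) + L_f + L_g + L_{fg},
\end{equation*}
where $L_f,L_g$ are the mixed terms and $L_{fg}$ is the product of the two regression errors. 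Cauchy--Schwarz together with $|w|\le C$ yields $|L_{fg}|\le C\sqrt{n A_f A_g}=o_{\mathcal P}(1)$ by the assumption $A_fA_g=o_{\mathcal P}(n^{-1})$.

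Next I would bound $L_f$ (symmetrically $L_g$) by conditioning on the auxiliary data used to fit $\hat f,\hat g$ together with $(z_1,\ldots,z_n)$. Since $\hat f,\hat g$ depend only on an independent sample and $\Ex_P[\xi_P\mid Z]=0$ regardless of whether $P\in\mathcal P_0$, the conditional mean of $L_f$ vanishes and, using independence of the $\xi_{P,i}$ across $i$ given $(z_i)$, its conditional variance equals $\tfrac{1}{n}\sum_i(f_P(z_i)-\hat f(z_i))^2 w(z_i)^2 v_P(z_i)\le C^2 B_f$. A two-stage argument (conditional Chebyshev on $\{B_f\le\delta\}$ and direct control of $\Prob_P(B_f>\delta)$) then gives $L_f=o_{\mathcal P}(1)$ uniformly in $w\in\mathcal W_{P,C,c}$, and the same reasoning handles $L_g$ via $B_g=o_{\mathcal P}(1)$.

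For the leading term I would apply a Lyapunov-type Berry--Esseen bound: $\Ex_P[|\epsilon_P\xi_P w(Z)-\rho_{P,w}|^{2+\eta}]$ is uniformly bounded via $|w|\le C$, the moment bound on $|\epsilon_P\xi_P|^{2+\eta}$, and Jensen applied to $\rho_{P,w}$, while $\sigma_{P,w}^2\ge c$ is built into $\mathcal W_{P,C,c}$. Hence the Kolmogorov distance to $\mathcal N(0,\sigma_{P,w}^2)$ tends to zero uniformly in $P$ and $w$. To finish via Slutsky I would verify $\tau_D^{(n)}-\sigma_{P,w}=o_{\mathcal P}(1)$ by expanding $R_i^2$ in exactly the same way and applying a uniform law of large numbers to $\tfrac{1}{n}\sum R_i^2$ and $\bar R$; the required uniform integrability is again inherited from the $2+\eta$ moment assumption.

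The main obstacle will be keeping every convergence uniform in the pair $(P,w)$ simultaneously: the CLT requires a quantitative (Berry--Esseen) bound in place of a classical CLT, and the LLN steps require truncation or uniform-integrability arguments resting on the $2+\eta$ moment assumption. However, this machinery has already been set up for Theorem \ref{thm_WGCM1DGenRes}, so the present proof should essentially reduce to rerunning those estimates with the added centering $\rho_{P,w}$ inserted; this centering itself introduces no new difficulty because $|\rho_{P,w}|\le C\sup_{P\in\mathcal P}\Ex_P[|\epsilon_P\xi_P|]$ is bounded uniformly in $P$ and $w$.
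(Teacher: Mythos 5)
Your proposal is correct and follows essentially the same route as the paper's proof: the same decomposition of $\tau_N^{(n)}-\sqrt n\,\rho_{P,w}$ into the recentered i.i.d.\ sum plus the bias and two cross terms, with the cross terms controlled by conditioning on the auxiliary data set (exploiting $\Ex_P[\epsilon_P\mid Z]=\Ex_P[\xi_P\mid Z]=0$ under the alternative), the uniform CLT applied to the centered sum, and $\tau_D^{(n)}/\sigma_{P,w}\to 1$ obtained by showing $\tfrac1n\sum R_i^2\to\Ex_P[\epsilon_P^2\xi_P^2w(Z)^2]$ and $\bar R\to\rho_{P,w}$ with $|\rho_{P,w}|$ uniformly bounded. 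No gaps.
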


\subsection{Proofs}\label{sec_WGCM1DProofs}
We first prove Theorem \ref{thm_WGCM1DGenRes} and then show that Theorem \ref{thm_WGCM1DEst} follows from Theorem \ref{thm_WGCM1DGenRes}.

\subsubsection{Proof of Theorem \ref{thm_WGCM1DGenRes}}
\begin{proof}
The proof closely follows the proof of Theorem 6 in Section D.1 in the supplementary material of \cite{ShahPetersCondInd}. We will sometimes omit the dependence on $n$ and $P$ in the notation. We will repeatedly use limit theorems from Appendix \ref{App_LimTheo}

Fix $C,c>0$ and write $\mathcal W_P$ instead of $\mathcal W_{P,C,c}$. For $n\in\mathbb N$, $P\in\mathcal P$ and $w\in\mathcal W_{P}$, define
$$\sigma_w^2=\sigma_{P,w}^2=\Ex_P\left[\epsilon_P^2\xi_P^2 w(Z)^2\right].$$
We first prove that 
\begin{equation*}
\sup_{P\in\mathcal P}\sup_{w \in\mathcal W_{P}}\sup_{t\in \mathbb R}\left|\Prob_P\left(\frac{\frac{1}{\sqrt n}\sum_{i=1}^n \epsilon_i\xi_i  w(z_i)}{\sigma_w}\leq t\right)-\Phi(t)\right|\to 0.
\end{equation*}
To simplify notation, we will abbreviate this in the following with
\begin{equation}\label{eq_ApplUnCLT}
\frac{\frac{1}{\sqrt n}\sum_{i=1}^n \epsilon_i\xi_i  w(z_i)}{\sigma_w} \todistunifPW \mathcal N(0,1).
\end{equation}

This is an application of Lemma \ref{lem_UnCLT}, where the random variable $\zeta$ corresponds to $\frac{\epsilon_P\xi_P w(Z)}{\sigma_w}$. Instead of the set $\mathcal P$ of distributions for $\zeta$ in Lemma \ref{lem_UnCLT}, we look at the set of distributions determined by $(P,w)$ for $\frac{\epsilon_P\xi_P w(Z)}{\sigma_w}$ where $P$ varies in $\mathcal P$ and $w$ varies in $\mathcal W_P$. We have
$$\Ex_P\left[\frac{\epsilon_P\xi_P w(Z)}{\sigma_w}\right]=\Ex_P\left[\Ex_P[\epsilon_P|Y,Z]\xi_Pw(Z)\right]/\sigma_w=0$$
and $\Ex_P[\epsilon_P^2\xi_P^2 w(Z)^2/\sigma_w^2]=1$ as well as
$$\sup_{P,w}\Ex_P\left[\left|\frac{\epsilon_P\xi_P w(Z)}{\sigma_w}\right|^{2+\eta}\right]\leq \left(\frac{C}{\sqrt c}\right)^{2+\eta}\sup_{P\in\mathcal P}\Ex_P\left[|\epsilon_P\xi_P|^{2+\eta}\right]<\infty$$
by assumption, so the lemma implies (\ref{eq_ApplUnCLT}).

In the following, we will repeatedly apply Lemmas \ref{lem_UnWLLN}, \ref{lem_UnSlutsky} and \ref{lem_BoundConv} over the class of distributions for $\frac{\epsilon_P\xi_P w(Z)}{\sigma_w}$ determined by $(P,w)$ in a similar fashion.

For $i=1,\ldots n$, define
\begin{align*}
\Delta f_i&=f(z_i)-\hat f(z_i),\\
\Delta g_i&=g(z_i)-\hat g(z_i).
\end{align*}
We first prove that 
\begin{equation}\label{eq_AsNormNum}
\frac{\tau_N}{\sigma_w}\todistunifPW \mathcal N(0,1).
\end{equation}
Observe that 
\begin{align}
\tau_N=\frac{1}{\sqrt n}\sum_{i=1}^n R_i&=\frac{1}{\sqrt n}\sum_{i=1}^n w(z_i)(f(z_i)-\hat f(z_i)+\epsilon_i)(g(z_i)-\hat g(z_i)+\xi_i)\nonumber\\
&=(b+\nu_f+\nu_g)+\frac{1}{\sqrt n}\sum_{i=1}^n  w (z_i)\epsilon_i\xi_i,\label{eq_DecompNum}
\end{align}
with
\begin{align*}
b&=\frac{1}{\sqrt n}\sum_{i=1}^n w(z_i)\Delta f_i\Delta g_i,\\
\nu_f&=\frac{1}{\sqrt n}\sum_{i=1}^n w(z_i)\xi_i\Delta f_i,\\
\nu_g&=\frac{1}{\sqrt n}\sum_{i=1}^n w(z_i)\epsilon_i\Delta g_i.
\end{align*}

In analogy to the notation of (\ref{eq_ApplUnCLT}), we write for a sequence $(V_n)_{n\in\mathbb N}$ of random variables depending on $w\in\mathcal W_P$,
$$V_n=o_{\mathcal P,\mathcal W}(1)$$
if for all $\delta >0$,
$$\sup_{P\in\mathcal P}\sup_{w\in\mathcal W_P}\Prob_P(|V_n|>\delta)\to 0.$$

For $b$, by the Cauchy-Schwarz inequality,
\begin{equation}\label{eq_TermB}
|b|\leq C \frac{1}{\sqrt n}\sum_{i=1}^n |\Delta f_i\Delta g_i| \leq C \sqrt n \sqrt{A_f A_g} = o_{\mathcal P,\mathcal W}(1),
\end{equation}
since $C \sqrt n \sqrt{A_f A_g}$ is independent of $w\in\mathcal W_P$ and $A_f A_g=o_\mathcal P\left(n^{-1}\right)$ by assumption.

Next, we want to control $\nu_f$ and $\nu_g$. For $\nu_g$, 
\begin{align*}
\Ex_P[\epsilon_i\Delta g_i|\mathbf Y,\mathbf Z]&=\Ex_P[\epsilon_i|\mathbf Z]\Delta g_i=0,\\
\Ex_P[\epsilon_i^2\Delta g_i^2|\mathbf Y,\mathbf Z]&=\Ex_P[\epsilon_i^2|\mathbf Z]\Delta g_i^2=\Delta g_i^2u(z_i).
\end{align*}
Thus,
\begin{align*}
\Ex_P[\nu_g^2|\mathbf Y,\mathbf Z]&\leq C^2\frac{1}{n}\sum_{i=1}^n \Ex_P[\epsilon_i^2\Delta g_i^2|\mathbf Y,\mathbf Z]+C^2\sum_{i\neq j}\Ex_P[\epsilon_i\Delta g_i\epsilon_j\Delta g_j|\mathbf Y,\mathbf Z]\\
&=C^2 \frac{1}{n}\sum_{i=1}^n \Delta g_i^2u(z_i)+C^2\sum_{i\neq j}\Ex_P[\epsilon_i\Delta g_i|\mathbf Y,\mathbf Z]\Ex_P[\epsilon_j\Delta g_j|\mathbf Y,\mathbf Z]\\
&=C^2\frac{1}{n}\sum_{i=1}^n \Delta g_i^2u(z_i)=C^2 B_g=o_{\mathcal P}(1)
\end{align*}
by assumption. Since $C^2 B_g$ is independent of $w\in\mathcal W_P$, we get that
\begin{equation}\label{eq_ExpNuGSq}
\Ex_P[\nu_g^2|\mathbf Y,\mathbf Z]=o_{\mathcal P, \mathcal W}(1).
\end{equation}
Thus, for all $\epsilon > 0$ using Markov's inequality, 
\begin{align*}
\Prob_P\left(\nu_g^2\geq\epsilon\right)&=\Prob_P\left(\nu_g^2\land \epsilon \geq\epsilon\right)\\
&\leq\epsilon^{-1}\Ex_P\left[\Ex_P\left[\nu_g^2\land \epsilon|\mathbf Y,\mathbf Z\right]\right]\\
&\leq \epsilon^{-1}\Ex_P\left[\Ex_P\left[\nu_g^2|\mathbf Y,\mathbf Z\right]\land \epsilon\right].
\end{align*}
Equation (\ref{eq_ExpNuGSq}) and Lemma \ref{lem_BoundConv} applied to the variables $\Ex[\nu_g^2|\mathbf Y,\mathbf Z]$ with distributions determined by $(P,w)$ therefore imply that $\nu_g=o_{\mathcal P,\mathcal W}(1)$. Similarly, $\nu_f=o_{\mathcal P,\mathcal W}(1)$.

By (\ref{eq_DecompNum}), we have
\begin{align*}
\frac{\tau_N}{\sigma_w}&=\frac{b+\nu_f+\nu_g}{\sigma_w}+\frac{\frac{1}{\sqrt n}\sum_{i=1}^n  w(z_i)\epsilon_i\xi_i}{\sigma_w}.
\end{align*}
Since $\sigma_w\geq \sqrt{c}$, the first term is $o_{\mathcal P,\mathcal W}(1)$. By (\ref{eq_ApplUnCLT}) and Lemma \ref{lem_UnSlutsky}, 1., we get
(\ref{eq_AsNormNum}).

Next, we aim to prove 
\begin{equation}\label{eq_AsConstDen}
\frac{\tau_D}{\sigma_w}=1+o_{\mathcal P,\mathcal W}(1).
\end{equation}
For this, it is enough to prove
\begin{equation}\label{eq_AsConstDenSq}
\frac{\tau_D^2}{\sigma_w^2}=1+o_{\mathcal P,\mathcal W}(1),
\end{equation}
by continuity of $t\mapsto \sqrt t$ at $t=1$.

Recall that 
$$\tau_D^2=\frac{1}{n}\sum_{i=1}^n {R_i}^2-\left(\frac{1}{n}\sum_{r=1}^n R_r\right)^2.$$
Since by (\ref{eq_AsNormNum}), $\frac{1}{\sqrt {n}\sigma_w} \sum_{i=1}^n R_i=\frac{\tau_N}{\sigma_w}\todistunifPW \mathcal N(0,1)$, it follows that $\frac{1}{n\sigma_w}\sum_{i=1}^n R_i=o_{\mathcal P,\mathcal W}(1)$. To prove (\ref{eq_AsConstDenSq}), it therefore suffices to prove
\begin{equation}\label{eq_AsConstDenReduced}
\frac{\frac{1}{n}\sum_{i=1}^n R_i^2}{\sigma_w^2}=1+o_{\mathcal P,\mathcal W}(1).
\end{equation}
For this, write
\begin{align}
|R_i^2-w(z_i)^2\epsilon_i^2\xi_i^2|&\leq|w(z_i)^2(\epsilon_i+\Delta f_i)^2(\xi_i+\Delta g_i)^2- w(z_i)^2\epsilon_i^2\xi_i^2|\nonumber \\
&\leq \underbrace{ w(z_i)^2(\Delta f_i^2+2|\epsilon_i\Delta f_i|)(\Delta g_i^2+2|\xi_i\Delta g_i|)}_{\eqqcolon I_i}\nonumber \\
&+\underbrace{ w(z_i)^2\epsilon_i^2(\Delta g_i^2+2|\xi_i\Delta g_i|)}_{\eqqcolon II_i}+\underbrace{ w(z_i)^2\xi_i^2(\Delta f_i^2+2|\epsilon_i\Delta f_i|)}_{\eqqcolon III_i} \label{eq_DecompDiffDen}
\end{align}
For $I_i$, observe that $2\Delta f_i^2|\xi_i\Delta g_i|\leq \Delta f_i^2(\xi_i^2+\Delta g_i^2)$ and $2\Delta g_i^2|\epsilon_i\Delta f_i|\leq \Delta g_i^2(\epsilon_i^2+\Delta f_i^2)$. Thus, we have
\begin{align*}
I_i&\leq C^2(\Delta f_i^2\Delta g_i^2+2\Delta f_i^2|\xi_i\Delta g_i|+2\Delta g_i^2|\epsilon_i\Delta f_i|+4|\epsilon_i\xi_i\Delta f_i\Delta g_i|)\\
&\leq C^2(3\Delta f_i^2\Delta g_i^2+\epsilon_i^2\Delta g_i^2+\xi_i^2\Delta f_i^2+4|\epsilon_i\xi_i\Delta f_i\Delta g_i|).
\end{align*}
 Observe that
$$\frac{1}{n}\sum_{i=1}^n \Delta f_i^2\Delta g_i^2\leq\frac{1}{n}\sum_{i=1}^n \Delta f_i^2\sum_{i=1}^n \Delta g_i^2 =n A_f A_g= o_{\mathcal P}(1)$$
by assumption.
Furthermore, for all $\delta > 0$
\begin{align*}
\Prob_P\left(\frac{1}{n}\sum_{i=1}^n\epsilon_i^2\Delta g_i^2 \geq \delta\right)&=\Prob_P\left(\frac{1}{n}\sum_{i=1}^n\epsilon_i^2\Delta g_i^2\wedge\delta\geq \delta\right)\\
&\leq \delta^{-1} \Ex_P\left[\Ex_P\left[\frac{1}{n}\sum_{i=1}^n\epsilon_i^2\Delta g_i^2\wedge\delta|\mathbf Y,\mathbf Z\right]\right]\\
&\leq \delta^{-1} \Ex_P\left[\Ex_P\left[\frac{1}{n}\sum_{i=1}^n\epsilon_i^2\Delta g_i^2|\mathbf Y,\mathbf Z\right]\wedge\delta\right]\\
&= \delta^{-1} \Ex_P\left[B_g\wedge\delta\right],
\end{align*}
similarly as before. Since $B_g= o_\mathcal P(1)$ by assumption, Lemma \ref{lem_BoundConv} implies that $\Ex_P[B_g\land \delta]\to 0$ and thus,
\begin{equation}\label{eq_BoundEpsG}
\frac{1}{n}\sum_{i=1}^n\epsilon_i^2\Delta g_i^2=o_{\mathcal P}(1).
\end{equation}
Similarly, also $n^{-1}\sum_{i=1}^n\xi_i^2\Delta f_i^2=o_{\mathcal P}(1)$. Moreover,
\begin{align}
\frac{1}{n}\sum_{i=1}^n|\epsilon_i\xi_i\Delta f_i\Delta g_i|&\leq\left(\frac{1}{n}\sum_{i=1}^n\epsilon_i^2\xi_i^2\right)^{1/2}\left(\frac{1}{n}\sum_{i=1}^n\Delta f_i^2\Delta g_i^2\right)^{1/2} \nonumber\\
&\leq \left(\frac{1}{n}\sum_{i=1}^n\epsilon_i^2\xi_i^2\right)^{1/2} \left(\frac{1}{n}\sum_{i=1}^n\Delta f_i^2\sum_{i=1}^n\Delta g_i^2\right)^{1/2} \label{eq_BoundMixedTerm}
\end{align}
By Lemma \ref{lem_UnWLLN}, we have that for all $\delta >0$
$$\sup_{P\in\mathcal P}\Prob_P\left(\left|\frac{1}{n}\sum_{i=1}^n\epsilon_i^2\xi_i^2-\Ex_P[\epsilon^2\xi^2]\right|>\delta \right)\to 0,$$
so
\begin{equation}\label{eq_BoundEpsXi}
\frac{1}{n}\sum_{i=1}^n\epsilon_i^2\xi_i^2=O_\mathcal P(1).
\end{equation}
The second factor in (\ref{eq_BoundMixedTerm}) is equal to $\sqrt{nA_f A_g}=o_{ \mathcal P}(1)$. This implies $n^{-1}\sum_{i=1}^n|\epsilon_i\xi_i\Delta f_i\Delta g_i|=o_{ \mathcal P}(1)$. In total, we get
\begin{equation}\label{eq_ConI}
\frac{1}{n}\sum_{i=1}^n I_i=o_{\mathcal P,\mathcal W}(1).
\end{equation}
For $II_i$, we have
\begin{align*}
\frac{1}{n}\sum_{i=1}^n  w(z_i)^2(\epsilon_i^2\Delta g_i ^2+2\epsilon_i^2|\xi_i\Delta g_i|)&\leq 2C^2\left(\frac{1}{n}\sum_{i=1}^n\epsilon_i^2\xi_i^2\right)^{1/2}\left(\frac{1}{n}\sum_{i=1}^n\epsilon_i^2\Delta g_i^2\right)^{1/2}+\frac{C^2}{n}\sum_{i=1}^n\epsilon_i^2\Delta g_i^2.
\end{align*}
By (\ref{eq_BoundEpsG}), $n^{-1}\sum_{i=1}^n\epsilon_i^2\Delta g_i^2=o_{ \mathcal P}(1)$, so using (\ref{eq_BoundEpsXi}), also
$$\frac{1}{n}\sum_{i=1}^n II_i=o_{ \mathcal P,\mathcal W}(1).$$
Similarly, $\frac{1}{n}\sum_{i=1}^n III_i=o_{ \mathcal P,\mathcal W}(1)$. Combining this with (\ref{eq_ConI}) and (\ref{eq_DecompDiffDen}), we get
\begin{equation*}\label{eq_ResultDenom}
\frac{1}{n}\sum_{i=1}^n R_i^2=\frac{1}{n}\sum_{i=1}^n w(z_i)^2\epsilon_i^2\xi_i^2+o_{ \mathcal P,\mathcal W}(1).
\end{equation*}
With Lemma \ref{lem_UnWLLN}, we obtain that for all $\delta>0$
$$\sup_{P\in\mathcal P}\sup_{w\in\mathcal W_P} \Prob_P\left(\left|\frac{1}{n}\sum_{i=1}^n w(z_i)^2\epsilon_i^2\xi_i^2-\sigma_w^2\right|>\delta\right)\to 0$$
 and thus,
$$\frac{1}{n}\sum_{i=1}^n R_i^2-\sigma_w^2=o_{\mathcal P,\mathcal W}(1).$$
Since $\sigma_w^2\geq c$, we obtain
$$\frac{\frac{1}{n}\sum_{i=1}^n R_i^2}{\sigma_w^2}-1=o_{\mathcal P,\mathcal W}(1) $$
so (\ref{eq_AsConstDenReduced}) and (\ref{eq_AsConstDen}) follow.
Since $T^{(n)}=\frac{\tau_N^{(n)}/\sigma_w}{\tau_D^{(n)}/\sigma_w}$, we get by Lemma \ref{lem_UnSlutsky} with (\ref{eq_AsNormNum})
$$T\upind n\todistunifPW \mathcal N(0,1).$$
This concludes the proof.
\end{proof}

\subsubsection{Proof of Theorem \ref{thm_WGCM1DEst}}
\begin{proof}
The statement 2. follows from Theorem \ref{thm_WGCM1DGenRes} in the following way: For $P\in\mathcal P$, $w\in \mathcal W_{P,C,c}$ and $t\in\mathbb R$, let 
$$\Gamma_n(P,w, t)=\Prob_P(T_w\upind n\leq t).$$
with $T_w\upind n$ defined as in (\ref{eq_DefT1DMG}). By Theorem \ref{thm_WGCM1DGenRes}, we have
$$\sup_{P\in\mathcal P}\sup_{w\in\mathcal W_{P,C,c}}\sup_{t\in\mathbb R} |\Gamma_n(P,w,t)-\Phi(t)|\to 0.$$
For the auxiliary data set $\mathbf A$, let $\left(\hat w\upind n\right)_{n\in\mathbb N}$ be the sequence of functions estimated on $\mathbf A$.
From the assumptions of Theorem \ref{thm_WGCM1DEst}, 2., we know that there exist $C,c>0$ such that for all $P\in\mathcal P$, $P$-almost surely for all $n\in\mathbb N$ we have $\hat w\upind n\in\mathcal W_{P,C,c}$. Since $\mathbf A$ is independent of $(\mathbf X\upind n,\mathbf Y\upind n,\mathbf Z\upind n)$, we have
$$\Prob_P(T\upind n\leq t|\mathbf A)=\Gamma_n(P, \hat w\upind n, t).$$
Using iterated expectations, it follows that
\begin{align*}
\sup_{P\in\mathcal P}\sup_{t\in\mathbb R}\left|\Prob_P(T\upind n\leq t)-\Phi(t)\right|&= \sup_{P\in\mathcal P}\sup_{t\in\mathbb R}\left|\Ex_P\left[\Gamma_n(P,\hat w\upind n,t)\right]-\Phi(t)\right|\\
&\leq \sup_{P\in\mathcal P}\Ex_P\left[\sup_{t\in\mathbb R}|\Gamma_n(P,\hat w\upind n,t)-\Phi(t)|\right]\\
&\leq \sup_{P\in\mathcal P}\Ex_P\left[\sup_{Q\in\mathcal P}\sup_{w\in\mathcal W_{Q,C,c}}\sup_{t\in\mathbb R} |\Gamma_n(Q,w,t)-\Phi(t)|\right]\\
&=\sup_{P\in\mathcal P}\sup_{w\in\mathcal W_{P,C,c}}\sup_{t\in\mathbb R} |\Gamma_n(P,w,t)-\Phi(t)|\to 0.
\end{align*}
This concludes the proof of 2. The statement 1. can be proven in a similar fashion, but since we do not require uniformity over a collection $\mathcal P$, it is enough to have
\begin{equation}\label{eq_ResCasei}
\Prob_P(T\upind n\leq t|\mathbf A)\to \Phi(t) \text{ a.s.}
\end{equation}
Then, one can conclude using dominated convergence. To show (\ref{eq_ResCasei}), it is enough to a.s. have $c>0$ such that $\sigma_n^2\geq c$, so $c$ is allowed to depend on $\mathbf A$.
\end{proof}

\subsubsection{Proof of Theorem \ref{thm_PowerWGCM1DMG}}
\begin{proof}
The proof is very similar to the proof of Theorem \ref{thm_WGCM1DGenRes}, see also the proof of Theorem 8 in Section D.3 in the supplementary material of \cite{ShahPetersCondInd}. We will therefore only sketch the main steps. Apart from not assuming the null hypothesis anymore, a main difference is that we have a data set $\mathbf B$ independent of $(\mathbf X\upind n,\mathbf Y\upind n,\mathbf Z\upind n)$ on which $\hat f$ and $\hat g$ have been estimated.

Let
$$\rho_w=\rho_{P,w}=\Ex_P[\epsilon_P\xi_P w(Z)]$$
and
$$\sigma_w^2=\sigma_{P,w}^2=\var_P(\epsilon_P\xi_P w(Z))=\Ex_P\left[\epsilon_P^2\xi_P^2 w(Z)^2\right]-\rho_{P,w}^2.$$

We first prove that (with the notation of (\ref{eq_ApplUnCLT}) and (\ref{eq_AsNormNum}))
\begin{equation}\label{eq_AsNormNumP}
\frac{\tau_N-\sqrt n \rho_w}{\sigma_w}\todistunifPW \mathcal N(0,1).
\end{equation}
For this, write
\begin{align}\label{eq_DecompNumPrPow}
\tau_N-\sqrt n  \rho_w&=(b+\nu_f+\nu_g)+\frac{1}{\sqrt n}\sum_{i=1}^n \left(\epsilon_i\xi_i w(z_i)-\rho_w\right),
\end{align}
with $b$, $\nu_f$ and $\nu_g$ defined as in (\ref{eq_DecompNum}). Similarly to (\ref{eq_ApplUnCLT}), one can show that
$$\frac{1}{\sqrt n \sigma_w}\sum_{i=1}^n\left(\epsilon_i\xi_i w(z_i)-\rho_w\right)\todistunifPW \mathcal N(0,1).$$
The term $b$ can be controlled as in (\ref{eq_TermB}). For the term $\nu_g$, replace conditioning on $\mathbf Y,\mathbf Z$ by conditioning on $\mathbf B, \mathbf Z$ and similarly for $\nu_f$. Since $\sigma_w^2\geq c$, we arrive at (\ref{eq_AsNormNumP}).

Next, we prove
\begin{equation}\label{eq_AsConstDenPrPow}
\frac{\tau_D}{\sigma_w}=1+o_{\mathcal P,\mathcal W}(1),
\end{equation}
which follows from ${\tau_D^2}/{\sigma_w^2}=1+o_{\mathcal P,\mathcal W}(1).$
Recall that
\begin{equation}\label{eq_DefTauDAlt}
\tau_D^2=\frac{1}{n}\sum_{i=1}^n {R_i}^2-\left(\frac{1}{n}\sum_{r=1}^n R_r\right)^2.
\end{equation}
Since by (\ref{eq_AsNormNumP}), $\frac{1}{\sqrt {n}\sigma_w} \sum_{i=1}^n (R_i-\rho_w)\todistunifPW\mathcal N(0,1)$, we have  
\begin{equation}\label{eq_RightTermDenumPow}
\frac{\frac{1}{n}\sum_{i=1}^n R_i-\rho_w}{\sigma_w}=o_{\mathcal P,\mathcal W}(1).
\end{equation}
Next, we show
\begin{equation}\label{eq_LeftTermDenumPow}
\frac{\frac{1}{n}\sum_{i=1}^n R_i^2-\Ex_P\left[\epsilon^2\xi^2 w(Z)^2\right]}{\sigma_w^2}=o_{\mathcal P,\mathcal W}(1).
\end{equation}
For this, write
$$|R_i^2- w(z_i)^2\epsilon_i^2\xi_i^2|\leq I_i+II_i+III_i$$
exactly as in the proof of Theorem \ref{thm_WGCM1DGenRes}. By replacing conditioning on $\mathbf Y,\mathbf Z$ with conditioning on $\mathbf B,\mathbf Z$, one can show
$$\sum_{i=1}^n (I_i+II_i+III_i)=o_{\mathcal P,\mathcal W}(1)$$
in the same way as there. 
Using Lemma \ref{lem_UnWLLN}, one can show
$$\frac{1}{n}\sum_{i=1}^n\epsilon_i^2\xi_i^2 w(z_i)^2-\Ex_P[\epsilon^2\xi^2 w(Z)^2]=o_{\mathcal P,\mathcal W}(1),$$
so this implies (\ref{eq_LeftTermDenumPow}).

Equations (\ref{eq_DefTauDAlt}), (\ref{eq_RightTermDenumPow}) and (\ref{eq_LeftTermDenumPow}) together imply
\begin{align*}
\frac{\tau_D^2}{\sigma_w^2}&=\frac{\Ex_P[\epsilon^2\xi^2 w(Z)^2]}{\sigma_w^2}+o_{\mathcal P,\mathcal W}(1)-\left(\frac{\rho_w}{\sigma_w}+o_{\mathcal P,\mathcal W}(1)\right)^2\\
&=\frac{\Ex_P[\epsilon^2\xi^2 w(Z)^2]-\rho_w^2}{\sigma_w^2}+o_{\mathcal P,\mathcal W}(1)\\
&=1+o_{\mathcal P,\mathcal W}(1),
\end{align*}
where we used that $|\rho_w|$ is bounded in the second line. Thus, we get (\ref{eq_AsConstDenPrPow}) and can conclude using Lemma \ref{lem_UnSlutsky}.
\end{proof}

\subsubsection{Proof of Theorem \ref{thm_PowerWGCM1DEst}}
\begin{proof}
Proving Theorem \ref{thm_PowerWGCM1DEst} can be reduced to Theorem \ref{thm_PowerWGCM1DMG}, just as Theorem \ref{thm_WGCM1DEst} could be reduced to Theorem \ref{thm_WGCM1DGenRes}.
\end{proof}

\section{Proof of Theorem \ref{thm_MultFixedWOneDim}}\label{App_ProofsWGCMMultFix}
To prove Theorem \ref{thm_MultFixedWOneDim}, we use the strategy of the proof of Theorem 9 in Section D.4 of \citet{ShahPetersCondInd}. We will heavily rely on results from \citet{ChernozhukovGaussianApprox}, which we summarise in the following.

\subsection{Summary of Results on Gaussian Approximation of Maxima of Random Vectors}\label{sec_ResChern}
We present the following results in the form they are also presented in Section D.4.1 in \citet{ShahPetersCondInd}, which are sometimes special cases of the corresponding results in \citet{ChernozhukovGaussianApprox}.
There is the following difference to the results given there: We consider maxima of absolute values of random vectors instead of maxima of random vectors. The results for the maxima translate to the corresponding results for the absolute value by considering the vector $(Y,-Y)$ instead of just the vector $Y$.

Assume $W\sim \mathcal N_p(0,\Sigma)$, with $\Sigma_{jj}=1$ for $j=1,\ldots, p$. Assume $p\geq 3$ possibly depending on $n$. Let $V=\max_{j=1,\ldots, p}|W_j|$. Let $\tilde W$ be a random vector taking values in $\mathbb R^p$ with $\Ex[\tilde W]=0$ and $\Cov(\tilde W)=\Sigma$ and let $\tilde w_1,\ldots, \tilde w_n \in \mathbb  R^p$ be i.i.d copies of $\tilde W$. Let
$$\tilde V=\max_{j=1,\ldots, p}\left|\frac{1}{\sqrt n}\sum_{i=1}^n \tilde w_{ij}\right|,$$
where $\tilde w_{ij}$ is the $j$th component of $\tilde w_i$.
We need the following conditions for some sequence $(B_n)_{n\in \mathbb N}$ with $B_n \geq 1$:
\begin{itemize}[leftmargin=15mm]
	\item[(B1a)] $\max_{k=1,2}\Ex\left[|\tilde W_j|^{2+k}/B_n^k\right]+\Ex\left[\exp\left(|\tilde W_j|/B_n\right)\right]\leq 4$ for all $j=1,\ldots, p$;
	\item[(B1b)] $\max_{k=1,2}\Ex\left[|\tilde W_j|^{2+k}/B_n^{k/2}\right]+\Ex\left[\max_{j=1,\ldots, p}|\tilde W_j|^4/B_n^2\right]\leq 4$ for all $j=1,\ldots, p$;
	\item[(B2)] There exist some constants $C_1,c_1>0$ such that $B_n^2\left(\log(pn)\right)^7/n\leq C_1 n^{-c_1}$. 
\end{itemize}

The following Lemma corresponds to Corollary 2.1 in \citet{ChernozhukovGaussianApprox} and Theorem 22 in \citet{ShahPetersCondInd}:
\begin{lemma}\label{lem_Cor21Chern}
Assume that either (B1a) and (B2) or (B1b) and (B2) hold. Then, there exist constants $c,C>0$ depending only on $c_1$ and $C_1$ such that
$$\sup_{t\in \mathbb R} |\Prob(\tilde V\leq t)-\Prob(V\leq t)|\leq C n^{-c}.$$
\end{lemma}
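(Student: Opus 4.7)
The plan is to reduce the assertion about maxima of absolute values to the corresponding statement for maxima (without absolute values), for which Corollary 2.1 of \citet{ChernozhukovGaussianApprox} applies directly. The key observation is that for any $x \in \mathbb R$, $|x| = \max(x, -x)$. So I would define the doubled vectors $\tilde W' = (\tilde W, -\tilde W) \in \mathbb R^{2p}$, $W' = (W,-W) \in \mathbb R^{2p}$, and the i.i.d. copies $\tilde w_i' = (\tilde w_i, -\tilde w_i)$. With this, one has the identities
$$V = \max_{j=1,\ldots, 2p} W'_j, \qquad \tilde V = \max_{j=1,\ldots, 2p} \frac{1}{\sqrt n} \sum_{i=1}^n \tilde w'_{ij}.$$

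Next I would verify that the relevant assumptions transfer from the original $p$-dimensional setup to the $2p$-dimensional doubled setup (possibly with slightly modified constants). The covariance matrix $\Sigma' = \Cov(W') = \Cov(\tilde W')$ still has unit diagonal, since $\var(-W_j) = \var(W_j) = 1$; this is the normalisation required by the cited result. For (B1a) and (B1b), every moment condition involves only $|\tilde W'_j| = |\tilde W_{j \bmod p}|$, so the bounds carry over verbatim (the $\max_{j=1,\ldots,2p}$ in (B1b) equals $\max_{j=1,\ldots, p}$). For (B2), the dimension becomes $2p$, but $\log(2pn) \leq 2\log(pn)$ as soon as $pn \geq 2$, so $(\log(2pn))^7 \leq 128 (\log(pn))^7$; hence the inequality $B_n^2 (\log(2pn))^7/n \leq \tilde C_1 n^{-c_1}$ continues to hold with an absolute-constant-adjusted $\tilde C_1$.

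Having checked these conditions, I would apply Corollary 2.1 of \citet{ChernozhukovGaussianApprox} (equivalently, Theorem 22 of \citet{ShahPetersCondInd}) to the doubled vector, obtaining
$$\sup_{t \in \mathbb R} \Bigl|\Prob\Bigl(\max_{j=1,\ldots, 2p} \tfrac{1}{\sqrt n}\sum_i \tilde w'_{ij} \leq t\Bigr) - \Prob\Bigl(\max_{j=1,\ldots, 2p} W'_j \leq t\Bigr)\Bigr| \leq C n^{-c}$$
for constants $C,c>0$ depending only on $c_1, C_1$. By the identification in the first paragraph, the left-hand side equals $\sup_t|\Prob(\tilde V \leq t) - \Prob(V\leq t)|$, which yields the claim.

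There is no real obstacle here beyond careful bookkeeping: the content of the lemma is essentially a citation, and the only subtlety is that Chernozhukov et al. state their result for maxima rather than maxima of absolute values, which the $(W,-W)$ doubling trick circumvents without changing the dimension-dependence in any essential way. Should one prefer not to invoke the doubling trick, an alternative would be to note that the proof of Corollary 2.1 in \citet{ChernozhukovGaussianApprox} goes through verbatim for the maxima-of-absolute-values functional, as this is itself a smooth maximum approximation argument.
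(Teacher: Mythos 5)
Your proposal is correct and matches the paper's approach: the paper proves this lemma purely by citing Corollary 2.1 of \citet{ChernozhukovGaussianApprox} (Theorem 22 of \citealp{ShahPetersCondInd}) and notes, at the start of the same section, that the passage from maxima to maxima of absolute values is handled ``by considering the vector $(Y,-Y)$ instead of just the vector $Y$'' --- exactly your doubling trick. Your verification that the unit-diagonal normalisation, the moment conditions, and the $\log(2pn)\lesssim\log(pn)$ bound all survive the doubling is the bookkeeping the paper leaves implicit.
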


The next Lemma corresponds to Lemma 24 in \citet{ShahPetersCondInd} and follows from Lemma C.1 in Section C.5 of \citet{ChernozhukovGaussianApprox} (first statement) and is an application of Lemma A.1 in Section A.1 there (second statement).
\begin{lemma}\label{lem_LemC1Chern}
Again assume that either (B1a) and (B2) or (B1b) and (B2) hold. Let $\tilde \Sigma=\sum_{i=1}^n \tilde w_i \tilde w_i^T/n$ be the empirical covariance matrix of $\tilde w_1,\ldots, \tilde w_n$. Then, there exist constants $c,C$ depending only on $C_1$ and $c_1$ such that
\begin{align*}
\log(p)^2 \Ex\left[\|\tilde \Sigma-\Sigma\|_\infty\right]&\leq Cn^{-c}\\
\log(p)^2\Ex\left[\max_{j=1,\ldots, p}\left|\frac{1}{n}\sum_{i=1}^n \tilde w_{ij}\right|\right]&\leq Cn^{-c}.
\end{align*}
\end{lemma}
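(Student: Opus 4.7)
The plan is to derive both bounds as essentially direct applications of the Gaussian approximation toolbox developed in Chernozhukov, Chetverikov and Kato (2013), so the real work lies not in any new probabilistic argument but in verifying that the hypotheses (B1a)/(B1b) together with (B2) translate into the moment/tail conditions required by their Lemma C.1 (Section C.5) and Lemma A.1 (Section A.1). I will treat the two statements separately.

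For the first bound, I would apply Lemma C.1 of Chernozhukov et al.\ to the $p^2$ i.i.d.\ sums
\[
\tilde\Sigma_{jk}-\Sigma_{jk}=\frac{1}{n}\sum_{i=1}^n\bigl(\tilde w_{ij}\tilde w_{ik}-\Sigma_{jk}\bigr), \qquad j,k=1,\dots,p.
\]
Their lemma bounds $\Ex[\max_{j,k}|\tilde\Sigma_{jk}-\Sigma_{jk}|]$ by $C n^{-c}/\log(p)^2$ provided that the summands $\tilde w_{ij}\tilde w_{ik}-\Sigma_{jk}$ satisfy moment/sub-exponential conditions analogous to (B1a)/(B1b) with effective envelope $B_n^2$ and total dimension $p^2$ in place of $p$. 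The preparatory step is therefore to verify this translation: under (B1a), the Orlicz-type bound $\Ex[\exp(|\tilde W_j|/B_n)]\le 4$ combined with Cauchy--Schwarz gives $\Ex[\exp(|\tilde W_j\tilde W_k|/(2B_n^2))]\le 4$, which propagates all the needed moment inequalities for the products at the envelope $B_n^2$. Under (B1b), we exploit the explicit bound $\Ex[\max_j|\tilde W_j|^4/B_n^2]\le 4$ together with Cauchy--Schwarz on lower moments to control $\Ex[|\tilde W_j\tilde W_k|^{2+r}/B_n^r]$ for $r=1,2$. Condition (B2) then upgrades $\log(pn)$ to $\log(p^2 n)$ absorbing the extra factor of two in the constants $C_1,c_1$, so the conclusion of Lemma C.1 yields exactly the required bound.

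For the second bound, I would invoke a Nemirovski--type maximal inequality such as Lemma A.1 (Section A.1) of Chernozhukov et al., applied to the mean-zero sums $n^{-1}\sum_{i=1}^n \tilde w_{ij}$. Under (B1a) or (B1b), the marginals $\tilde W_j$ have moments of all required orders uniformly in $j$ with envelope depending on $B_n$, so the maximal inequality gives a bound of order $B_n\sqrt{\log(p)/n}$ (up to logarithmic factors), which after multiplying by $\log(p)^2$ and using (B2) becomes $O(n^{-c})$ for some $c>0$ depending only on $C_1,c_1$. This part is straightforward once the moment conditions are in hand.

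The main obstacle, as above, is the first bound, specifically the polynomial case (B1b): since (B1b) only gives polynomial moments up to order $4$ on the $\tilde W_j$, the moments of the products $\tilde W_j\tilde W_k$ are correspondingly limited, and one must be careful that Cauchy--Schwarz does not demand more than what (B1b) provides. The escape is the explicit separate hypothesis $\Ex[\max_j|\tilde W_j|^4/B_n^2]\le 4$, which directly bounds $\Ex[\max_{j,k}|\tilde W_j\tilde W_k|^2/B_n^2]\le 4$ and supplies the envelope moment needed in Chernozhukov et al.'s hypothesis. Once this is observed, both bounds follow by citation, and the constants $c,C$ depend only on $c_1,C_1$ through (B2).
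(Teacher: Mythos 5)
Your overall route is the same as the paper's: the paper gives no independent proof of this lemma but obtains the first bound by citing Lemma C.1 in Section C.5 of Chernozhukov, Chetverikov and Kato and the second by citing Lemma A.1 in Section A.1 there, with (B1a)/(B1b) and (B2) serving exactly as the hypotheses of those results. One step of your ``preparatory'' verification is, however, incorrect as stated: from $\Ex[\exp(|\tilde W_j|/B_n)]\leq 4$ you cannot conclude $\Ex[\exp(|\tilde W_j\tilde W_k|/(2B_n^2))]\leq 4$ via Cauchy--Schwarz, because $|\tilde W_j\tilde W_k|\leq \tfrac12(\tilde W_j^2+\tilde W_k^2)$ and a sub-exponential tail for $|\tilde W_j|$ gives only a Weibull-type tail $\exp(-\sqrt{t}/B_n)$ for $\tilde W_j^2$, so the exponential moment of the product at scale $B_n^2$ need not be finite. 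Fortunately this translation is unnecessary: Lemma C.1 of Chernozhukov et al.\ is formulated directly under moment/tail conditions on the coordinates $\tilde W_j$ themselves (their (E.1)/(M.1) or (E.2)/(M.2), which are what (B1a)/(B1b) together with (B2) encode) and its conclusion is precisely the stated bound on $\log(p)^2\,\Ex[\|\tilde\Sigma-\Sigma\|_\infty]$; no separate sub-exponentiality of the products $\tilde W_j\tilde W_k$ has to be established. With that step removed, your argument reduces to the same two citations the paper uses and is fine.
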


The following lemma corresponds to Lemma 2.1 in \citet{ChernozhukovGaussianApprox} and to Lemma 21 in \citet{ShahPetersCondInd}.
\begin{lemma}\label{lem_Lem21Chern}
Let $W=(W_1,\ldots, W_p)^T$ have a multivariate Gaussian distribution with $\Ex[W_i]=0$ and $\Ex[W_i^2]=1$ for all $i=1,\ldots,p$. Then, there exists a  universal constant $C>0$ such that for all $t\geq 0$
$$\sup_{w\geq 0}\Prob\left(\left|\max_{j=1,\ldots, p} |W_j|-w\right|\leq t\right)\leq C t\left(\sqrt{2\log(p)}+1\right).$$
\end{lemma}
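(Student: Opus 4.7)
The plan is to reduce this anti-concentration bound for $\max_{j}|W_j|$ to the known Nazarov-type anti-concentration bound for the ordinary maximum $\max_j Y_j$ of a centered Gaussian vector with unit-variance marginals. That standard result, essentially Theorem 3 of Chernozhukov--Chetverikov--Kato (on which the paper already relies heavily), states that for a centered Gaussian $Y\in\mathbb R^d$ with $\Ex[Y_j^2]=1$,
$$\sup_{w\in\mathbb R}\Prob\!\left(\left|\max_{j=1,\ldots,d} Y_j - w\right|\leq t\right)\leq C_0 t\bigl(\sqrt{\log d}+1\bigr),$$
for a universal constant $C_0>0$. I will take this as a black box and do the bookkeeping required to pass from it to the statement at hand.

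Step one is a doubling trick. Define the $2p$-dimensional Gaussian vector
$$Y=(W_1,\ldots,W_p,-W_1,\ldots,-W_p)^T.$$
Since reflecting Gaussians preserves mean zero and variance one, $Y$ is centered with $\Ex[Y_j^2]=1$ for every $j=1,\ldots,2p$, and pointwise we have the identity
$$\max_{j=1,\ldots,p}|W_j|=\max_{j=1,\ldots,2p}Y_j.$$
So applying the anti-concentration bound to $Y$ in dimension $d=2p$ gives
$$\sup_{w\geq 0}\Prob\!\left(\left|\max_{j=1,\ldots,p}|W_j|-w\right|\leq t\right)\leq C_0 t\bigl(\sqrt{\log(2p)}+1\bigr).$$

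Step two is a trivial estimate on the logarithm. For $p\geq 2$ we have $\log(2p)=\log 2+\log p\leq 2\log p$, hence $\sqrt{\log(2p)}\leq\sqrt{2\log p}$; the edge case $p=1$ can be handled separately by the half-normal density being bounded, and the resulting constant absorbed into a universal $C$. Combining these two steps yields the claimed inequality with $C=C_0$ (or a mild enlargement).

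The main (and essentially only) obstacle sits inside the black box, namely proving the Gaussian anti-concentration bound for $\max_j Y_j$; that is Nazarov's inequality, whose proof uses the Gaussian isoperimetric inequality and is delicate. Beyond that, nothing in the reduction is subtle: the doubling preserves all structural hypotheses of the anti-concentration theorem (centered, unit variances, no nondegeneracy needed), and the required $\sqrt{2\log p}$ factor is gotten for free from $\sqrt{\log(2p)}$ by an elementary inequality. I would therefore spend at most a few lines on the write-up and concentrate the reader's attention on the invocation of the CCK/Nazarov result.
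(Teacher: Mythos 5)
Your proposal is correct and matches the paper's treatment: the paper gives no separate proof of this lemma, but states at the start of its summary section that all results for maxima of absolute values follow from the corresponding Chernozhukov--Chetverikov--Kato results for plain maxima by passing to the doubled vector $(W,-W)$, which is exactly your reduction, with the $\sqrt{2\log p}$ factor absorbed as you describe. The invocation of CCK Lemma 2.1 (Nazarov-type anti-concentration) as a black box is also precisely what the paper does.
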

Define the function
$$q(\theta)=\theta^{1/3}(2\lor \log(p/\theta))^{2/3}.$$
One may check by differentiating that $q$ is increasing in $\theta$.\footnote{In \citet{ChernozhukovGaussianApprox} and \citet{ShahPetersCondInd}, the definition is $q(\theta)=\theta^{1/3}(1\lor \log(p/\theta))^{2/3}$. Although it is not essential for the proof, we consider it to be convenient for the function $q$ to be increasing in $\theta$.}
The following Lemma corresponds to Lemma 3.1 in \citet{ChernozhukovGaussianApprox} and to Lemma 23 in \citet{ShahPetersCondInd}.
\begin{lemma}\label{lem_Lem31Chern}
Let $U$ and $W$ be centered multivariate Gaussian random vectors taking values in $\mathbb R^p$. Let $U$ have covariance matrix $\Theta$ and $W$ have covariance matrix $\Sigma$, such that for $i=1,\ldots, p$, we have $\Sigma_{ii}=1$. Let $\Delta_0=\max_{j,k=1,\ldots, p}|\Sigma_{jk}-\Theta_{jk}|$. Then, there exists a universal constant $C>0$ such that 
\begin{enumerate}
	\item $\sup_{t\in\mathbb R}|\Prob\left(\max_{j=1,\ldots,p}|U_j|\leq t\right)-\Prob\left(\max_{j=1,\ldots,p}|W_j|\leq t\right)|\leq C q(\Delta_0).$
	\item Let $G_\Sigma$ and $G_\Theta$ be the quantile functions of $\max_{j=1,\ldots,p}|W_j|$ and $\max_{j=1,\ldots,p}|U_j|$. Then, for all $\alpha \in (0,1)$
	$$G_\Theta(\alpha)\leq G_\Sigma(\alpha+C q(\Delta_0))\quad \text{and}\quad G_\Sigma(\alpha)\leq G_\Theta(\alpha+C q(\Delta_0)).$$
\end{enumerate}
\end{lemma}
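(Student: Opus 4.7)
The plan is to reduce the two-sided (absolute value) version of the Gaussian comparison lemma to the one-sided version already available in the literature (Lemma 3.1 of \citealp{ChernozhukovGaussianApprox}, stated for $\max_j W_j$ without absolute values), via the standard ``doubling'' trick.

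First I would define, in $\mathbb{R}^{2p}$, the centered Gaussian vectors $\tilde U = (U_1,\ldots,U_p,-U_1,\ldots,-U_p)^T$ and $\tilde W = (W_1,\ldots,W_p,-W_1,\ldots,-W_p)^T$. The covariance of $\tilde W$ is the block matrix $\tilde \Sigma = \bigl(\begin{smallmatrix}\Sigma & -\Sigma \\ -\Sigma & \Sigma\end{smallmatrix}\bigr)$, and analogously for $\tilde \Theta$. The three key observations are: (i) $\max_{j=1,\ldots,p}|U_j| = \max_{j=1,\ldots,2p} \tilde U_j$ and similarly for $\tilde W$; (ii) the diagonal of $\tilde \Sigma$ is still identically $1$, since the bottom-right block equals $\Sigma$ (not $-\Sigma$); (iii) the new entrywise covariance discrepancy $\tilde \Delta_0 = \max_{j,k=1,\ldots,2p}|\tilde \Sigma_{jk} - \tilde \Theta_{jk}|$ equals the original $\Delta_0$, because each block of $\tilde \Sigma - \tilde \Theta$ is $\pm(\Sigma - \Theta)$.

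Next I would invoke Lemma 3.1 of \citet{ChernozhukovGaussianApprox} directly on $\tilde U, \tilde W$ (their result only requires centered Gaussian vectors with unit diagonal on one of the covariances, and it tolerates singular covariances, which we need since $\tilde \Sigma$ and $\tilde \Theta$ are degenerate). This yields
\[
\sup_{t\in\mathbb R}\bigl|\Prob(\max_{j=1,\ldots,2p} \tilde U_j \leq t) - \Prob(\max_{j=1,\ldots,2p} \tilde W_j \leq t)\bigr| \leq C\, \tilde q(\Delta_0),
\]
where $\tilde q(\theta) = \theta^{1/3}(2\vee \log(2p/\theta))^{2/3}$. A short elementary calculation shows $\log(2p/\theta) \leq \log 2 + \log(p/\theta)\leq (1+\log 2)(2\vee \log(p/\theta))$, hence $\tilde q(\theta) \leq C' q(\theta)$ with $q$ as defined in the paper, which (after absorbing $C'$ into the constant) gives part 1.

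For part 2, I would use part 1 to transfer between the c.d.f.'s of $\max_j|U_j|$ and $\max_j|W_j|$ at the level of quantiles. Fix $\alpha \in (0,1)$ and set $t = G_\Sigma(\alpha + C q(\Delta_0))$, so that $\Prob(\max_j|W_j|\leq t)\geq \alpha + C q(\Delta_0)$ by right-continuity of c.d.f.'s. Then part 1 gives $\Prob(\max_j|U_j|\leq t) \geq \alpha$, which by definition of a quantile function implies $G_\Theta(\alpha)\leq t = G_\Sigma(\alpha+C q(\Delta_0))$. The symmetric bound follows by swapping the roles of $U$ and $W$ (the hypothesis $\Sigma_{ii}=1$ was only used via part 1, which is symmetric in $\Sigma$ and $\Theta$ up to the anti-concentration needed in the underlying Chernozhukov proof; since we already have part 1, the quantile transfer is formal).

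The only real obstacle is a bookkeeping point: Lemma 3.1 of \citet{ChernozhukovGaussianApprox} is usually stated with the constraint that the reference covariance has unit diagonal, so it is worth emphasising that $\tilde\Sigma$ does, and that the dimension $2p$ only costs us a harmless constant factor in $q(\cdot)$. Once this reduction is in place, there is no further hard analytical work to do.
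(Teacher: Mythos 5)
Your proposal is correct and takes essentially the same route as the paper: the paper also obtains part 1 from the one-sided comparison lemma of \citet{ChernozhukovGaussianApprox} via the doubling trick of replacing a vector $Y$ by $(Y,-Y)$ (stated as a blanket remark at the start of its summary of the Chernozhukov et al.\ results), and it derives part 2 from part 1 by exactly the quantile-transfer computation you describe. The only difference is that you spell out the $2p$-versus-$p$ bookkeeping in $q(\cdot)$ and the unit-diagonal/degeneracy check for the doubled covariance, which the paper leaves implicit.
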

Note that 2. follows from 1. by observing that 
$$\Prob\left(\max_j|U_j|\leq G_\Sigma\left(\alpha+ C q(\Delta_0)\right)\right)\stackrel{1.}{\geq} \Prob\left(\max_j|W_j| \leq G_\Sigma\left(\alpha+C q(\Delta_0)\right)\right)-C q(\Delta_0)=\alpha,$$
so $G_\Theta(\alpha)\leq G_\Sigma\left(\alpha+Cq(\Delta_0)\right)$. Similarly, the second statement of 2. follows.

\subsection{Proof of Theorem \ref{thm_MultFixedWOneDim}}
We prove a slightly more general result, similar to Section \ref{sec_WGCM1DMG}, since we later also want to apply a version of this theorem and its proof to estimated weight functions, see Appendix \ref{App_ProofMultWGCM}.

For $P\in\mathcal P$ and $C,c>0$, define
$$\mathcal W_{P,C,c}=\left\{w:\mathbb R^{d_Z}\to \mathbb R\, \Bigr\rvert\,|w|\leq C \land \Ex_P\left[\epsilon_P^2\xi_P^2w(Z)^2\right]\geq c\right\}.$$
For $\mathbf w=(w_1,\ldots, w_K)^T\in \mathcal W_{P,C,c}^K$, let $S_{n,\mathbf w}$ and $\hat G_{n,\mathbf w}$ be the versions of $S_n$ and $\hat G_n$ based on $\mathbf w=(w_1,\ldots, w_K)^T$.

\begin{theorem}\label{thm_MultFixedW1DMG}
Let $\mathcal P\subset \mathcal P_0$ and let $A_f$, $A_g$, $B_f$ and $B_g$ be defined as in (\ref{eq_DefAfAg}) and (\ref{eq_DefBfBg}). Assume that there exist $C, c\geq 0$ such that for all $n\in\mathbb N$ and $P\in\mathcal P$ there exists $B\geq 1$ such that either (A1a) and (A2) or (A1b) and (A2) hold. Let $C_1,c_1>0$ such that for all $P\in\mathcal P$ the set $\mathcal W_{P,C_1,c_1}$ is not empty. Assume that
\begin{align}
A_f A_g=o_\mathcal P\left(n^{-1}\log(K)^{-2}\right), \label{eq_MultFixedWOneDimCondA}\\
B_f=o_\mathcal P\left(\log(K)^{-4}\right), \quad B_g=o_\mathcal P\left(\log(K)^{-4}\right).\label{eq_MultFixedWOneDimCondB}
\end{align}
Assume that there exist sequences $\left(\tau_{f,n}\right)_{n\in\mathbb N}$ and $\left(\tau_{g,n}\right)_{n\in\mathbb N}$ such that
\begin{align}
\max_{i=1,\ldots, n}|\epsilon_{P,i}|=O_\mathcal P(\tau_{g,n}),\quad A_{g}=o_\mathcal P\left(\tau_{g,n}^{-2}\log(K)^{-2}\right), \label{eq_MultFixedWOneDimCondC}\\
\max_{i=1,\ldots, n}|\xi_{P,i}|=O_\mathcal P(\tau_{f,n}),\quad A_{f}=o_\mathcal P\left(\tau_{f,n}^{-2}\log(K)^{-2}\right). \label{eq_MultFixedWOneDimCondD}
\end{align}
Then,
$$\sup_{P\in\mathcal P}\sup_{ \mathbf w \in\mathcal W_{P,C_1,c_1}^K} \sup_{\alpha\in (0,1)}|\Prob_P(S_{n,\mathbf w}\leq \hat G_{n,\mathbf w}(\alpha))-\alpha|\to 0.$$
\end{theorem}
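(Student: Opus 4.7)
\textbf{Plan of proof for Theorem \ref{thm_MultFixedW1DMG}.} The strategy is to follow the proof of Theorem 9 in \citet{ShahPetersCondInd}, leveraging the Chernozhukov-Chetverikov-Kato Gaussian approximation machinery summarised in Appendix \ref{sec_ResChern}. The extra supremum over $\mathbf w\in\mathcal W_{P,C_1,c_1}^K$ causes no essential difficulty, since throughout the argument the weights enter only via the uniform upper bound $C_1$ and the uniform lower bound $c_1$ on $\Ex_P[\epsilon_P^2\xi_P^2 w_k(Z)^2]$, exactly as the estimated weight functions enter in the univariate proof of Theorem \ref{thm_WGCM1DEst}.

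The first step is to set up the idealisation. For $k=1,\ldots,K$, let $\sigma_{P,k}^2=\Ex_P[\epsilon_P^2\xi_P^2 w_k(Z)^2]\in[c_1,C_1^2 \Ex_P[\epsilon_P^2\xi_P^2]]$, set $\tilde R_{ki}=\epsilon_i\xi_i w_k(z_i)/\sigma_{P,k}$, and define the idealised statistic $\tilde S_n=\max_k |n^{-1/2}\sum_{i=1}^n \tilde R_{ki}|$. Let $\Sigma\in\mathbb R^{K\times K}$ have entries $\Sigma_{kl}=\Ex_P[\epsilon_P^2\xi_P^2 w_k(Z)w_l(Z)]/(\sigma_{P,k}\sigma_{P,l})$; note $\Sigma_{kk}=1$. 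Under $P\in\mathcal P_0$ and the boundedness of $w_k$, the hypotheses (A1a)/(A1b) and (A2) on $\epsilon_P\xi_P$ transfer directly (with constants depending only on $C_1, c_1$) to the centred vectors $(\tilde R_{ki})_k$. Hence Lemma \ref{lem_Cor21Chern} yields
$$\sup_{P\in\mathcal P}\sup_{\mathbf w\in\mathcal W_{P,C_1,c_1}^K}\sup_{t\in\mathbb R}|\Prob_P(\tilde S_n\leq t)-\Prob(V\leq t)|=O(n^{-c}),$$
where $V=\max_k|W_k|$ and $W\sim\mathcal N_K(0,\Sigma)$.

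The heart of the argument is to show that $S_n$ is close to $\tilde S_n$. Writing $\Delta f_i=f_P(z_i)-\hat f(z_i)$ and $\Delta g_i=g_P(z_i)-\hat g(z_i)$, expand
$R_{ki}=\epsilon_i\xi_i w_k(z_i)+\epsilon_i\Delta g_i w_k(z_i)+\xi_i\Delta f_i w_k(z_i)+\Delta f_i\Delta g_i w_k(z_i)$. The bias term is controlled by Cauchy-Schwarz: $|n^{-1/2}\sum_i \Delta f_i\Delta g_i w_k(z_i)|\leq C_1\sqrt{nA_fA_g}=o_{\mathcal P}(\log(K)^{-1})$ by (\ref{eq_MultFixedWOneDimCondA}). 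For the cross term $\nu_{g,k}:=n^{-1/2}\sum_i\epsilon_i\Delta g_i w_k(z_i)$, conditional on $(\mathbf Y,\mathbf Z)$ the summands are independent with mean zero (using $X\indep Y|Z$) and conditional variance summing to at most $C_1^2 B_g$; a conditional moment-based maximal inequality (e.g.\ Nemirovski's inequality, or Lemma \ref{lem_LemC1Chern} applied conditionally to the independent-but-not-identically-distributed summands) gives $\max_k|\nu_{g,k}|=O_\mathcal P(\sqrt{B_g\log K})+(\text{lower-order term from }\max_{i,k}|\epsilon_i\Delta g_i w_k(z_i)|)$, where the maximum-entry term is controlled by (\ref{eq_MultFixedWOneDimCondC}) and $\max_i|\epsilon_i|=O_\mathcal P(\tau_{g,n})$. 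Combining with (\ref{eq_MultFixedWOneDimCondB}) gives $\max_k|\nu_{g,k}|=o_\mathcal P(\log(K)^{-1/2})$; the term $\nu_{f,k}$ is symmetric. An entirely parallel argument, using the same decomposition for $R_{ki}^2$, yields $\max_k|\tau_{D,k}^{(n)}/\sigma_{P,k}-1|=o_\mathcal P(\log(K)^{-1/2})$. Putting the pieces together gives $|S_n-\tilde S_n|=o_\mathcal P(\log(K)^{-1/2})$ uniformly in $P,\mathbf w$; the anti-concentration bound Lemma \ref{lem_Lem21Chern} then converts this into $\sup_t|\Prob_P(S_n\leq t)-\Prob_P(\tilde S_n\leq t)|\to 0$ uniformly.

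For the bootstrap quantile, the same decomposition together with the first statement of Lemma \ref{lem_LemC1Chern} (applied to the idealised outer-product vectors $\tilde R_k\tilde R_l$) yields $\|\hat\Sigma-\Sigma\|_\infty=o_\mathcal P(\log(K)^{-2})$ uniformly. The second statement of Lemma \ref{lem_Lem31Chern} then gives $\hat G_n(\alpha)\leq G_\Sigma(\alpha+o_\mathcal P(1))$ and conversely, where $G_\Sigma$ is the quantile function of $V$; inserting this into the previous step yields the uniform level control. The single non-routine step is the uniform-in-$k$ control of the cross term $\nu_{g,k}$: a naive union bound would lose a factor $K$, so one genuinely needs a maximal inequality adapted to the conditional structure, which is precisely why the conditions (\ref{eq_MultFixedWOneDimCondB})-(\ref{eq_MultFixedWOneDimCondD}) carry the sharper $\log(K)^{-4}$ and $\log(K)^{-2}$ penalties rather than $\log(K)^{-1}$.
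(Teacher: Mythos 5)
Your plan follows the same route as the paper: the paper's proof of Theorem \ref{thm_MultFixedW1DMG} is exactly an adaptation of Section D.4.2 of \cite{ShahPetersCondInd}, with the decomposition $T_k=(\tilde T_k+\delta_k)/(1+\Delta_k)$, Lemma \ref{lem_Cor21Chern} for the Gaussian approximation of $\tilde S_n$, Lemma \ref{lem_Lem21Chern} for anti-concentration, Lemma \ref{lem_Lem31Chern} for the quantile comparison between $\hat\Sigma$ and $\Sigma$, and Lemma \ref{lem_Lem29SP} (the truncated maximal inequality, which is what you gesture at with ``Nemirovski's inequality applied conditionally'') for the cross terms $\nu_{f,k},\nu_{g,k}$. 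The treatment of the supremum over $\mathbf w$ via the uniform bounds $C_1$, $c_1$ is also how the paper proceeds.

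There is, however, one quantitative gap that would break your argument when $K$ grows with $n$ (which the theorem permits). You claim $\max_k|\tau_{D,k}^{(n)}/\sigma_{P,k}-1|=o_{\mathcal P}(\log(K)^{-1/2})$ and then pass from $S_n$ to $\tilde S_n$ using only the anti-concentration bound. The additive perturbation $\max_k|\delta_k|=o_{\mathcal P}(\log(K)^{-1/2})$ is indeed handled that way, since Lemma \ref{lem_Lem21Chern} carries a factor $\sqrt{\log K}$. But the multiplicative perturbation $(1+\Delta_k)$ cannot be absorbed at that rate: converting it to an additive one costs a factor $\max_k|\tilde T_k|=O_{\mathcal P}(\sqrt{\log K})$, giving a perturbation of size $o_{\mathcal P}(1)$ rather than $o_{\mathcal P}(\log(K)^{-1/2})$, after which anti-concentration yields a diverging bound; and the alternative, used in the paper, is a Gaussian comparison between $V_n$ and $(1+u_\Delta)V_n$ via Lemma \ref{lem_Lem31Chern}, whose error is $q(u_\Delta^2+2u_\Delta)$ with $q(\theta)=\theta^{1/3}(2\vee\log(K/\theta))^{2/3}$ --- this tends to $0$ only if $u_\Delta=o(\log(K)^{-2})$. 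The paper accordingly proves the stronger rate $\max_k|\Delta_k|=o_{\mathcal P}(\log(K)^{-2})$ (Lemma \ref{lem_Lem26SP}, part 2), and this is precisely where the $\log(K)^{-4}$ penalties in (\ref{eq_MultFixedWOneDimCondB}) are consumed: they force $\frac1n\sum_i\Delta f_i^2\xi_i^2$ and $\frac1n\sum_i\epsilon_i^2\Delta g_i^2$ to be $o_{\mathcal P}(\log(K)^{-4})$, which after the square root gives the $o_{\mathcal P}(\log(K)^{-2})$ control of the empirical second moments (Lemma \ref{lem_Lem27SP}). Your attribution of the $\log(K)^{-4}$ rates to the numerator cross terms is therefore slightly off: those are handled by conditions (\ref{eq_MultFixedWOneDimCondC})--(\ref{eq_MultFixedWOneDimCondD}) through the truncation at $\tau_{f,n},\tau_{g,n}$, and only need to deliver $o_{\mathcal P}(\log(K)^{-1/2})$. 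With the denominator rate upgraded to $o_{\mathcal P}(\log(K)^{-2})$, the rest of your outline closes as in the paper.
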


We see that Theorem \ref{thm_MultFixedWOneDim} follows from Theorem \ref{thm_MultFixedW1DMG}.

\begin{proof}[Proof of Theorem \ref{thm_MultFixedW1DMG}]
We closely follow Section D.4.2 in \citet{ShahPetersCondInd}.
In the following, we will often omit dependencies on $P$, $\mathbf w$ and $n$ and we will often write $x\lesssim y$ instead of \textit{there exists a constant $C>0$ independent of $n$, $P$ and  $\mathbf w$ such that $x\leq C y$.}

Fix $C_1,c_1 >0$ and write $\mathcal W_P=\mathcal W_{P,C_1,c_1}$. For $P\in\mathcal P$, $\mathbf w=(w_1,\ldots, w_K)^T\in\mathcal W_P^K$ and $k=1,\ldots, K$ let
$$\sigma_{k}^2=\sigma_{P,\mathbf w, k}^2=\Ex_P[\epsilon_P^2\xi_P^2 w_k(Z)^2].$$
We have for all $k=1,\ldots, K$
$$\Ex_P[\epsilon \xi w_k(Z)]=\Ex_P[\Ex_P[\epsilon|Z,Y]\xi w_k(Z)]=0.$$
We will need the conditions (B1a)/(B1b) and (B2) to hold for the random vectors
$$\left\{\left(\frac{\epsilon_i\xi_i w_k(z_i)}{\sigma_{k}}\right)_{k=1}^K\right\}_{i=1}^n.$$
This is guaranteed by (A1a)/(A1b) and (A2), since 
$$|w_k(z_i)/\sigma_k|\leq C_1/\sqrt{c_1}\eqqcolon \gamma.$$

Consider the definition (\ref{eq_DefStatMultFixWOneDim}) of $T_k\upind n$. Write $\tau_{N,k}\upind n/\sigma_k=\sqrt n \bar{\mathbf R}_k/\sigma_k=\delta_k+\tilde T_k$ with
$$\tilde T_k=\frac{1}{\sqrt n} \sum_{i=1}^n \frac{\epsilon_i \xi_i w_k(z_i)}{\sigma_k}.$$
Write
$$\frac{\tau_{D,k}\upind n}{\sigma_k}=\frac{\left(\frac{1}{n}\|\mathbf R_k\|_2^2-\bar{\mathbf R}_k^2\right)^{1/2}}{\sigma_k}=1+\Delta_k.$$
It follows that 
$$T_k=\frac{\tilde T_k+\delta_k}{1+\Delta_k}.$$
Define $\tilde S_n=\max_{k=1,\ldots, K} |\tilde T_k|$ and define the matrix $\Sigma\in \mathbb R^{K\times K}$ by
$$\Sigma_{kl}=\frac{\Ex_P[\epsilon^2\xi^2 w_k(Z)w_l(Z)]}{\sigma_k\sigma_l}.$$
Note that the diagonal of $\Sigma$ consists of ones.
Let $W=(W_1,\ldots, W_K)^T$ be a random vector with distribution $\mathcal N_K(0, \Sigma)$. Let $V_n=\max_{k=1,\ldots, K}|W_k|$ and let $G_n$ be the quantile function of $V_n$. Note that $\Sigma$, $W$, $V_n$ and $G_n$ all depend on $P$ and $\mathbf w$. Lemma \ref{lem_Lem21Chern} implies that the distribution function of $V_n$ is continuous, so for all $\alpha\in[0,1]$, we have $\Prob_P(V_n\leq G_n(\alpha))=\alpha$.
Our goal is to bound 
$$v_{P,\mathbf w}(\alpha)= |\Prob_P(S_{n,\mathbf w}\leq \hat G_{n,\mathbf w}(\alpha))-\alpha|.$$
For this, define 
$$\kappa_{P,\mathbf w}=\sup_{t\in \mathbb R}|\Prob_P(S_{n,\mathbf w}\leq t)-\Prob_P(V_n\leq t)|.$$
Fix $P\in\mathcal P$ and $\mathbf w\in\mathcal W_P$. Let $\triangle$ denote the symmetric difference. Using the triangle inequality and $|\Prob(A)-\Prob(B)|\leq \Prob(A\triangle B)$, we have
\begin{align*}
v(\alpha)&\leq|\Prob(S_n\leq \hat G_n(\alpha))-\Prob(S_n\leq G_n(\alpha))|+|\Prob(S_n\leq G_n(\alpha))-\Prob(V_n\leq G_n(\alpha))|\\
&\leq \Prob(\{S_n\leq \hat G_n(\alpha)\}\triangle \{S_n\leq G_n(\alpha)\})+\kappa.
\end{align*}
Let $u_\Sigma>0$. By Lemma \ref{lem_Lem31Chern}, $\|\Sigma-\hat\Sigma\|_\infty\leq u_\Sigma$ implies $G_n(\alpha-C' q(u_\Sigma))\leq \hat G_n(\alpha)\leq G_n(\alpha+C' q(u_\Sigma))$. This means that if $\|\Sigma-\hat\Sigma\|_\infty\leq u_\Sigma$, we have that $G_n(\alpha)<S_n\leq \hat G_n(\alpha)$ implies $G_n(\alpha)< S_n \leq G_n(\alpha+ C' q(u_\Sigma))$ and $\hat G_n(\alpha)< S_n\leq G_n(\alpha)$ implies $G_n(\alpha-C' q(u_\Sigma))\leq S_n\leq G_n(\alpha)$. We obtain
\begin{align*}
&\Prob\left(\{S_n\leq \hat G_n(\alpha)\}\triangle \{S_n\leq G_n(\alpha)\}\right)\\
&\leq\Prob\left(G_n(\alpha-C'q(u_\Sigma))\leq S_n\leq G_n(\alpha+C'q(u_\Sigma))\right)+\Prob(\|\Sigma-\hat \Sigma\|_\infty > u_\Sigma)\\
&\leq 2\kappa+\Prob(G_n(\alpha-C' q(u_\Sigma))\leq V_n\leq G_n(\alpha+C' q(u_\Sigma)))+\Prob(\|\Sigma-\hat \Sigma\|_\infty > u_\Sigma)\\
&= 2\kappa+2 C' q(u_\Sigma)+\Prob(\|\Sigma-\hat \Sigma\|_\infty > u_\Sigma).
\end{align*}
In total, we obtain 
$$v(\alpha)\lesssim \kappa+q(u_\Sigma)+\Prob(\|\Sigma-\hat \Sigma\|_\infty > u_\Sigma).$$
For $0<u_\delta, u_\Delta \leq 1$, define the event $\Omega=\{\max_{k=1,\ldots, K} |\delta_k|\leq u_\delta, \max_{k=1,\ldots, K}|\Delta_k|\leq u_\Delta\}$. Observe that $S_n = \max_k \left|\frac{\tilde T_k+\delta_k}{1+\Delta_k}\right|$, so on $\Omega$, we have that $S_n\leq t$ implies that $\max_k\frac{\ |\tilde T_k|-u_\delta}{1+u_\Delta}\leq t$ and thus $\tilde S_n=\max_k |\tilde T_k|\leq t(1+u_\Delta)+u_\delta$. Similarly, if $\tilde S_n\leq t(1-u_\Delta)-u_\delta$, then we have $S_n\leq t$.
This implies that for all $t\in \mathbb R$, we have
$$\left\{\tilde S_n\leq t\left(1-{u_\Delta}\right)-u_\delta\right\}\cap\Omega\subseteq\{S_n\leq t\}\cap\Omega\subseteq \left\{\tilde S_n\leq t\left(1+u_\Delta\right)+u_\delta\right\}\cap\Omega$$
Therefore, 
\begin{align*}
\kappa& \leq \sup_{t\in \mathbb R}\left\{\left|\Prob(\tilde S_n\leq t(1+u_\Delta)+u_\delta)-\Prob(V_n\leq t)\right|+\left|\Prob(\tilde S_n\leq t(1-u_\Delta)-u_\delta)-\Prob(V_n\leq t)\right|\right\}\\
&+\Prob(\Omega^c)\\
&\leq \sup_{t\in \mathbb R}\left|\Prob\left(\tilde S_n \leq t\right)-\Prob\left(V_n\leq \frac{t-u_\delta}{1+u_\Delta}\right)\right|+\sup_{t\in \mathbb R}\left|\Prob\left(\tilde S_n \leq t\right)-\Prob\left(V_n\leq \frac{t+u_\delta}{1-u_\Delta}\right)\right|+\Prob(\Omega^c),
\end{align*}
where the last line follows by reparametrisation. Now, we write
\begin{align*}
&\left|\Prob\left(\tilde S_n \leq t\right)-\Prob\left(V_n\leq \frac{t-u_\delta}{1+u_\Delta}\right)\right|\\
&\leq \left|\Prob(\tilde S_n\leq t)-\Prob(V_n\leq t)\right|+\left|\Prob\left(V_n\leq t-u_\delta\right)-\Prob\left(\left(1+u_\Delta\right)V_n\leq t-u_\delta\right)\right|\\
&+\Prob\left(t-u_\delta\leq V_n\leq t\right)\\
&=I+II+III.
\end{align*}
By Lemma \ref{lem_Cor21Chern}, it follows that $I\leq C_2 n^{-c_2}$ with $C_2,c_2>0$ independent of $P$ and $\mathbf w$. By Lemma \ref{lem_Lem31Chern}, we get that $II\leq C' q(u_\Delta^2+2 u_\Delta)$, since $\|\Cov(W)-\Cov((1+u_\Delta)W\|_\infty=u_\Delta^2+2 u_\Delta$. Since $u_\Delta \leq 1$, we have 
$$q(u_\Delta^2+2 u_\Delta)\leq q(3 u_\Delta)\leq 3^{1/3}q(u_\Delta).$$
By Lemma \ref{lem_Lem21Chern}, there exists $C''>0$ such that
$$III\leq \Prob(|V_n-t|\leq u_\delta)\leq C'' u_\delta \sqrt{\log(K)}.$$
Combining $I$, $II$ and $III$, we get that
$$\left|\Prob\left(\tilde S_n \leq t\right)-\Prob\left(V_n\leq \frac{t-u_\delta}{1+u_\Delta}\right)\right|\lesssim q(u_\Delta)+ u_\delta\sqrt{\log(K)}+n^{-c_2}.$$
Similarly, also
$$\left|\Prob\left(\tilde S_n \leq t\right)-\Prob\left(V_n\leq \frac{t+u_\delta}{1-u_\Delta}\right)\right|\lesssim q(u_\Delta)+ u_\delta\sqrt{\log(K)}+n^{-c_2}.$$
In total, we obtain
\begin{align*}
v(\alpha)&\lesssim \Prob\left(\|\Sigma-\hat \Sigma\|_\infty > u_\Sigma\right)+q(u_\Sigma)+\Prob\left(\max_{k=1,\ldots, K} |\delta_k| > u_\delta\right)+u_\delta\sqrt{\log(K)}\\
&+\Prob\left(\max_{k=1,\ldots, K} |\Delta_k|>u_\Delta\right)+q(u_\Delta)+ n^{-c_2}.
\end{align*}
Define 
\begin{equation} \label{eq_DefAn}
a_n=\log(K)^{-2}.
\end{equation}

We first show, how to conclude with the help of Lemma \ref{lem_Lem26SP} below. Take $u_\delta=o(a_n^{1/4})$, such that $\max_{k=1,\ldots, K} |\delta_k|=o_{\mathcal P,\mathcal W}(u_\delta)$. 
Then, we have $\sup_{P\in\mathcal P} \Prob(\max_k |\delta_k|>u_\delta)\to 0$ and $u_\delta\sqrt{\log(K)}=u_\delta a_n^{-1/4}\to 0$. Take $u_\Delta=o(a_n)$, such that $\max_{k=1,\ldots, K} |\Delta_k|=o_{\mathcal P,\mathcal W}(u_\Delta)$. Then, we have $\sup_{P\in\mathcal P} \Prob(\max_k |\Delta_k|>u_\Delta)\to 0$ and 
\begin{align*}
q(u_\Delta)&=u_\Delta^{1/3}\left(2\lor \left(\log(K)+\log(1/u_\Delta)\right)\right)^{2/3}\\
&\leq 2^{2/3} u_\Delta^{1/3}\left(1+\left(a_n^{-1/2}\right)^{2/3}+\log(1/u_\Delta)^{2/3}\right)\to 0,
\end{align*}
since $\left(u_\Delta a_n^{-1}\right)^{1/3}\to 0$ and $\left(u_\Delta\log(1/u_\Delta)^2\right)^{1/3}\to 0$.
The same argument works for the terms involving $u_\Sigma$.
It follows that
$$\sup_{P\in \mathcal P}\sup_{\mathbf w\in\mathcal W_P^K}\sup_{\alpha\in (0,1)}v_{P,\mathbf w}(\alpha)\to 0,$$
which concludes the proof.
\end{proof}

Next, we prove the following Lemma that corresponds to Lemma 26 in \citet{ShahPetersCondInd}.
\begin{lemma}\label{lem_Lem26SP}
For a sequence $(V_n)_{n\in\mathbb N}$ of random variables depending on $\mathbf w\in\mathcal W_P^K$ and a sequence $(b_n)_{n \in \mathbb N}$ of real numbers, let us write
$$V_n=o_{\mathcal P,\mathcal W}(b_n)$$
if for all $\delta >0$
$$\sup_{P\in\mathcal P}\sup_{\mathbf w\in\mathcal W_P^K}\Prob_P\left(\frac{|V_n|}{b_n}>\delta\right)\to 0.$$
Recall the sequence $a_n= \log(K)^{-2}$ from (\ref{eq_DefAn}). Within the setting of the proof of Theorem \ref{thm_MultFixedW1DMG}, we have
\begin{enumerate}
	\item $\max_{k=1,\ldots, K} |\delta_k|=o_{\mathcal P,\mathcal W}(a_n^{1/4})$;
	\item $\max_{k=1,\ldots, K} |\Delta_k|=o_{\mathcal P,\mathcal W}(a_n)$;
	\item $\|\Sigma-\hat \Sigma\|_\infty=o_{\mathcal P,\mathcal W}(a_n)$.
\end{enumerate}
\end{lemma}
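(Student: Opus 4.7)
All three claims share a common strategy: substitute the residual expansion $R_{k,i}=w_k(z_i)(\epsilon_i+\Delta f_i)(\xi_i+\Delta g_i)$, with $\Delta f_i=f_P(z_i)-\hat f(z_i)$ and $\Delta g_i=g_P(z_i)-\hat g(z_i)$, and show that the dominant piece built from $w_k(z_i)\epsilon_i\xi_i$ alone is controlled uniformly in $k$ by Lemma~\ref{lem_LemC1Chern}, while the error terms involving $\Delta f_i$ or $\Delta g_i$ are bounded uniformly in $k$ by combining deterministic Cauchy--Schwarz estimates with a conditional concentration argument.

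For statement~1, I plan to carry over the decomposition (\ref{eq_DecompNum}) from the proof of Theorem~\ref{thm_WGCM1DGenRes} and write $\sigma_k\delta_k=b_k+\nu_{f,k}+\nu_{g,k}$, where the three terms now depend on $k$ through $w_k$ but have the same form as before. The bias term satisfies $|b_k|\leq C_1\sqrt{nA_fA_g}$ uniformly in $k$, so (\ref{eq_MultFixedWOneDimCondA}) yields $\max_k|b_k|/\sigma_k=o_{\mathcal P,\mathcal W}(\log(K)^{-1})$, which is well within $o_{\mathcal P,\mathcal W}(a_n^{1/4})$. For $\nu_{g,k}$ I condition on $(\mathbf Y,\mathbf Z)$: under $P\in\mathcal P_0$ the $\epsilon_i$ are then independent with mean zero and conditional variance $u(z_i)$, so $\nu_{g,k}$ is a sum of independent mean-zero summands with conditional variance at most $C_1^2B_g$ and almost-sure summand bound $C_1\max_i|\epsilon_i|\max_i|\Delta g_i|/\sqrt n$. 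A conditional Bernstein inequality combined with a union bound over $k$ gives a high-probability estimate of the form $\max_k|\nu_{g,k}|\lesssim\sqrt{B_g\log(K)}+\max_i|\epsilon_i|\sqrt{A_g}\log(K)$, using $\max_i|\Delta g_i|\leq\sqrt{nA_g}$. Assumptions (\ref{eq_MultFixedWOneDimCondB}) and (\ref{eq_MultFixedWOneDimCondC}) then deliver the required $a_n^{1/4}$ rate. The term $\nu_{f,k}$ is handled symmetrically by conditioning on $(\mathbf X,\mathbf Z)$.

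Statement~2 follows by applying the same expansion to $\tau_{D,k}^2-\sigma_k^2$. Its principal component is the centered empirical average $\frac{1}{n}\sum_i w_k(z_i)^2\epsilon_i^2\xi_i^2-\sigma_k^2$, which I control uniformly in $k$ by applying Lemma~\ref{lem_LemC1Chern} to the centered standardised $K$-dimensional vector $(w_k(z_i)^2\epsilon_i^2\xi_i^2/\sigma_k^2)_{k=1}^K$, whose moment conditions (B1a)/(B1b) and (B2) are inherited from (A1a)/(A1b), (A2) and the uniform bound $|w_k|/\sigma_k\leq C_1/\sqrt{c_1}$. The remaining mixed terms in the expansion of $R_{k,i}^2$ are bounded by the techniques of statement~1. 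Statement~3 reduces to statement~2 applied to each pair $(k,l)$: the denominators $\tau_{D,k}\tau_{D,l}$ are handled by statement~2, and the numerator $\frac{1}{n}\mathbf R_k^T\mathbf R_l-\bar{\mathbf R}_k\bar{\mathbf R}_l$ is treated by applying Lemma~\ref{lem_LemC1Chern} to the $K^2$-dimensional vector $(w_k(z_i)w_l(z_i)\epsilon_i^2\xi_i^2/(\sigma_k\sigma_l))_{k,l}$, which introduces only an extra $\log(K^2)=2\log(K)$ factor compared to statement~2.

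The main obstacle will be the stochastic step of statement~1 (and the mixed terms of statement~2): one must combine the conditional Bernstein bound with the deterministic rates on $A_g$, $B_g$ and $\max_i|\epsilon_i|$ while carefully tracking the $\log(K)$ factors introduced by the union bound. Since (A1a)/(A1b) only bound joint moments of $\epsilon\xi$ rather than of $\epsilon$ alone, some care is needed in extracting the conditional moment information about $\epsilon_i$ that feeds into Bernstein's inequality, so that the resulting $\max_k$ bound is indeed $o_{\mathcal P,\mathcal W}(a_n^{1/4})$ and $o_{\mathcal P,\mathcal W}(a_n)$ respectively.
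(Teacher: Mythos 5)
Your overall architecture matches the paper's: the same decomposition $\sigma_k\delta_k=b_k+\nu_{f,k}+\nu_{g,k}$, the same uniform-in-$k$ bound $\max_k|b_k|/\sigma_k\leq\gamma\sqrt{nA_fA_g}=o_{\mathcal P}(\log(K)^{-1})$, and for statements 2 and 3 the same reduction to Lemma \ref{lem_LemC1Chern} for the empirical covariance plus deterministic Cauchy--Schwarz/conditional-Markov control of the mixed terms (the paper's Lemma \ref{lem_Lem27SP}). The gap is in the stochastic step of statement 1. A conditional Bernstein inequality with a union bound over $k$ yields, after truncating at $\max_i|\epsilon_i|\lesssim\tau_{g,n}$ and using $\max_i|\Delta g_i|\leq\sqrt{nA_g}$, a threshold of order $\sqrt{B_g\log K}+\tau_{g,n}\sqrt{A_g}\,\log K$: the exponential part of the Bernstein tail forces the second term to carry a full factor $\log K$. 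Under (\ref{eq_MultFixedWOneDimCondC}) one only has $\tau_{g,n}\sqrt{A_g}=o_{\mathcal P}(\log(K)^{-1})$, so your bound gives $\max_k|\nu_{g,k}|=o_{\mathcal P,\mathcal W}(1)$ but \emph{not} $o_{\mathcal P,\mathcal W}(a_n^{1/4})=o_{\mathcal P,\mathcal W}(\log(K)^{-1/2})$, which is what the proof of Theorem \ref{thm_MultFixedW1DMG} actually consumes (the anti-concentration step via Lemma \ref{lem_Lem21Chern} requires $u_\delta\sqrt{\log K}\to0$). Closing the argument along your route would require strengthening (\ref{eq_MultFixedWOneDimCondC})--(\ref{eq_MultFixedWOneDimCondD}) to $A_g=o_{\mathcal P}(\tau_{g,n}^{-2}\log(K)^{-3})$ and similarly for $A_f$.

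The paper avoids this loss by invoking the maximal inequality of Lemma \ref{lem_Lem29SP}, an $L^1$/Nemirovski-type bound with built-in truncation. It exploits that the randomness $\epsilon_i\Delta g_i$ is \emph{shared} across all $k$ and only the uniformly bounded weights $w_k$ vary, yielding (in the sense of the $\delta\wedge(\cdot)$ expectation plus bounded convergence) $\max_k|\nu_{g,k}|/\sigma_k\lesssim\tau_{g,n}\sqrt{\log K}\sqrt{A_g}$, i.e.\ only $\sqrt{\log K}$ on the truncation level, which matches the assumed rate exactly. A secondary, fixable issue in your sketch: using the random quantity $\max_i|\epsilon_i|\max_i|\Delta g_i|/\sqrt n$ as the almost-sure summand bound in Bernstein is not legitimate as stated; one must truncate the $\epsilon_i$, re-center, and control the truncation bias and the exceptional event --- Lemma \ref{lem_Lem29SP} absorbs precisely this into its $\Prob(\|W\|_\infty>\lambda)$ term.
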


\begin{proof}
Define for $i=1,\ldots n$,
\begin{align*}
\Delta f_i&=f(z_i)-\hat f(z_i),\\
\Delta g_i&=g(z_i)-\hat g(z_i).
\end{align*}
We start with 1.
As in the proof of Theorem \ref{thm_WGCM1DGenRes}, we can write $\sigma_k\delta_k=b_k+\nu_{g,k}+\nu_{f,k}$ with
\begin{align*}
b_k&=\frac{1}{\sqrt n}\sum_{i=1}^n w_k(z_i)\Delta f_i\Delta g_i,\\
\nu_{f,k}&=\frac{1}{\sqrt n}\sum_{i=1}^n w_k(z_i)\xi_i\Delta f_i,\\
\nu_{g,k}&=\frac{1}{\sqrt n}\sum_{i=1}^n w_k(z_i)\epsilon_i\Delta g_i.
\end{align*}
Remember that $\sigma_k^2\geq c_1$, $|w_k|\leq C_1$ and $\gamma=C_1/\sqrt{c_1}$.
Using Cauchy-Schwarz, we have $\max_k |b_k|/\sigma_k\leq \gamma\sqrt{n} A_f^{1/2} A_g^{1/2}$. By condition (\ref{eq_MultFixedWOneDimCondA}),
$$\sqrt n A_{f}^{1/2} A_{g}^{1/2}=o_\mathcal{P}(\log(K)^{-2})^{1/2}=o_\mathcal P(a_n^{1/2})= o_\mathcal P(a_n^{1/4}).$$

To control $\max_k|\nu_{g,k}|/\sigma_k$, we use Lemma \ref{lem_Lem29SP}, see below.
For $\delta >0$
\begin{align*}
\Prob\left(\max_{k}\frac{|\nu_{g,k}|}{\sigma_k a_n^{1/4}}> \delta\right)&=\Prob\left(\max_k \left|\frac{1}{\sqrt n} \sum_{i=1}^n \frac{w_k(z_i)\epsilon_i\Delta g_i}{\sigma_k a_n^{1/4}}\right|>\delta\right)\\
&\lesssim \frac{1}{\delta}\Ex\left[\delta\land \tau \sqrt{\log(K)}\left(\max_k\frac{1}{n a_n^{1/2}\sigma_k^2}\sum_{i=1}^n w_k(z_i)^2 \Delta g_i^2\right)^{1/2}\right]\\
&+\Prob(\max_i|\epsilon_i|\geq \tau)
\end{align*}
for all $\tau\geq 0$.
By condition (\ref{eq_MultFixedWOneDimCondC}), $\max_{i}|\epsilon_{i}|=O_\mathcal P(\tau_{g,n})$, so for any $\delta' >0$, there exists $D>0$ such that $\sup_{P\in\mathcal P}\Prob_P(\max_{i}|\epsilon_{i}|>D\tau_{g,n})<\delta'$ for all $n$. We have that
\begin{align*}
D\tau_{g,n} \sqrt{\log(K)}\left(\max_k\frac{1}{n a_n^{1/2}\sigma_k^2}\sum_{i=1}^n w_k(z_i)^2 \Delta g_i^2\right)^{1/2}&\leq D\tau_{g,n} \sqrt{\log(K)}\left(\frac{\gamma^2}{a_n^{1/2}}A_{g}\right)^{1/2}\\
&= D\gamma \left(\tau_{g,n}^2 \log(K)\frac{1}{\log(K)^{-1}}  A_{g}\right)^{1/2}\\
&= o_{\mathcal P,\mathcal W}(1)
\end{align*}
by condition (\ref{eq_MultFixedWOneDimCondC}).
Using bounded convergence (Lemma \ref{lem_BoundConv}), we get that 
$$\limsup_{n\to\infty}\sup_{P\in \mathcal P}\sup_{\mathbf w\in\mathcal W_P^K}\Prob_P\left(\max_{k}\frac{|\nu_{g,k}|}{\sigma_k a_n^{1/4}}\geq \delta\right) \lesssim \delta'.$$
Since $\delta,\delta'>0$ are arbitrary, we get $\max_{k}{|\nu_{g,k}|}/{\sigma_k }=o_{\mathcal P,\mathcal W}(a_n^{1/4})$. Similarly, we also have $\max_{k}{|\nu_{f,k}|}/{\sigma_k }=o_{\mathcal P,\mathcal W}(a_n^{1/4})$. This concludes the proof of 1.

For 2., note that by definition $1+\Delta_k=(\|\mathbf R_k\|_2^2/n-\bar{\mathbf R}_k^2)^{1/2}/\sigma_k$ and thus,
\begin{align*}
\max_k\left|\left(1+\Delta_k\right)^2-1\right|&\leq \max_k\left|\frac{\|\mathbf R_k\|_2^2}{n\sigma_k^2}-1\right|+\max_k \frac{\bar{\mathbf R}_k^2}{\sigma_k^2}\\
&\leq \max_k\left|\frac{\|\mathbf R_k\|_2^2}{n\sigma_k^2}-\tilde\Sigma_{kk}\right|+\max_k\left|\tilde\Sigma_{kk}-1\right|+\max_k \frac{\bar{\mathbf R}_k^2}{\sigma_k^2},
\end{align*}
with 
$$\tilde \Sigma_{kl}=\frac{1}{n\sigma_k\sigma_l}\sum_{i=1}^n \epsilon_i^2\xi_i^2 w_k(z_i) w_l(z_i)$$
for $k,l=1,\ldots, K$.
The first term is $o_{\mathcal P,\mathcal W}(a_n)$ by Lemma \ref{lem_Lem27SP} below. The second term is $o_{\mathcal P,\mathcal W}(a_n)$ using Lemma \ref{lem_LemC1Chern}.  For the third term, write
$$\max_k\frac{|\bar{\mathbf R}_k|}{\sigma_k}\leq \max_k\frac{|\delta_k|}{\sqrt n}+\max_k\left|\frac{1}{n\sigma_k}\sum_{i=1}^n \epsilon_i\xi_i w_k(z_i)\right|.$$
Observe that by condition (A2), we have $\log(K)^3/n\leq \log(Kn)^7/n\lesssim n^{-c}=o(1)$. Thus, $\log(K)^3=o(n)$. Using this and 1., we have 
$$\max_k \frac{|\delta_k|}{\sqrt n a_n}=\max_k\frac{|\delta_k|}{ a_n^{1/4}}\frac{1}{\sqrt n a_n^{3/4}}=o_{\mathcal P,\mathcal W}(1)\frac{\log(K)^{3/2}}{n^{1/2}}=o_{\mathcal P,\mathcal W}(1) o(1).$$
Together with Lemma \ref{lem_LemC1Chern}, we obtain
\begin{equation}\label{eq_OrderAvRk}
\max_k\frac{|\bar{\mathbf R}_k|}{\sigma_k}=o_{\mathcal P,\mathcal W}(a_n) = o_{\mathcal P,\mathcal W}(a_n^{1/2}).
\end{equation}
In total, we get
$$\max_k\left|\left(1+\Delta_k\right)^2-1\right|=o_{\mathcal P,\mathcal W}(a_n).$$
Using Lemma \ref{lem_Lem28SP} below with $f(x)=\sqrt{x+1}-1$, we obtain
$\max_k |\Delta_k|=o_{\mathcal P,\mathcal W} (a_n)$. This proves 2.

For 3., Lemma \ref{lem_LemC1Chern} implies $\|\Sigma-\tilde\Sigma\|_\infty=o_{\mathcal P,\mathcal W}(a_n)$, so it is enough to show $\|\hat\Sigma-\tilde\Sigma\|_\infty=o_{\mathcal P,\mathcal W}(a_n)$. Observe that
\begin{equation}\label{eq_BoundPartTildeSigma}
\max_{k,l}\left|\frac{\mathbf R_k^T \mathbf R_l}{n\sigma_k\sigma_l}-\frac{\bar{\mathbf R}_k\bar{\mathbf R}_l}{\sigma_k\sigma_l}-\tilde \Sigma_{kl}\right|=o_\mathcal P(a_n),
\end{equation}
using Lemma \ref{lem_Lem27SP} and (\ref{eq_OrderAvRk}).
By 2. and Lemma \ref{lem_Lem28SP}, it follows that
$$\max_k\left|\left(1+\Delta_k\right)^{-1}-1\right|=o_\mathcal P(a_n).$$
This implies that also
\begin{align}
&\max_{k,l}\left|\left(1+\Delta_k\right)^{-1}\left(1+\Delta_l\right)^{-1}-1\right|\nonumber\\ 
&\leq \max_{k,l}|1+\Delta_l|^{-1}\left|(1+\Delta_k)^{-1}-1\right|+\max_{l}\left|(1+\Delta_l)^{-1}-1\right|\nonumber\\ 
&=\left(1+o_{\mathcal P,\mathcal W}(a_n)\right)o_{\mathcal P,\mathcal W} (a_n)+o_{\mathcal P,\mathcal W}(a_n)=o_{\mathcal P,\mathcal W}(a_n). \label{eq_ProdDeltas}
\end{align}
Putting things together, we have
\begin{align*}
&\max_{k,l}\left|\hat \Sigma_{kl}-\tilde\Sigma_{kl}\right|\\
&=\max_{k,l}\left|\left(\frac{\mathbf R_k^T \mathbf R_l}{n\sigma_k\sigma_l}-\frac{\bar{\mathbf R}_k\bar{\mathbf R}_l}{\sigma_k\sigma_l}\right)\left(1+{\Delta_k}\right)^{-1}\left(1+{\Delta_l}\right)^{-1}-\tilde\Sigma_{kl}\right|\\
&\leq \max_{k,l}\left|\frac{\mathbf R_k^T \mathbf R_l}{n\sigma_k\sigma_l}-\frac{\bar{\mathbf R}_k\bar{\mathbf R}_l}{\sigma_k\sigma_l}\right|\left|\left(1+{\Delta_k}\right)^{-1}\left(1+{\Delta_l}\right)^{-1}-1\right|+\max_{k,l}\left|\frac{\mathbf R_k^T \mathbf R_l}{n\sigma_k\sigma_l}-\frac{\bar{\mathbf R}_k\bar{\mathbf R}_l}{\sigma_k\sigma_l}-\tilde \Sigma_{kl}\right|
\end{align*}

Using (\ref{eq_BoundPartTildeSigma}), Lemma \ref{lem_LemC1Chern} and (\ref{eq_ProdDeltas}), the first term is $O_ {\mathcal P,\mathcal W}(1)o_{\mathcal P,\mathcal W}(a_n)=o_{\mathcal P,\mathcal W}(a_n)$ and the second term is also $o_{\mathcal P,\mathcal W}(a_n)$ by (\ref{eq_BoundPartTildeSigma}). This proves 3.
\end{proof}

It remains to prove the following Lemma corresponding to Lemma 27 in \citet{ShahPetersCondInd}.

\begin{lemma}\label{lem_Lem27SP}
For $k,l=1,\ldots, K$, let
$$\tilde \Sigma_{kl}=\frac{1}{n\sigma_k\sigma_l}\sum_{i=1}^n \epsilon_i^2\xi_i^2 w_k(z_i) w_l(z_i).$$
Then,
$$\max_{k,l=1,\ldots, K}\left|\frac{\mathbf R_k^T \mathbf R_l}{n\sigma_k\sigma_l}-\tilde \Sigma_{kl}\right|=o_{\mathcal P,\mathcal W} (a_n).$$
\end{lemma}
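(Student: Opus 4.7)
The plan is to adapt the expansion from the proof of Theorem~\ref{thm_WGCM1DGenRes} to the product of two distinct residual vectors. Writing $\Delta f_i = f(z_i) - \hat f(z_i)$ and $\Delta g_i = g(z_i) - \hat g(z_i)$, the same majorization used in (\ref{eq_DecompDiffDen}) immediately yields
$$\bigl|R_{ki}R_{li} - \epsilon_i^2\xi_i^2 w_k(z_i) w_l(z_i)\bigr| \le C_1^2\,(I_i' + II_i' + III_i'),$$
where $I_i', II_i', III_i'$ are the same three terms as in the proof of Theorem~\ref{thm_WGCM1DGenRes} but with the factor $w(z_i)^2$ dropped (we used $|w_k(z_i) w_l(z_i)| \le C_1^2$). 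Dividing by $\sigma_k \sigma_l \ge c_1$, the quantity of interest is bounded by $(C_1^2/c_1)\cdot n^{-1}\sum_i(I_i' + II_i' + III_i')$, which is independent of $k,l$ and of the weight functions. Thus it suffices to show $n^{-1}\sum_i(I_i'+II_i'+III_i') = o_{\mathcal P}(a_n)$, with $a_n = \log(K)^{-2}$.

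After expansion, the three terms contain the basic blocks $\Delta f_i^2\Delta g_i^2$, $\epsilon_i^2\Delta g_i^2$, $\xi_i^2\Delta f_i^2$, and the mixed products $|\epsilon_i\xi_i\Delta f_i\Delta g_i|$, $\epsilon_i^2|\xi_i\Delta g_i|$, $\xi_i^2|\epsilon_i\Delta f_i|$. For $\Delta f_i^2\Delta g_i^2$ I use the crude bound $n^{-1}\sum_i\Delta f_i^2\Delta g_i^2 \le n A_f A_g = o_{\mathcal P}(\log(K)^{-2})$, which holds by (\ref{eq_MultFixedWOneDimCondA}). For the conditional-variance pieces, the truncation-and-iterated-expectation argument already used in the proof of Theorem~\ref{thm_WGCM1DGenRes}, combined with Lemma~\ref{lem_BoundConv} and the strengthened hypothesis $B_f, B_g = o_{\mathcal P}(\log(K)^{-4}) = o_{\mathcal P}(a_n^2)$ in (\ref{eq_MultFixedWOneDimCondB}), gives $n^{-1}\sum_i\epsilon_i^2\Delta g_i^2 = o_{\mathcal P}(a_n^2)$ and $n^{-1}\sum_i\xi_i^2\Delta f_i^2 = o_{\mathcal P}(a_n^2)$. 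The mixed terms with one squared error are handled by Cauchy-Schwarz against $n^{-1}\sum_i\epsilon_i^2\xi_i^2 = O_{\mathcal P}(1)$ (Lemma~\ref{lem_UnWLLN}), yielding $O_{\mathcal P}(1)\cdot o_{\mathcal P}(a_n) = o_{\mathcal P}(a_n)$. For the remaining cross term I will \emph{regroup} the product as $(\epsilon_i\Delta g_i)(\xi_i\Delta f_i)$ and apply Cauchy-Schwarz to obtain
$$n^{-1}\sum_i|\epsilon_i\xi_i\Delta f_i\Delta g_i| \le \sqrt{n^{-1}\sum_i\epsilon_i^2\Delta g_i^2}\,\sqrt{n^{-1}\sum_i\xi_i^2\Delta f_i^2} = o_{\mathcal P}(a_n^2).$$

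Uniformity over $\mathcal W_P^K$ is automatic because all the upper bounds involve the weights only through the constants $C_1, c_1$; uniformity over $P\in\mathcal P$ is provided by the uniform weak law (Lemma~\ref{lem_UnWLLN}) and the dominated convergence lemma (Lemma~\ref{lem_BoundConv}). The main obstacle—and the reason the hypotheses on $A_fA_g$ and on $B_f, B_g$ are strengthened by factors of $\log(K)^{-2}$ and $\log(K)^{-4}$, respectively, relative to Theorem~\ref{thm_WGCM1DGenRes}—is precisely the last cross term: the naive grouping $(\epsilon_i\xi_i)(\Delta f_i\Delta g_i)$ used in the proof of Theorem~\ref{thm_WGCM1DGenRes} would give only $O_{\mathcal P}(1)\cdot\sqrt{nA_fA_g} = o_{\mathcal P}(\log(K)^{-1})$, which is short of the required rate by a factor of $\log(K)$. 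The alternative pairing $(\epsilon_i\Delta g_i)(\xi_i\Delta f_i)$ couples the two stronger $B$-bounds inside the square root and so recovers the needed $o_{\mathcal P}(a_n)$ rate.
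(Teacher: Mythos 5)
Your proposal follows essentially the same route as the paper's proof: the same expansion of $R_{ki}R_{li}-\epsilon_i^2\xi_i^2 w_k(z_i)w_l(z_i)$ with the weights absorbed into the constant $C_1^2/c_1$, the same term-by-term control via $nA_fA_g=o_{\mathcal P}(a_n)$, the strengthened bounds $B_f,B_g=o_{\mathcal P}(a_n^2)$ with truncation and bounded convergence, Cauchy--Schwarz for the singly-mixed terms, and, crucially, the same regrouping of the cross term as $(\epsilon_i\Delta g_i)(\xi_i\Delta f_i)$ that the paper uses to recover the $o_{\mathcal P}(a_n)$ rate. The only small discrepancy is that you cite Lemma \ref{lem_UnWLLN} for $n^{-1}\sum_i\epsilon_i^2\xi_i^2=O_{\mathcal P}(1)$, whereas under conditions (A1a)/(A1b) with a possibly growing $B$ the uniform moment hypothesis of that lemma is not directly available, so the paper obtains this bound from Lemma \ref{lem_LemC1Chern} instead.
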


\begin{proof}
With $\mathbf R_k=(R_{k1},\ldots, R_{kn})^T$, write for $k,l=1,\ldots, K$ and $i=1,\ldots, n$
\begin{align*}
&\frac{1}{\sigma_k\sigma_l}\left|R_{ki}R_{li}-\epsilon_i^2\xi_i^2 w_k(z_i) w_l(z_i)\right|\\
&=\frac{1}{\sigma_k\sigma_l}\left|(\Delta f_i+\epsilon_i)^2(\Delta g_i+\xi_i)^2w_k(z_i)w_l(z_i)-\epsilon_i^2\xi_i^2 w_k(z_i) w_l(z_i)\right|\\
&\leq \gamma^2\left(\Delta f_i^2\Delta g_i^2\right.\\
&+ 2|\Delta f_i^2\Delta g_i \xi_i|+2|\Delta f_i \epsilon_i \Delta g_i^2|\\
&+\Delta f_i^2\xi_i^2+\epsilon_i^2\Delta g_i^2\\
&+ 2|\epsilon_i^2 \Delta g_i \xi_i|+2|\Delta f_i \epsilon_i\xi_i^2|\\
&+\left. 4|\Delta f_i \epsilon_i\Delta g_i \xi_i|\vphantom{\Delta f_i^2\Delta g_i^2}\right).
\end{align*}
We show that the sum over each of the eight terms is $o_\mathcal P(a_n)$ individually. Since the terms on the right hand side of the inequality do not depend on $\mathbf w\in\mathcal W_P^K$ anymore, this implies that the sum over the left hand side is $o_{\mathcal P,\mathcal W}(a_n)$. Note that by symmetry it is enough to control only one term in each line.

To start, observe
$$\frac{1}{n}\sum_{i=1}^n \Delta f_i^2 \Delta g_i^2\leq n A_f A_g=o_\mathcal P(a_n).$$
For the second term, $2|\Delta f_i^2 \Delta g_i \xi_i|\leq \Delta f_i^2\Delta g_i^2+\Delta f_i^2\xi_i^2$. Observe that similarly to the proof of Theorem \ref{thm_WGCM1DGenRes}
\begin{align*}
\Prob\left(\frac{1}{n a_n^2}\sum_{i=1}^n \Delta f_i^2 \xi_i^2 \geq \delta\right)&\leq \frac{1}{\delta}\Ex\left[\Ex\left[\frac{1}{n a_n^2}\sum_{i=1}^n \Delta f_i^2 \xi_i^2|\mathbf X,\mathbf Z\right]\land \delta\right]\\
&\leq \frac{1}{\delta}\Ex\left[\frac{1}{n a_n^2}\sum_{i=1}^n \Delta f_i^2 v(z_i)\land \delta\right]\\
&\leq \frac{1}{\delta}\Ex\left[\frac{1}{a_n^2} B_f\land \delta\right] \to 0,
\end{align*}
using condition (\ref{eq_MultFixedWOneDimCondB}) and bounded convergence (Lemma \ref{lem_BoundConv}), so
\begin{equation}\label{eq_TermOrderAn2}
\frac{1}{n}\sum_{i=1}^n \Delta f_i^2 \xi_i^2=o_\mathcal P\left(a_n^2\right)=o_\mathcal P\left(a_n\right).
\end{equation}
This can also be used for the third line. For the fourth line, we use Cauchy-Schwarz to write
$$\frac{1}{n}\sum_{i=1}^n\Delta f_i \epsilon_i\xi_i^2\leq \left(\frac{1}{n}\sum_{i=1}^n\Delta f_i^2\xi_i^2 \right)^{1/2}\left(\frac{1}{n}\sum_{i=1}^n\epsilon_i^2\xi_i^2 \right)^{1/2}.$$
The first factor is $o_\mathcal P(a_n)$ by (\ref{eq_TermOrderAn2}) and the second factor is $O_\mathcal P(1)$ by Lemma \ref{lem_LemC1Chern}, so the product is $o_\mathcal P(a_n)$.
Finally,
$$\frac{1}{n}\sum_{i=1}^n |\Delta f_i\epsilon_i\Delta g_i\xi_i|\leq \left(\frac{1}{n}\sum_{i=1}^n\Delta f_i^2\xi_i^2 \right)^{1/2}\left(\frac{1}{n}\sum_{i=1}^n\Delta g_i^2\epsilon_i^2 \right)^{1/2}=o_\mathcal P(a_n)o_\mathcal P(a_n)=o_\mathcal P(a_n).$$
This completes the proof of Lemma \ref{lem_Lem27SP} and thus also the proof of Theorem \ref{thm_MultFixedWOneDim}.
\end{proof}

\subsection{Some Additional Lemmas}
The next two lemmas are also taken from \citet{ShahPetersCondInd}, where they appear as Lemma 28 and Lemma 29.

\begin{lemma}\label{lem_Lem28SP}
Let $\mathcal P$ be a collection of distributions and for all $n\in\mathbb N$, let $W\upind n$ be a random vector taking values in $\mathbb R^{p_n}$ and let $(a_n)_{n\in\mathbb N}$ be a bounded sequence of positive numbers. Assume that $\max_{j=1,\ldots, p_n} |W_j\upind n| = o_\mathcal P(a_n)$. Let $D\subset \mathbb R$ such that $0$ is in the interior of $D$ and let $f:D\to \mathbb R$ be continuously differentiable at $0$ with $f(0)=c$. Then,
$$\max_{j=1,\ldots, p_n}\left|f\left(W_j\upind n\right)-c\right|=o_\mathcal P(a_n).$$
\end{lemma}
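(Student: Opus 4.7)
The plan is to use a local linearization of $f$ at $0$, together with the fact that the boundedness of $(a_n)$ turns the hypothesis $\max_j |W_j\upind n| = o_\mathcal P(a_n)$ into $\max_j |W_j\upind n| = o_\mathcal P(1)$, so that eventually all $W_j\upind n$ land in a neighbourhood of $0$ on which $f$ is well-controlled by its derivative.

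More concretely, I would fix $\eta>0$ and $\varepsilon>0$ and use continuous differentiability of $f$ at $0$ to produce a $\delta>0$ such that $|x|<\delta$ (with $x\in D$) implies
$$|f(x)-c|\leq (|f'(0)|+\varepsilon)\,|x|.$$
Let $M$ be an upper bound for $(a_n)$. On the event $\Omega_n=\{\max_j|W_j\upind n|<\delta\}$, the above bound applies to every coordinate, so
$$\max_j|f(W_j\upind n)-c|\leq (|f'(0)|+\varepsilon)\max_j|W_j\upind n|.$$
Splitting over $\Omega_n$ and $\Omega_n^c$, I would then write
$$\Prob_P\bigl(\max_j|f(W_j\upind n)-c|>\eta a_n\bigr)\leq \Prob_P\!\left(\max_j\frac{|W_j\upind n|}{a_n}>\frac{\eta}{|f'(0)|+\varepsilon}\right)+\Prob_P\!\left(\max_j\frac{|W_j\upind n|}{a_n}\geq \frac{\delta}{M}\right).$$

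Both terms on the right-hand side are of the form $\Prob_P(\max_j|W_j\upind n|/a_n>\text{const})$ with a positive constant independent of $n$ and $P$, and thus each tends to zero uniformly in $P\in\mathcal P$ by the hypothesis $\max_j|W_j\upind n|=o_\mathcal P(a_n)$. Taking $\limsup_{n\to\infty}\sup_{P\in\mathcal P}$ yields $0$ for every $\eta>0$, which is the definition of $\max_j|f(W_j\upind n)-c|=o_\mathcal P(a_n)$.

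There is no serious obstacle here: the only thing to be careful about is the uniformity over $P\in\mathcal P$, which is preserved because the threshold on $|W_j\upind n|/a_n$ produced by the Taylor argument depends only on the deterministic quantities $f'(0)$, $\varepsilon$, $\delta$ and $M$, never on $P$ or $n$. The boundedness of $(a_n)$ is used exactly once, to ensure that $\delta/a_n\geq \delta/M$ so that the event $\Omega_n^c$ can be controlled via the given $o_\mathcal P(a_n)$ statement rather than requiring the stronger $o_\mathcal P(1)$ to be asserted separately.
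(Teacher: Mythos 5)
Your proof is correct, and it is essentially the standard argument: the paper itself does not reproduce a proof of this lemma but cites Lemma 28 of Shah and Peters, whose proof uses the same local Lipschitz bound near $0$ from continuous differentiability combined with the same split over the event $\{\max_j|W_j\upind n|<\delta\}$ and its complement. Your observation that boundedness of $(a_n)$ is what lets the complement event be absorbed into the $o_{\mathcal P}(a_n)$ hypothesis, with all thresholds independent of $P$ and $n$, is exactly the point of the lemma and is handled correctly.
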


\begin{lemma}\label{lem_Lem29SP}
Let $W\in \mathbb R^{n\times p}$, $V\in \mathbb R^{n\times p}$ be random matrices such that $\Ex[W|V]=0$ and the rows of $W$ are independent conditional on $V$. Then, for all $\epsilon > 0$
$$\epsilon \Prob\left(\max_j \left|\frac{1}{\sqrt n}\sum_{i=1}^n W_{ij} V_{ij}\right|>\epsilon\right)\lesssim \Ex\left[\epsilon\wedge \lambda\sqrt{\log p}\left(\max_{j} \frac{1}{n} \sum_{i=1}^n V_{ij}^2\right)^{1/2}\right]+\epsilon \Prob(\|W\|_\infty >\lambda)$$
for any $\lambda \geq 0$.
\end{lemma}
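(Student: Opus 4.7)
The strategy is to combine a Markov-type inequality with a truncation of $W$ at level $\lambda$ and a conditional sub-Gaussian maximal inequality over $j=1,\dots,p$. Write $T_j = n^{-1/2}\sum_i W_{ij}V_{ij}$ and $M = (\max_j n^{-1}\sum_i V_{ij}^2)^{1/2}$. Since $\epsilon\,\mathbf 1\{Y>\epsilon\} \leq \epsilon \wedge Y$ for any $Y\geq 0$, I begin with
$$\epsilon\,\Prob(\max_j |T_j| > \epsilon) \leq \Ex\big[\epsilon \wedge \max_j|T_j|\big].$$
I split on the event $A = \{\|W\|_\infty \leq \lambda\}$ and bound the contribution from $A^c$ trivially by $\epsilon\,\Prob(A^c) = \epsilon\,\Prob(\|W\|_\infty > \lambda)$, producing the second term on the right. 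Using the identity $(\epsilon\wedge X)\mathbf 1_A = \epsilon \wedge (X\mathbf 1_A)$ and replacing $W$ by its truncation $\tilde W_{ij} = W_{ij}\mathbf 1\{|W_{ij}|\leq\lambda\}$ (which agrees with $W$ on $A$), the remaining task reduces to bounding $\Ex[\epsilon \wedge \max_j|\tilde T_j|]$, where $\tilde T_j = n^{-1/2}\sum_i \tilde W_{ij}V_{ij}$.

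Conditionally on $V$, the rows of $\tilde W$ are independent and each entry $\tilde W_{ij}$ is bounded by $\lambda$; after centering, $\tilde W_{ij}^\circ = \tilde W_{ij} - \Ex[\tilde W_{ij}\mid V]$ is independent mean-zero with $|\tilde W_{ij}^\circ|\leq 2\lambda$. Hoeffding's inequality then makes each $\tilde W_{ij}^\circ V_{ij}/\sqrt n$ sub-Gaussian conditional on $V$, and the classical sub-Gaussian maximal inequality over $p$ variables yields
$$\Ex\!\Big[\max_j\Big|n^{-1/2}\!\sum_i \tilde W_{ij}^\circ V_{ij}\Big|\;\Big|\;V\Big] \lesssim \lambda\sqrt{\log p}\,M.$$
Since $x\mapsto \epsilon \wedge x$ is concave and nondecreasing, Jensen's inequality in the form $\Ex[\epsilon \wedge Y \mid V] \leq \epsilon \wedge \Ex[Y\mid V]$ allows me to pull the truncation outside the conditional expectation and obtain the desired bound $\Ex[\epsilon \wedge \lambda\sqrt{\log p}\,M]$, which is the first term on the right-hand side.

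The main obstacle is controlling the centering drift $\Ex[\tilde W_{ij}\mid V] = -\Ex[W_{ij}\mathbf 1\{|W_{ij}|>\lambda\}\mid V]$, which is generally nonzero and cannot be bounded directly without a tail assumption on $W$. The key observation is that the drift's contribution to $\max_j|\tilde T_j|$ is dominated by $\Ex[\max_j|\hat T_j|\mid V]$ with $\hat T_j = n^{-1/2}\sum_i W_{ij}\mathbf 1\{|W_{ij}|>\lambda\}V_{ij}$, and since $\hat T_j\equiv 0$ on $A$, a careful application of the elementary bound $\epsilon \wedge x \leq \epsilon$ combined with the tail-event split shows that this drift term can be absorbed into the $\epsilon\,\Prob(\|W\|_\infty > \lambda)$ contribution, leaving the clean form stated in the lemma.
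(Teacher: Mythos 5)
The paper offers no proof to compare against here: this lemma is imported verbatim from Shah and Peters (their Lemma~29), so your proposal has to stand on its own. Its skeleton --- the Markov-type bound $\epsilon\,\Prob(Y>\epsilon)\leq\Ex[\epsilon\wedge Y]$, the split on $A=\{\|W\|_\infty\leq\lambda\}$, per-entry truncation, a conditional Hoeffding/sub-Gaussian maximal inequality, and Jensen for the concave map $x\mapsto\epsilon\wedge x$ --- is the natural route. But the final paragraph, which is exactly where the difficulty of the lemma lives, does not go through. You claim the centering drift
$$D_j=-\frac{1}{\sqrt n}\sum_{i=1}^n\Ex\left[W_{ij}\mathbbm 1\{|W_{ij}|>\lambda\}\,\middle|\,V\right]V_{ij}$$
can be absorbed into $\epsilon\,\Prob(\|W\|_\infty>\lambda)$ because $\hat T_j$ vanishes on $A$. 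The step this requires is $\epsilon\wedge\Ex\left[\max_j|\hat T_j|\mathbbm 1_{A^c}\,\middle|\,V\right]\leq\epsilon\,\Prob(A^c\mid V)$, and that inequality is false: since $D_j$ is itself a conditional expectation, the minimum with $\epsilon$ necessarily sits \emph{outside} the conditional expectation, Jensen runs in the wrong direction, and the left-hand side can equal $\epsilon$ while $\Prob(A^c\mid V)$ is arbitrarily small.

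This is not a repairable technicality within your scheme. Take $p$ fixed, $V_{ij}\equiv 1$, $W_{ij}=W_i$ with $W_1,\ldots,W_n$ i.i.d., $\Prob(W_i=-a)=1-n^{-2}$ and $\Prob(W_i=a(n^2-1))=n^{-2}$ (so $\Ex[W_i]=0$), and set $\lambda=2a$, $\epsilon=\sqrt n\,a/2$. Then $\Prob(\|W\|_\infty>\lambda)\leq n^{-1}$, the centered truncated sum $\tilde T_j^{\circ}$ is negligible, yet $\tilde T_j=T_j=-\sqrt n\,a$ with probability at least $1-n^{-1}$ --- entirely because of the drift $D_j=-\sqrt n\,a(1-n^{-2})$. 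Neither $\lambda\sqrt{\log p}$ nor $\epsilon\,\Prob(\|W\|_\infty>\lambda)\leq a/(2\sqrt n)$ is of that order, so the drift genuinely cannot be handled by ``$\epsilon\wedge x\leq\epsilon$ plus the tail split.'' Controlling $\max_j|D_j|$ is the real content of the lemma, and your sketch leaves it open; you should check how the original argument in Shah and Peters handles this recentering (and under what conditions), rather than asserting it can be folded into the tail term.
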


\section{Proofs of Appendix \ref{sec_MultWGCM}}\label{App_ProofMultWGCM}
In this section, we give the proofs of the results on the multivariate WGCM.
\subsection{Proof of Theorem \ref{thm_MultFixedWMultDim}}
As for Theorem \ref{thm_MultFixedWOneDim}, we prove a slightly more general result corresponding to Theorem \ref{thm_MultFixedW1DMG}. For a collection of weight functions $w_{jlk}:\mathbb R^{d_Z}\to\mathbb R$, write
$$\mathbf w=\left(w_{jlk}\right)_{j=1,\ldots, d_X, l=1,\ldots, d_Y, k=1,\ldots, K(j,l)}.$$
For $P\in\mathcal P$ and $C,c>0$, define
$$\mathcal W_{P,C,c}=\left\{\mathbf w = \left(w_{jlk}\right)_{j,l,k}\Bigr\rvert\,\forall j,l,k: |w_{jlk}|\leq C \land \Ex_P\left[\epsilon_j^2\xi_l^2w_{jlk}(Z)^2\right]\geq c\Ex_P\left[\epsilon_j^2\xi_l^2\right]\right\}.$$
For $\mathbf w\in \mathcal W_{P,C,c}$, let $S_{n,\mathbf w}$ and $\hat G_{n,\mathbf w}$ be the versions of $S_n$ and $\hat G_n$ based on $\mathbf w=\left(w_{jlk}\right)_{j,l,k}$.

\begin{theorem}\label{thm_MultFixedWMultDMG}
Let $\mathcal P\subset \mathcal P_0$ and let $A_{f,j}$ and $A_{g,l}$ be defined as in (\ref{eq_DefAfj}) and (\ref{eq_DefAgl}). Assume that there exist $C,c> 0$ such that for all $n\in\mathbb N$ and $P\in\mathcal P$ there exists $D_n\geq 1$ such that either (C1a) and (C2) or (C1b) and (C2) hold. Let $C_1,c_1>0$ such that for all $P\in\mathcal P$ the set $\mathcal W_{P,C_1,c_1}$ is not empty. Assume that 
\begin{align}
\max_{j,l}\frac{1}{\sigma_{jl}^2}A_{f, j} A_{g,l}=o_\mathcal P\left(n^{-1}\log(\mathbf K)^{-4}\right). \label{eq_MultFixedWMultDimCondA}
\end{align}
Assume that there exist sequences $\left(\tau_{f,n}\right)_{n\in\mathbb N}$ and $\left(\tau_{g,n}\right)_{n\in\mathbb N}$ as well as positive real numbers $s_{g,jl}$, $t_{g,jl}$, $s_{f,jl}$ and  $t_{f,jl}$ possibly depending on $P\in\mathcal P$ such that for all $j=1,\ldots, d_X$, $l=1,\ldots, d_Y$
$$s_{f,jl}t_{f,jl}=\sigma_{jl},\quad s_{g,jl}t_{g,jl}=\sigma_{jl},$$
and such that
\begin{align}
\max_{i,j,l}|\epsilon_{P,ij}|/t_{g,jl}=O_\mathcal P(\tau_{g,n}),\quad \max_{j,l}A_{g, l}/s_{g,jl}^2=o_\mathcal P\left(\tau_{g,n}^{-2}\log(\mathbf K)^{-4}\right) \label{eq_MultFixedWMultDimCondC}\\
\max_{i,j,l}|\xi_{P,il}|/t_{f,jl}=O_\mathcal P(\tau_{f,n}),\quad \max_{j,l}A_{f,j}/s_{f,jl}^2=o_\mathcal P\left(\tau_{f,n}^{-2}\log(\mathbf K)^{-4}\right). \label{eq_MultFixedWMultDimCondD}
\end{align}
Then,
$$\sup_{P\in\mathcal P}\sup_{\mathbf w\in \mathcal W_{P,C_1,c_1}}\sup_{\alpha\in (0,1)}|\Prob_P(S_{n,\mathbf w}\leq \hat G_{n,\mathbf w}(\alpha))-\alpha|\to 0.$$
\end{theorem}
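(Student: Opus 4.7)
The plan is to mimic the proof of Theorem \ref{thm_MultFixedW1DMG} essentially verbatim, with the index $k \in \{1,\ldots,K\}$ replaced by the triple $(j,l,k)$ ranging over a set of size $\mathbf{K}$, and with the single scale $\sigma_w$ replaced by the two-index scale $\sigma_{P,jl} = \sqrt{\Ex_P[\epsilon_j^2\xi_l^2]}$. First, fix $C_1,c_1 > 0$ and $\mathbf{w} \in \mathcal W_{P,C_1,c_1}$. Define
$$\tilde T_{jlk} = \frac{1}{\sqrt n}\sum_{i=1}^n \frac{\epsilon_{P,ij}\xi_{P,il}w_{jlk}(z_i)}{\sigma_{jl}},\quad \tilde S_n = \max_{jlk} |\tilde T_{jlk}|,$$
and write $\tau_{N,jlk}/\sigma_{jl} = \tilde T_{jlk} + \delta_{jlk}$, $\tau_{D,jlk}/\sigma_{jl} = 1+\Delta_{jlk}$. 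Let $\Sigma \in \mathbb R^{\mathbf K \times \mathbf K}$ be the normalised population covariance with entries $\Ex_P[\epsilon_j\xi_l\epsilon_{j'}\xi_{l'} w_{jlk}w_{j'l'k'}]/(\sigma_{jl}\sigma_{j'l'})$ (all diagonal entries equal $1$ by the $\mathcal W$-constraint combined with $X\indep Y|Z$), and let $V_n=\max_{jlk}|W_{jlk}|$ for $W\sim \mathcal N_{\mathbf K}(0,\Sigma)$. Since $|w_{jlk}/\sigma_{jl}|\leq C_1/\sqrt{c_1}$ and $\Ex_P[(\epsilon_j\xi_l/\sigma_{jl})^q] \leq \Ex_P[(\epsilon_j\xi_l/\sigma_{jl})^q \cdot c_1^{-1}\cdot w_{jlk}^2/C_1^{-2}\cdot\ldots]$ via (C1a)/(C1b) and (C2) applied to $\epsilon_j\xi_l/\sigma_{jl}$, conditions (B1a)/(B1b) and (B2) of Appendix \ref{sec_ResChern} hold for the vectors $(\epsilon_{ij}\xi_{il}w_{jlk}(z_i)/\sigma_{jl})_{jlk}$ with parameter $B_n \lesssim D_n$.

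With these preliminaries in place, the chain of reductions from the proof of Theorem \ref{thm_MultFixedW1DMG} (based on Lemmas \ref{lem_Cor21Chern}, \ref{lem_LemC1Chern}, \ref{lem_Lem21Chern}, and \ref{lem_Lem31Chern}) reduces the statement to establishing, for $a_n = \log(\mathbf K)^{-2}$, the three bounds
$$\max_{jlk}|\delta_{jlk}| = o_{\mathcal P,\mathcal W}(a_n^{1/4}), \quad \max_{jlk}|\Delta_{jlk}| = o_{\mathcal P,\mathcal W}(a_n), \quad \|\Sigma-\hat\Sigma\|_\infty = o_{\mathcal P,\mathcal W}(a_n).$$
For the first bound, I would decompose $\sigma_{jl}\delta_{jlk} = b_{jlk} + \nu_{f,jlk} + \nu_{g,jlk}$ exactly as in Lemma \ref{lem_Lem26SP}. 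The bias term is handled by Cauchy--Schwarz: $|b_{jlk}|/\sigma_{jl} \leq (C_1/\sigma_{jl})\sqrt{n A_{f,j}A_{g,l}}$, which is $o_{\mathcal P,\mathcal W}(a_n^{1/2})$ by (\ref{eq_MultFixedWMultDimCondA}) (note the $\log(\mathbf K)^{-4}$ rate there is why we can afford $a_n^{1/2}$ here rather than $a_n^{1/4}$). For the cross terms $\nu_{g,jlk}$, I apply Lemma \ref{lem_Lem29SP} with the key rescaling $\epsilon_{ij}\Delta g_{il}/\sigma_{jl} = (\epsilon_{ij}/t_{g,jl})\cdot (\Delta g_{il}/s_{g,jl})$: the $\|W\|_\infty$ event is handled by (\ref{eq_MultFixedWMultDimCondC}) with rate $\tau_{g,n}$, and the Cauchy--Schwarz-type inner sum becomes $\tau_{g,n}\sqrt{\log(\mathbf K)}(C_1^2 \max_l A_{g,l}/s_{g,jl}^2)^{1/2}$, which is $o_{\mathcal P,\mathcal W}(1)$ again by (\ref{eq_MultFixedWMultDimCondC}). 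The $\nu_f$-term is analogous.

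For the second bound on $\max_{jlk}|\Delta_{jlk}|$, I proceed as in Lemma \ref{lem_Lem26SP}: write $(1+\Delta_{jlk})^2 - 1 = (\|\mathbf R_{jlk}\|_2^2/(n\sigma_{jl}^2) - \tilde\Sigma_{jlk,jlk}) + (\tilde\Sigma_{jlk,jlk}-1) - \bar{\mathbf R}_{jlk}^2/\sigma_{jl}^2$ where $\tilde \Sigma_{jlk,j'l'k'} = \frac{1}{n\sigma_{jl}\sigma_{j'l'}}\sum_i \epsilon_{ij}\xi_{il}\epsilon_{ij'}\xi_{il'}w_{jlk}(z_i)w_{j'l'k'}(z_i)$. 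The second summand is $o_{\mathcal P,\mathcal W}(a_n)$ by Lemma \ref{lem_LemC1Chern}, the third by combining (\ref{eq_OrderAvRk})-type reasoning with the $\delta$-bound. The heart of the matter is the first summand, which is handled by the multivariate analog of Lemma \ref{lem_Lem27SP}: the key identity
$$\frac{R_{jlk,i}R_{j'l'k',i}}{\sigma_{jl}\sigma_{j'l'}} - \frac{\epsilon_{ij}\xi_{il}\epsilon_{ij'}\xi_{il'}w_{jlk}(z_i)w_{j'l'k'}(z_i)}{\sigma_{jl}\sigma_{j'l'}}$$
expands into cross-products of residuals and errors, and each of the eight resulting sums is $o_{\mathcal P}(a_n)$ after applying the splittings $\epsilon_{ij}/t_{g,jl}$, $\Delta g_{il}/s_{g,jl}$ (and their $f$-analogs) so that conditions (\ref{eq_MultFixedWMultDimCondC}), (\ref{eq_MultFixedWMultDimCondD}) together with (\ref{eq_MultFixedWMultDimCondA}) exactly deliver the $\log(\mathbf K)^{-4}$ rates needed. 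Then $\max|\Delta_{jlk}| = o_{\mathcal P,\mathcal W}(a_n)$ follows via Lemma \ref{lem_Lem28SP} with $f(x) = \sqrt{1+x}-1$. The third bound $\|\Sigma-\hat\Sigma\|_\infty = o_{\mathcal P,\mathcal W}(a_n)$ follows from the triangle inequality $\|\Sigma-\hat\Sigma\|_\infty \leq \|\Sigma-\tilde\Sigma\|_\infty + \|\tilde\Sigma-\hat\Sigma\|_\infty$, where the first is controlled by Lemma \ref{lem_LemC1Chern} and the second by the identity $\hat\Sigma_{jlk,j'l'k'} = (\tilde\Sigma\text{-approximant})/((1+\Delta_{jlk})(1+\Delta_{j'l'k'}))$ combined with the already established bounds.

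The main obstacle is pure bookkeeping: the triple indexing $(j,l,k)$ and the introduction of the separate scaling sequences $s_{f,jl}, t_{f,jl}, s_{g,jl}, t_{g,jl}$ (which replace the constant scalings in the univariate case) force one to carefully track how each factor is absorbed. No new probabilistic input beyond that used in the proof of Theorem \ref{thm_MultFixedW1DMG} is required; the genuine content is the verification that every place where the one-dimensional proof used $\Ex_P[\epsilon^2\xi^2 w^2]\geq c$ still goes through with $\Ex_P[\epsilon_j^2\xi_l^2 w_{jlk}^2]\geq c_1 \sigma_{jl}^2$, thanks to the matched scaling $s t = \sigma_{jl}$ in (\ref{eq_MultFixedWMultDimCondC}) and (\ref{eq_MultFixedWMultDimCondD}).
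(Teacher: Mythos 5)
Your overall strategy is exactly the paper's: reduce to the univariate argument of Theorem \ref{thm_MultFixedW1DMG} with the single index $k$ replaced by the triple $(j,l,k)$, verify (B1)/(B2) from (C1)/(C2), and prove the three bounds on $\max_{jlk}|\delta_{jlk}|$, $\max_{jlk}|\Delta_{jlk}|$ and $\|\Sigma-\hat\Sigma\|_\infty$ via the factorisation $s_{\cdot,jl}t_{\cdot,jl}=\sigma_{jl}$. There is, however, a concrete error in your choice of normalisation. You scale everything by $\sigma_{jl}=(\Ex_P[\epsilon_j^2\xi_l^2])^{1/2}$ and assert that the diagonal entries of $\Sigma$ equal $1$. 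They do not: the $(jlk)$ diagonal entry is $\Ex_P[\epsilon_j^2\xi_l^2 w_{jlk}(Z)^2]/\sigma_{jl}^2$, which membership in $\mathcal W_{P,C_1,c_1}$ only constrains to lie in $[c_1,C_1^2]$; under $X\indep Y|Z$ one has $\sigma_{jl}^2=\Ex_P[u_{P,j}(Z)v_{P,l}(Z)]$, which differs from $\Ex_P[u_{P,j}(Z)v_{P,l}(Z)w_{jlk}(Z)^2]$ for non-constant weights. This matters in two places. First, Lemmas \ref{lem_Cor21Chern}, \ref{lem_Lem21Chern} and \ref{lem_Lem31Chern} all require unit diagonal, so they cannot be invoked as stated for your $\Sigma$. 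Second, and more seriously, the denominator $\tau_{D,jlk}$ of the actual test statistic is a consistent estimator of $\bar\sigma_{jlk}=(\Ex_P[\epsilon_j^2\xi_l^2 w_{jlk}(Z)^2])^{1/2}$, not of $\sigma_{jl}$, so with your definition $1+\Delta_{jlk}=\tau_{D,jlk}/\sigma_{jl}$ the quantity $\Delta_{jlk}$ converges to $\bar\sigma_{jlk}/\sigma_{jl}-1\neq 0$ in general, and the claimed bound $\max_{jlk}|\Delta_{jlk}|=o_{\mathcal P,\mathcal W}(a_n)$ is false. The paper normalises throughout by $\bar\sigma_{jlk}$ and uses the two-sided bound $c_1\sigma_{jl}^2\leq\bar\sigma_{jlk}^2\leq C_1^2\sigma_{jl}^2$ whenever it needs to trade $\bar\sigma_{jlk}$ for $\sigma_{jl}=s_{\cdot,jl}t_{\cdot,jl}$ in the rescaling steps; once you make that substitution the rest of your argument goes through.

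A smaller bookkeeping point: the expansion of $R_{jlk,i}R_{j'l'k',i}-\epsilon_{ij}\epsilon_{ij'}\xi_{il}\xi_{il'}w_{jlk}(z_i)w_{j'l'k'}(z_i)$ involves four distinct residual differences $\Delta f_{j},\Delta f_{j'},\Delta g_{l},\Delta g_{l'}$, so it produces $2^4-1=15$ cross terms (grouped by symmetry into a handful of classes), not the eight of the univariate case where the two factors share the same residuals. This does not change the method --- each term is still handled by Cauchy--Schwarz together with the $s,t$-splitting --- but the count, and hence the case analysis, should be corrected.
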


We see that Theorem \ref{thm_MultFixedWMultDim} follows from Theorem \ref{thm_MultFixedWMultDMG}.

\subsubsection{Proof of Theorem \ref{thm_MultFixedWMultDMG}}
\begin{proof}
The proof is along the same lines as the proof of Theorem \ref{thm_MultFixedW1DMG} with the complication of having more indices. We therefore just present the parts that require extra care compared to the earlier proof.

Define 
$$\bar \sigma_{jlk}^2=\bar \sigma_{P,\mathbf w, jlk}^2=\Ex_P\left[\epsilon_j^2\xi_l^2 w_{jlk}(Z)^2\right].$$
We will need to apply the results from Section \ref{sec_ResChern} to the random vectors
$$\left\{\left(\frac{\epsilon_j\xi_l w_{jlk}(z_i)}{\bar \sigma_{jlk}}\right)_{j,l,k}\right\}_{i=1}^n.$$
The conditions (B1a)/(B1b) and (B2) are satisfied by (C1a)/(C1b) and (C2) using that
$$c_1 \sigma_{jl}^2\leq \bar \sigma_{jlk}^2\leq C_1^2\sigma_{jl}^2$$
In exactly the same way as in the proof of Theorem \ref{thm_MultFixedWOneDim}, the theorem can be reduced to the following lemma:
\begin{lemma}\label{lem_Lem26SPMultW}
Let $a_n=\log(\mathbf K)^{-2}$ and let $\delta_{jlk}$ and $\Delta_{jlk}$ be defined analogously to the proof of Theorem \ref{thm_MultFixedW1DMG}. Then,
\begin{enumerate}
	\item $\max_{j,l,k} |\delta_{jlk}|=o_{\mathcal P,\mathcal W}\left(a_n^{1/4}\right)$;
	\item $\max_{j,l,k} |\Delta_{jlk}|=o_{\mathcal P,\mathcal W}\left(a_n\right)$;
	\item $\|\Sigma-\hat \Sigma\|_\infty=o_{\mathcal P,\mathcal W}\left(a_n\right)$.
\end{enumerate}
\end{lemma}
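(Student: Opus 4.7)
My plan is to adapt the proof of Lemma \ref{lem_Lem26SP} essentially unchanged, threading the indices $j,l$ through every step and using the scaling sequences $s_{f,jl},t_{f,jl},s_{g,jl},t_{g,jl}$ to normalise the error terms so that the maximal inequalities still apply uniformly. Writing $\Delta f_{ij}=f_{P,j}(z_i)-\hat f_j(z_i)$ and $\Delta g_{il}=g_{P,l}(z_i)-\hat g_l(z_i)$ and decomposing
$$\bar\sigma_{jlk}\,\delta_{jlk}=b_{jlk}+\nu_{f,jlk}+\nu_{g,jlk}$$
as in the univariate proof, the bound $\bar\sigma_{jlk}^2\geq c_1\sigma_{jl}^2$ lets me divide everything by $\sigma_{jl}=s_{f,jl}t_{f,jl}=s_{g,jl}t_{g,jl}$ up to multiplicative constants. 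Conditions (C1a)/(C1b) and (C2) are exactly what is needed to apply Lemma \ref{lem_LemC1Chern} to the normalised $\mathbf K$-dimensional vector $(\epsilon_{ij}\xi_{il}w_{jlk}(z_i)/\bar\sigma_{jlk})_{jlk}$, since $|w_{jlk}/\bar\sigma_{jlk}|\leq C_1/(\sqrt{c_1}\sigma_{jl})$.

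For part 1, Cauchy-Schwarz together with (\ref{eq_MultFixedWMultDimCondA}) yields $\max_{jlk}|b_{jlk}|/\bar\sigma_{jlk}\lesssim\sqrt n\max_{jl}(A_{f,j}A_{g,l})^{1/2}/\sigma_{jl}=o_\mathcal P(\log(\mathbf K)^{-2})$, which is in particular $o_\mathcal P(a_n^{1/4})$. For $\nu_{g,jlk}$, I apply Lemma \ref{lem_Lem29SP} with $W_{i,jlk}=\epsilon_{ij}/t_{g,jl}$ (conditionally mean zero given $(\mathbf Y,\mathbf Z)$) and $V_{i,jlk}=w_{jlk}(z_i)\Delta g_{il}t_{g,jl}/(\bar\sigma_{jlk}a_n^{1/4})$, so that $W_{i,jlk}V_{i,jlk}=w_{jlk}(z_i)\epsilon_{ij}\Delta g_{il}/(\bar\sigma_{jlk}a_n^{1/4})$. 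Condition (\ref{eq_MultFixedWMultDimCondC}) then gives $\|W\|_\infty=O_\mathcal P(\tau_{g,n})$ and
$$\max_{jlk}\tfrac{1}{n}\sum_i V_{i,jlk}^2\leq \max_{jl}\tfrac{C_1^2}{c_1a_n^{1/2}s_{g,jl}^2}A_{g,l}=o_\mathcal P(\tau_{g,n}^{-2}\log(\mathbf K)^{-3}),$$
so $\tau_{g,n}\sqrt{\log\mathbf K}\cdot(\cdots)^{1/2}=o_\mathcal P(\log(\mathbf K)^{-1})$; bounded convergence then yields $\max_{jlk}|\nu_{g,jlk}|/\bar\sigma_{jlk}=o_{\mathcal P,\mathcal W}(a_n^{1/4})$, and the symmetric argument using $s_{f,jl},t_{f,jl}$ handles $\nu_{f,jlk}$.

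For part 2, I expand $R_{jlk,i}^2-w_{jlk}(z_i)^2\epsilon_{ij}^2\xi_{il}^2$ into the eight-term decomposition from the proof of Theorem \ref{thm_WGCM1DGenRes}, and after dividing by $\sigma_{jl}^2$ each term is $o_\mathcal P(a_n^2)$ uniformly in $j,l$: the pure-bias term $\Delta f_{ij}^2\Delta g_{il}^2/\sigma_{jl}^2\leq nA_{f,j}A_{g,l}/\sigma_{jl}^2=o_\mathcal P(\log(\mathbf K)^{-4})$ by (\ref{eq_MultFixedWMultDimCondA}); the quadratic-noise term $\epsilon_{ij}^2\Delta g_{il}^2/\sigma_{jl}^2\leq (\max_i\epsilon_{ij}^2/t_{g,jl}^2)(A_{g,l}/s_{g,jl}^2)=O_\mathcal P(\tau_{g,n}^2)o_\mathcal P(\tau_{g,n}^{-2}\log(\mathbf K)^{-4})$ by (\ref{eq_MultFixedWMultDimCondC}); and the mixed terms follow from Cauchy-Schwarz together with Lemma \ref{lem_LemC1Chern} applied to the vector $(\epsilon_{ij}\xi_{il}/\sigma_{jl})_{jl}$, which gives $\max_{jl}(n^{-1}\sum_i\epsilon_{ij}^2\xi_{il}^2)/\sigma_{jl}^2=O_\mathcal P(1)$. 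Combining this with part 1 and the bound $\log(\mathbf K)^3=o(n)$ from (C2) to absorb the $\bar{\mathbf R}_{jlk}^2$ correction yields $\max_{jlk}|\tau_{D,jlk}^2/\bar\sigma_{jlk}^2-1|=o_{\mathcal P,\mathcal W}(a_n)$, and Lemma \ref{lem_Lem28SP} with $f(x)=\sqrt{1+x}-1$ concludes part 2.

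For part 3, Lemma \ref{lem_LemC1Chern} applied to the $\mathbf K$-dimensional normalised vector above already gives $\|\Sigma-\tilde\Sigma\|_\infty=o_{\mathcal P,\mathcal W}(a_n)$, so it remains to establish the multivariate analogue of Lemma \ref{lem_Lem27SP}, namely $\|\hat\Sigma-\tilde\Sigma\|_\infty=o_{\mathcal P,\mathcal W}(a_n)$. Expanding $R_{jlk,i}R_{j'l'k',i}$ produces sixteen terms; the key trick is to pair the four factors as $(X_{ij}Y_{il})(X_{ij'}Y_{il'})$ respecting the $(jl)/(j'l')$ blocking, so that Cauchy-Schwarz produces a product of two factors of the form $(n^{-1}\sigma_{j''l''}^{-2}\sum_i X_{ij''}^2Y_{il''}^2)^{1/2}$. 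By part 2, each such factor is $o_\mathcal P(a_n)$ whenever $X$ or $Y$ is a $\Delta$ and $O_\mathcal P(1)$ for the pure-noise pair, so each of the fifteen non pure-noise terms is $o_{\mathcal P,\mathcal W}(a_n)$ uniformly in all six indices; the $\bar{\mathbf R}_{jlk}\bar{\mathbf R}_{j'l'k'}$ correction is controlled using part 1, and the $(1+\Delta_{jlk})^{-1}(1+\Delta_{j'l'k'})^{-1}-1$ factor using part 2 together with Lemma \ref{lem_Lem28SP}, exactly as in the final step of Lemma \ref{lem_Lem26SP}. The main obstacle is this Cauchy-Schwarz pairing in part 3: it must respect the $(jl)/(j'l')$ block structure so that the scalings $s_{f,jl}t_{f,jl}=\sigma_{jl}=s_{g,jl}t_{g,jl}$ align with the conditions (\ref{eq_MultFixedWMultDimCondA}), (\ref{eq_MultFixedWMultDimCondC}) and (\ref{eq_MultFixedWMultDimCondD}), which are only stated in their on-block form.
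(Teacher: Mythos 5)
Your proposal is correct and follows essentially the same route as the paper: the same decomposition $\bar\sigma_{jlk}\delta_{jlk}=b_{jlk}+\nu_{f,jlk}+\nu_{g,jlk}$ with Lemma \ref{lem_Lem29SP} applied to $\epsilon_{ij}/t_{g,jl}$ and $w_{jlk}(z_i)\Delta g_{il}t_{g,jl}/\bar\sigma_{jlk}$, the same use of Lemma \ref{lem_LemC1Chern} and Lemma \ref{lem_Lem28SP}, and the same block-respecting treatment of the fifteen cross terms via the multivariate analogue of Lemma \ref{lem_Lem27SP}. The only (harmless) cosmetic difference is that you handle all fifteen terms with a single $(jl)/(j'l')$-blocked Cauchy--Schwarz, whereas the paper mixes AM--GM bounds of the form $2|ab|\leq a^2+b^2$ with Cauchy--Schwarz line by line; both reduce to the same four diagonal quantities and the same conditions (\ref{eq_MultFixedWMultDimCondA0}), (\ref{eq_MultFixedWMultDimCondC0}) and (\ref{eq_MultFixedWMultDimCondD0}).
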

The proof of this Lemma is similar to the proof of Lemma \ref{lem_Lem26SP}. Extra care has to be taken in part 1. for the control of $\max_{j,l,k}|\nu_{g,jlk}|/\bar\sigma_{jlk}$, where $\nu_{g,jlk}=\sqrt n^{-1}\sum_{i=1}^n w_{jlk}(z_i)\epsilon_{ij}\Delta g_{ij}$. For this, one also uses Lemma \ref{lem_Lem29SP} to write for all $\delta >0$
\begin{align*}
\Prob\left(\max_{j,l,k}\frac{|\nu_{g,jlk}|}{\bar\sigma_{jlk} a_n^{1/4}}\geq \delta\right)&= \Prob\left(\max_{j,l,k} \left|\frac{1}{\sqrt n} \sum_{i=1}^n \frac{w_{jlk}(z_i)\epsilon_{ij}\Delta g_{il}}{\bar\sigma_{jlk} a_n^{1/4}}\right|>\delta\right)\\
&\leq \Prob\left(\max_{j,l,k} \left|\frac{1}{\sqrt n} \sum_{i=1}^n \frac{w_{jlk}(z_i)\epsilon_{ij}\Delta g_{il}}{\sqrt {c_1}\sigma_{jl} a_n^{1/4}}\right|>\delta\right)\\
&\lesssim \frac{1}{\delta}\Ex\left[\delta\land \tau \sqrt{\log(\mathbf K)}\left(\max_{j,l}\frac{1}{n a_n^{1/2}s_{g,jl}^2 {c_1}}\sum_{i=1}^n C_1^2 \Delta g_{il}^2\right)^{1/2}\right]\\
&+\Prob\left(\max_{i,j,l}|\epsilon_{ij}| /t_{g,jl}\geq \tau\right)
\end{align*}
for all $\tau\geq 0$. From this, one can proceed as in the proof of Lemma \ref{lem_Lem26SP}.
The rest of the proof of Lemma \ref{lem_Lem26SPMultW} also works as before, with the difference that
$$\tilde\Sigma_{jlk,j'l'k'}=\frac{1}{n\bar \sigma_{jlk}\bar \sigma_{j'l'k'}}\sum_{i=1}^n\epsilon_{ij}\epsilon_{ij'}\xi_{il}\xi_{il'}w_{jlk}(z_i) w_{j'l'k'}(z_i).$$
With this definition, the equivalent of Lemma \ref{lem_Lem27SP} is the following:
\begin{lemma}\label{lem_Lem27SPMult}
$$\max_{jlk,j'l'k'}\left|\frac{\mathbf R_{jlk}^T \mathbf R_{j'l'k'}}{n\bar \sigma_{jlk}\bar \sigma_{j'l'k'}}-\tilde \Sigma_{jlk,j'l'k'}\right|=o_{\mathcal P,\mathcal W} (a_n).$$
\end{lemma}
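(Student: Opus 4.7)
The plan is to follow the pattern of Lemma \ref{lem_Lem27SP} but to replace the $B$-based bounds by max-error bounds coming from (\ref{eq_MultFixedWMultDimCondC}) and (\ref{eq_MultFixedWMultDimCondD}), and crucially to choose each Cauchy--Schwarz split so that the resulting $L^2$-norms are indexed purely by $(j,l)$ or purely by $(j',l')$. This pairing is what allows the scaling constants $s_{\cdot,jl}$ and $t_{\cdot,jl}$ to combine into $\sigma_{jl}$ and avoids ratios of $\sigma$'s with mismatched indices.

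Writing $\Delta f_{ij}=f_{P,j}(z_i)-\hat f_j(z_i)$ and $\Delta g_{il}=g_{P,l}(z_i)-\hat g_l(z_i)$, I would expand
$$R_{jlk,i}R_{j'l'k',i}=w_{jlk}(z_i)w_{j'l'k'}(z_i)(\epsilon_{ij}+\Delta f_{ij})(\xi_{il}+\Delta g_{il})(\epsilon_{ij'}+\Delta f_{ij'})(\xi_{il'}+\Delta g_{il'})$$
into $16$ products, isolate the leading term, and bound the remaining $15$ cross-terms. Using $|w_{jlk}w_{j'l'k'}|\le C_1^2$ and $\bar\sigma_{jlk}\bar\sigma_{j'l'k'}\ge c_1\sigma_{jl}\sigma_{j'l'}$, it will suffice to show that for each cross-term $T_i$,
$$\max_{j,l,j',l'}\frac{1}{n\sigma_{jl}\sigma_{j'l'}}\sum_{i=1}^n|T_i|=o_{\mathcal P}(a_n).$$

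The key uniform $L^2$-bounds I would establish are, for each ``building block'' $U_{jl,i}$ formed from one of $\{\epsilon_{ij},\Delta f_{ij}\}$ and one of $\{\xi_{il},\Delta g_{il}\}$: (i) $(n^{-1}\sum \epsilon_{ij}^2\xi_{il}^2)^{1/2}=O_{\mathcal P}(\sigma_{jl})$ uniformly, via the second statement of Lemma \ref{lem_LemC1Chern} applied to the centred vector $(\epsilon_{ij}^2\xi_{il}^2/\sigma_{jl}^2-1)_{j,l}$, whose moment conditions are inherited from (C1a)/(C1b); (ii) $(n^{-1}\sum \epsilon_{ij}^2\Delta g_{il}^2)^{1/2}\le\max_i|\epsilon_{ij}|\sqrt{A_{g,l}}\le\sigma_{jl}\cdot o_{\mathcal P}(\log(\mathbf K)^{-2})$ via (\ref{eq_MultFixedWMultDimCondC}) together with the identity $s_{g,jl}t_{g,jl}=\sigma_{jl}$, and symmetrically for $(n^{-1}\sum \xi_{il}^2\Delta f_{ij}^2)^{1/2}$ via (\ref{eq_MultFixedWMultDimCondD}); (iii) $(n^{-1}\sum(\Delta f_{ij}\Delta g_{il})^2)^{1/2}\le\sqrt{nA_{f,j}A_{g,l}}=\sigma_{jl}\cdot o_{\mathcal P}(\log(\mathbf K)^{-2})$ by (\ref{eq_MultFixedWMultDimCondA}).

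Every non-leading cross-term can then be split under Cauchy--Schwarz as $U_{jl,i}\cdot V_{j'l',i}$ with at least one of $U,V$ containing a $\Delta$-factor, and the bounds above yield
$$\frac{1}{n}\sum_i|T_i|\le \sigma_{jl}\sigma_{j'l'}\cdot o_{\mathcal P}(\log(\mathbf K)^{-2})=\sigma_{jl}\sigma_{j'l'}\cdot o_{\mathcal P}(a_n),$$
which is exactly what is needed. The main obstacle will be choosing the right split for every cross-term: a naive split, e.g.\ grouping $\epsilon_{ij}\xi_{il'}$ with $\Delta f_{ij'}\Delta g_{il}$, leaves the residue $\sigma_{jl'}\sigma_{j'l}/(\sigma_{jl}\sigma_{j'l'})$ which is only controllable under the extra assumption of Remark \ref{rmk_ThmMultFixedWMultDim}(3). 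Systematically regrouping the four factors into the $(j,l)$-block and the $(j',l')$-block sidesteps this issue.
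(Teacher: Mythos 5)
Your proposal is correct and follows essentially the same route as the paper: expand the product of residual vectors into sixteen terms, and control the fifteen cross-terms via Cauchy--Schwarz/AM--GM splits deliberately grouped into a $(j,l)$-block and a $(j',l')$-block so that $s_{\cdot,jl}t_{\cdot,jl}=\sigma_{jl}$ absorbs the scaling, using condition (\ref{eq_MultFixedWMultDimCondA}) for the pure $\Delta\Delta$ blocks, conditions (\ref{eq_MultFixedWMultDimCondC})--(\ref{eq_MultFixedWMultDimCondD}) for the mixed blocks, and Lemma \ref{lem_LemC1Chern} for the $O_{\mathcal P}(\sigma_{jl})$ bound on $(n^{-1}\sum_i\epsilon_{ij}^2\xi_{il}^2)^{1/2}$. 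The index-pairing subtlety you flag is exactly the one the paper records in Remark \ref{rmk_ThmMultFixedWMultDim}, item 3.
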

The idea of the proof of this lemma is similar to the proof of Lemma \ref{lem_Lem27SP}, but due to the slight difference in the definition of $\tilde \Sigma$, we redo the proof. For all $j,l,k,j',l',k'$ and $i=1,\ldots, n$ and omitting the dependence on $i$ from the second line on, we can write
\begin{align*}
&\frac{1}{\bar\sigma_{jlk}\bar\sigma_{j'l'k'}}\left|R_{jlk,i}R_{j'l'k',i}-\epsilon_{ij}\epsilon_{ij'}\xi_{il}\xi_{il'}w_{jlk}(z_i) w_{j'l'k'}(z_i)\right|\\
&=\frac{1}{\bar\sigma_{jlk}\bar\sigma_{j'l'k'}}\left|\left[(\Delta f_{j}+\epsilon_{j})(\Delta f_{j'}+\epsilon_{j'})(\Delta g_{l}+\xi_{l})(\Delta g_{l'}+\xi_{l'})-\epsilon_{j}\epsilon_{j'}\xi_{l}\xi_{l'}\right]w_{jlk}(z_i) w_{j'l'k'}(z_i)\right|\\
&\leq\frac{C_1^2}{c_1\sigma_{jl}\sigma_{j'l'}}\left(|\Delta f_j\Delta f_{j'}\Delta g_l\Delta g_{l'}|\right.\\
&+|\Delta f_j\Delta f_{j'}\Delta g_l\xi_{l'}|+|\Delta f_j\Delta f_{j'}\xi_l\Delta g_{l'}|+|\Delta f_j\epsilon_{j'}\Delta g_l\Delta g_{l'}|+|\epsilon_j\Delta f_{j'}\Delta g_l\Delta g_{l'}|\\
&+|\Delta f_j\Delta f_{j'}\xi_l\xi_{l'}|+|\epsilon_j\epsilon_{j'}\Delta g_l\Delta g_{l'}|\\
&+|\Delta f_j\epsilon_{j'}\xi_l\Delta g_{l'}|+|\epsilon_j\Delta f_{j'}\Delta g_l\xi_{l'}|\\
&+|\Delta f_j\epsilon_{j'}\Delta g_l\xi_{l'}|+|\epsilon_j\Delta f_{j'}\xi_l\Delta g_{l'}|\\
&\left. +|\Delta f_j\epsilon_{j'}\xi_l\xi_{l'}|+|\epsilon_j\Delta f_{j'}\xi_l\xi_{l'}|+|\epsilon_j\epsilon_{j'}\Delta g_l\xi_{l'}|+|\epsilon_j\epsilon_{j'}\xi_l\Delta g_{l'}|\right).
\end{align*}
We control the sum over all fifteen terms individually. By symmetry, it is enough to control one term in each line.
For the first term, observe that 
$$\frac{2}{\sigma_{jl}\sigma_{j'l'}}|\Delta f_j\Delta f_{j'}\Delta g_l\Delta g_{l'}|\leq \frac{1}{\sigma_{jl}^2}\Delta f_j^2\Delta g_l^2+\frac{1}{\sigma_{j'l'}^2}\Delta f_{j'}^2\Delta g_{l'}^2.$$
We have that
\begin{equation}\label{eq_Term1OrderAn2}
\max_{j,l}\frac{1}{n}\sum_{i=1}^n \frac{1}{\sigma_{jl}^2}\Delta f_{ij}^2\Delta g_{il}^2\leq\max_{j,l}\frac{n}{\sigma_{jl}^2}A_{f,j} A_{g,l}=o_\mathcal P(a_n^2)=o_\mathcal P(a_n)
\end{equation}
by condition (\ref{eq_MultFixedWMultDimCondA}).

For the second line, we have
$$\frac{2}{\sigma_{jl}\sigma_{j'l'}}|\Delta f_j \Delta f_{j'} \Delta g_l\xi_{l'}|\leq \frac{1}{\sigma_{jl}^2} \Delta f_j^2\Delta g_l^2+ \frac{1}{\sigma_{j'l'}^2} \Delta f_{j'}^2\xi_{l'}^2.$$
We show that the maximum of the sum over the second term is $o_\mathcal P(a_n^2)$. Let $\delta > 0$. Then,
\begin{align*}
\Prob\left(\max_{j',l'}\frac{1}{n a_n^2}\sum_{i=1}^n\frac{1}{\sigma_{j'l'}^2}\Delta f_{ij'}^2\xi_{il'}^2\geq \delta\right)&\leq \frac{1}{\delta}\Ex\left[\max_{j',l'}\frac{1}{n a_n^2}\sum_{i=1}^n\frac{1}{\sigma_{j'l'}^2}\Delta f_{ij'}^2\xi_{il'}^2\land \delta\right]
\end{align*}
Using condition (\ref{eq_MultFixedWMultDimCondD}), we have
\begin{align*}
\max_{j',l'}\frac{1}{n a_n^2}\sum_{i=1}^n\frac{1}{\sigma_{j'l'}^2}\Delta f_{ij'}^2\xi_{il'}^2&\leq \max_{i,j',l'}\frac{\xi_{il'}^2 }{t_{f,j'l'}^2}\max_{j',l'}\frac{1}{n a_n^2 s_{f, j'l'}^2}\sum_{i=1}^n \Delta f_{ij'}^2\\
&=O_\mathcal P(\tau_{f,n}^2)\max_{j',l'} \frac{A_{f,j'}}{a_n^2 s_{f,j'l'}^2}\\
&=o_\mathcal P(1).
\end{align*}
We can conclude using bounded convergence (Lemma \ref{lem_BoundConv}) that
\begin{equation}\label{eq_Term2OrderAn2}
\max_{j',l'}\frac{1}{n}\sum_{i=1}^n\frac{1}{\sigma_{j'l'}^2}\Delta f_{ij'}^2\xi_{il'}^2=o_\mathcal P(a_n^2)=o_\mathcal P(a_n).
\end{equation}

For the third and the fourth line, we can write
\begin{align*}
\frac{2}{\sigma_{jl}\sigma_{j'l'}}|\Delta f_j \Delta f_{j'} \xi_l\xi_{l'}|&\leq \frac{1}{\sigma_{jl}^2} \Delta f_j^2\xi_l^2+ \frac{1}{\sigma_{j'l'}^2} \Delta f_{j'}^2\xi_{l'}^2\\
\frac{2}{\sigma_{jl}\sigma_{j'l'}}|\Delta f_j \epsilon_{j'}  \xi_l\Delta g_{l'}|&\leq \frac{1}{\sigma_{jl}^2} \Delta f_j^2\xi_{l}^2+ \frac{1}{\sigma_{j'l'}^2} \epsilon_{j'}^2\Delta g_{l'}^2,
\end{align*}
so this can be controlled as the term before.
For the fifth line, by Cauchy-Schwarz . 
\begin{align*}
\max_{jl,j'l'}\frac{1}{n}\sum_{i=1}^n\frac{1}{\sigma_{jl}\sigma_{j'l'}}|\Delta f_{ij}\epsilon_{ij'}\Delta g_{il}\xi_{il'}|
&\leq \max_{j,l}\sqrt{\frac{1}{n}\sum_{i=1}^n\frac{1}{\sigma_{jl}^2}\Delta f_{ij}^2\Delta g_{il}^2}\max_{j',l'}\sqrt{\frac{1}{n}\sum_{i=1}^n\frac{1}{\sigma_{j'l'}^2}\epsilon_{ij'}^2\xi_{il'}^2}.
\end{align*}
The first factor is $o_\mathcal P(a_n)$ by (\ref{eq_Term1OrderAn2}). The second factor is $O_\mathcal P(1)$ by Lemma \ref{lem_LemC1Chern}, so the product is $o_\mathcal P(a_n)$.

Note that if we are in the setting where $\sigma_{jl}\sigma_{j'l'}\geq C_3 \sigma_{jl'}\sigma_{j'l}$, as described in Remark \ref{rmk_ThmMultFixedWMultDim}, 3.,  it is better pair together $\Delta f_j\xi_{l'}$ and $\epsilon_{j'}\Delta g_l$ and treat it as the third and fourth line. In this way, we only need to have $\max_{j,l}A_{f,j} A_{g,l}=o_\mathcal P(n^{-1}\log(\mathbf K)^{-2})$ instead of an exponent of $-4$, see equation (\ref{eq_Term1OrderAn2}).

For the last line, observe
\begin{align*}
\max_{jl,j'l'}\frac{1}{n}\sum_{i=1}^n\frac{1}{\sigma_{jl}\sigma_{j'l'}}|\Delta f_{ij}\epsilon_{ij'}\xi_{il}\xi_{il'}|
&\leq \max_{j,l}\sqrt{\frac{1}{n}\sum_{i=1}^n\frac{1}{\sigma_{jl}^2}\Delta f_{ij}^2\xi_{il}^2}\max_{j',l'}\sqrt{\frac{1}{n}\sum_{i=1}^n\frac{1}{\sigma_{j'l'}^2}\epsilon_{ij'}^2\xi_{il'}^2},
\end{align*}
from which we conclude as before using (\ref{eq_Term2OrderAn2}). This concludes the proof of Lemma \ref{lem_Lem27SPMult}.
\end{proof}

\subsection{Proof of Theorem \ref{thm_MultEstWMultDim}}
\begin{proof}
Theorem \ref{thm_MultEstWMultDim} follows from Theorem \ref{thm_MultFixedWMultDMG} just as Theorem \ref{thm_WGCM1DEst} followed from Theorem \ref{thm_WGCM1DGenRes}. In the setting of Theorem \ref{thm_MultFixedWMultDMG}, for $P\in\mathcal P$ and $C_1,c_1>0$, $\mathbf w\in\mathcal  W_{P,C_1,c_1}$ and $\alpha\in (0,1)$, let
$$\Gamma_n(P,\mathbf w, \alpha)=\Prob_P\left(S_{n,\mathbf w}\leq \hat G_{n,\mathbf w}(\alpha)\right).$$
Then, we have
$$\sup_{P\in\mathcal P}\sup_{\mathbf w\in\mathcal W_{P,C_1,c_1}}\sup_{\alpha\in(0,1)}|\Gamma_n(P,\mathbf w,\alpha)-\alpha|\to 0.$$
For the functions $\left(\hat w_{jlk}\upind n\right)_{j,l,k}$ estimated on the auxiliary data set $\mathbf A$, we know that for all $P\in\mathcal P$, $P$-almost surely for all $n\in\mathbb N$, we have $\left(\hat w_{jlk}\upind n \right)_{j,l,k}\in\mathcal W_{P,C_1,c_1}$. Since $\mathbf A$ is independent of $(\mathbf X\upind n, \mathbf Y\upind n,\mathbf Z\upind n)$, we have
$$\Prob_P(S_n\leq \hat G_n(\alpha)|\mathbf A)=\Gamma_n(P,\hat{\mathbf w}\upind n, \alpha),$$
with $\hat{\mathbf w}\upind n=\left(\hat w_{jlk}\upind n\right)_{j,l,k}$.
Using iterated expectations, we have
\begin{align*}
\sup_{P\in\mathcal P}\sup_{\alpha\in (0,1)}\left|\Prob_P\left(S_n\leq \hat G_n(\alpha)\right)-\alpha\right|&=\sup_{P\in\mathcal P}\sup_{\alpha\in(0,1)}\left|\Ex_P\left[\Gamma_n(P,\hat{\mathbf w}\upind n, \alpha)\right]-\alpha\right|\\
&\leq \sup_{P\in\mathcal P}\Ex_P\left[\sup_{\alpha\in(0,1)}\left|\Gamma_n(P,\hat{\mathbf w}\upind n, \alpha)-\alpha\right|\right]\\
&\leq \sup_{P\in\mathcal P}\Ex_P\left[\sup_{Q\in\mathcal P}\sup_{\mathbf w\in\mathcal W_{Q,C_1,c_1}}\sup_{\alpha\in(0,1)}|\Gamma_n(Q,\mathbf w,\alpha)-\alpha|\right]\\
&=\sup_{P\in\mathcal P}\sup_{\mathbf w\in\mathcal W_{P,C_1,c_1}}\sup_{\alpha\in(0,1)}|\Gamma_n(P,\mathbf w,\alpha)-\alpha|\to 0.
\end{align*}
\end{proof}

\section{Limit Theorems}\label{App_LimTheo}
%
The following three results are taken from Section D.2 in the supplementary material of \cite{ShahPetersCondInd}. They are versions of the central limit theorem, the weak law of large numbers and Slutsky's Lemma that hold uniformly over a collection of distributions $\mathcal P$.
\begin{lemma}[Lemma 18 in \cite{ShahPetersCondInd}]\label{lem_UnCLT}
Let $\mathcal P$ be a family of distributions such that for all $P\in\mathcal P$ the random variable $\zeta$ satisfies $\Ex_P[\zeta]=0$ and $\Ex_P[\zeta^2]=1$. Assume that there exists $\eta>0$ such that $\sup_{P\in\mathcal P} \Ex_P\left[|\zeta|^{2+\eta}\right]<\infty$. Let $(\zeta_k)_{k\in\mathbb N}$ be i.i.d. copies of $\zeta$ and define $S_n=\frac{1}{\sqrt n}\sum_{k=1}^n \zeta_k$. Then, we have
$$\lim_{n\to\infty}\sup_{P\in\mathcal P}\sup_{t\in\mathbb R}\left|\Prob_P(S_n\leq t)-\Phi(t)\right|=0.$$
\end{lemma}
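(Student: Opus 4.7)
The plan is to reduce the uniform Kolmogorov convergence to a finite-sample bound that depends on $P \in \mathcal P$ only through $\Ex_P[|\zeta|^{2+\eta}]$. Without loss of generality I would assume $\eta \in (0,1]$, since $\sup_{P \in \mathcal P}\Ex_P[|\zeta|^{2+\eta}] < \infty$ implies $\sup_{P \in \mathcal P}\Ex_P[|\zeta|^{2+\eta'}] < \infty$ for every $\eta' \in (0,\eta]$ (Jensen/Lyapunov together with $\Ex_P[\zeta^2]=1$). With this reduction, the cleanest route is to invoke a Berry--Esseen-type inequality under fractional moments, due to Katz (1963); see also Petrov, \emph{Sums of Independent Random Variables}, Ch.~V. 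That inequality asserts that, for i.i.d.\ copies $\zeta_1,\ldots,\zeta_n$ of a random variable with $\Ex_P[\zeta]=0$, $\Ex_P[\zeta^2]=1$, and $\Ex_P[|\zeta|^{2+\eta}]<\infty$,
$$\sup_{t \in \mathbb R} \bigl|\Prob_P(S_n \leq t) - \Phi(t)\bigr| \leq C_\eta \, \Ex_P[|\zeta|^{2+\eta}]\, n^{-\eta/2},$$
where $C_\eta>0$ depends only on $\eta$.

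Setting $M := \sup_{P \in \mathcal P}\Ex_P[|\zeta|^{2+\eta}] < \infty$ and taking the supremum over $P$ then gives
$$\sup_{P \in \mathcal P} \sup_{t \in \mathbb R} \bigl|\Prob_P(S_n \leq t) - \Phi(t)\bigr| \leq C_\eta M\, n^{-\eta/2} \to 0,$$
which is the statement to prove.

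As a self-contained alternative that avoids citing the fractional-moment Berry--Esseen inequality, I would combine truncation with the classical (third-moment) Berry--Esseen bound. Define $\tilde \zeta_i^{(n)} = \zeta_i \mathbbm 1\{|\zeta_i|\leq M_n\}$ with $M_n \to \infty$ chosen slowly (e.g.\ $M_n = n^{1/(2(2+\eta))}$), centre and rescale to obtain a standardized triangular array, apply the classical Berry--Esseen bound there using the uniform estimate $\Ex_P[|\tilde\zeta^{(n)}|^3] \leq M_n^{1-\eta}\Ex_P[|\zeta|^{2+\eta}]\leq M_n^{1-\eta}M$, and control the replacement error from the truncated tails and from the shift in mean/variance via Markov's inequality applied to $\Ex_P[|\zeta|^{2+\eta}\mathbbm 1\{|\zeta|>M_n\}] \leq M_n^{-\eta}M$. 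The main obstacle in either route is the same: obtaining a concrete finite-sample bound in which the $P$-dependence enters only through $\Ex_P[|\zeta|^{2+\eta}]$, so that the uniform statement follows immediately upon taking the supremum over $\mathcal P$.
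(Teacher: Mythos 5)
Your main argument is correct. The fractional-moment Berry--Esseen inequality of Katz (see also Petrov) does give, for $\eta\in(0,1]$, the finite-sample bound $\sup_{t}|\Prob_P(S_n\leq t)-\Phi(t)|\leq C_\eta\,\Ex_P[|\zeta|^{2+\eta}]\,n^{-\eta/2}$ with $C_\eta$ depending only on $\eta$, and your reduction to $\eta\in(0,1]$ via Lyapunov's inequality is valid; taking the supremum over $\mathcal P$ then yields the claim, with an explicit rate. This is a genuinely different route from the one in \cite{ShahPetersCondInd} (to which the paper defers for this lemma): there the argument is qualitative, proceeding by contradiction --- one extracts sequences $(P_k,n_k)$ along which the Kolmogorov distance stays bounded away from zero, verifies the Lindeberg condition for the resulting triangular array using $\Ex_{P_k}[\zeta^2\mathbbm 1\{|\zeta|>\delta\sqrt{n_k}\}]\leq \Ex_{P_k}[|\zeta|^{2+\eta}](\delta\sqrt{n_k})^{-\eta}$, and concludes via the Lindeberg--Feller CLT together with P\'olya's theorem to upgrade pointwise to uniform-in-$t$ convergence. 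Your approach buys a convergence rate and is shorter, at the price of invoking a less elementary external theorem; the Lindeberg route needs only the classical CLT machinery but gives no rate.

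One small caveat on your self-contained alternative: the parenthetical choice $M_n=n^{1/(2(2+\eta))}$ is too small if you control the replacement error by the union bound $\Prob_P(S_n\neq T_n)\leq n\Prob_P(|\zeta|>M_n)\leq nM_n^{-(2+\eta)}M$, since with that choice $nM_n^{-(2+\eta)}=\sqrt n\not\to 0$. Taking instead $M_n\gg n^{1/(2+\eta)}$ (e.g.\ $M_n=n^{1/(2+\eta)}\log n$) makes both the tail term and the Berry--Esseen term $M_n^{1-\eta}M/\sqrt n$ vanish, so the truncation route does go through after this adjustment. This does not affect your primary argument.
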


\begin{lemma}[Lemma 19 in \cite{ShahPetersCondInd}]\label{lem_UnWLLN}
Let $\mathcal P$ be a family of distributions. For $P\in\mathcal P$, let $\zeta\in\mathbb R$ be a random variable with law determined by $P$ and $\Ex_P[\zeta]=0$ for all $P\in\mathcal P$. Let $\zeta_1,\zeta_2,\ldots$ be i.i.d. copies of $\zeta$ and let $S_n=\frac{1}{n}\sum_{i=1}^n \zeta_i$.  Assume that there exists $\eta>0$ such that $\sup_{P\in\mathcal P}\Ex_P\left[|\zeta|^{1+\eta}\right]<\infty$. Then, for all $\epsilon >0$
$$\lim_{n\to\infty}\sup_{P\in\mathcal P}\Prob_P\left(\left|S_n\right|>\epsilon\right)=0.$$
\end{lemma}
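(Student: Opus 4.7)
The plan is to prove this uniform weak law of large numbers via the classical truncation argument, which works cleanly even though we only have $(1+\eta)$ moments. Fix $\varepsilon > 0$ and set $C = \sup_{P\in\mathcal P}\Ex_P[|\zeta|^{1+\eta}] < \infty$. For a truncation level $M > 0$ (to be chosen), decompose
$$\zeta_i = \zeta_i^T + \zeta_i^U, \qquad \zeta_i^T = \zeta_i \mathbbm 1\{|\zeta_i|\leq M\},\qquad \zeta_i^U = \zeta_i \mathbbm 1\{|\zeta_i|> M\},$$
and correspondingly $S_n = S_n^T + S_n^U$ with $S_n^T = n^{-1}\sum \zeta_i^T$, $S_n^U = n^{-1}\sum \zeta_i^U$. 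The key fact is that both $\Ex_P[|\zeta_i^U|]$ and $|\Ex_P[\zeta_i^T]| = |\Ex_P[\zeta_i^U]|$ (using $\Ex_P[\zeta]=0$) are bounded uniformly in $P$ by $\Ex_P[|\zeta|\mathbbm 1\{|\zeta|>M\}] \leq C/M^{\eta}$ via Markov.

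First I would handle the unbounded part: by Markov,
$$\Prob_P\!\left(|S_n^U| > \varepsilon/3\right) \leq \frac{3\,\Ex_P[|\zeta_1^U|]}{\varepsilon} \leq \frac{3C}{\varepsilon M^{\eta}},$$
which is uniform in $P$ and $n$ and can be made arbitrarily small by taking $M$ large. Next, choose $M$ large enough so that also $|\Ex_P[S_n^T]| \leq C/M^{\eta} < \varepsilon/3$ uniformly in $P$. For the bounded truncated part, I would apply Chebyshev: since $|\zeta_i^T|\leq M$,
$$\var_P(\zeta_i^T) \leq \Ex_P[(\zeta_i^T)^2] \leq M^{1-\eta}\,\Ex_P[|\zeta|^{1+\eta}] \leq C M^{1-\eta},$$
so that $\Prob_P(|S_n^T - \Ex_P[S_n^T]| > \varepsilon/3) \leq 9 C M^{1-\eta}/(n\varepsilon^2)$, again uniformly in $P$.

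Combining the three contributions via the triangle inequality gives, for any fixed $\delta > 0$, the bound
$$\sup_{P\in\mathcal P}\Prob_P(|S_n| > \varepsilon) \leq \frac{3C}{\varepsilon M^{\eta}} + \frac{9C M^{1-\eta}}{n\varepsilon^2}.$$
Choosing $M$ large enough to make the first term less than $\delta/2$, and then $n$ large enough (with $M$ now fixed) to make the second term less than $\delta/2$, yields $\limsup_{n\to\infty}\sup_{P\in\mathcal P}\Prob_P(|S_n|>\varepsilon) \leq \delta$. Since $\delta$ was arbitrary, the conclusion follows.

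There is no real obstacle here; the only point requiring care is that when $\eta<1$ the truncated second moment bound carries a growing factor $M^{1-\eta}$, but this is harmless because the order of the limits is first $M\to\infty$ and then $n\to\infty$. If one prefers a one-line argument, the von Bahr--Esseen inequality gives $\Ex_P|S_n|^{1+\eta}\leq 2C n^{-\eta}$ directly and yields the same uniform conclusion by Markov, but the truncation proof above is elementary and self-contained.
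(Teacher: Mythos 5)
Your truncation argument is correct. Note that this paper does not actually prove the lemma --- it is imported verbatim as Lemma 19 from the supplementary material of \cite{ShahPetersCondInd}, where the proof is likewise a truncation argument combined with Markov/Chebyshev bounds that are uniform in $P$ because they depend only on $\sup_{P\in\mathcal P}\Ex_P[|\zeta|^{1+\eta}]$; so your route is essentially the standard one. The single point to tidy up is the bound $\Ex_P[(\zeta^T)^2]\leq M^{1-\eta}\Ex_P[|\zeta|^{1+\eta}]$, which uses $|\zeta|^{1-\eta}\leq M^{1-\eta}$ on the event $\{|\zeta|\leq M\}$ and hence requires $\eta\leq 1$; this is harmless, since you may assume $\eta\leq 1$ without loss of generality (by Jensen's inequality the $(1+\eta')$-th moments for $\eta'<\eta$ are also uniformly bounded), but you should say so explicitly. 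With that caveat the three-way decomposition, the uniformity of each bound in $P$, and the order of limits ($M$ first, then $n$) are all handled correctly.
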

\begin{lemma}[Lemma 20 in \cite{ShahPetersCondInd}]\label{lem_UnSlutsky}
Let $\mathcal P$ be a family of distributions that determine the law of the random variables $(V_n)_{n\in\mathbb N}$ and $(W_n)_{n\in\mathbb N}$. Assume that
$$\lim_{n\to\infty}\sup_{P\in\mathcal P}\sup_{t\in\mathbb R}|\Prob_P(V_n\leq t)-\Phi(t)|=0.$$
Then, the following holds:
\begin{enumerate}
	\item If $W_n=o_{\mathcal P}(1)$, then
	$$\lim_{n\to\infty}\sup_{P\in\mathcal P}\sup_{t\in\mathbb R}|\Prob_P(V_n+W_n\leq t)-\Phi(t)|=0.$$
	\item If $W_n=1+o_{\mathcal P}(1)$, then
	$$\lim_{n\to\infty}\sup_{P\in\mathcal P}\sup_{t\in\mathbb R}|\Prob_P(V_n/W_n\leq t)-\Phi(t)|=0.$$
\end{enumerate}
\end{lemma}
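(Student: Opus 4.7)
The plan is to prove part 1 by the classical Slutsky sandwich argument while tracking uniformity in $P$ explicitly, and then to reduce part 2 to part 1 by showing that $V_n/W_n - V_n = o_{\mathcal P}(1)$.

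For part 1, fix $\epsilon>0$. Starting from the deterministic inclusions
$$\{V_n + W_n \leq t\} \subseteq \{V_n \leq t+\epsilon\} \cup \{|W_n|>\epsilon\}, \quad \{V_n \leq t-\epsilon\} \subseteq \{V_n+W_n\leq t\} \cup \{|W_n|>\epsilon\},$$
one gets the two-sided bound
$$\Prob_P(V_n\leq t-\epsilon)-\Prob_P(|W_n|>\epsilon) \leq \Prob_P(V_n+W_n\leq t) \leq \Prob_P(V_n\leq t+\epsilon)+\Prob_P(|W_n|>\epsilon).$$
Subtracting $\Phi(t)$, taking the absolute value, and then $\sup_{t\in\mathbb R}\sup_{P\in\mathcal P}$, we pick up three terms: (a) $\sup_{s,P}|\Prob_P(V_n\leq s)-\Phi(s)|$, which tends to $0$ by hypothesis, (b) $\sup_t|\Phi(t\pm\epsilon)-\Phi(t)|\leq \epsilon/\sqrt{2\pi}$ by the Lipschitz bound on $\Phi$, and (c) $\sup_{P}\Prob_P(|W_n|>\epsilon)$, which tends to $0$ since $W_n=o_{\mathcal P}(1)$. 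Letting $n\to\infty$ and then $\epsilon\downarrow 0$ yields the claim.

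For part 2, the decomposition $V_n/W_n = V_n + V_n\bigl(1/W_n - 1\bigr)$ reduces the problem to verifying that the remainder is $o_{\mathcal P}(1)$, after which we apply part 1. To control the remainder, I would first show $1/W_n - 1 = o_{\mathcal P}(1)$: for any $\delta\in(0,1/2)$, on the event $\{|W_n-1|<\delta\}$ one has $|W_n|>1/2$ and thus $|1/W_n-1|\leq 2|W_n-1|$, while the complement has probability tending to $0$ uniformly in $P$ by hypothesis on $W_n$. Second, I would show $V_n$ is uniformly tight for large $n$: for any $\delta>0$, pick $M$ with $1-\Phi(M)+\Phi(-M)<\delta/2$; the assumed uniform convergence of $\Prob_P(V_n\leq \cdot)$ to $\Phi$ then gives $\sup_P\Prob_P(|V_n|>M)<\delta$ for all sufficiently large $n$. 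Finally, the standard ``$O_{\mathcal P}(1)\cdot o_{\mathcal P}(1)=o_{\mathcal P}(1)$'' argument (condition on $\{|V_n|\leq M\}$ and use $\{|V_n(1/W_n-1)|>\eta\}\subseteq \{|V_n|>M\}\cup\{|1/W_n-1|>\eta/M\}$) delivers $V_n(1/W_n-1)=o_{\mathcal P}(1)$.

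The main subtlety I anticipate is that $V_n/W_n$ need not even be defined on $\{W_n=0\}$, and the event $\{W_n<0\}$ would flip inequalities. Both issues are handled by restricting to the high-probability event $\{|W_n-1|<1/2\}$, which sits inside $\{W_n>1/2\}$ and on which the ratio is well-defined and positive. Since $\sup_{P\in\mathcal P}\Prob_P(|W_n-1|\geq 1/2)\to 0$, this exceptional event can be absorbed into the same $\sup_P\Prob_P(|W_n|>\epsilon)$-type remainder that we already use in part 1, so it does not affect the uniform limits.
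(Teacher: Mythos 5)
Your proof is correct, and it is essentially the standard uniform Slutsky argument (sandwich inclusions with a Lipschitz bound on $\Phi$ for part 1, reduction of part 2 to part 1 via $V_n/W_n=V_n+V_n(1/W_n-1)$ together with uniform tightness of $V_n$ and the high-probability restriction to $\{|W_n-1|<1/2\}$). The paper itself does not prove this lemma but quotes it from the supplementary material of \cite{ShahPetersCondInd}, whose proof follows the same route, so there is nothing substantive to add.
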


The next lemma is taken from Section D.5 in \cite{ShahPetersCondInd}.
\begin{lemma}[Lemma 25 in \cite{ShahPetersCondInd}]\label{lem_BoundConv}
Let $\mathcal P$ be a family of distributions that determine the law of the random variables $(W_n)_{n\in\mathbb N}$. If $W_n=o_{\mathcal P}(1)$ and if there exists $C>0$ such that for all $n\in\mathbb N$ we have $|W_n|\leq C$, then
$$\lim_{n\to\infty}\sup_{P\in\mathcal P}\Ex_P\left[|W_n|\right]=0.$$
\end{lemma}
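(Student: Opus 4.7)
The plan is to prove this uniform version of the bounded convergence theorem by a standard truncation-at-level-$\epsilon$ argument, where the key is that the uniform-in-$P$ convergence in probability gives uniform-in-$P$ convergence in $L^1$ precisely because $|W_n|$ is bounded by a common constant $C$.

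First I would fix $\epsilon > 0$ arbitrary and aim to show that for all sufficiently large $n$, $\sup_{P\in\mathcal{P}} \Ex_P[|W_n|] \leq \epsilon$. The natural decomposition is
\[
\Ex_P[|W_n|] \;=\; \Ex_P\!\left[|W_n|\,\mathbbm 1\{|W_n|\leq \epsilon/2\}\right] + \Ex_P\!\left[|W_n|\,\mathbbm 1\{|W_n|> \epsilon/2\}\right].
\]
On the event $\{|W_n|\leq \epsilon/2\}$, the integrand is at most $\epsilon/2$, so the first term is bounded by $\epsilon/2$ uniformly in $P$. On the complementary event, the almost-sure bound $|W_n|\leq C$ gives
\[
\Ex_P\!\left[|W_n|\,\mathbbm 1\{|W_n|> \epsilon/2\}\right] \leq C\cdot \Prob_P(|W_n|> \epsilon/2).
\]

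Next I would invoke the hypothesis $W_n = o_\mathcal{P}(1)$, which by definition gives $\sup_{P\in\mathcal{P}}\Prob_P(|W_n|>\epsilon/2)\to 0$ as $n\to\infty$. Hence there exists $N$ (depending only on $\epsilon$ and $C$) such that for all $n\geq N$,
\[
\sup_{P\in\mathcal{P}}\Prob_P(|W_n|> \epsilon/2) \;\leq\; \frac{\epsilon}{2C}.
\]
Combining the two bounds yields $\sup_{P\in\mathcal P}\Ex_P[|W_n|]\leq \epsilon/2 + C\cdot \epsilon/(2C) = \epsilon$ for all $n\geq N$, which is exactly the conclusion.

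There is no real obstacle: the argument is a one-line truncation, and the uniformity is inherited directly from the uniformity in the definition of $o_\mathcal{P}(1)$. The only subtle point worth being explicit about is that the threshold $N$ must depend only on $\epsilon$ (and the universal constant $C$), not on $P$, which is automatic because we take the supremum over $P$ inside the probability statement before passing to the limit.
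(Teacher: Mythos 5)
Your proof is correct and is exactly the standard truncation argument; the paper itself does not reprove this lemma but cites Lemma 25 of \cite{ShahPetersCondInd}, whose proof proceeds by the same decomposition into the event $\{|W_n|\leq \epsilon/2\}$ and its complement, bounding the latter via the almost-sure bound $C$ and the uniform convergence in probability. Nothing is missing.
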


\section{Sub-Gaussian and Sub-Exponential Distributions}\label{App_SubGaussExp}
We summarise some results on sub-Gaussian and sub-exponential distributions, see for example Sections 2.5 and 2.7 in \cite{HDProbab}.

\begin{definition}[Definition 2.5.6 and Proposition 2.5.2 in \cite{HDProbab}]
A random variable $X$ with $\Ex[X]=0$ is called a \textit{sub-Gaussian} random variable if one of the following equivalent conditions is satisfied. For the parameters $K_1,K_2,K_3>0$, there exists an absolute constant $C>0$ such that for all $i,j\in\{1,2,3\}$, property $i$ implies property $j$ with parameter $K_j\leq C K_i$.
\begin{enumerate}
	\item There exists $K_1>0$ such that for all $t\geq 0$
	$$\Prob(|X|\geq t)\leq 2\exp\left(-t^2/K_1^2\right).$$
	\item There exists $K_2>0$ such that
	$$\Ex\left[\exp\left(X^2/K_2^2\right)\right]\leq 2.$$
	\item There exists $K_3>0$ such that for all $\lambda \in \mathbb R$
	$$\Ex[\exp(\lambda X)]\leq \exp\left(K_3^2 \lambda^2\right).$$
\end{enumerate}
The \textit{sub-Gaussian norm} of $X$ is defined as
$$\|X\|_{\psi_2}=\inf\left\{t>0\bigr\vert \Ex\left[\exp\left(X^2/t^2\right)\right]\leq 2\right\}.$$
\end{definition}
\begin{example}[Example 2.5.8 in \cite{HDProbab}]
A random variable $X\sim \mathcal N(0,\sigma^2)$ is sub-Gaussian with
$$\|X\|_{\psi_2}\leq C\sigma,$$
where $C>0$ is an absolute constant.

A bounded random variable $X$ is sub-Gaussian with
$$\|X\|_{\psi_2}\leq C \|X\|_{\infty}, \text{ for } C=1/\sqrt{\log 2}.$$
\end{example}

\begin{lemma}[Exercise 2.5.10 in \cite{HDProbab}]\label{lem_ExpMaxSubGauss} 
Let $X_1, \ldots, X_n$ be sub-Gaussian random variables and let $K=\max_{i=1,\ldots, n}\|X_i\|_{\psi_2}$. Then, there exists an absolute constant $C>0$ such that
$$\Ex\left[\max_{i=1,\ldots, n} |X_i|\right]\leq C K \sqrt{\log(n)}.$$
\end{lemma}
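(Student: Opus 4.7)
The plan is to use the equivalent characterization (2) of the sub-Gaussian norm together with two applications of Jensen's inequality, which is the standard textbook approach.

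First, by the definition of $\|X_i\|_{\psi_2}$, the hypothesis $\|X_i\|_{\psi_2} \leq K$ for all $i$ yields $\Ex\left[\exp(X_i^2/K^2)\right] \leq 2$ for every $i$. The key observation is that the exponential turns the maximum into a sum:
$$\Ex\left[\exp\left(\max_{i=1,\ldots,n} X_i^2/K^2\right)\right] = \Ex\left[\max_{i=1,\ldots,n} \exp(X_i^2/K^2)\right] \leq \sum_{i=1}^n \Ex\left[\exp(X_i^2/K^2)\right] \leq 2n.$$

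Next, I would apply Jensen's inequality to the convex function $x \mapsto \exp(x)$ applied to the nonnegative random variable $\max_i X_i^2/K^2$, which gives
$$\exp\!\left(\Ex\left[\max_{i=1,\ldots,n} X_i^2\right]/K^2\right) \leq \Ex\left[\exp\left(\max_{i=1,\ldots,n} X_i^2/K^2\right)\right] \leq 2n,$$
so that $\Ex\left[\max_i X_i^2\right] \leq K^2 \log(2n)$. A second application of Jensen's inequality to $x\mapsto x^2$ then yields
$$\Ex\left[\max_{i=1,\ldots,n} |X_i|\right]^2 \leq \Ex\left[\max_{i=1,\ldots,n} X_i^2\right] \leq K^2\log(2n),$$
from which the claim follows upon choosing an absolute constant $C>0$ such that $\sqrt{\log(2n)} \leq C\sqrt{\log(n)}$ for $n \geq 2$ (e.g.\ $C=\sqrt{2}$ works since $\log(2n) = \log 2 + \log n \leq 2\log n$ whenever $n \geq 2$).

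There is no real obstacle here; the only minor subtlety is handling small $n$. The case $n=1$ needs to be treated separately, since $\sqrt{\log n}$ vanishes, but in that case $\Ex[|X_1|] \leq K\sqrt{\log 2}$ follows directly from $\Ex[X_1^2] \leq K^2 \log 2$ (which is the same Jensen argument with the maximum trivial), and by convention the statement is applied with $n\geq 2$. Everything else is a direct chain of elementary inequalities.
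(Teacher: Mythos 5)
Your proof is correct and is the standard argument for this bound; the paper itself gives no proof, citing the result as Exercise 2.5.10 in \cite{HDProbab}, and your chain (attainment of the Orlicz-norm infimum, bounding the maximum of nonnegative exponentials by their sum, then two applications of Jensen) is exactly the canonical solution to that exercise. Your remark about $n=1$ is apt: as stated the right-hand side vanishes there, and the convention $n\geq 2$ (or replacing $\log(n)$ by $\log(2n)$) is needed; this is harmless for the paper's only use of the lemma, which is the asymptotic statement $\max_{i\leq n}|X_i|=O_P(\sqrt{\log(n)})$ in Corollary \ref{cor_MaxSubGauss}.
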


\begin{corollary}\label{cor_MaxSubGauss}
Let $X_1,X_2,\ldots$ be sub-Gaussian random variables and assume that $K=\sup_{i\in\mathbb N} \|X_i\|_{\psi_2}<\infty$. Then, 
$$\max_{i=1,\ldots, n}|X_i|=O_P\left(\sqrt{\log(n)}\right).$$
\end{corollary}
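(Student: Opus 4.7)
The plan is to reduce the stochastic $O_P$ statement to the expectation bound already available from Lemma \ref{lem_ExpMaxSubGauss} and then apply Markov's inequality.

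First I would note that the uniform bound $K=\sup_{i\in\mathbb N}\|X_i\|_{\psi_2}<\infty$ allows us to apply Lemma \ref{lem_ExpMaxSubGauss} to each finite prefix $X_1,\ldots,X_n$, yielding an absolute constant $C>0$ (independent of $n$) such that
$$\Ex\left[\max_{i=1,\ldots,n}|X_i|\right]\leq CK\sqrt{\log(n)}.$$
Note that for $n=1$ one may need to treat $\log(n)=0$ separately, but this is a trivial finite case that does not affect the asymptotic claim.

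Next, by Markov's inequality, for any $M>0$,
$$\Prob\left(\max_{i=1,\ldots,n}|X_i|>M\sqrt{\log(n)}\right)\leq \frac{\Ex\left[\max_{i=1,\ldots,n}|X_i|\right]}{M\sqrt{\log(n)}}\leq \frac{CK}{M}.$$
Taking the supremum over $n\geq 2$ (so that $\log(n)>0$), we get
$$\sup_{n\geq 2}\Prob\left(\frac{\max_{i=1,\ldots,n}|X_i|}{\sqrt{\log(n)}}>M\right)\leq \frac{CK}{M},$$
which tends to $0$ as $M\to\infty$. This is exactly the definition of $\max_{i=1,\ldots,n}|X_i|=O_P(\sqrt{\log(n)})$, completing the proof.

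The argument is essentially a one-line application of Markov's inequality to Lemma \ref{lem_ExpMaxSubGauss}, so there is no substantive obstacle; the only minor bookkeeping is handling the degenerate $n=1$ case where $\sqrt{\log(n)}=0$, which does not affect the asymptotic $O_P$ notion.
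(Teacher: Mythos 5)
Your proposal is correct and follows exactly the paper's own argument: apply Lemma \ref{lem_ExpMaxSubGauss} to bound $\Ex[\max_{i\leq n}|X_i|]$ by $CK\sqrt{\log n}$ and then use Markov's inequality to conclude the $O_P$ bound. The extra remark about the degenerate $n=1$ case is harmless bookkeeping that the paper omits.
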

\begin{proof}
For all $M>0$, by Markov's inequality
\begin{align*}
\Prob\left(\max_{i=1,\ldots, n}|X_i|\geq M \sqrt{\log n}\right)&\leq\frac{\Ex\left[\max_{i=1,\ldots, n} |X_i|\right]}{M\sqrt{\log n}}\\
&\leq \frac{CK\sqrt{\log n}}{M\sqrt{\log n}}\\
&=CK/M\to 0,\text{ as } M\to \infty.
\end{align*}
\end{proof}

\begin{definition}[Definition 2.7.5 and Proposition 2.7.1 in \cite{HDProbab}]\label{def_SubExp}
A random variable $X$ is called a \textit{sub-exponential} random variable if one of the following equivalent conditions is satisfied. For the parameters $K_1,K_2,K_3>0$, there exists an absolute constant $C>0$ such that for all $i,j\in\{1,2,3\}$, property $i$ implies property $j$ with parameter $K_j\leq C K_i$.
\begin{enumerate}
	\item There exists $K_1>0$ such that for all $t\geq 0$
	$$\Prob(|X|\geq t)\leq 2\exp\left(-t/K_1\right).$$
	\item There exists $K_2>0$ such that
	$$\Ex\left[\exp\left(|X|/K_2\right)\right]\leq 2.$$
	\item There exists $K_3>0$ such that for all $\lambda \in [0,1/K_3]$
	$$\Ex[\exp(\lambda |X|)]\leq \exp\left(K_3 \lambda\right).$$
\end{enumerate}
The \textit{sub-exponential norm} of $X$ is defined as
$$\|X\|_{\psi_1}=\inf\left\{t>0\bigr\vert \Ex\left[\exp\left(X/t\right)\right]\leq 2\right\}.$$
\end{definition}

\begin{lemma}[Lemma 2.7.7 in \cite{HDProbab}]\label{lem_ProdSubGauss}
If $X$ and $Y$ are sub-Gaussian random variables, then $XY$ is a sub-exponential random variable and 
$$\|XY\|_{\psi_1}\leq \|X\|_{\psi_2} \|Y\|_{\psi_2}.$$
\end{lemma}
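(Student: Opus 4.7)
The plan is to reduce the bound to Young's inequality combined with the Cauchy--Schwarz inequality applied to exponential moments. Set $a=\|X\|_{\psi_2}$ and $b=\|Y\|_{\psi_2}$; by the definition of the sub-Gaussian norm (together with monotone convergence, which lets us evaluate the condition at the infimum itself), we have $\Ex[\exp(X^2/a^2)]\leq 2$ and $\Ex[\exp(Y^2/b^2)]\leq 2$. The goal is then to show that the sub-exponential defining inequality $\Ex[\exp(|XY|/(ab))]\leq 2$ holds, which by definition of $\|\cdot\|_{\psi_1}$ gives $\|XY\|_{\psi_1}\leq ab$.

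The key pointwise estimate is the elementary inequality $|uv|\leq \tfrac{1}{2}(u^2+v^2)$ applied to $u=|X|/a$ and $v=|Y|/b$, which yields
$$\frac{|XY|}{ab}\leq \frac{1}{2}\left(\frac{X^2}{a^2}+\frac{Y^2}{b^2}\right).$$
Exponentiating and using that $\exp$ is monotone,
$$\exp\!\left(\frac{|XY|}{ab}\right)\leq \exp\!\left(\frac{X^2}{2a^2}\right)\exp\!\left(\frac{Y^2}{2b^2}\right).$$
Taking expectations and applying the Cauchy--Schwarz inequality gives
$$\Ex\!\left[\exp\!\left(\frac{|XY|}{ab}\right)\right]\leq \sqrt{\Ex\!\left[\exp\!\left(\frac{X^2}{a^2}\right)\right]}\sqrt{\Ex\!\left[\exp\!\left(\frac{Y^2}{b^2}\right)\right]}\leq \sqrt{2}\cdot\sqrt{2}=2,$$
so $ab$ lies in the defining set of $\|XY\|_{\psi_1}$, and the claimed bound follows.

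There is no serious obstacle here; the only mildly delicate point is justifying that $\Ex[\exp(X^2/a^2)]\leq 2$ at $a=\|X\|_{\psi_2}$ rather than merely at $a>\|X\|_{\psi_2}$. If one prefers to avoid this subtlety altogether, the argument can be phrased as showing $\|XY\|_{\psi_1}\leq a'b'$ for arbitrary $a'>\|X\|_{\psi_2}$ and $b'>\|Y\|_{\psi_2}$, and then taking the infimum over such $a'$ and $b'$. Both routes rely only on the two ingredients above (AM--GM and Cauchy--Schwarz), so the proof is essentially one display long.
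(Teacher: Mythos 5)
Your proof is correct. The paper does not prove this lemma itself but cites it directly from Vershynin's \emph{High-Dimensional Probability} (Lemma 2.7.7), and your argument is essentially the standard one given there: normalise by the $\psi_2$-norms, apply $|uv|\leq \tfrac{1}{2}(u^2+v^2)$, and bound the resulting exponential moment --- the only cosmetic difference being that you finish with Cauchy--Schwarz ($\sqrt{2}\cdot\sqrt{2}=2$) where the reference uses a second application of Young's inequality ($\tfrac{1}{2}(2+2)=2$), both of which land exactly on the constant $2$ required by the definition of $\|\cdot\|_{\psi_1}$.
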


\vskip 0.2in
\bibliography{ReferencesWGCM}

\end{document}